\numberwithin{equation}{section} 
\theoremstyle{plain}
  \newtheorem{prop}{Proposition}
  \newtheorem{cor}{Corollary}
\theoremstyle{definition}
  \newtheorem{remark}{Remark}
  \numberwithin{prop}{section}
   \numberwithin{cor}{section}
   \numberwithin{remark}{section}
\title{\Large\bfseries Dualities  in random matrix theory}%
\author{Peter J. Forrester}
\date{}
\begin{document}

\maketitle

School of Mathematics and Statistics,  The University of Melbourne,
Victoria 3010, Australia. \: \: Email: {\tt pjforr@unimelb.edu.au}; \\

\bigskip

\begin{abstract}
\noindent
 Duality identities in random matrix theory for products and powers of characteristic polynomials, and for moments,
 are reviewed. The structure of a typical duality identity for the average of a  positive integer power $k$ of the characteristic polynomial
 for particular ensemble of $N \times N$ matrices is that it is expressed as the average of the power $N$ of the  characteristic polynomial
 of some other ensemble of random matrices, now of size $k \times k$. With only a few exceptions, such dualities involve (the $\beta$
 generalised) classical   Gaussian, Laguerre and Jacobi ensembles Hermitian ensembles, the circular Jacobi ensemble, or the various
 non-Hermitian ensembles relating to Ginibre random matrices. In the case of unitary symmetry in the Hermitian case, they can be studied
 using the determinantal structure. The $\beta$ generalised case requires the use of Jack polynomial theory, and in particular Jack polynomial
based  hypergeometric functions. Applications to the computation of the scaling limit of various $\beta$ ensemble correlation and distribution functions are also reviewed.
 The non-Hermitian case relies on the particular cases of Jack polynomials corresponding to zonal polynomials,
 and their integration properties when their arguments are eigenvalues of certain matrices. The main tool to study dualities for moments of
 the spectral density, and generalisations, is the loop equations.
\end{abstract}

\tableofcontents

\vspace{3em}

\section{Introduction}

As a prelude to reviewing duality relations in random matrix theory, a few remarks about dualities more generally are in order.
In the essay ``Duality in mathematics and physics'', Atiyah \cite{At08} posits ``Fundamentally, duality gives two different points of view
of looking at the same object''. In internet forums, an obvious key word search reveals the descriptions ``In laymen terms, it's when you consider an
opposite concept in such a way that some properties have `flipped' analogous properties''; ``The common idea is that there are two things
which basically are just two sides of the same coin''; and ``When applied to specific examples, there is usually a precise definition for just that
context'', among others.

Indeed dualities in random matrix theory involve `flipping'. The flipping may be of two parameters (e.g.~the order of a moment, and the
size of the matrix), or of two ensembles (e.g.~orthogonal symmetry, and symplectic symmetry), these being  the primary classes of
dualities to be discussed below. It is furthermore the case that these contexts offer a precise mathematical definition. Introductory examples
can be given.

In relation to a duality relation involving flipping the order of a moment and the
size of the matrix, one can consider the first moment (i.e.~average value) of the characteristic polynomial
$\det ( x \mathbb I_N - H)$ for the matrix $H$ an Hermitian Wigner matrix.
The construction of the latter is according to
 $H = {1 \over 2} (X + X^\dagger)$, where the $N \times N$ random matrix $X$ has
all elements independently distributed with zero mean and variance $\sigma^2$ (for complex elements with zero mean,
$\sigma^2 := \langle |x_{ij} |^2 \rangle$). With the ensemble of such matrices denoted by $\mathcal H_N$, one has
\cite[equivalent to Prop.~11]{FG06}
\begin{equation}\label{1.0}
\langle \det ( x \mathbb I_N - H) \rangle_{H \in \mathcal H_N} = i^{-N}
\langle  ( i x  - h)^N \rangle_{h \in \mathcal N(0,\sigma^2/2)},
 \end{equation}
 where $ \mathcal N(0,\sigma^2/2)$ denotes the zero mean normal distribution with variance $\sigma^2/2$. 
 Comparing both sides, one sees the interchange (flipping) of the pair $(N,n)$, where
on the left hand side $N$ is the matrix size, and $n=1$ is the power of the characteristic polynomial, while
on the right hand side the meaning of these parameters is reversed. 

We again turn to moments for our introductory example illustrating a random matrix duality flipping two ensembles.
The moments of interest here are those associated with the spectral density for finite $N$, which itself after normalisation
can be considered as a probability density function. As an ensemble average, this is equivalent to considering the
expected value of the power sum $\sum_{j=1}^N \lambda_j^k$ ($k=1,2,\dots$), where $\{\lambda_j\}$ are the eigenvalues.
For a given $k$, we will do this for the two distinct ensembles of the Gaussian orthogonal ensemble (GOE) of real symmetric
matrices, and the Gaussian symplectic ensemble (GSE) of quaternion self dual Hermitian matrices. Only the even moments are
non-zero, with the $2k$-th moment in fact a polynomial in $N$ of degree $k+1$, which furthermore vanishes at $N=0$.
 Before listing some of these polynomials, we
recall that matrices from the GOE and GSE are specific classes of Hermitian Wigner matrices, where the elements of the  matrix
$X$ introduced above are zero mean Gaussians. For the GOE, these elements are all real. For the GSE, the elements are 
themselves $2 \times 2$ complex matrices
representing quaternions (see e.g.~\cite[\S 1.3.2]{Fo10}), and the eigenvalues here are  two fold degenerate. The probability
measure on the space of real symmetric matrices for
the GOE, being proportional to $e^{- {\rm Tr} \, H^2/2}$,
 is invariant under the similarity transformation, real symmetric matrices automorphism,
 $H \mapsto O^THO$ for any real orthogonal matrix $O$ of the same
size as $H$. This is the reason for the `orthogonal' in the naming of the ensemble. Similar considerations
apply to the naming of the GSE, where the relevant similarity transformation, quaternion self dual Hermitian matrices
automorphism, is $H \mapsto S^\dagger H S$, where $S$ is unitary symplectic matrix.

A classical result in random matrix theory is that the joint eigenvalue probability density function (PDF) of the
GOE and GSE is proportional to (see e.g.~\cite[Prop.~1.3.4]{Fo10})
\begin{equation}\label{1.1}
\prod_{l=1}^N e^{- \beta \lambda_l^2/2} \prod_{1 \le j < k \le N} | \lambda_k - \lambda_j|^\beta,
 \end{equation}
for $\beta = 1$ (GOE) and $\beta = 4$ (GSE). The parameter $\beta$ is referred to as the Dyson index.
The most efficient computational scheme for the moments $\{m_{2k} \}$ is via the
use of certain fourth order linear difference equations found by Ledoux \cite{Le09}, which furthermore only require knowledge
of the initial conditions $m_0 = N$ and $m_2^{\rm GOE} =  N^2 + N$, $m_2^{\rm GSE}= N^2 - N/2$ for their unique solution. For the particular low order cases
$k=3, 4$, one then
finds
{\small \begin{align}
& m_6^{\rm GOE} = 5 N^4 + 22 N^3 + 52 N^2  + 41 N, \: \:    m_8^{\rm GOE} =  14 N^5 + 93  N^4 + 374 N^3 + 690 N^2 + 509 N   \\
& 8m_6^{\rm GSE} = 40 N^4 - 88 N^3 + 104 N^2 -  41 N, \: \:
16 m_8^{\rm GSE} =  224 N^5 - 744 N^4 + 1496 N^3 - 1380 N^2 + 509N.
\end{align}}
Inspection reveals the functional (duality)  relation 
\begin{equation}\label{1.2}
m_{2k}^{\rm GOE} = (-1)^{k+1} 2^{k+1} m_{2k}^{\rm GSE} \Big |_{N \mapsto - N/2}, \quad (k=3,4)
 \end{equation}
 which is in fact valid for each $k \in \mathbb Z_{\ge 0}$ (\cite[Eq.~(5.3)]{MW03}, special case of \cite[Th.~2.8]{DE05}).
 Hence there is a flipping of ensembles on different sides of this equation, as well as a particular algebraic mapping of
 $N$ (we say algebraic, as there is no matrix construction once $N$ is so mapped). We remark that the GOE and GSE can be
 linked in other ways. For example, integrating over all the odd labelled eigenvalues in the GOE of size
 $2N+1$ gives the GSE of size $N$; for a recent review on results in random matrix theory of this sought,
 see \cite{Fo24}.
 
 The class of dualities involving characteristic polynomials will be considered first. As this class is large, the
 subject matter under this heading is to be further subdivided. This we do by having separate sections for when the dualities
 involve classical random matrix ensembles with $\beta=2$ (Section \ref{S2}), then the general $\beta$ case (Section \ref{S3}).
 Consequences for the large $N$ limit are considered in Section \ref{S4}. Dualities relating to
 characteristic polynomials for non-Hermitian ensembles are the topic of Section \ref{S5}.
Section \ref{S7} gives an account of dualities  relating to (mixed) moments for a large class of $\beta$ ensembles.

Not considered in this review (due to the present author's lack of expertise) are random matrix group integral dualities
coming from the field theory method of the colour-flavour transformation \cite{Zi96,FK07}, or from the related
mathematical theory of Howe pairs \cite{CFZ05,Zi21}.
 
 \section{Characteristic polynomial dualities for classical random matrix ensembles with $\beta=2$}\label{S2}
 \subsection{Invariant ensemble definitions}
  Consider the generalisation of (\ref{1.1}), defined as the eigenvalue PDF on the real line proportional to
 \begin{equation}\label{2.1}
\prod_{l=1}^N w(\lambda_l) \prod_{1 \le j < k \le N} | \lambda_k - \lambda_j|^\beta.
 \end{equation}
Viewed as relating to an ensemble of Hermitian random matrices, (\ref{2.1}) is denoted ME${}_{\beta,N}[w]$.
The function $w(x)$ is referred to as the weight.
As is consistent with the situation for (\ref{1.1}), which corresponds to choosing $w(\lambda) = e^{-\beta \lambda^2 / 2}$,
the cases $\beta = 1,4$ relate to orthogonal (symplectic) invariant 
ensembles of real (quaternion) valued Hermitian random matrices. Similarly complex valued
 Hermitian  random matrices permit the automorphism $H \mapsto U^\dagger H U$ for $U$ a member of the unitary group. If the
joint element PDF is invariant under this mapping, an eigenvalue PDF of the form (\ref{2.1}) results with $\beta = 2$
(specifically, the element PDF reduces to the functional form $\prod_{l=1}^N w(\lambda_l)$, while 
$\prod_{1 \le j < k \le N} | \lambda_k - \lambda_j|^\beta$ is identified as the eigenvalue factor of the Jacobian
in the change of variables from the matrix elements to the eigenvalues and eigenvectors; see \cite[Prop.~1.3.4]{Fo10}).

It turns out that the dualities of certain classes of Hermitian random matrices to be considered will also involve unitary random
matrices. The analogue of (\ref{2.1}) in this setting is
 \begin{equation}\label{2.1X}
\prod_{l=1}^N w(\theta_l) \prod_{1 \le j < k \le N} | e^{2 \pi i\theta_k} - e^{2 \pi i \theta_j}|^\beta, \quad - 1/2 < \theta_l \le 1/2,
 \end{equation}
 to be denoted CE${}_{\beta,N}[w]$ (the ``C'' here stands for circular). The naming circular
 $\beta$ ensemble is used for the case $w=1$, while the naming circular orthogonal,
 circular unitary, and circular symplectic ensemble (notation COE, CUE and CSE respectively)
 is used for the cases $\beta=1,2$ and 4 of the circular $\beta$ ensemble.

  A variation of (\ref{2.1})  is the eigenvalue PDF proportional to
  \begin{equation}\label{3.1}
\prod_{l=1}^N w(\lambda_l) \prod_{1 \le j < k \le N} | \lambda_k^2 - \lambda_j^2|^\beta.
 \end{equation}
Typically this results from matrices with the property that the eigenvalues come
in $\pm$ pairs, which is termed a chiral symmetry. Thus the naming chME${}_{\beta,N}(w)$.
An example of matrices with a chiral symmetry is the structure
  \begin{equation}\label{3.1a}
Y = \begin{bmatrix} 0_{n \times n} & Z \\
Z^\dagger & 0_{N \times N}   \end{bmatrix}
 \end{equation}  
 for $Z$ an $n \times N$ ($n \ge N$) matrix --- in general there are $N$ eigenvalues in $\pm$ pairs, and $n-N$ zero
 eigenvalues. In the case that $Z$ has independent standard complex entries, and thus being a so called (complex)
 rectangular Ginibre matrix \cite{BF24}, 
 the positive eigenvalues of $Y$ have 
 PDF    chME${}_{2,N}(w)$, $w(x) = x^{2(n - N) + 1}e^{-x^2}$; see
 e.g.~\cite[Prop.~3.1.3]{Fo10}.

 \subsection{Products of characteristic polynomials and the GUE}
In light of (\ref{1.0}), an immediate question to ask is if there is an analogous duality for
 \begin{equation}\label{2.2}
\Big  \langle  \prod_{l=1}^p \det ( x_l \mathbb I_N - H) \Big  \rangle_{ \mathcal H_N} 
 \end{equation}
 for $p > 1$. Here we may consider the cases of $ \mathcal H_N$ a class of Wigner matrices, or an invariant ensemble
 characterised by the eigenvalue PDF (\ref{2.1}) (note that choosing the matrix $X$ --- recall the text above (\ref{1.0}) ---
 to have independent standard Gaussian elements, which are real, complex or quaternion for $\beta = 1,2$ and 4
 respectively gives rise to random matrix ensembles, the GOE,  GUE, GSE, which are both of the Wigner and invariant
 class). The simplest case to investigate is  for $ \mathcal H_N$ an invariant ensemble with complex elements, and
 thus corresponding to ME${}_{2,N}(w)$. Introduce the family of monic orthogonal polynomials
 $\{ p_j(x) \}_{j=0,1,\dots}$ by the requirement that $\int_I w(x) p_j(x) p_k(x) \, dx = h_j \delta_{j,k}$ (here $I$ is the support of $w(x)$ and $h_j > 0$
 is the normalisation). Then it is a standard result  \cite{BH00}, \cite{FS03e}, \cite{BDS03}, \cite{Fo10} that
  \begin{equation}\label{2.3}
\Big  \langle   \prod_{l=1}^p \det ( x_l \mathbb I_N - H)    \Big  \rangle_{ {\rm ME}_{2,N}[w]} = 
{1 \over \prod_{1 \le j < k \le p} (x_k - x_j)} \det [ p_{N+j-1}(x_k) ]_{j,k=1,\dots,p}.
 \end{equation}
 
 In the case $p=1$ (\ref{2.3}) gives
  \begin{equation}\label{2.3q}
\Big  \langle    \det ( x \mathbb I_N - H)    \Big  \rangle_{ {\rm ME}_{2,N}[w]} = p_N(x),
 \end{equation}
 which in fact is a classical formula in orthogonal polynomial theory due to Heine \cite{Sz75}.
 In the case $w(x) = e^{-x^2}$ corresponding to the GUE, one has
$p_N(x) = 2^{-N} H_N(x)$, where $H_N(x)$ denotes the Hermite
polynomial of degree $N$.  Use of the integral form of the Hermite polynomials
 \begin{equation}\label{2.3c}
   H_{N}(x) =  {(2i)^{N} \over \sqrt{\pi}} e^{x^2} \int_{-\infty}^\infty e^{-t^2} t^{N} e^{-2 i x t} \, dt
 \end{equation}
 gives agreement with    (\ref{1.0}).

 We would like to identify the  RHS of (\ref{2.3}) with a random matrix average over
 an ensemble of $p \times p$ matrices for general $p \ge 1$. 
 It turns out that this is possible for both the GUE and LUE, although in the latter case the
 ensemble that appears is hard to anticipate (see \S \ref{S3.3}).
 For the GUE weight 
 $w(x) = e^{-x^2}$, the RHS consists of another GUE average.

 \begin{prop}\label{P2.1} (\cite{BH01})
Let $X = {\rm diag} \, (x_1,\dots,x_p)$. We have
    \begin{equation}\label{2.X1}
    \Big  \langle   \prod_{l=1}^p \det ( x_l \mathbb I_N - H)     \Big  \rangle_{ {\rm GUE}_{N}} =
   i^{-pN} \Big  \langle   \det ( i X - H)^N \Big  \rangle_{ {\rm GUE}_{p}}.
   \end{equation} 
   In particular 
    \begin{equation}\label{2.X1a}
    \Big  \langle    \det ( x \mathbb I_N - H)^p     \Big  \rangle_{ {\rm GUE}_{N}} =
   i^{-pN} \Big  \langle   \det ( i  x \mathbb I_p - H)^N \Big  \rangle_{ {\rm GUE}_{p}}.
   \end{equation}     
    \end{prop}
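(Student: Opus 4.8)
The plan is to start from the $\beta=2$ Heine-type identity (\ref{2.3}) with the Gaussian weight $w(x)=e^{-x^2}$, for which the monic orthogonal polynomials are $p_m(x)=2^{-m}H_m(x)$, and then feed each entry through the integral representation (\ref{2.3c}) of the Hermite polynomials. Writing $\Delta(x):=\prod_{1\le j<k\le p}(x_k-x_j)$, this turns the left-hand side of (\ref{2.X1}) into
\begin{equation*}
\frac{2^{-pN-p(p-1)/2}}{\Delta(x)}\det\big[H_{N+j-1}(x_k)\big]_{j,k=1}^{p}
=\frac{C_1\,e^{\sum_k x_k^2}}{\Delta(x)}\;\det\!\left[\int_{-\infty}^{\infty}e^{-t^2}\,t^{\,N+j-1}\,e^{-2ix_k t}\,dt\right]_{j,k=1}^{p},
\end{equation*}
where the scalar $C_1$ (a power of $i$ times $\pi^{-p/2}$) comes from pulling the $j$-dependent factors $(2i)^{N+j-1}$ out of the rows and the $k$-dependent factors $\pi^{-1/2}e^{x_k^2}$ out of the columns; the powers of $2$ cancel against the prefactor.

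Next I would convert this determinant of one-dimensional integrals into a $p$-dimensional integral by the Andr\'eief (Heine--de Bruijn) identity: writing the $(j,k)$ entry as $\int g_k(t)\,t^{j-1}\,dt$ with $g_k(t)=t^{N}e^{-t^2}e^{-2ix_k t}$, expanding the determinant and antisymmetrising in the integration variables gives
\begin{equation*}
\det\!\left[\int g_k(t)\,t^{j-1}\,dt\right]_{j,k=1}^{p}
=\frac{1}{p!}\int_{\R^p}\Big(\prod_{j=1}^{p}t_j^{N}e^{-t_j^{2}}\Big)\,\det\!\big[e^{-2ix_k t_j}\big]_{j,k=1}^{p}\;\Delta(t)\;dt_1\cdots dt_p .
\end{equation*}
To reinstate a second Vandermonde and so rebuild a $p\times p$ matrix integral, I would invoke the Harish-Chandra--Itzykson--Zuber integral in the form $\det\big[e^{-2ix_k t_j}\big]_{j,k}=C_2\,\Delta(x)\,\Delta(t)\int_{U(p)}e^{-2i\tr(X\,UTU^{\dagger})}\,dU$, with $T=\diag(t_1,\dots,t_p)$ and $C_2=(-2i)^{p(p-1)/2}/\prod_{j=1}^{p-1}j!$ the known HCIZ constant. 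The factor $\Delta(x)$ cancels the one in the denominator, $\Delta(t)$ is promoted to $\Delta(t)^2$, and passing from eigenvalue-plus-angle coordinates back to the space of $p\times p$ Hermitian matrices $H'=UTU^{\dagger}$ identifies the remaining integral, up to an overall constant (the Jacobian relating the matrix measure to $\Delta(t)^2\,dt\,dU$), with $\int_{\mathrm{Herm}_p}\det(H')^{N}\,e^{-\tr H'^2-2i\tr(XH')}\,dH'$.

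It remains to recognise this last integral as a Gaussian ensemble average. Completing the square, $-\tr H'^2-2i\tr(XH')=-\tr(H'+iX)^2-\tr X^2$, and shifting the contour $H'\mapsto H''-iX$ (legitimate because the integrand is a polynomial times a Gaussian) yields $(-1)^{pN}e^{-\sum_k x_k^2}\,Z_p\,\langle\det(iX-H)^N\rangle_{\mathrm{GUE}_p}$, whose $e^{-\sum_k x_k^2}$ cancels the $e^{\sum_k x_k^2}$ carried along since the first step. Collecting all scalar factors — the several powers of $i$ and $2$, the HCIZ constant $C_2$, the matrix/eigenvalue Jacobian constant, and the partition function $Z_p=\int_{\mathrm{Herm}_p}e^{-\tr H^2}\,dH$ — and checking that they combine to exactly $i^{-pN}$ reduces to the classical Mehta (Selberg) evaluation of $Z_p$, a routine verification. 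The particular case (\ref{2.X1a}) then follows by continuity on setting $x_1=\dots=x_p=x$. The main obstacle is the middle step: it is the reinsertion of the second Vandermonde via the HCIZ integral that carries one out of pure orthogonal-polynomial bookkeeping, and the constant- and phase-tracking has to be kept consistent through that passage and through the contour shift; an alternative derivation via Gaussian/Grassmann integration followed by a Hubbard--Stratonovich decoupling (the colour-flavour route) is available, but lies outside the scope indicated in the introduction.
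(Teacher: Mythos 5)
Your argument is correct and is essentially the paper's own proof run in reverse: the paper starts from the GUE$_p$ average, diagonalises, applies HCIZ and then Andr\'eief to land on the determinant formula (\ref{2.3}) via the Hermite integral form (\ref{2.3c}), whereas you start from (\ref{2.3}) and use exactly the same three ingredients (Hermite integral representation, Andr\'eief, HCIZ plus Gaussian completion of the square and a contour shift) to rebuild the $p\times p$ matrix integral. The only real difference is bookkeeping: where you propose explicit tracking of the phases, HCIZ constant, Jacobian and $Z_p$ via Mehta/Selberg, the paper sidesteps this by fixing the overall proportionality from the requirement that both sides behave as $\prod_l x_l^N$ when each $x_l\to\infty$, which is cleaner but not a substantive difference.
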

 
 \begin{proof}
 We know that for the Gaussian weight,  $p_k(x)=2^{-k} H_k(x)$.
  The task is to show that the RHS of (\ref{2.X1}) evaluates to the determinant formula in (\ref{2.3}) with this
 substitution. We begin by noting that, up to proportionality, the average on the RHS can be rewritten as
   \begin{equation}\label{2.X2}
 \int_{H \in \mathcal H_p} (\det H)^N e^{- {\rm Tr} \, (H + i X)^2} \, (dH),
  \end{equation}  
 where $ \mathcal H_p$ is the space of $p \times p$ Hermitian matrices and $(dH)$ denotes the product of all the
 independent differentials (real and imaginary parts) of $H$. Changing variables according to the
 diagonalisation formula $H = U \Lambda U^\dagger$, where $\Lambda = {\rm diag} (\lambda_1,\dots,\lambda_p)$,
 shows that (\ref{2.X2}) is proportional to 
    \begin{equation}\label{2.X3}
    e^{\sum_{j=1}^p x_j^2} \bigg \langle \prod_{l=1}^p \lambda_l^N \int e^{-2 i {\rm Tr} \, U^\dagger \Lambda U X} \,
    d^{\rm H}U \bigg \rangle_{\Lambda \in {\rm GUE}_p},
  \end{equation}      
 where $d^{\rm H}U$ denotes the (normalised) Haar measure on the unitary group $U(p)$. This group integral is
 evaluated according to the well known HCIZ formula --- see e.g.~\cite[Prop.~11.6.1]{Fo10} and
 \cite{No24} for recent developments --- showing that (\ref{2.X3})
 is proportional to
    \begin{equation}\label{2.X4}
  e^{\sum_{j=1}^p x_j^2}  \int_{-\infty}^\infty d \lambda_1 \cdots   \int_{-\infty}^\infty d \lambda_p \,
  \prod_{l=1}^p \lambda_l^N   e^{-\sum_{j=1}^p \lambda_j^2} \prod_{1 \le j < k \le p} (\lambda_k - \lambda_j)
  \det [ e^{-2i \lambda_j x_k} ]_{j,k=1,\dots,p}.
   \end{equation}      
   Application now of Andr\'eief's integration formula \cite{Fo19} reduces this to
      \begin{equation}\label{2.X5}
  e^{\sum_{j=1}^p x_j^2} \det \bigg [ \int_{ -\infty}^\infty  \lambda^{N+j-1}   e^{- \lambda^2} e^{-2i \lambda x_k} \, d \lambda \bigg ]_{j,k=1,\dots,p}.
   \end{equation}   
  Using now the integral form for the Hermite polynomials (\ref{2.3c})
  allows (\ref{2.X5}) to be recognised as being proportional to the determinant formula in (\ref{2.3}) in the GUE
   case. The proportionality is fixed by noting that both sides of (\ref{2.X1}) reduce to $ \prod_{l=1}^p x_l^N $ is the limit that
 each $x_l \to \infty$. 
 
 The duality (\ref{2.X1a}) follows from (\ref{2.X1}) by setting all the $x_l$ equal. However, for reference in settings where an
 analogue of (\ref{2.X1}) is not available, we outline a direct strategy beginning with (\ref{2.3}). This is to first 
 set $x_1 = x$ and then subtract column 1 from column two, divide by the existing prefactor $(x_2-x)$ and take the limit $x_2 \to x$. Next subtract
 columns 1 and $(x_3 - x)$ times column 2 from column 3, divide by the existing prefactor $(x_3-x)^2$ and take the limit $x_3 \to x$. By repeating this procedure in succession,
 it is possible to subtract from column $k$ the first $(k-1)$ terms in its Taylor series expansion about $x$, which after dividing by $(x_k-x)^{k-1}$ and taking the
 limit $x_k \to x$ gives $D_x^{k-1} p_{N+j-1}(x)/(k-1)!$ (here the $D_x$ denotes  derivative operation) as the entry in row $j$ and column $k$. 
 This shows
   \begin{equation}\label{2.3A}
\Big  \langle    \det ( x \mathbb I_N - H)^p  \Big  \rangle_{ {\rm ME}_{2,N}[w]} = 
{1 \over \prod_{l=1}^p l!} \det [ D_x^{k-1} p_{N+j-1}(x) ]_{j,k=1,\dots,p}.
 \end{equation}
Repeated use of the Hermite polynomial identity
 $H_n'(x) = 2x H_n(x) - H_{n+1}(x)$ then allows us to deduce that in the Gaussian case the RHS of
 (\ref{2.3A}) is proportional to
 the Hankel determinant 
    \begin{equation}\label{2.3b1}
\det [ H_{N+j+k-2}(x) ]_{j,k=1,\dots,p}.
 \end{equation}
 Next, a well known formula in random matrix theory \cite[Eq.~(5.75)]{Fo10}, due to Heine  in the context of the study of orthogonal
 polynomials \cite{Sz75}, gives that in general
 for $\alpha_j := \int_I a(t) t^j \, dt$, one has
     \begin{equation}\label{2.3b}
    \det [ \alpha_{j+k} ]_{j,k=0,\dots,n-1} = {1 \over n!} \int_I dx_1 \cdots  \int_I dx_n \,
    \prod_{l=1}^n a(x_l) \prod_{1 \le j < k \le n} (x_k - x_j)^2.
 \end{equation}     
 Using the integral form for the Hermite polynomials (\ref{2.3c})
   allows for (\ref{2.3b1})  to be rewritten in the form of the RHS of (\ref{2.3b}), which after minor manipulation can be recognised as the
 RHS of (\ref{2.X1a}).   
 \end{proof}
 
 If we consider now (\ref{2.2})  for $\mathcal H_N$ the class of Wigner matrices beyond the GUE, and restrict to the case $p=2$,
 a duality formula analogous to  (\ref{2.X1}) is known from \cite[Eq.~(2.16) after scaling]{Sh11}. 
 
 \begin{prop}
 Consider a complex Hermitian Wigner matrix $\tilde{H} = [\tilde{H}_{jk} ]_{j,k=1,\dots,N}$. Require that the real and imaginary parts
 for $j < k $ are independently drawn from the same distribution, which has first and third moments zero, second moment
 ${1 \over 2}$, fourth moment $b$ and corresponding cumulant $\kappa_4 = b - {1 \over 4}$. Let $\tilde{H}_{jj}$ also be independently
 drawn from this distribution, together with the scaling of multiplying by $\sqrt{2}$. Further scale all the entries by considering
 ${H} = {1 \over \sqrt{2}} \tilde{H}$ to define the ensemble $\mathcal H_N$. With $X = {\rm diag} (x_1, x_2)$ one has
   \begin{multline}\label{2.3d}
 \Big  \langle   \det ( x_1 \mathbb I_N - H)   \det ( x_2 \mathbb I_N - H)  \Big  \rangle_{ \mathcal H_N}  \\
  = i^{-2N}
 \bigg  \langle   \Big ( \det ( i X - Q ) + t \epsilon(\kappa_4) \Big )^N \bigg  \rangle_{ Q \in {\rm GUE}_{2}, \, t \in \mathcal N(0,1/(2|\kappa_4|))},
 \end{multline}  
 where $\epsilon(x) = x$, $(x>0)$ and $\epsilon(x) = -i x$, $(x<0)$.
 \end{prop}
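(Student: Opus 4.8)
The plan is to use the supersymmetric (fermionic) method, which is the route by which the cited result of Shcherbina \cite{Sh11} is obtained. First I would represent each of the two characteristic polynomials as a Gaussian integral over anticommuting variables: for $l=1,2$ write $\det(x_l\mathbb I_N-H)=\int\prod_{i=1}^N d\bar\psi^{(l)}_i\,d\psi^{(l)}_i\,\exp(-\bar\psi^{(l)}(x_l\mathbb I_N-H)\psi^{(l)})$, so that the left-hand side of (\ref{2.3d}) becomes a single Grassmann integral whose dependence on $H$ is through $\exp(\tr H\Psi)$, where $\Psi$ is the $N\times N$ matrix of bilinears $\Psi_{ik}=\sum_{l=1}^2\bar\psi^{(l)}_k\psi^{(l)}_i$. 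Since the entries of $H$ are independent subject only to the Hermiticity constraint, I would interchange the $H$-average with the Grassmann integral and evaluate $\langle\exp(\tr H\Psi)\rangle_{\mathcal H_N}$ entry by entry, as the exponential of the sum of the cumulant generating functions of the single-entry distributions, with the Grassmann bilinears as sources.

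The crucial step is the analysis of this cumulant expansion with a Grassmann-valued source. The second cumulant reproduces exactly the quartic term proportional to $\tr B^2$ that already appears in the GUE computation behind Proposition \ref{P2.1}, where $B$ is the $2\times2$ overlap matrix $B_{ll'}=\sum_{i=1}^N\bar\psi^{(l)}_i\psi^{(l')}_i$ (the variance of the diagonal entries of $H$ supplies precisely the terms needed to complete $\tr\Psi^2$, hence $\tr B^2$). The fourth cumulant produces an additional term: for each off-diagonal pair it is proportional to $\kappa_4\,\Psi_{kj}^2\Psi_{jk}^2$, and because with only two ``flavours'' each $\Psi_{kj}$ is a sum of two Grassmann monomials, so that $\Psi_{kj}^2$ is itself a single monomial and $\Psi_{kj}^3=0$, summation over $j,k$ collapses this to $c\,\kappa_4\,S^2$ with $S:=\sum_{i=1}^N(\bar\psi^{(1)}_i\psi^{(1)}_i)(\bar\psi^{(2)}_i\psi^{(2)}_i)$ and $c$ an explicit positive constant. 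For the same reason all cumulants of order at least six annihilate the integrand. Thus the effective Grassmann action contains exactly two quartic pieces, of the forms $\tr B^2$ and $S^2$.

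Next I would linearise each quartic piece by a Hubbard--Stratonovich transformation. The $\tr B^2$ piece is linearised by an auxiliary $2\times2$ Hermitian matrix $Q$; completing the accompanying Gaussian integral over $Q$ reproduces exactly the ${\rm GUE}_2$ average of Proposition \ref{P2.1}, together with the prefactor $i^{-2N}$. The $S^2$ piece is linearised by an auxiliary real scalar $t$, producing the average over $t\in\mathcal N(0,1/(2|\kappa_4|))$; the normalisation of this Gaussian and, more delicately, whether it converges directly or only after a rotation of the contour are dictated by the sign of $\kappa_4$, which is exactly the source of the function $\epsilon$ (a rotation, hence the factor $-i$, being required precisely when $\kappa_4<0$). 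After both linearisations the Grassmann integrand factorises over the matrix index $i=1,\dots,N$, and for each $i$ the elementary integral over the four generators $\bar\psi^{(1)}_i,\psi^{(1)}_i,\bar\psi^{(2)}_i,\psi^{(2)}_i$ evaluates --- using that $S$ is already of top Grassmann degree, so its square and its products with the quadratic form vanish --- to $\det(iX-Q)+t\,\epsilon(\kappa_4)$, an \emph{additive} appearance of $t$. Taking the product over $i$ gives the $N$-th power, and the overall constant, along with the prefactor $i^{-2N}$, is fixed by letting $x_1,x_2\to\infty$, where both sides reduce to $\prod_{l=1}^2 x_l^N$, just as at the end of the proof of Proposition \ref{P2.1}.

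The main obstacle is the middle step: the careful Grassmann bookkeeping showing that the fourth-cumulant contribution collapses to the single perfect square $c\,\kappa_4\,S^2$ with the correct constant and sign while every higher cumulant drops out, together with the contour-rotation argument in the Hubbard--Stratonovich steps that yields convergent representations in $Q$ and $t$ and produces the branch factor $\epsilon(\kappa_4)$. Once these are in hand, the factorisation over $i$ and the final four-dimensional Grassmann integral are routine.
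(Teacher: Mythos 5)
Your proposal follows exactly the route the paper indicates: the paper's "proof" is a comment only, pointing to the Grassmann representation (\ref{2.3e}) of each determinant and the method of Shcherbina \cite{Sh11}, which is precisely the fermionic cumulant-expansion/Hubbard--Stratonovich argument you outline (second cumulant giving the ${\rm GUE}_2$ block, fourth cumulant collapsing by nilpotency to an $S^2$ term linearised by the scalar $t$ with the sign of $\kappa_4$ producing $\epsilon(\kappa_4)$, higher cumulants vanishing). Your sketch is consistent with that reference and supplies more detail than the paper itself records.
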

 
 \begin{proof} (Comment only.) The proof is based on Grassmann integration involving formal anti-symmetric variables
 $\{ \psi_j \}$,  $\{ \bar{\psi}_j \}$. Each determinant on the LHS can be rewritten according to the key formula
   \begin{equation}\label{2.3e}
   \int \exp \Big ( \sum_{j,k=1}^N A_{j,k} \bar{\psi}_j \psi_k \Big ) \prod_{j=1}^N d  \bar{\psi}_j  d  \psi_j = \det A;
 \end{equation}  
 see e.g.~\cite{Ha00}.   An instructive preliminary exercise is to make use of this identity and the calculus of
 Grassmann integration to give a derivation of (\ref{1.0}).
 \end{proof}
 
 \subsection{Averages of reciprocals of characteristic polynomials for the GUE}
 A companion to the determinant identity (\ref{2.3}) applying to the average of a product of characteristic polynomials in a
 general $\beta = 2$ matrix ensemble ME${}_{2,N}[w]$ relates to a product of reciprocals of characteristic polynomials
 \cite{FS03e}, \cite{BDS03}, \cite[Prop.~5.3.1]{Fo10},
   \begin{equation}\label{2.9} 
   \bigg \langle {1 \over \prod_{l=1}^q \det  ( v_j - X)} \bigg \rangle_{{\rm ME}_{2,N}[w]} =
   {(-1)^{q(q-1)/2} \over \prod_{j=N-q}^{N-1} h_j} \det \bigg [ \int_{-\infty}^\infty {p_{N-q + k - 1}(x) \over v_j - x} w(x)\, dx \bigg ]_{j,k=1,\dots,q}.
  \end{equation}
 Set $q=1$ and consider the case of the Gaussian weight, for which the $p_n(x)$ are proportional to the Hermite polynomials. 
 Since the support of the Gaussian weight is the whole real line, it is required that the $v_j$ have a nonzero imaginary part.
 Using the contour integral form
 (equivalent to Rodrigues formula)
 $$
e^{-x^2} H_{N-1}(x) = {(N-1)! \over 2 \pi i } \int_{I_+ + I_-} {e^{-z^2} \over ( x - z)^N} \, dz,
$$
cf.~(\ref{2.3c}),
where for $x$ real $I_+$ ($I_-$) can be taken as the real line shifted up (down) by $i \epsilon$ ($\epsilon > 0$) and traversed
in the negative (positive)  sense. Substituting in the integral on the RHS of (\ref{2.9}), and taking the $z$ integral outside of the $x$ integral
allows the former to be computed, with the result
 \begin{equation}\label{2.9A} 
   \bigg \langle {1 \over  \det  ( v - X)} \bigg \rangle_{{\rm ME}_{2,N}[e^{-x^2}]} = {1 \over \sqrt{\pi}} \int_{-\infty}^\infty {e^{-z^2} \over (v - z)^N} \, dz = \Big \langle {1  \over (v - z)^N} 
   \Big \rangle_{z \in \mathcal N(0,1/\sqrt{2})},
      \end{equation}
      which itself is a companion duality to the GUE case of (\ref{1.0}).
      
     Extending (\ref{2.9A}), Proposition \ref{P2.1}  has a counterpart for averaged reciprocal characteristic polynomials with respect to the GUE.
      \begin{prop}\label{P2.1+} (\cite{De09})
Let $X = {\rm diag} \, (x_1,\dots,x_p)$, where each $x_i$ has a nonzero imaginary part. We have
    \begin{equation}\label{2.X1+}
    \bigg  \langle   \prod_{l=1}^p {1 \over \det ( x_l \mathbb I_N - H)   }  \bigg  \rangle_{ {\rm GUE}_{N}} =
    \bigg  \langle  {1 \over  \det ( X - H)^N } \bigg  \rangle_{ {\rm GUE}_{p}}.
   \end{equation} 
   In particular
    \begin{equation}\label{2.X1a+}
    \bigg  \langle    {1 \over \det ( x \mathbb I_N - H)^p   }  \bigg  \rangle_{ {\rm GUE}_{N}} =
    \bigg  \langle  {1 \over \det (   x \mathbb I_p - H)^N } \bigg  \rangle_{ {\rm GUE}_{p}}.
   \end{equation}     
    \end{prop}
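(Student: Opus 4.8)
The plan is to prove (\ref{2.X1+}) for the Gaussian ($\beta=2$) weight by a bosonisation argument running parallel to the proof of Proposition~\ref{P2.1}, with an elementary Gaussian matrix integral playing the role that the HCIZ formula played there. I would first restrict to the regime $\mathrm{Im}\,x_1,\dots,\mathrm{Im}\,x_p>0$: then $\mathrm{Im}(X-H)\succ0$ for every Hermitian $H$, so $X-H$ and each $x_l\mathbb I_N-H$ are invertible with uniformly bounded inverses, all integrals below converge absolutely against their Gaussian weights, and the interchanges of integration are legitimate. On the left-hand side I would introduce bosonic vectors $w_1,\dots,w_p\in\mathbb C^N$, assembled into $W=[w_1\mid\cdots\mid w_p]\in\mathbb C^{N\times p}$, and use the standard identity $\det(x_l\mathbb I_N-H)^{-1}=(-i)^N\pi^{-N}\int_{\mathbb C^N}e^{\,i\,w_l^\dagger(x_l\mathbb I_N-H)w_l}\,(dw_l)$. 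On the right-hand side I would first rewrite the ${\rm GUE}_p$ average as a matrix integral over $\mathcal H_p$, exactly as (\ref{2.X1}) was passed to (\ref{2.X2}), and then represent $\det(X-H)^{-N}$ by the analogous Gaussian integral over $W'\in\mathbb C^{p\times N}$.

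Next I would carry out the now-Gaussian average over $H$ on each side. For the weight $e^{-\tr H^2}$ one has $\langle e^{\,i\,\tr(HB)}\rangle=e^{-\frac14\tr B^2}$ for Hermitian $B$ of any size (complete the square; the normalisation cancels), with $B=-WW^\dagger=-\sum_l w_lw_l^\dagger$ an $N\times N$ matrix on the left and $B=-W'(W')^\dagger$ a $p\times p$ matrix on the right. The decisive point is that the outcome depends on the auxiliary matrix only through $\tr(WW^\dagger)^2=\tr(W^\dagger W)^2$, which is \emph{symmetric} under transposition. Consequently, after the $H$-integration both sides of (\ref{2.X1+}) are seen to equal the single integral
\[
\frac{(-i)^{pN}}{\pi^{pN}}\int_{\mathbb C^{N\times p}}e^{\,i\,\tr(X W^\dagger W)-\frac14\tr(WW^\dagger)^2}\,(dW),
\]
where on the left one uses $\sum_l x_l|w_l|^2=\tr(X W^\dagger W)$ directly, and on the right that $W'\mapsto(W')^\dagger$ is a measure-preserving bijection $\mathbb C^{p\times N}\to\mathbb C^{N\times p}$ under which $W'(W')^\dagger=W^\dagger W$. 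A pleasant feature, by contrast with Proposition~\ref{P2.1}, is that no proportionality constant is left to fix. Conjugating $x_l\mapsto\bar x_l$ — under which both sides transform covariantly because $H$ is Hermitian — then gives the identity when all $\mathrm{Im}\,x_l<0$ too. The special case (\ref{2.X1a+}) is the case $x_1=\cdots=x_p=x$ of (\ref{2.X1+}), and from this vantage point its $(N,p)\leftrightarrow(p,N)$ symmetry is transparent: the common integral is then $(-i)^{pN}\pi^{-pN}\int_{\mathbb C^{N\times p}}e^{\,ix\,\tr(W^\dagger W)-\frac14\tr(WW^\dagger)^2}(dW)$, manifestly invariant under $W\mapsto W^\dagger$. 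Alternatively, just as for (\ref{2.X1a}), one can obtain (\ref{2.X1a+}) from (\ref{2.9}) by successively subtracting Taylor polynomials from the columns.

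The step I expect to be the real obstacle is not the algebra but pinning down the domain of validity, which is exactly why the hypothesis demands only $\mathrm{Im}\,x_l\neq0$ rather than a common sign. When the imaginary parts have mixed signs, $X-H$ can be singular for some $H\in\mathcal H_p$ — for instance $p=2$, $x_1=i$, $x_2=-i$, $H=\big[\begin{smallmatrix}0&1\\1&0\end{smallmatrix}\big]$ — so the ${\rm GUE}_p$ integral on the right is no longer absolutely convergent and must be read through analytic continuation; on the left there is no obstruction, since $\det(x_l\mathbb I_N-H)=\prod_j(x_l-\lambda_j)$ with $\lambda_j\in\mathbb R$ vanishes only when $\mathrm{Im}\,x_l=0$, and (\ref{2.9}) exhibits the left side as a determinant analytic in each $x_l$ in each half-plane separately. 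A route that avoids the bosonisation and stays closer to the derivation of the $q=1$ companion (\ref{2.9A}) just above would be to apply (\ref{2.9}) with $q=p$ and $w(x)=e^{-x^2}$ to the left side, insert the Rodrigues-type contour representation of $e^{-x^2}H_n(x)$ displayed before (\ref{2.9A}), evaluate the resulting $x$-integrals by residues at $x=x_j$, and recognise the remaining $p\times p$ determinant of $z$-contour integrals as a Heine-type expansion of the right side; there the real work is the deformation of the contour $I_++I_-$ and the bookkeeping of signs, and it is again this step that forces $\mathrm{Im}\,x_l\neq0$.
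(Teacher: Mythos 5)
Your bosonisation argument is correct in the regime where it is carried out, and it is a genuinely different route from the paper's. The paper gives no self-contained proof of Proposition \ref{P2.1+}: it presents the result as the $\beta=2$, zero-source specialisation of the source duality (\ref{2.5bD}) of Proposition \ref{P3.5}, whose (sketched) derivation goes through the generalised Hermite/Jack polynomial operator formula (\ref{2.5d}) and the symmetry (\ref{2.5f}) of the Dunkl-type Laplacian. Your proof instead uses the elementary complex Gaussian (bosonic) representation of $1/\det$, the characteristic function $\langle e^{i\tr (HB)}\rangle_{{\rm GUE}}=e^{-\frac14\tr B^2}$, and the transposition symmetry $\tr(WW^\dagger)^2=\tr(W^\dagger W)^2$; the identity $\sum_l x_l|w_l|^2=\tr(XW^\dagger W)$ and the measure-preserving map $W'\mapsto (W')^\dagger$ are used correctly, Fubini is justified by the uniform bound $|e^{iw^\dagger(x_l\mathbb I_N-H)w}|=e^{-{\rm Im}(x_l)|w|^2}$, and indeed no proportionality constant is left over. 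This is the natural bosonic counterpart of the Grassmann proof of (\ref{2.X1}), is more elementary than the paper's route, and fully proves (\ref{2.X1a+}) (a single $x$, so automatically one half-plane); what it gives up is the general-$\beta$ and source-matrix generality that the paper's Jack-polynomial route delivers in one stroke.

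The one genuine gap is the mixed-sign case of (\ref{2.X1+}), which the statement as written (only ${\rm Im}\,x_l\neq 0$) nominally includes. Your fallback of "analytic continuation" does not close it: the parameter domain $\{{\rm Im}\,x_l\neq 0,\ l=1,\dots,p\}$ has $2^p$ connected components, and the left-hand side is not a single analytic function across ${\rm Im}\,x_l=0$ --- the Cauchy transforms in (\ref{2.9}) have Sokhotski--Plemelj jumps there --- so validity in the all-upper chamber (plus its conjugate) transfers nothing to, say, ${\rm Im}\,x_1>0>{\rm Im}\,x_2$. Moreover, as you note, in a mixed chamber $\det(X-H)$ vanishes on a codimension-two subset of $\mathcal H_p$, and $|\det(X-H)|^{-N}$ is then not absolutely integrable for $N\ge 2$, so even the meaning of the right-hand side there requires a regularisation or the convention of \cite{De09}. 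You should therefore either state your result for all $x_l$ in a common half-plane (which is the honest scope of the argument, and suffices for (\ref{2.X1a+})), or supply a separate argument/interpretation for the mixed chambers --- e.g.\ carrying out in detail the alternative route you sketch via (\ref{2.9}) and the contour representation, where the chamber dependence enters through the choice of $I_\pm$.
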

      
 We will see in Section \ref{S3.3} that these results are special cases of dualities applying    to the GUE
 generalised to include a source matrix.
 
 \subsection{Powers of characteristic polynomials and the LUE and JUE}
 
 There are no known (literal) analogues of (\ref{2.X1}) beyond the Gaussian case. However progress can be made if one
 seeks analogues of (\ref{2.X1a}) only. This is possible for the case of the unitary invariant ensemble with weight
 $w(x) = x^a e^{-x} \mathbbm 1_{x>0}$, referred to as the Laguerre unitary ensemble and denoted LUE (or LUE${}_{N,a}$
 when the size of the matrix $N$ and the Laguerre parameter $a$ are to be emphasised), or the weight
 $w(x) = x^{a_1}(1 - x)^{a_2}   \mathbbm 1_{x>0}$, referred to as the Jacobi unitary ensemble JUE (or 
 JUE${}_{N,(a_1,a_2)}$). Whereas the GUE case of the determinant in (\ref{2.3A}) was shown to give rise to a Hankel
 determinant, it turns out that both the LUE and JUE cases give rise to Toeplitz determinants.
 
 \begin{prop}\label{P2.3} (\cite[for the LUE case with $\alpha = 2$]{FH94}) 
We have
  \begin{align}\label{2.4}
  \det [D_x^{k-1} L_{N+j-1}^{(a)}(x) ]_{j,k=1,\dots,p} & = (-1)^{p(p-1)/2} \det [  L_{N+j-k}^{(a+k-j)}(x) ]_{j,k=1,\dots,p}, \nonumber \\
   \det [D_x^{k-1} P_{N+j-1}^{(a_1,a_2)}(x) ]_{j,k=1,\dots,p} & \propto \det [  P_{N+j-k}^{(a_1+k-j,a_2+k-j)}(x) ]_{j,k=1,\dots,p}. 
   \end{align}  
    \end{prop}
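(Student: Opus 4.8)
The plan is to insert a suitable Cauchy-type representation (a generating-function contour integral in the Laguerre case, a Rodrigues-type contour integral in the Jacobi case) of the orthogonal polynomial into \emph{both} sides of each identity, expand the resulting determinants by multilinearity in the columns, and then recognise that the two $p$-fold contour integrals that appear are the same function of the integration variables --- up to the stated sign in the Laguerre case and up to an $x$-independent constant in the Jacobi case.

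I would start with the Laguerre identity. From the iterated differentiation rule $D_x^{m}L_n^{(a)}(x)=(-1)^{m}L_{n-m}^{(a+m)}(x)$, applied with $m=k-1$, $n=N+j-1$, column $k$ of the left-hand determinant becomes $(-1)^{k-1}L_{N+j-k}^{(a+k-1)}(x)$; pulling the scalar $(-1)^{k-1}$ out of each column contributes exactly the factor $(-1)^{p(p-1)/2}$, and the matrix on the right, $[L_{N+j-k}^{(a+k-j)}(x)]_{j,k=1}^{p}$, is recognised as a Toeplitz determinant (its entry depends on $j,k$ only through $j-k$). It then remains to prove $\det[L_{N+j-k}^{(a+k-1)}(x)]_{j,k=1}^{p}=\det[L_{N+j-k}^{(a+k-j)}(x)]_{j,k=1}^{p}$. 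For this I would substitute the generating-function representation $L_n^{(a)}(x)=\frac{1}{2\pi i}\oint\frac{e^{-xt/(1-t)}}{(1-t)^{a+1}\,t^{n+1}}\,dt$ (a small positively oriented loop about $t=0$, which conveniently returns $0$ when $n<0$, matching the usual convention), using an independent variable $t_k$ in column $k$. By column-multilinearity each side becomes $(2\pi i)^{-p}\oint\cdots\oint$ of $\prod_{k}(\text{a prefactor depending on }t_k,x)$ times a determinant of monomials in the $t_k$, and the latter is a Vandermonde. Writing $u_k=t_k/(1-t_k)$ --- so that $u_k^{-1}-u_j^{-1}=t_k^{-1}-t_j^{-1}$ --- the two integrands are seen to be literally the same function of $(t_1,\dots,t_p)$, which proves the Laguerre identity (with exact equality and the explicit sign).

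The Jacobi identity is handled the same way, now with the Rodrigues-type Cauchy representation $P_n^{(a_1,a_2)}(x)\propto x^{-a_1}(1-x)^{-a_2}\frac{1}{2\pi i}\oint\frac{w^{n+a_1}(1-w)^{n+a_2}}{(w-x)^{n+1}}\,dw$ (a small loop about $w=x$) together with $D_x^{m}P_n^{(a_1,a_2)}(x)\propto(n+a_1+a_2+1)_{m}\,P_{n-m}^{(a_1+m,a_2+m)}(x)$, applied with $m=k-1$, $n=N+j-1$. The feature that makes the scheme close is the parameter conspiracy on the right-hand determinant: with $n=N+j-k$ and parameters $a_1+k-j,\,a_2+k-j$, the exponents $n+(a_1+k-j)=N+a_1$ and $n+(a_2+k-j)=N+a_2$ are independent of $j$ and $k$, while the prefactors $x^{-(a_1+k-j)}(1-x)^{-(a_2+k-j)}$ factor as (row)$\times$(column) and so come out of the determinant; the surviving integrand is once more a Vandermonde, this time in $(w_k-x)^{-1}$. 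One then reorganises the left-hand integral into the same shape and reads off the proportionality. (A more hands-on alternative, close to the original LUE argument, is to rewrite column $k$ via the differentiation rule and then use the contiguous relations $L_n^{(a-1)}(x)=L_n^{(a)}(x)-L_{n-1}^{(a)}(x)$ and their Jacobi counterparts as elementary row and column operations; already for $p=2$ the difference of the two determinants telescopes to $0$ using $L_{N}^{(a+1)}(x)-L_{N}^{(a)}(x)=L_{N-1}^{(a+1)}(x)$, and a structured iteration handles general $p$.)

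The step I expect to be the real obstacle is the scalar bookkeeping in the Jacobi case. In contrast to the clean Laguerre sign $(-1)^{k-1}$, the Pochhammer-type constants $(N+j+a_1+a_2)_{k-1}$ produced by $D_x^{k-1}$ --- together with any further Pochhammer factors generated when the column parameters are realigned --- depend jointly on $j$ and $k$ and do \emph{not} split as (row)$\times$(column), so they cannot simply be extracted from the determinant; one has to transport them through the contour manipulation and verify that they recombine into a single $x$-independent constant. This is precisely why the Jacobi statement is only $\propto$, whereas the Laguerre one is an exact identity with the explicit sign $(-1)^{p(p-1)/2}$. A secondary, purely technical matter is to fix the integration contours and branch cuts so that every representation used is simultaneously valid for all the index triples that occur, in particular so that entries with negative lower index vanish consistently.
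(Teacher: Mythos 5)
Your Laguerre argument is correct and complete, and it takes a genuinely different route from the paper's. The paper works on $\det[D_x^{k-1}L^{(a)}_{N+j-1}(x)]$ first with parameter-shifting elementary row operations based on the coefficient-free contiguous relation $L_n^{(a)}(x)=L_n^{(a+1)}(x)-L_{n-1}^{(a+1)}(x)$, and only then substitutes $D_xL_n^{(a)}(x)=-L_{n-1}^{(a+1)}(x)$ entrywise; every constant generated is $\pm1$, so it factors out of rows and columns. You instead apply the iterated derivative rule first, extract the sign $(-1)^{p(p-1)/2}$ columnwise, and reduce the remaining claim $\det[L^{(a+k-1)}_{N+j-k}(x)]=\det[L^{(a+k-j)}_{N+j-k}(x)]$ to the identity of two $p$-fold contour integrals via the generating-function representation and column multilinearity. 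I checked this: with $u_k=t_k/(1-t_k)$ not only the Vandermonde factors but also the residual single-variable prefactors coincide, so both determinants equal the same multiple integral and the Laguerre identity follows with the exact sign. This is a legitimate alternative proof (and it has the bonus of an explicit closed multiple-integral form for the determinant); your parenthetical ``hands-on alternative'' with contiguous relations is essentially the paper's own argument.

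The Jacobi half, however, has a genuine gap, and it is more than the ``scalar bookkeeping'' you anticipate. Your reduction of the right-hand determinant is fine (the parameter conspiracy makes the exponents $N+a_1$, $N+a_2$ constant and the prefactors split, leaving a pure Vandermonde in $(w_k-x)^{-1}$). But on the left, $D_x^{k-1}P^{(a_1,a_2)}_{N+j-1}(x)\propto (N+j+a_1+a_2)_{k-1}\,P^{(a_1+k-1,a_2+k-1)}_{N+j-k}(x)$, and the Pochhammer factor depends jointly on $j$ and $k$, so after inserting the contour representation the left-hand side reduces to an integral of $\det[(N+j+a_1+a_2)_{k-1}\,\zeta_k^{\,j}]$-type rather than of a plain Vandermonde $\det[\eta_k^{\,j}]$. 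The mechanism that closed the Laguerre case --- ``the two integrands are literally the same'' --- is therefore unavailable, and ``reorganise the left-hand integral into the same shape and read off the proportionality'' is exactly the statement still to be proved. Such non-splitting factors cannot be waved through: even a factor depending only on $j-k$ alters a determinant by more than a constant in general. For instance, with $p=2$, $N=1$, $a_1=a_2=0$ one has the left side equal to $(3x^2+1)/2$ and $\det[P^{(a_1+k-j,a_2+k-j)}_{N+j-k}(x)]=(3x^2+1)/4$, while inserting the factor $\Gamma(N+a_1+a_2+1+k-j)$ into the entries changes the latter determinant to $(x^2+1)/2$, which is not a constant multiple. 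So closing the Jacobi case requires a genuinely new step --- for example generating the derivative constants inside the contour integral by integrations by parts in $w$ before forming the determinant, or performing parameter-shifting row operations (whose Jacobi coefficients depend on the row index but not on the column, hence are legitimate) before differentiating, with the accumulated coefficients tracked into the final $x$-independent constant. As written, the Jacobi part of your proposal is a plan rather than a proof; note the paper itself only details the Laguerre case and states the Jacobi identity up to proportionality precisely because of these constants.
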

 
 \begin{proof}
  Consider for definiteness the LUE case. The relevant orthogonal polynomials (up to the normalisation condition of 
  being monic) are the Laguerre polynomials
  $\{ L_n^{(a)}(x) \}_{n=0,1,\dots}$. The determinant  in  (\ref{2.3A}) is to be manipulated by the elementary row operation of
  replacing the row $p-j$ by row $p-j$ minus row $p-j+1$ for $j=0,\dots,p-2$ in order, then simplifying using the
  Laguerre polynomial identity $L_n^{(a)}(x) = L_n^{(a+1)}(x)  - L_{n-1}^{(a+1)}(x)$. We repeat this same
  procedure, finishing one further row down from the top at each iteration, a total of $p-2$ further times to conclude
  $$
  \det [ D_x^{k-1} L_{N+j-1}^{(a)}(x) ]_{j,k=1,\dots,p} =
  \det [D_x^{k-1}  L_{N+j-1}^{(a+1 - j)}(x) ]_{j,k=1,\dots,p}.
  $$
  Use of the Laguerre polynomial formula $D_x L_p^{(a)}(x) = -  L_{p-1}^{(a+1)}(x)$ a total of $j$ times in row
  $j$ allows for  a substitution of the matrix element on the RHS of this to give the RHS of (\ref{2.4}).

  \end{proof}
  
  The identities in Proposition \ref{P2.3} provide duality formulas for the average in (\ref{2.3A}) in the cases of the LUE and JUE.
  However, unlike the case of the GUE, these dualities no longer relate back to the same ensembles, but rather to a class of
  $\beta = 2$ circular ensembles (\ref{2.1X}). The reason for this is that with $\beta_j = \int_{-1/2}^{1/2} b(e^{2 \pi i \theta})
  e^{2 \pi i \theta (j-k)} \, d \theta$ one has the general identity \cite{Sz75}
  \begin{equation}\label{2.3B}
    \det [ \beta_{j-k} ]_{j,k=1,\dots,n} = {1 \over n!} \int_{-1/2}^{1/2}  d\theta_1 \cdots  \int_{-1/2}^{1/2} d\theta_n \,
    \prod_{l=1}^n b(e^{2 \pi i \theta_l}) \prod_{1 \le j < k \le n} | e^{2 \pi i \theta_k}  - e^{2 \pi i \theta_j} |^2;
 \end{equation}  
  cf.~(\ref{2.3b}).
  
   \begin{cor}\label{C2.1}
  We have \cite{Fo93c}
   \begin{equation}\label{2.4C} 
   \Big  \langle    \det ( x \mathbb I_N - H)^p     \Big  \rangle_{ {\rm LUE}_{N,a}}  \propto
     \Big  \langle \prod_{l=1}^p e^{- x e^{2 \pi i \theta_l}} \Big \rangle_{{\rm CE}_{2,p} [e^{\pi i (a - N)\theta}
     |1 + e^{2 \pi i \theta} |^{a + N}]}  
   \end{equation}    
    and
     \begin{multline}\label{2.4D}   
   \Big  \langle    \det ( x \mathbb I_N - H)^p     \Big  \rangle_{ {\rm JUE}_{N,(a_1,a_2)}}  \\ \propto
(1 - x)^{pN}  \Big  \langle \prod_{l=1}^p \Big ( 1 - {x \over 1 - x}  e^{ 2 \pi i  \theta_l} \Big )^{N+a_2}
  \Big \rangle_{{\rm CE}_{2,p} [e^{\pi i (a_1 - N)\theta}
     |1 + e^{2 \pi i \theta} |^{a_1 + N}]}.   
   \end{multline}
  \end{cor}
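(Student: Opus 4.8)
The plan is to begin with the determinantal identity (\ref{2.3A}), specialise it to the LUE and JUE weights, use Proposition~\ref{P2.3} to convert the derivative determinant into a \emph{Toeplitz} determinant in the relevant shifted Laguerre (respectively Jacobi) polynomials, and then read this off via the Heine-type identity (\ref{2.3B}) as an average over a circular ensemble CE${}_{2,p}$. Throughout, every intermediate step need only be tracked up to factors independent of $x$ --- the monic versus standard normalisation of the orthogonal polynomials contributes only a constant taken out of each row, the ensemble normalisations are constants, and the Toeplitz ``gauge'' freedom (replacing $c_m$ by $\lambda^m c_m$ leaves $\det[c_{j-k}]$ unchanged) will absorb an $m$-dependent phase --- which is all that is asserted.

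For the LUE, with $w(x)=x^a e^{-x}\mathbbm 1_{x>0}$, identity (\ref{2.3A}) gives $\langle \det(x\mathbb I_N-H)^p\rangle_{{\rm LUE}_{N,a}}\propto\det[D_x^{k-1}L^{(a)}_{N+j-1}(x)]_{j,k=1}^p$, and the first identity of (\ref{2.4}) rewrites this as $\propto\det[L^{(a+k-j)}_{N+j-k}(x)]_{j,k=1}^p=\det[c_{j-k}]_{j,k=1}^p$ with $c_m:=L^{(a-m)}_{N+m}(x)$. The heart of the matter is to display $c_m$ (equivalently, after transposing the matrix, $c_{-m}=L^{(a+m)}_{N-m}(x)$) as a Fourier coefficient. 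Substituting $u=t/(1-t)$ in the generating function $\sum_n L^{(\alpha)}_n(x)t^n=(1-t)^{-\alpha-1}e^{-xt/(1-t)}$ gives $L^{(\alpha)}_n(x)=\tfrac1{2\pi i}\oint e^{-xu}(1+u)^{\alpha+n}u^{-n-1}\,du$ around a small loop about the origin; taking $\alpha=a+m$, $n=N-m$ makes the exponent $\alpha+n=a+N$ independent of $m$, and deforming the loop to $|u|=1$, $u=e^{2\pi i\theta}$ (using $1+e^{2\pi i\theta}=e^{\pi i\theta}|1+e^{2\pi i\theta}|$ on $-1/2<\theta<1/2$) identifies $c_{-m}$ as the $m$th Fourier coefficient of the symbol $b(e^{2\pi i\theta})=e^{-xe^{2\pi i\theta}}\,e^{\pi i(a-N)\theta}|1+e^{2\pi i\theta}|^{a+N}$. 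Feeding this symbol into (\ref{2.3B}) turns $\det[c_{j-k}]_{j,k=1}^p$ into $\tfrac1{p!}\int\!\cdots\!\int\prod_l e^{-xe^{2\pi i\theta_l}}e^{\pi i(a-N)\theta_l}|1+e^{2\pi i\theta_l}|^{a+N}\prod_{j<k}|e^{2\pi i\theta_k}-e^{2\pi i\theta_j}|^2\,d\theta$, which after dividing by the ($x$-independent) circular normalisation is the right-hand side of (\ref{2.4C}).

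The JUE case runs in parallel, starting from the second identity of (\ref{2.4}) and the Rodrigues formula $P^{(a_1,a_2)}_n(x)\propto x^{-a_1}(1-x)^{-a_2}D_x^n[x^{a_1+n}(1-x)^{a_2+n}]$ for the polynomials orthogonal on $(0,1)$ with weight $x^{a_1}(1-x)^{a_2}$. Writing $D_x^n$ as a Cauchy integral around $x$ and making the substitution $z=x(1+e^{2\pi i\theta})$ (a circle of radius $x$ about $x$), with $n=N-m$ and parameters $a_i+m$, collapses all powers of $x$ and of $1-x$ to an $m$-independent constant $(-1)^N(1-x)^{-N}$ times a gauge factor $(x-1)^m$ that the Toeplitz determinant does not see, leaving the symbol $\big(1-\tfrac{x}{1-x}e^{2\pi i\theta}\big)^{N+a_2}e^{\pi i(a_1-N)\theta}|1+e^{2\pi i\theta}|^{a_1+N}$ --- the last two factors again coming from $(1+e^{2\pi i\theta})^{a_1+N}e^{-2\pi iN\theta}$. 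The $m$-independent constant, raised to the $p$th power, produces the $(1-x)^{pN}$ out front, and (\ref{2.3B}) then delivers (\ref{2.4D}).

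The main obstacle is getting the Jacobi contour right: one has to hit on the substitution that produces the factor $1+e^{2\pi i\theta}$ matching the stated circular weight --- it is $z=x(1+e^{2\pi i\theta})$, not the more natural circle of radius $x(1-x)$ centred at $x$, which would instead give $1+(1-x)e^{2\pi i\theta}$ --- and then deal with the fact that this circle passes through the branch point $z=0$ and, for $x>1/2$, also encloses $z=1$. These analytic points I would handle by first establishing the identity for small $x>0$, where the contour is admissible and the binomial series converge on it, and then extending to all $x$ by analyticity of both sides of (\ref{2.4C})--(\ref{2.4D}) in $x$; some care with the various $x$-independent proportionality constants and with the $e^{2\pi i m\theta}$ versus $e^{-2\pi i m\theta}$ Fourier convention (the source of the transpose above) is also needed.
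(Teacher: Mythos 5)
Your proposal is correct, and its skeleton --- (\ref{2.3A}), then Proposition \ref{P2.3} to reach a Toeplitz determinant, then the Heine identity (\ref{2.3B}) to convert that determinant into a CE${}_{2,p}$ average --- coincides with the paper's. Where you differ is in how the entries are exhibited as Fourier coefficients of a single symbol: the paper quotes the hypergeometric polynomial forms (\ref{2.4A}) together with the circular integral representations (\ref{2.4B}), and in the Jacobi case first applies the Euler transformation (\ref{2.4E+}) so that the resulting ${}_2F_1$ fits (\ref{2.4B}), the factor $(1-x)^n$ being extracted from the determinant. You instead derive the needed representations from scratch: for Laguerre, the generating function gives $L^{(a+m)}_{N-m}(x)=\frac{1}{2\pi i}\oint e^{-xu}(1+u)^{a+N}u^{m-N-1}du$ with the exponent $a+N$ independent of $m$ and, pleasingly, with constant exactly $1$, so the Toeplitz symbol is read off directly; for Jacobi, the Rodrigues formula plus a Cauchy integral over the circle $z=x(1+e^{2\pi i\theta})$ produces $P^{(a_1+m,a_2+m)}_{N-m}(1-2x)=(1-x)^{N-m}\times$ (the $m$th Fourier coefficient of the stated symbol), again with constant exactly $1$ since the $(N-m)!$ from Cauchy's formula cancels the Rodrigues normalisation --- your contour choice plays precisely the role the Euler transformation plays in the paper. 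What your route buys is self-containedness and explicit control of the $m$-dependent normalisation constants (which the paper's ``follows immediately'' leaves implicit); what it costs is the contour/branch-point care you flag, which the paper sidesteps by citing (\ref{2.4B}) with its validity condition ${\rm Re}\,(a+b)>-1$.

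Two small repairs to your write-up: the per-entry factor in the Jacobi case is $(1-x)^{N-m}=(1-x)^{N}\cdot(1-x)^{-m}$, so the $m$-independent factor is $(1-x)^{+N}$ (your $(1-x)^{-N}$ raised to the $p$th power would give $(1-x)^{-pN}$, not the stated $(1-x)^{pN}$); and the circle $|z-x|=x$ passes through the branch point $z=0$ for \emph{every} $x>0$, not just for large $x$ --- what saves the argument is $a_1+N>0$, so the integrand vanishes there and the Cauchy formula holds by a limiting deformation (for $x<1/2$ the point $z=1$ is safely outside), after which your analytic-continuation step, or simply term-by-term expansion of the symbol reproducing the finite binomial sum for $P^{(a_1+m,a_2+m)}_{N-m}$, finishes the job. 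Also note that a general $m$-dependent constant cannot be absorbed by the Toeplitz gauge $c_m\mapsto\lambda^m c_m$; your argument is safe only because the constants turn out to be exactly $m$-independent, which is worth stating.
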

  
 \begin{proof}
 To apply (\ref{2.3B}) it is necessary to be able to identify the corresponding generating functions $b(z)$ that are consistent
 with the Toeplitz determinants in (\ref{2.4}). For this we first note the hypergeometric polynomial forms of the Laguerre and
 Jacobi polynomials
 \begin{equation}\label{2.4A} 
 L_n^{(a)}(x) \propto {}_1 F_1(-n,a+1;x), \quad P_n^{(a_1,a_2)}(1 - 2x) \propto {}_2 F_1 (-n,1+a_1 + a_2 + n; a_1 + 1;x)
  \end{equation}  
  (the relevant orthogonal polynomials for the Jacobi weight $w(x) = x^{\alpha_1} (1 - x)^{a_2} \mathbbm 1_{0 < x < 1}$ are the Jacobi
  polynomials $P_n^{(a_1,a_2)}(1 - 2x)$). For these hypergeometric functions ${}_1F_1$ and ${}_2 F_1$ more generally
  we have the circular integral forms (see e.g.~\cite[Exercises 13.1~q.4(i) with $N=1$ and Eq.~(13.11) with $N=1$]{Fo10})
  \begin{align}\label{2.4B} 
 {}_1F_1(-b;a+1;t) & \propto \int_{-1/2}^{1/2} e^{\pi i \theta (a - b)} | 1 + e^{2 \pi i \theta} |^{a+b} e^{- t e^{2 \pi i \theta}} \, d \theta \nonumber \\
  {}_2F_1(r,-b;a+1;t) & \propto \int_{-1/2}^{1/2} e^{\pi i \theta (a - b)} | 1 + e^{2 \pi i \theta} |^{a+b}  (1 + t e^{2 \pi i \theta})^{-r} \, d \theta,
  \end{align}
  valid for Re$\, (a+b) > -1$.
  
  The LUE duality  (\ref{2.4C}) follows immediately by substituting for the Laguerre polynomial on the RHS of the first
  identity in (\ref{2.4}) as is consistent with (\ref{2.4A}) and (\ref{2.4B}). In the Jacobi case the particular ${}_2 F_1$ in
  (\ref{2.4}) (after the replacement $x \mapsto 2x-1$) does not allow the circular integral in (\ref{2.4B}) to be identified
  in a form as required by (\ref{2.3B}). However, upon use of the Euler transformation
  \begin{equation}\label{2.4E+}  
   {}_2 F_1 (-n,1+a_1 + a_2 + n; \alpha_1 + 1;x) = (1-x)^{n}  {}_2 F_1 \Big (-a_2 - n,-n; a_1 + 1; - {x \over 1 - x}
   \Big )
   \end{equation}  
   and factoring $ (1-x)^{n} $ out of the determinant, the form as  required by (\ref{2.3B}) is obtained, and (\ref{2.4D})
   results.
 \end{proof}     
 
 \begin{remark} ${}$ \\
 1.~The RHS of (\ref{2.4C}) and (\ref{2.4D}) equals unity for $x=0$. Requiring the same on the LHS provides a way
 to fix the proportionality constants. \\
 2.~Later (see (\ref{2.4Di}) below with $\beta = 2$), as an alternative to (\ref{2.4D}) we will obtain a result which implies that the RHS can be rewritten in the manifestly polynomial form
 \cite[Exercises 13.2 q.6(ii) with $\beta = 2$]{Fo10}
   \begin{equation}\label{2.4D+}   
(1 - x)^{pN}  \Big  \langle \prod_{l=1}^p \Big ( 1 - {x \over 1 - x}  e^{ 2 \pi i  \theta_l} \Big )^{N}
  \Big \rangle_{{\rm CE}_{2,p} [e^{\pi i (a_1-a_2 - N)\theta}
     |1 + e^{2 \pi i \theta} |^{a_1 + a_2 + N}]} .  
   \end{equation} 
\end{remark} 

\subsection{Characteristic polynomial dualities for unitary random matrices}
The Wronskian type determinant formula (\ref{2.3A}) remains valid in the setting of
circular ensembles (\ref{2.1X}) with $\beta = 2$. In saying this, it is assumed that
the orthogonality relation now has the form $\int_{-1/2}^{1/2} w(\theta)
p_j(e^{2 \pi i \theta}) p_k(e^{-2 \pi i \theta}) \, d \theta = h_j \delta_{j,k}$, and thus
involves the complex conjugate operation. For the choice of weight
 \begin{equation}\label{2.6a}
 w(\theta) = e^{\pi i \theta (a_1 - a_2)} |1 + e^{2 \pi i \theta} |^{a_1+a_2}
 \end{equation} 
 one has the hypergeometric polynomial expressions
 \begin{equation}\label{2.6b}
 p_n(z) \propto {}_2 F_1(-n,a_1+1;-a_2-n+1;-z) \propto  {}_2 F_1(-n,a_1+1;a_1+a_2+1;1+z),
 \end{equation}  
 where the second form follows upon use of a transformation formula satisfied
 by the $ {}_2 F_1$ function (for the first see \cite[Exercises 5.5 q.3]{Fo10}). Upon use now of the classical integral representation
 of Euler type we can deduce the duality
 \begin{equation}\label{2.6c} 
 \Big \langle \prod_{l=1}^N ( z - e^{2 \pi i \theta_l}) \Big \rangle_{{\rm ME}_{2,N}[w]} \propto
 \langle (1 - (1 + z) x)^N \rangle_{x \in {\rm B}[a_1+1,a_2]},
 \end{equation}  
 where B$[b,c]$ denotes the beta distribution, supported on
 $(0,1)$ with density proportional to $x^{b-1}(1-x)^{c-1}$. 
 
 For circular ensembles, of more interest than averages of characteristic polynomials
 is averages of their absolute value squared. Noting that for $|z| = 1$
 \begin{equation}\label{2.6d} 
 \prod_{l=1}^N | z - e^{2 \pi i \theta_l} |^2 = (-1)^N z^N \prod_{l=1}^N e^{- 2 \pi i \theta_l} (1 - \bar{z}  e^{ 2 \pi i \theta_l})^2,
 \end{equation}
 the equivalent problem in relation to averages of even powers of this quantity is to consider
 \begin{equation}\label{2.6e}  
 \Big \langle \prod_{l=1}^N (1 - \bar{z}  e^{ 2 \pi i \theta_l})^{2q} \Big \rangle_{{\rm ME}_{2,N}[e^{-2 \pi i  q\theta} w]} 
  \end{equation}
  for $q$ a positive integer. With $N$ replaced by $p$, $2q$ by $N$, and $a_1+a_2$ replaced by $a_1+a_2 + 2q$ we
  recognise this average from (\ref{2.4D+}), which we know satisfies a duality with the JUE average
  in (\ref{2.4D}). Working along this line, together with its $\beta$ extension, will be given in \S \ref{S3.6}.

\section{Powers and products of characteristic polynomial dualities for $\beta$ ensembles}\label{S3}

\subsection{Jack polynomial theory}

  Common to all $\beta$ generalisations is the
   essential use of Jack
   polynomial theory \cite[Ch.~VI.6]{Ma95}, \cite[Ch.~12]{Fo10}, \cite[Ch.~7]{KK09}.
Jack polynomials, to be denoted $P_\kappa^{(\alpha)}(\mathbf x)$,
 depend on a  set of variables $\mathbf x = \{x_j\}$  and a parameter $\alpha > 0$. When  
expanded in terms of the monomial symmetric polynomials $\{ m_\mu(\mathbf x) \}$ they have the triangular structure
\begin{equation}\label{4.0}
P_\kappa^{(\alpha)}(\mathbf x) = m_\kappa(\mathbf x) +
\sum_{\mu < \kappa} c_{\kappa, \mu}^{(\alpha)} 
m_\mu (\mathbf x).
\end{equation}
Here the 
notation $\mu < \kappa$ denotes the partial order on partitions defined by the requirement that $\sum_{i=1}^s \mu_i \le \sum_{i=1}^s \kappa_i$, for each $s=1,\dots,\ell(\kappa)$ ($\ell(\kappa)$
 denotes the number of nonzero parts of $\kappa$). Once this structure is stipulated, the Jack polynomials can be uniquely determined by an orthogonality property.
 Within the theory, several orthogonalities hold. Here we make note the one which relates to a random matrix average
\begin{equation}\label{3.41}
\Big \langle   P_\kappa^{(2/\beta)}(\mathbf z ) P_\mu^{(2/\beta)}(\bar{\mathbf z} ) \Big \rangle_{{\rm CE}_{\beta,N}[1]} \propto \delta_{\kappa,\mu},
\end{equation}
where $\mathbf z = \{ e^{2 \pi i \theta_l} \}_{l=1}^N$;
see e.g.~\cite[Prop.~12.6.3]{Fo19} for the normalisation. 

The Jack polynomials with $\alpha = 1$ are the familiar Schur polynomials \cite{Ma95}, denoted $s_\kappa(\mathbf x)$,
while for $\alpha = 2$ they up to choice of
normalisation  the zonal polynomials
of mathematical statistics \cite{Mu82}. Both these cases permit an interpretation in terms of spherical functions, as does the case
$\alpha =1/2$ \cite{Ma95}. These three cases furthermore relate to certain random matrix integrals.
In this regard,  let us define the particular hypergeometric function in two sets of
variables $\mathbf x = \{x_i\}_{i=1,\dots,N}$ and $\mathbf y = \{y_i\}_{i=1,\dots,N}$ 
 as a series in Jack polynomials according to
\begin{equation}\label{6.3g}
{\vphantom{\mathcal F}}_0^{\mathstrut}\mathcal F_0^{(\alpha)}(\mathbf x; \mathbf y) = \sum_{\kappa} {\alpha^{| \kappa |} \over h_\kappa'} 
{P_\kappa^{(\alpha)}(\mathbf x) P_\kappa^{(\alpha)}(\mathbf y) \over
P_\kappa^{(\alpha)}((1)^N)
}.
\end{equation}
Here $(1)^N$ denotes the point $x_i=1$ ($i=1,\dots,N$) and
 \begin{equation}\label{2.2h}
h_\kappa' = \prod_{s \in \kappa} ( \alpha (a(s) + 1) + l(s))
\end{equation}
(in some Jack polynomial literature,  for example in \cite[Ch.~12 and 13]{Fo10}, this quantity is alteratively denoted $d_\kappa'$),
where the quantities $a(s), l(s)$ are the arm and leg lengths at position $s$ in the diagram associated with $\kappa$ \cite{Ma95}.

While the quantity (\ref{6.3g}) is not restricted to the three special $\alpha$, its matrix integral interpretation is. For the latter,
let $U$ be a complex unitary ($\beta = 2$), real orthogonal ($\beta = 1$), or symplectic ($\beta = 4$ unitary matrix, and let $d^{\rm H} U$
denote the corresponding Haar measure. Then in terms of (\ref{6.3g}) we have (see e.g.~\cite[Eq.~(13.146)]{Fo10})
\begin{equation}\label{6.2g2}
\int e^{ {\rm Tr} ( UG U^\dagger  X_0)} \,
d^{\rm H} U =:  {\vphantom{\mathcal F}}_0^{\mathstrut} \mathcal F_0^{(2/\beta)}(\mathbf x; \mathbf x^{(0)}),
\end{equation}
where $G, X_0$ are Hermitian matrices with complex $(\beta = 2)$, real ($\beta = 1$) and quaternion ($\beta = 4$) elements with
eigenvalues $\mathbf x$ and $\mathbf   x^{(0)}$ respectively.  Note the simplification in the case that $X_0$ is proportional to
the identity (i.e.~all eigenvalues take on the constant value $c$)
\begin{equation}\label{6.2g1}
{\vphantom{\mathcal F}}_0^{\mathstrut}\mathcal F_0^{(\alpha)}(\mathbf x; \mathbf y) \Big |_{\mathbf y = (c)^N} = e^{cx_1 + \cdots + cx_N},
\end{equation}
which in fact holds for general $\alpha \ge 0$ (see e.g.~\cite[Eq.~(13.3)]{Fo10}). General aspects of the group integral
in (\ref{6.2g2}), along the lines initiated by Harish-Chandra \cite{HC57}, can be found in \cite{Mc21}.

\subsection{Gaussian $\beta$ ensemble}
There is a very clean generalisation of (\ref{2.X1a}) and (\ref{2.X1a+}) to the case of the 
Gaussian $\beta$ ensemble.

\begin{prop}\label{P3.1}
We have
\cite{BF97a}, \cite{De09}, \cite[Eq.~(13.162)]{Fo10}
\begin{equation}\label{2.X1c}
    \Big  \langle    \prod_{j=1}^N ( x  -  \sqrt{\alpha} \lambda_j )^p     \Big  \rangle_{ {\rm ME}_{2/\alpha,N}[e^{-\lambda^2}]} =
   i^{-pN} \Big  \langle   \prod_{j=1}^p  ( i  x  -  \lambda_j)^N \Big  \rangle_{{\rm ME}_{2 \alpha,p}[e^{-\lambda^2}]}
   \end{equation}  
   and
   \begin{equation}\label{2.X1d}
    \bigg  \langle    {1 \over \det ( x \mathbb I_N - H)^{p\beta/2}   }  \bigg  \rangle_{ {\rm ME}_{N,\beta}[e^{-\lambda^2}]} =
    \bigg  \langle  {1 \over \det (   x \mathbb I_p - H)^{N \beta/2} }\bigg  \rangle_{ {\rm ME}_{p,\beta}[e^{-\lambda^2}]},
   \end{equation} 
   (note that in this latter duality, in distinction to (\ref{2.X1c}), the value of $\beta$ in the ensemble is the same
   on both sides, and $x$ must have a nonzero imaginary part).
   \end{prop}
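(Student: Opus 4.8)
\emph{Overall strategy.} I would prove both dualities entirely within Jack polynomial theory, because for general $\beta$ there is no underlying matrix model and hence no Harish-Chandra–Itzykson–Zuber integral (\ref{6.2g2}) to invoke, so the HCIZ-plus-Andr\'eief argument of Proposition \ref{P2.1} does not carry over literally. In both cases the plan is to expand the relevant power/product of characteristic polynomials as a series in Jack polynomials $P_\kappa^{(\alpha)}(\lambda_1,\dots,\lambda_N)$, take the $\beta$-ensemble average term by term using a known closed form for $\langle P_\kappa^{(\alpha)}\rangle$ over the Gaussian $\beta$-ensemble, and recognise the result as the expansion of the dual side. The ``$N\leftrightarrow p$'' flip of the proposition is then a shadow of partition conjugation $\kappa\mapsto\kappa'$ (together with $\alpha\mapsto1/\alpha$), which bijects partitions contained in an $N\times p$ rectangle with those contained in a $p\times N$ rectangle.

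\emph{The polynomial duality (\ref{2.X1c}).} Write $\prod_{j=1}^N(x-\sqrt\alpha\lambda_j)^p=x^{Np}\prod_{j=1}^N\bigl(1-(\sqrt\alpha/x)\lambda_j\bigr)^p$ and expand by the dual Cauchy identity for Jack polynomials (\cite{Ma95}); with the second alphabet specialised to a constant $t$ it reads $\prod_{j=1}^N(1+t\lambda_j)^p=\sum_{\kappa\subseteq(p^N)}t^{|\kappa|}P_\kappa^{(\alpha)}(\lambda)\,P_{\kappa'}^{(1/\alpha)}((1)^p)$. Averaging over ME${}_{2/\alpha,N}[e^{-\lambda^2}]$ term by term then needs only the Gaussian $\beta$-ensemble Jack average $\langle P_\kappa^{(\alpha)}\rangle_{{\rm ME}_{2/\alpha,N}[e^{-\lambda^2}]}$, a known evaluation (essentially \cite{BF97a}) that vanishes unless $|\kappa|$ is even and otherwise equals $P_\kappa^{(\alpha)}((1)^N)$ times a factor depending only on $\kappa$ and $\alpha$ (expressible via the arm and leg lengths of $\kappa$). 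The right-hand side is treated identically: ME${}_{2\alpha,p}$ has Jack parameter $1/\alpha$, so $\prod_{j=1}^p(ix-\lambda_j)^N$ expands over $\sigma\subseteq(N^p)$ with coefficients $P_\sigma^{(1/\alpha)}(\lambda)\,P_{\sigma'}^{(\alpha)}((1)^N)$ and the needed average is the same formula with $(\alpha,N)\mapsto(1/\alpha,p)$. Re-indexing the right-hand sum by $\sigma=\kappa'$, the prefactors $i^{-pN}(ix)^{pN}$ collapse to $x^{Np}$ and the two series agree term by term once one checks the $(N,p)$-free identity
\[
(-\sqrt\alpha)^{|\kappa|}\,P_{\kappa'}^{(1/\alpha)}((1)^p)\,\bigl\langle P_\kappa^{(\alpha)}\bigr\rangle_{{\rm ME}_{2/\alpha,N}} = i^{|\kappa|}\,P_\kappa^{(\alpha)}((1)^N)\,\bigl\langle P_{\kappa'}^{(1/\alpha)}\bigr\rangle_{{\rm ME}_{2\alpha,p}},
\]
which follows from the conjugation symmetry relating $P_\kappa^{(\alpha)}((1)^N)$, $P_{\kappa'}^{(1/\alpha)}((1)^p)$ and $h_\kappa'$; the ambiguous odd powers of $i$ and $\sqrt\alpha$ drop out because the averages force $|\kappa|$ even. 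The overall constant is fixed, exactly as after (\ref{2.X5}), by letting $x\to\infty$, where both sides reduce to $x^{Np}$.

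\emph{The reciprocal duality (\ref{2.X1d}).} Here I would use instead the Jack-polynomial binomial theorem, $\prod_{j=1}^N(1-t\lambda_j)^{-c}=\sum_\kappa [c]_\kappa^{(\alpha)}\,C_\kappa^{(\alpha)}(\lambda)\,t^{|\kappa|}/|\kappa|!$, with $[c]_\kappa^{(\alpha)}$ the generalised Pochhammer symbol and $C_\kappa^{(\alpha)}$ the Jack polynomial in the $C$-normalisation; the relevant point is that the exponent $p\beta/2=p/\alpha$ (with $\alpha=2/\beta$) makes $[p/\alpha]_\kappa^{(\alpha)}$ the coefficient that appears. Averaging over ME${}_{\beta,N}[e^{-\lambda^2}]$ and using the Gaussian Jack average in the form $\langle C_\kappa^{(\alpha)}\rangle_{{\rm ME}_{2/\alpha,N}[e^{-\lambda^2}]}=[N/\alpha]_\kappa^{(\alpha)}\times(\text{$N$-free})$ turns the left-hand side of (\ref{2.X1d}) into $x^{-Np/\alpha}\sum_\kappa [p/\alpha]_\kappa^{(\alpha)}[N/\alpha]_\kappa^{(\alpha)}A_\kappa^{(\alpha)}x^{-|\kappa|}/|\kappa|!$ for an $(N,p)$-free $A_\kappa^{(\alpha)}$; this is \emph{manifestly} symmetric under $N\leftrightarrow p$, which is precisely the assertion. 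No conjugation is needed, consistent with $\beta$ being unchanged. The hypothesis that $x$ has nonzero imaginary part ensures the original average is finite and, for $|x|$ large, that the series converges, after which both sides are analytic continuations of one function.

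\emph{Main obstacle and an alternative.} Conceptually nothing is deep once the Jack toolbox is in place; the real work — and the main obstacle — is the bookkeeping in (\ref{2.X1c}): assembling the precise closed form of the Gaussian $\beta$-ensemble Jack average, the principal-specialisation formula for $P_\kappa^{(\alpha)}((1)^N)$, and the conjugation identities for $h_\kappa'$, then verifying that the residual $(N,p)$-free constants in the term-by-term comparison really do cancel. An alternative that parallels the integral proof of Proposition \ref{P2.1} is to establish (\ref{2.X1c}) directly at the level of the Selberg-type integral: writing $\prod_{j,l}(x-\sqrt\alpha\lambda_j)$, up to a constant, as $\prod_{j,l}(\lambda_j-\mu_l)$ with all $\mu_l=x/\sqrt\alpha$, one applies a Fourier-type duality transformation that trades the $N$ real integration variables (pairwise exponent $2/\alpha$) for $p$ new variables (pairwise exponent $2\alpha$) on the imaginary axis and converts the rank-$p$ insertion into a rank-$N$ one; self-duality of the Gaussian weight under this transform returns a size-$p$ Gaussian $\beta$-ensemble average.
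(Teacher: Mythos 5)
For the polynomial duality (\ref{2.X1c}) your plan is essentially one of the derivations the paper itself records: the paper quotes Proposition \ref{C1} (the Gaussian Jack-average duality (\ref{GF}) of \cite{Du03}, \cite{DE05}, \cite{De09}) and then obtains (\ref{2.X1c}) by precisely your manoeuvre --- cross-multiply, sum over $\kappa$ contained in the rectangle, and apply the dual Cauchy identity (\ref{SM2}); since both sides are polynomials in $x$ the term-by-term averaging is unobjectionable. The only place you undersell the content is the ``$(N,p)$-free identity'': once the principal specializations $P_\kappa^{(\alpha)}((1)^N)$ and $P_{\kappa'}^{(1/\alpha)}((1)^p)$ are cancelled from both sides, what remains is a relation between the normalized Gaussian averages themselves, namely $f(\kappa',1/\alpha)=(-\alpha)^{|\kappa|/2}f(\kappa,\alpha)$ where $f(\kappa,\alpha):=\langle P_\kappa^{(\alpha)}\rangle_{{\rm ME}_{2/\alpha,N}[e^{-\lambda^2}]}/P_\kappa^{(\alpha)}((1)^N)$; this, together with the $N$-freeness of $f$, is exactly the content of (\ref{GF}) and does not follow from conjugation identities for specializations and $h_\kappa'$ alone (those factors have already dropped out). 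Cited as a known lemma the step is fine and your proof of (\ref{2.X1c}) coincides with the paper's; claimed as a routine hook-length verification it is a gap. (The paper's primary route is different again: it specializes the source-generalized duality of Proposition \ref{P3.5}, proved via the operator identity (\ref{2.5d}).)

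For the reciprocal duality (\ref{2.X1d}) your argument has a genuine analytic gap. The Gaussian $\beta$ ensemble has unbounded support and factorially growing moments, so the $1/x$ series you obtain from the Jack binomial expansion has zero radius of convergence: the coefficient of $x^{-Np\beta/2-2m}$ is a positive combination of degree-$2m$ Gaussian averages, which grow like $m!$, while $[p/\alpha]_\kappa^{(\alpha)}$ and $[N/\alpha]_\kappa^{(\alpha)}$ grow only polynomially in $|\kappa|$ for fixed $p,N$. Thus the claim ``for $|x|$ large the series converges'' is false; what the term-by-term computation establishes is equality of the asymptotic expansions of the two sides as $x\to\infty$, and since each side is analytic only in the upper (or lower) half-plane --- a sector of opening exactly $\pi$ --- agreement of divergent asymptotic series does not by itself determine the function (no Watson/Borel-type uniqueness applies without further Gevrey estimates on the remainders). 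The manifest $N\leftrightarrow p$ symmetry of the formal series is a genuine and attractive observation, but to convert it into (\ref{2.X1d}) you need a convergent mechanism: for instance the paper's route of specializing the source duality (\ref{2.5bD}), proved by the operator method of \cite{De09} analogous to (\ref{2.5d}); or a deduction from the compact-support Jacobi case (\ref{2.11a}), where the ${}_2F_1^{(\alpha)}$ series does converge, followed by the scaling limits to (\ref{2.11b}) and then to the Gaussian weight with uniform error control.
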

   
 In Section \ref{S3.3} the Gaussian $\beta$ ensemble will be generalised to involve a source. Specialising dualities that can be derived in that
setting will be shown also to imply Proposition \ref{P3.1}. Other derivations of (\ref{2.X1c}) are also known.
The first \cite{BF97a} begins by generalising the LHS from a function of one variable, to a function of $p$ variables
by the replacement
 \begin{equation}\label{3.g+}
 \Big  \langle    \det ( x \mathbb I_N - H)^p     \Big  \rangle  
  \mapsto  \Big  \langle   \prod_{l=1}^p \det ( x_l \mathbb I_N - H)     \Big  \rangle.
  \end{equation}     
  It is then verified that this generalised quantity satisfies a set of partial differential equations associated with the
  Calogero-Sutherland quantum many body problem with harmonic confinement. This leads to the conclusion that
  the generalised average is in fact the polynomial part of a particular class of eigenfunctions for that model system,
  known as the generalised Hermite polynomials $\{P^{(\rm H)}_\kappa(\mathbf x;\alpha)\}$ (specifically, with
  $\kappa = (p)^N$). On the other hand, by developing the theory associated with the latter, an alternative integral
  representation can be deduced, which leads to the RHS of (\ref{3.g+}). For this, introduce the measure
  \begin{equation}\label{3.h+} 
  d \mu^{\rm G}(\mathbf y) :=  {\rm ME}_{p,2/\alpha}[e^{-y^2}] d \mathbf y.
    \end{equation}   
  Then we have from \cite[Cor.~3.2]{BF97a}
   \begin{equation}\label{3.h+A} 
  e^{-(x_1^2+  \cdots + x_p^2)} P^{(\rm H)}_\kappa(\mathbf x;\alpha) \propto \int_{\mathbb R^p }
   {\vphantom{\mathcal F}}_0^{\mathstrut}\mathcal F_0^{(\alpha)}(2 \mathbf y; -i\mathbf x) P_\kappa^{(\alpha)}(i \mathbf y) \,   d \mu^{\rm G}(\mathbf y). 
   \end{equation}  
   Now note that for $\kappa = (p)^N$, $ P_\kappa^{(\alpha)}(i \mathbf y) = \prod_{l=1}^p y_l^N $, then set all the entries of $\mathbf x$ to equal
   $x$ so that the $ {\vphantom{\mathcal F}}_0^{\mathstrut}\mathcal F_0$ function can be simplified according to  (\ref{6.2g1}).
   Then the RHS of (\ref{3.g+}) can be identified as the RHS of (\ref{2.X1c}).
   
   Another distinct derivation of (\ref{2.X1c}) proceeds by first establishing a class of
   duality formulas of independent interest.
   
\begin{prop}\label{C1} (\cite[Th.~8.5.3]{Du03},  \cite[Lemma 2.6]{DE05},  \cite[Prop.~4]{De09})
Let $\kappa'$ denote the conjugate partition, defined by interchanging the roles of the rows
and columns in the diagram of $\kappa$.
For $\ell(\kappa) \le N$ and $\ell(\kappa') \le p$ we have
\begin{equation}\label{GF}
(-2/\beta)^{|\kappa|/2} \Bigg \langle {P_\kappa^{(2/\beta)}(\mathbf x) \over 
P_\kappa^{(2/\beta)}((1)^N)}
\Bigg \rangle_{{\rm ME}_{\beta,N}[e^{-\lambda^2}]}
=
\Bigg \langle {P_{\kappa'}^{(\beta/2)}(\mathbf x) \over 
P_{\kappa'}^{(\beta/2)}((1)^p) }
\Bigg \rangle_{{\rm ME}_{4/\beta,p}[e^{-\lambda^2}]}. 
\end{equation}  
(Note in particular that under the stated conditions there is no dependence on $N$ or $p$ in this identity.)
\end{prop}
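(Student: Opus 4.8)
The plan is to reduce both sides of \eqref{GF} to a single generating‑function evaluation for the $\beta$‑Gaussian ensemble, and then to transfer between the Jack parameters $2/\beta$ and $\beta/2$ using the conjugation involution of Jack polynomial theory. First note that both sides vanish when $|\kappa|$ is odd, since ${\rm ME}_{\beta,N}[e^{-\lambda^2}]$ is invariant under $\mathbf x\mapsto-\mathbf x$ while $P_\kappa^{(\alpha)}$ is homogeneous of degree $|\kappa|$; so we may assume $|\kappa|$ even, which also makes $(-2/\beta)^{|\kappa|/2}$ unambiguous. Write $\alpha=2/\beta$ and $g_\alpha(\kappa):=\langle P_\kappa^{(\alpha)}(\mathbf x)\rangle_{{\rm ME}_{2/\alpha,N}[e^{-\lambda^2}]}/P_\kappa^{(\alpha)}((1)^N)$; since $\beta/2=1/\alpha$ and $4/\beta=2/(1/\alpha)$, the right side of \eqref{GF} is $g_{1/\alpha}(\kappa')$ and the assertion is equivalent to $g_\alpha(\kappa)$ being independent of $N$ for $N\ge\ell(\kappa)$ together with $g_{1/\alpha}(\kappa')=(-\alpha)^{|\kappa|/2}\,g_\alpha(\kappa)$.

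The analytic input I would use is
\[
\Big\langle {\vphantom{\mathcal F}}_0^{\mathstrut}\mathcal F_0^{(\alpha)}(\mathbf x;\mathbf t)\Big\rangle_{{\rm ME}_{2/\alpha,N}[e^{-\lambda^2}]}=e^{(t_1^2+\cdots+t_M^2)/4},\qquad \mathbf t=(t_1,\dots,t_M),\ M\le N.
\]
For $\beta=2$ this follows from the HCIZ formula \eqref{6.2g2} and a Gaussian matrix integral; for general $\beta$ it is the $\kappa=\emptyset$ case of the integral representation \eqref{3.h+A}, on using the scaling property ${\vphantom{\mathcal F}}_0^{\mathstrut}\mathcal F_0^{(\alpha)}(a\mathbf u;b\mathbf v)={\vphantom{\mathcal F}}_0^{\mathstrut}\mathcal F_0^{(\alpha)}(\mathbf u;ab\mathbf v)$, the substitution $\mathbf t=-2i\mathbf x$, and normalising at $\mathbf t=0$. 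Expanding the left side via \eqref{6.3g} and substituting $\langle P_\kappa^{(\alpha)}(\mathbf x)\rangle=g_\alpha(\kappa)P_\kappa^{(\alpha)}((1)^N)$ turns this into
\[
\sum_{\kappa}\frac{\alpha^{|\kappa|}}{h_\kappa'}\,g_\alpha(\kappa)\,P_\kappa^{(\alpha)}(\mathbf t)=e^{(t_1^2+\cdots+t_M^2)/4},
\]
which exhibits $g_\alpha(\kappa)$ as $h_\kappa'/\alpha^{|\kappa|}$ times the coefficient of $P_\kappa^{(\alpha)}(\mathbf t)$ in the Jack expansion of $e^{(t_1^2+\cdots)/4}$ — in particular independent of $N$ for $N\ge\ell(\kappa)$ (take $M$ large and use stability of Jack polynomials). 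The same identity with $\alpha$ replaced by $1/\alpha$ holds for $g_{1/\alpha}$.

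To compare the two expansions I would apply the algebra involution $\omega_\alpha$ of Jack polynomial theory (\cite[Ch.~VI.10]{Ma95}) to the $\mathbf t$‑variables: it multiplies the $r$‑th power sum by $(-1)^{r-1}\alpha$, whence $e^{(t_1^2+\cdots)/4}\mapsto e^{-\alpha(t_1^2+\cdots)/4}$, and it sends $P_\kappa^{(\alpha)}$ to $b_{\kappa'}^{(1/\alpha)}P_{\kappa'}^{(1/\alpha)}$ with $b_\mu^{(\gamma)}=\prod_{s\in\mu}\big(\gamma a(s)+l(s)+1\big)\big/\big(\gamma(a(s)+1)+l(s)\big)$. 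Applying $\omega_\alpha$ to the last display, and independently rescaling $\mathbf t\mapsto i\sqrt\alpha\,\mathbf t$ in its $1/\alpha$‑version (pulling out $(i\sqrt\alpha)^{|\mu|}$ by homogeneity and turning its right side into $e^{-\alpha(t_1^2+\cdots)/4}$ too), one gets two expansions of the same function in the basis $\{P_\mu^{(1/\alpha)}(\mathbf t)\}$. Reading off the coefficient of $P_{\kappa'}^{(1/\alpha)}(\mathbf t)$ and simplifying with the arm/leg conjugation identities $a_{\kappa'}(s')=l_\kappa(s)$, $l_{\kappa'}(s')=a_\kappa(s)$ — which give $b_{\kappa'}^{(1/\alpha)}=h_\kappa'/h_\kappa$ with $h_\kappa:=\prod_{s\in\kappa}(\alpha a(s)+l(s)+1)$, and $h_\kappa=\alpha^{|\kappa|}$ times the quantity \eqref{2.2h} evaluated at $\kappa'$ and parameter $1/\alpha$ — makes all the explicit prefactors telescope to $\alpha^{|\kappa|}/(i\sqrt\alpha)^{|\kappa|}$, which equals $(-\alpha)^{|\kappa|/2}$ for $|\kappa|$ even. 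This gives $g_{1/\alpha}(\kappa')=(-\alpha)^{|\kappa|/2}g_\alpha(\kappa)$, i.e.\ \eqref{GF}. The coefficient of $P_{\kappa'}^{(1/\alpha)}(\mathbf t)$ appears on the left exactly when $\ell(\kappa)$ does not exceed the number of $\mathbf t$‑variables and on the right exactly when $\ell(\kappa')=\kappa_1$ does not; choosing this number large and then restricting the ensemble sizes recovers precisely the hypotheses $\ell(\kappa)\le N$, $\ell(\kappa')\le p$.

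I expect the main obstacle to be the combinatorial bookkeeping of the last step: pinning down the precise normalisation of the $\alpha$‑involution and of the hook products, checking that the powers of $\alpha$ together with $i^{|\kappa|}$ combine to exactly $(-\alpha)^{|\kappa|/2}$, and verifying that the excluded odd‑$|\kappa|$ case — where these factors of $i$ would otherwise persist — is only the trivial identity $0=0$. The analytic content, by contrast, is immediate from \eqref{3.h+A}.
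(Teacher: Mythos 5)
Your proposal is correct. Note that the paper does not prove Proposition \ref{C1} itself --- it is quoted from the cited sources --- so there is no in-text argument to compare against; your generating-function route is nonetheless in the spirit of those references, which likewise rest on Jack-theoretic $\alpha\leftrightarrow 1/\alpha$ duality applied to Gaussian averages. Both pillars of your argument hold up: (i) the evaluation $\langle {\vphantom{\mathcal F}}_0^{\mathstrut}\mathcal F_0^{(\alpha)}(\mathbf x;\mathbf t)\rangle_{{\rm ME}_{2/\alpha,N}[e^{-\lambda^2}]}=e^{p_2(\mathbf t)/4}$ does follow from the $\kappa=\emptyset$ case of (\ref{3.h+A}) (and is immediate at $\beta=2$ by unitary invariance), and expanding via (\ref{6.3g}) identifies $\alpha^{|\kappa|}g_\alpha(\kappa)/h_\kappa'$ with the stable coefficient of $P_\kappa^{(\alpha)}$ in the Jack expansion of $e^{p_2/4}$, giving the $N$-independence for $N\ge\ell(\kappa)$; (ii) the bookkeeping you flagged as the risk does close: with $\omega_\alpha P_\kappa^{(\alpha)}=b_{\kappa'}^{(1/\alpha)}P_{\kappa'}^{(1/\alpha)}$ and the arm/leg conjugation identities one gets $b_{\kappa'}^{(1/\alpha)}\,h_{\kappa'}'\big|_{1/\alpha}=\alpha^{-|\kappa|}h_\kappa'\big|_{\alpha}$, so comparing coefficients of $P_{\kappa'}^{(1/\alpha)}$ yields $g_{1/\alpha}(\kappa')=(-i\sqrt{\alpha})^{|\kappa|}g_\alpha(\kappa)=(-\alpha)^{|\kappa|/2}g_\alpha(\kappa)$ for $|\kappa|$ even and $0=0$ for $|\kappa|$ odd, which is (\ref{GF}); a check at $\kappa=(2)$, where $g_\alpha((2))=1/2$ and $g_{1/\alpha}((1,1))=-\alpha/2$, confirms the sign. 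If you write this up, spell out three routine points: termwise integration of the ${}_0\mathcal F_0$ series against the Gaussian weight (nonnegativity of Jack monomial coefficients plus (\ref{6.2g1}) supplies a dominating bound), the fact that the substitution $\mathbf t=-2i\mathbf x$ only requires equality of Taylor coefficients (or analytic continuation), and that $\omega_\alpha$ lives on the ring of symmetric functions, so the finite-variable identity must first be lifted there by the stability you invoke before being specialised back to $N$ and $p$ variables.
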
   

To derive (\ref{2.X1c}) from knowledge of this, we require too knowledge of 
the Jack polynomial homogeniety property $ P_\kappa^{(\alpha)}(c \mathbf x) = c^{|\kappa|}
P_\kappa^{(\alpha)}( \mathbf x) $ for any scalar $c$, and also the dual Cauchy identity (see e.g.~\cite[Eq.~(12.187)]{Fo10})
\begin{equation}\label{SM2}
\prod_{k=1}^N \prod_{l=1}^p (1 - x_k y_l) = \sum_{\kappa \subseteq (p)^N} (-1)^{|\kappa|} P_\kappa^{(\alpha)}(\mathbf x)  P_{\kappa'}^{(1/\alpha)}(\mathbf y), 
\end{equation}
where $(N)^p$ denotes the partition with $p$ parts all equal to $N$. In (\ref{GF}) we cross multiply the denominators, and multiply both sides by
$(-1)^{|\kappa|/2} c^{|\kappa|}$. Summing both sides over $\kappa \subseteq (N)^p$ (which is equivalent to $\kappa' \subseteq (p)^N$), we can apply
the dual Cauchy identity (\ref{SM2}) to obtain, after replacing $c$ by $1/x$ and multiplying both sides by $x^{pN}$,  (\ref{2.X1c}).

\subsection{Laguerre and Jacobi $\beta$ ensembles --- positive powers of characteristic polynomials}
The dualites of Corollary \ref{C2.1} also allow for clean $\beta$ generalisations \cite{Fo93c}, \cite[Exercises 13.2 q.6]{Fo10}.
   \begin{prop}\label{P2.6}
   We have
   \begin{equation}\label{2.4Ci} 
   \Big  \langle    \det ( x \mathbb I_N - H)^p     \Big  \rangle_{ {\rm ME}_{\beta,N}[x^a e^{-\beta x/2}]}  \propto
     \Big  \langle \prod_{l=1}^p e^{- x e^{2 \pi i \theta_l}} \Big \rangle_{{\rm CE}_{4/\beta,p} [e^{2\pi i (a + 1)\theta/\beta - \pi i  \theta ( N + 1)}
     |1 + e^{2 \pi i \theta} |^{2(a +1)/\beta + N-1}]}  
   \end{equation}    
    and
     \small
     \begin{multline}\label{2.4Di}   
   \Big  \langle    \det ( x \mathbb I_N - H)^p     \Big  \rangle_{ {\rm ME}_{\beta,N}[x^{a_1} (1 - x)^{a_2}]}  \\ \propto
(1 - x)^{pN}  \Big  \langle \prod_{l=1}^p \Big ( 1 - {x \over 1 - x}  e^{ 2 \pi i  \theta_l} \Big )^{N}
  \Big \rangle_{{\rm CE}_{4/\beta,p} [e^{2 \pi i (a_1 - a_2  )\theta/\beta - \pi i \theta N}
     |1 + e^{2 \pi i \theta} |^{2(a_1 + a_2 + 2)/\beta + N -2 }]} \\
   \propto
(1 - x)^{pN}  \Big  \langle \prod_{l=1}^p \Big ( 1 - {x \over 1 - x}  e^{ 2 \pi i  \theta_l} \Big )^{N-1+2(a_2+1)/\beta}
  \Big \rangle_{{\rm CE}_{4/\beta,p} [e^{2 \pi i (a_1 +1  )\theta/\beta - \pi i \theta (N+1)}
     |1 + e^{2 \pi i \theta} |^{2(a_1 + 1)/\beta + N -1}]} .     
   \end{multline}
   \normalsize
  \end{prop}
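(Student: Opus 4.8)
The plan is to mirror the $\beta=2$ argument given in the proof of Corollary \ref{C2.1}, replacing the determinantal machinery with Jack polynomial theory. For the Laguerre case, the starting point is the $\beta$ generalisation of the Wronskian/power-of-characteristic-polynomial formula. Concretely, one first establishes that
\begin{equation*}
\Big\langle \det(x\mathbb I_N - H)^p \Big\rangle_{{\rm ME}_{\beta,N}[x^a e^{-\beta x/2}]}
\end{equation*}
is, up to normalisation, the generalised Laguerre polynomial $L_{(p)^N}^{(a,\alpha)}(x^{(N)})$ with $\alpha = 2/\beta$, evaluated at the point with all coordinates equal to $x$ --- this is the Laguerre analogue of the identification of the Gaussian case with a generalised Hermite polynomial used in \cite{BF97a} and recalled around \eqref{3.h+A}. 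I would then invoke the known integral (duality) representation of the generalised Laguerre polynomial as an average over ${\rm ME}_{4/\beta,p}$ against a $\,{}_0\mathcal F_0^{(\alpha)}$ kernel and a power of a Jack polynomial $P_{(p)^N}$, which after setting all $x_l=x$ collapses via \eqref{6.2g1} to an exponential. The upshot is a $p$-dimensional integral of the form $\int \prod_l y_l^{a'} e^{-c y_l} e^{-x\sum_l y_l} \prod_{j<k}|y_k-y_j|^{4/\beta}\,d\mathbf y$ which, after the substitution $y_l \mapsto -e^{2\pi i\theta_l}$ (the $\beta$-analogue of the Hankel-to-Toeplitz passage in \eqref{2.3B}, i.e.\ using the circular integral forms \eqref{2.4B} now at Dyson index $4/\beta$ rather than $2$), is exactly the circular $\beta$ ensemble average on the RHS of \eqref{2.4Ci}. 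The parameter bookkeeping --- that the weight exponents become $2\pi i(a+1)\theta/\beta - \pi i\theta(N+1)$ and $2(a+1)/\beta + N - 1$ --- is forced by matching the arguments of the circular integral form of the relevant Jack hypergeometric function.

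For the Jacobi case \eqref{2.4Di}, the structure is the same but with the generalised Jacobi polynomial $P_{(p)^N}^{(a_1,a_2,\alpha)}$ in place of the Laguerre one. Here I would first record the two hypergeometric-type representations of the generalised Jacobi polynomial --- one a $\,{}_2\mathcal F_1^{(\alpha)}$ in the variable $x$, the other obtained from it by the Jack polynomial Euler transformation (the $\beta$-generalisation of \eqref{2.4E+}, valid for Jack-polynomial ${}_2\mathcal F_1$'s). Factoring $(1-x)^{pN}$ out of the $\kappa=(p)^N$ component produces the $(1-x)^{pN}$ prefactor on the RHS. The remaining $\,{}_2\mathcal F_1^{(\alpha)}$ in the variable $x/(1-x)$ is then converted, via its circular integral representation at Dyson index $4/\beta$, into the ${\rm CE}_{4/\beta,p}$ average appearing in the first form of \eqref{2.4Di}. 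The second form of \eqref{2.4Di} comes from applying the Euler transformation in the other order (or equivalently using the second $\,{}_2\mathcal F_1$ representation first), which redistributes $a_2$ between the exponent $N-1+2(a_2+1)/\beta$ on the linear factors and the circular weight; the equality of the two forms is itself a consequence of a single Euler-type transformation for the Jack ${}_2\mathcal F_1$.

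The main obstacle, I expect, is not conceptual but is the precise alignment of parameters and normalisation constants across the three translations: (i) characteristic-polynomial average $\leftrightarrow$ generalised Laguerre/Jacobi polynomial, (ii) that polynomial $\leftrightarrow$ a $\,{}_p\mathcal F_q^{(\alpha)}$-type object, and (iii) the circular integral representation of that object at Dyson index $4/\beta$. Each of these is individually standard --- (i) is the Calogero--Sutherland eigenfunction identification of \cite{BF97a}, (ii) is the Jack-polynomial hypergeometric dictionary, and (iii) is the $\beta$-ensemble version of the exercises cited after \eqref{2.4B} --- but the exponents $2(a+1)/\beta$, $2(a_1+a_2+2)/\beta$, the shifts by $N\pm1$, and the phase factors $e^{-\pi i\theta(N+1)}$ must all come out correctly, and a single off-by-one in an arm/leg-length normalisation propagates everywhere. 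Since all proportionality constants are in any case fixed a posteriori by evaluating both sides at $x=0$ (as noted in the Remark following Corollary \ref{C2.1}, where the RHS of each identity is manifestly $1$ at $x=0$), the only thing that genuinely must be tracked is the $x$- and $\theta$-dependent structure, which reduces the bookkeeping burden but does not eliminate it. A clean way to organise this, and the one I would adopt, is to derive \eqref{2.4Ci} and the first line of \eqref{2.4Di} directly from the generalised-polynomial integral representations, and then obtain the second line of \eqref{2.4Di} purely formally from the first by the Jack ${}_2\mathcal F_1$ Euler transformation, avoiding a second independent integral computation.
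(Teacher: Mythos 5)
Your treatment of the Jacobi identity \eqref{2.4Di} is essentially the paper's own argument: the average of the $p$-th power is identified (via Kaneko's PDE characterisation, equivalently the generalised Jacobi polynomial with all arguments equal) with the Jack ${}_2F_1^{(\beta/2)}$ of \eqref{2.4Ea+}, the Euler transformation \eqref{2.4H} supplies the $(1-x)^{pN}$ prefactor and the argument $x/(1-x)$, and the transformed ${}_2F_1$ is re-expressed as a ${\rm CE}_{4/\beta,p}$ average through \eqref{2.4F}; the two displayed forms then follow from the symmetry of ${}_2F_1^{(\alpha)}$ in its first two parameters (your ``Euler in the other order'' amounts to the same thing). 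The only point you gloss over is \emph{why} the Euler step is forced rather than optional: applying \eqref{2.4F} directly to \eqref{2.4Ea+} is not possible because the weight condition ${\rm Re}\,(a+b)>-1$ is violated there. Since you do apply the transformation, the Jacobi half goes through.

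The Laguerre half as written has a genuine gap. The intermediate object you propose --- a real $p$-dimensional integral $\int\prod_l y_l^{a'}e^{-cy_l}e^{-x\sum_l y_l}\prod_{j<k}|y_k-y_j|^{4/\beta}\,d\mathbf y$ --- cannot be proportional to the left-hand side of \eqref{2.4Ci}: rescaling $y_l\mapsto y_l/(c+x)$ shows it is an elementary power of $(c+x)$ times a Selberg constant, not a degree-$pN$ polynomial in $x$. The trouble is twofold. First, the \cite{BF97a}-type integral representation of the generalised Laguerre polynomials involves the ${}_0\mathcal F_1$ (Bessel) kernel over the half-line, not a ${}_0\mathcal F_0$ kernel; pursued correctly, that route leads to a \emph{different} duality, namely the specialisation $\nu_1=\cdots=\nu_p=-\beta x/2$, $\boldsymbol\mu=\mathbf 0$ of \eqref{2.5b+V}, whose right-hand side is an ${\rm ME}_{4/\beta,p}$ Laguerre average with a source --- the paper's Remark 2 below that proposition states explicitly that this is not the circular-ensemble average of \eqref{2.4Ci}. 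Second, the ``substitution $y_l\mapsto -e^{2\pi i\theta_l}$'' is not a legitimate passage: at $\beta=2$ the analogous step was the nontrivial determinant identity of Proposition \ref{P2.3} followed by \eqref{2.3B}, not a change of variables, and for general $\beta$ a Selberg-type real integral and a Morris-type circular integral are distinct objects whose relation must be routed through Jack hypergeometric theory. The correct step --- and what the paper does --- is to identify the left-hand side with ${}_1F_1^{(\beta/2)}(-N;2(a+p)/\beta;(x)^p)$ (the first function in \eqref{2.4Ea}) and invoke the circular-contour representation of the Jack ${}_1F_1$ with equal arguments (the $\alpha=\beta/2$ case of \eqref{CW1a}, or the confluent limit of \eqref{2.4F}, both resting on the orthogonality \eqref{3.41} and the binomial formula \eqref{2.4E}); here ${\rm Re}\,(a+b)=2(a+1)/\beta+N-1>-1$ does hold and the parameter matching yields exactly the weight in \eqref{2.4Ci}. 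Alternatively, and most economically, \eqref{2.4Ci} follows from your (correct) Jacobi argument by the confluence $x\mapsto \beta x/(2a_2)$, $a_2\to\infty$ already noted below \eqref{2.4Ea}.
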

  
  In providing a proof outline, use will be made of hypergeometric functions based on Jack polynomials which
  relate to the series implied by the LHS of (\ref{6.2g1}); an extended account is given in \cite[\S 13.1]{Fo10}. To define these, introduce
  the generalised Pochhammer symbol
  \begin{equation}\label{4.3m}
\quad[u]_\kappa^{(\alpha)} := \prod_{l=1}^N {\Gamma(u - (j-1)/\alpha + \kappa_l) \over \Gamma(u - (j - 1)/\alpha)};
\end{equation}
see e.g.~\cite[Eq.~(12.46)]{Fo10}. The family of hypergeometric functions of interest are then specified by
\begin{equation}\label{3.40}
  {\vphantom{F}}_p^{\mathstrut} F_q^{(\alpha)}(a_1,\dots,a_p;b_1,\dots,b_q; \mathbf x):=\sum_\kappa  {\alpha^{| \kappa |} \over h_\kappa'} 
 \frac{[a_1]^{(\alpha)}_\kappa\dots [a_p]^{(\alpha)}_\kappa }{[b_1]^{(\alpha)}_\kappa
\dots [b_q]^{(\alpha)}_\kappa} 
P_\kappa^{(\alpha)}(\mathbf x). 
\end{equation}
With $\mathbf x = \{x_i\}_{i=1}^m$, the sum is over all partitions $\kappa_1 \ge \kappa_2 \ge \cdots \ge \kappa_m \ge 0$, and conventionally the sum is
performed in order of increasing 
$|\kappa|$.

\smallskip
\noindent
{\it Proof outline of Prop.~\ref{P2.6}.} In both the Laguerre and Jacobi cases, a pathway to the dualities is to generalise the
average on the LHS by introducing variables $\mathbf x = \{x_l\}_{l=1}^p$ according to (\ref{3.g+}).
This function of $p$ variables can be uniquely characterised by a set of $p$ partial differential equations which moreover
admit the Jack polynomial based hypergeometric function series solutions \cite{Ka93}
 \begin{equation}\label{2.4Ea}  
 {\vphantom{F}}_1^{\mathstrut} F_1^{(\beta/2)}(-N;2(a+p)/\beta; \mathbf x), \quad
  {\vphantom{F}}_2^{\mathstrut} F_1^{(\beta/2)}(-N,N - 1 + 2(a_1 + a_2 + p + 1)/\beta;2(a_1+p)/\beta;\mathbf x).
\end{equation}
One notes that the first of these can be obtained from the second by the scaling $\mathbf x \mapsto {\beta \over 2 a_2} \mathbf x$ and then
taking the limit $\alpha_2 \to \infty$, as is consistent with the relation between the Jacobi and Laguerre weights.
Specialising then to the Jacobi case, it follows
 \begin{equation}\label{2.4Ea+}  
 \Big  \langle    \det ( x \mathbb I_N - H)^p     \Big \rangle_{ {\rm ME}_{\beta,N}[x^{a_1} (1 - x)^{a_2}]}   \propto 
   {\vphantom{F}}_2^{\mathstrut} F_1^{(\beta/2)}(-N,N - 1 + 2(a_1 + a_2 + p + 1)/\beta;2(a_1+p)/\beta;(x)^p).
 \end{equation}  
 
With $\mathbf x = \{ x_l \}_{l=1}^p$, the method of partial differential equations can be used to establish the explicit functional form (a generalised binomial formula) \cite{Ka93}
 \begin{equation}\label{2.4E}  
 {\vphantom{F}}_1^{\mathstrut} F_0^{(\beta/2)}(r; \underline{\: \:} ; \mathbf x) = \prod_{l=1}^p ( 1 - x_l)^{-r}.
 \end{equation} 
 This together with the orthogonality (\ref{3.41}) can be used to deduce the integration formula \cite[Eq.~(12.142)]{Fo10}
  \begin{equation}\label{2.4Dj}   
   \Big  \langle    P_\kappa^{(\alpha)}(-\mathbf z)     \Big  \rangle_{ {\rm CE}_{2/\alpha,p}[e^{ \pi i (a -b ) \theta}
     |1 + e^{2 \pi i \theta} |^{a + b }]} =    P_\kappa^{(\alpha)}((1)^p) {[-b]_\kappa^{(\alpha)} \over [1+a+(p-1)/\alpha]_\kappa^{(\alpha)}}.
   \end{equation}
   Note that for the LHS to be well defined, it is required that Re$\,(a+b) > - 1$.
 Multiplying through by $ {(t \alpha)^{| \kappa |} [r]_\kappa^{(\alpha)}/  h_\kappa'}$ and summing over $\kappa$, we see that on the LHS
 we encounter $ {\vphantom{F}}_1^{\mathstrut} F_0^{(\alpha)}(r; \underline{\: \:} ; -\mathbf z)$, which we know can be summed according
 to (\ref{2.4E}), while  the RHS is an example of   the generalised hypergeometric function $ {\vphantom{F}}_2^{\mathstrut} F_1^{(\alpha)}$
 of $p$ variables all equal to $t$. Explicitly, it follows from (\ref{2.4D}) that  \cite[Eq.~(13.11)]{Fo10}
 \begin{equation}\label{2.4F}   
 \Big \langle   \prod_{l=1}^p(1 + t e^{2 \pi i \theta_l})^{-r} \Big \rangle_{{\rm CE}_{2/\alpha,p}[e^{ \pi i (a -b ) \theta}
     |1 + e^{2 \pi i \theta} |^{a + b }]}
  = {\vphantom{F}}_2^{\mathstrut} F_1^{(\alpha)}\Big (r,
-b; {1 \over \alpha} (p-1) +a + 1;
(t)^p \Big ). 
 \end{equation}  
As with (\ref{2.4Dj}), for this to be well defined, it is required that Re$\,(a+b) > - 1$. However, comparison with 
$  {\vphantom{F}}_2^{\mathstrut} F_1^{(\beta/2)}$ in (\ref{2.4Ea+}) shows that for this case of interest, the latter
requirement is violated.

To overcome this circumstance, we can use the fact that $ {\vphantom{F}}_2^{\mathstrut} F_1^{(\alpha)}$ satisfies
an analogue of the classical Euler transformation (\ref{2.4E+}) \cite{Ya92}
 \begin{equation}\label{2.4H}   
{\vphantom{F}}_2^{\mathstrut} F_1^{(\alpha)}(a,b;c;t_1,\dots,t_p)  = 
\prod_{j=1}^p (1 - t_j)^{-a} \, {}_2^{} F_1^{(\alpha)}
\Big (a, c - b;
c; - {t_1 \over 1 - t_1}, \dots, - {t_p \over 1 - t_p} \Big );
 \end{equation}  
cf.~(\ref{2.4E+}). Applying (\ref{2.4H}) in  (\ref{2.4Ea+}) gives rise to a ${\vphantom{F}}_2^{\mathstrut} F_1^{(\alpha)}$ for
which (\ref{2.4F}) holds, and  (\ref{2.4Di}) results (the two stated dualities result by utilising the fact
that ${\vphantom{F}}_2^{\mathstrut} F_1^{(\alpha)}$ is symmetric in its first two arguments).  \hfill $\square$

\subsection{Laguerre and Jacobi $\beta$ ensembles --- negative powers of characteristic polynomials}
As for the case of the positive powers of these ensembles, use will be made of the hypergeometric functions based on
Jack polynomials. As they are analytic at the origin, the first step is to write (considering the Jacobi case for definiteness)
 \begin{equation}\label{2.9a} 
    \Big  \langle    \det ( x \mathbb I_N - H)^{-r}     \Big  \rangle_{ {\rm ME}_{\beta,N}[x^{a_1} (1 - x)^{a_2}]} =
    y^{rN}  \Big  \langle    \det (  \mathbb I_N - y H)^{-r}       \Big  \rangle_{ {\rm ME}_{\beta,N}[x^{a_1} (1 - x)^{a_2}]}, \quad y = {1 \over x},
  \end{equation}   
  so that the average on the RHS is analytic at the origin in the variable $y$. Further progress is possible for $r = \beta p/2$,
  when the later average permits the ${\vphantom{F}}_2^{\mathstrut} F_1^{(\alpha)}$ evaluation \cite{Ka93}, \cite[Eq.~(13.10)]{Fo10}
 \begin{equation}\label{2.10a}   
  \Big  \langle    \det (  \mathbb I_N - y H)^{-\beta p/2}       \Big  \rangle_{ {\rm ME}_{\beta,N}[x^{a_1} (1 - x)^{a_2}]} =
{\vphantom{F}}_2^{\mathstrut} F_1^{(\alpha)} \Big ( {\beta  N \over 2}, {\beta \over 2} (N - 1) + a_1 + 1; \beta(N-1) + a_1 + a_2 + 2; (y)^p \Big ). 
\end{equation}
With this as the starting point, duality formulas for negative powers of characteristic polynomials in the Laguerre and Jacobi $\beta$ ensemble
cases can now be deduced.

\begin{prop}
Require that $y < 0$. We have
 \begin{equation}\label{2.11a}  
  \Big  \langle    \det (  \mathbb I_N - y H)^{-\beta p/2}       \Big  \rangle_{ {\rm ME}_{\beta,N}[x^{a_1} (1 - x)^{a_2}]} =  
 \Big  \langle    \det (  \mathbb I_p - y H)^{-\beta N/2}       \Big  \rangle_{ {\rm ME}_{\beta,p}[x^{a_1} (1 - x)^{a_2}]} 
\end{equation}
and
  \begin{equation}\label{2.11b}  
  \Big  \langle    \det (  \mathbb I_N - y H)^{-\beta p/2}       \Big  \rangle_{ {\rm ME}_{\beta,N}[x^{a} e^{-x}]} =  
 \Big  \langle    \det (  \mathbb I_p - y H)^{-\beta N/2}       \Big  \rangle_{ {\rm ME}_{\beta,p}[x^{a} e^{-x}]}. 
\end{equation}
\end{prop}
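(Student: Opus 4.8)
The plan is to reduce both sides to Jack-polynomial hypergeometric functions by means of the evaluation (\ref{2.10a}), and then to prove the resulting identity as a symmetry of ${}_2F_1^{(\alpha)}$, with $\alpha = 2/\beta$. Applying (\ref{2.10a}) to the left-hand side of (\ref{2.11a}) writes it as a ${}_2F_1^{(\alpha)}$ of $p$ equal arguments $(y)^p$ whose first numerator parameter is $\beta N/2 = N/\alpha$; applying (\ref{2.10a}) with $N$ and $p$ interchanged writes the right-hand side as a ${}_2F_1^{(\alpha)}$ of $N$ equal arguments $(y)^N$ whose first numerator parameter is $\beta p/2 = p/\alpha$. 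Thus (\ref{2.11a}) is equivalent to a duality for ${}_2F_1^{(\alpha)}$ that swaps the numerator parameter $N/\alpha$ with the number of variables $p$ (and simultaneously $p/\alpha$ with $N$), while \emph{keeping $\alpha$, and hence $\beta$, fixed}. The Laguerre identity (\ref{2.11b}) I would then obtain from the Jacobi case as the confluent limit $a_2 \to \infty$ after the rescaling $\mathbf x \mapsto (\beta/2a_2)\mathbf x$, exactly as indicated below (\ref{2.4Ea}) for the passage from the Jacobi to the Laguerre weight.

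The engine of the argument is a termwise symmetry of the series (\ref{3.40}). Writing the ${}_2F_1^{(\alpha)}$ out over partitions $\kappa$ and using the homogeneity $P_\kappa^{(\alpha)}((y)^m) = y^{|\kappa|} P_\kappa^{(\alpha)}((1)^m)$, the number of variables $m$ enters only through the factor $P_\kappa^{(\alpha)}((1)^m)$ (which vanishes unless $\ell(\kappa) \le m$) and through the first numerator parameter $m/\alpha$. The key lemma is the box-product specialisation \cite{Ma95}
\begin{equation}
P_\kappa^{(\alpha)}((1)^m) = \alpha^{|\kappa|} \frac{[m/\alpha]_\kappa^{(\alpha)}}{\prod_{s \in \kappa}(\alpha a(s) + l(s) + 1)},
\end{equation}
which one checks directly against (\ref{4.0}) on low-order partitions. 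This yields at once the \emph{rank--parameter symmetry}
\begin{equation}
[N/\alpha]_\kappa^{(\alpha)}\, P_\kappa^{(\alpha)}((1)^p) = [p/\alpha]_\kappa^{(\alpha)}\, P_\kappa^{(\alpha)}((1)^N),
\end{equation}
valid termwise (both sides vanish unless $\ell(\kappa) \le \min(N,p)$, the latter being forced by the vanishing of $[p/\alpha]_\kappa^{(\alpha)}$ once $\ell(\kappa) > p$). Substituting this into the series interchanges the roles of rank and first numerator parameter and produces the dual ${}_2F_1^{(\alpha)}$ at the \emph{same} $\alpha$. This is precisely what distinguishes the negative-power case from the positive-power dualities of Propositions \ref{C1} and \ref{P2.6}: there one passes through the dual Cauchy identity (\ref{SM2}), which pairs $P_\kappa^{(\alpha)}$ with $P_{\kappa'}^{(1/\alpha)}$ and therefore flips $\beta \mapsto 4/\beta$, whereas here the underlying same-$\alpha$ expansion (the generalised binomial series (\ref{2.4E})) pairs $P_\kappa^{(\alpha)}$ with itself and leaves the Jack parameter unchanged.

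The step I expect to be the main obstacle is the matching of the two \emph{remaining} generalised Pochhammer factors, namely the second numerator parameter and the denominator parameter, which carry the shape data $a_1, a_2$. After the rank--parameter swap these appear on the left in the combinations dictated by the size-$N$ instance of (\ref{2.10a}), and they must be reconciled with the combinations required by the size-$p$ instance on the right. Carrying this out demands careful tracking of the precise shape-parameter combinations through the evaluation (\ref{2.10a}), together with the symmetry of ${}_2F_1^{(\alpha)}$ in its first two numerator arguments and, where the naive forms do not coincide, an application of the Euler--Pfaff transformation (\ref{2.4H}); this bookkeeping is the delicate heart of the proof and is where the identity's precise parameter dependence is pinned down.
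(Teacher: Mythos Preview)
Your rank--parameter symmetry $[N/\alpha]_\kappa^{(\alpha)}\,P_\kappa^{(\alpha)}((1)^p)=[p/\alpha]_\kappa^{(\alpha)}\,P_\kappa^{(\alpha)}((1)^N)$ is correct and is exactly the mechanism linking the two integral representations the paper cites. The gap is in your plan to apply (\ref{2.10a}) to \emph{both} sides. After the swap you are left comparing
\[
{}_2F_1^{(2/\beta)}\!\Big(\tfrac{\beta N}{2},\ \tfrac{\beta}{2}(N{-}1){+}a_1{+}1;\ \beta(N{-}1){+}a_1{+}a_2{+}2;\ (y)^p\Big)
\quad\text{with}\quad
{}_2F_1^{(2/\beta)}\!\Big(\tfrac{\beta N}{2},\ \tfrac{\beta}{2}(p{-}1){+}a_1{+}1;\ \beta(p{-}1){+}a_1{+}a_2{+}2;\ (y)^p\Big),
\]
and these differ in the \emph{denominator} parameter $c$. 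The Euler--Pfaff move (\ref{2.4H}) only sends $(a,b;c)\mapsto(a,c-b;c)$ and cannot shift $c$; the transformation (\ref{3.8e}) needs a terminating first argument. So the fix you propose cannot close the argument.

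The paper does something different: it applies (\ref{2.10a}) only to the left-hand side, obtaining the first ${}_2F_1$ above, and then invokes the \emph{Euler-type} representation (\ref{2.4F+}) (=\,[Fo10, Eq.~(13.12)]) to recognise that same ${}_2F_1^{(2/\beta)}$ in $p$ equal arguments directly as a $p$-dimensional Jacobi average $\langle\det(\mathbb I_p-yH)^{-\beta N/2}\rangle$ --- no further ${}_2F_1$ transformation needed. When you actually match parameters in (\ref{2.4F+}), however, the Jacobi exponents on the right come out as $a_i'=a_i+\tfrac{\beta}{2}(N-p)$, not $a_i$; a direct check at $\beta=2$, $N=2$, $p=1$, $a_1=a_2=0$ already shows the two sides of (\ref{2.11a}) disagree at order $y^2$ (left side $9/10$, right side $1$), while the shifted version agrees. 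So the ``delicate bookkeeping'' you flag is not mere bookkeeping: it is detecting that the identity as printed requires the shifted weight $x^{a_1+\beta(N-p)/2}(1-x)^{a_2+\beta(N-p)/2}$ (respectively $a\mapsto a+\tfrac{\beta}{2}(N-p)$ in (\ref{2.11b})) on the right-hand side.
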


\begin{proof}
The duality (\ref{2.11b}) follows from (\ref{2.11a}) by a suitable scaling and limit process. In relation to (\ref{2.11a}), we
read off from \cite[Eq.~(13.12)]{Fo10} a matrix average formula for the ${\vphantom{F}}_2^{\mathstrut} F_1^{(\alpha)}$ function
in (\ref{2.10a}) distinct from that given therein, which implies the stated result.
\end{proof}

\subsection{Gaussian and Laguerre $\beta$ ensembles with a source}\label{S3.3}
A generalisation of the classical Gaussian ensembles, which can be traced back to Dyson \cite{Dy62b}, is to introduce a parameter
$\tau \ge 0$ by defining $G = |1 - e^{-2 \tau} |^{1/2} H + e^{-\tau} H^{(0)}$. Here $H$ is a member of a classical Gaussian ensemble
and $H^{(0)}$ is a fixed matrix of the same class (e.g.~real symmetric for $H \in {\rm GOE}$). The joint element PDF is then
proportional to
 \begin{equation}\label{2.5}
e^{-(\beta/2) {\rm Tr} \, (G - e^{-\tau} H^{(0)})^2/(1 - e^{-2\tau})},
\end{equation}
showing that the model of independent Gaussian elements is maintained, but now the means are no longer zero, but rather determined
by $ H^{(0)}$ (referred to as an external source \cite{BH16}). Changing variables to the eigenvalues and eigenvectors, there is no
longer a separation of the two classes of variables, since (\ref{2.5}) depends on both. Rather integration over the eigenvectors gives
for the eigenvalue PDF, up to proportionality
 \begin{equation}\label{2.5a}
 \prod_{1 \le j < k \le N} | \lambda_k - \lambda_j|^\beta e^{-\tilde{\beta} \sum_{j=1}^N \lambda_j^2 - \tilde{\beta}  t^2 \sum_{j=1}^N \mu_j^2}
 \int e^{2 \tilde{\beta} t {\rm Tr}(U \Lambda U^\dagger L)} d^{\rm H} U.
\end{equation} 
Here $\{ \lambda_j \}$ are the eigenvalues of $G$, $\{ \mu_j \}$ are the eigenvalues of $H^{(0)}$, $\Lambda = {\rm diag}(\lambda_1,\dots,\lambda_N)$,
$L = {\rm diag}(\mu_1,\dots,\mu_N)$, $\tilde{\beta} = \beta/(2(1 - e^{-2 \tau}))$ and $t = e^{-\tau}$. Recalling (\ref{6.2g2}) gives the rewrite of
(\ref{2.5a}) 
\begin{equation}\label{2.5a1}
 \prod_{1 \le j < k \le N} | \lambda_k - \lambda_j|^\beta e^{-\tilde{\beta} \sum_{j=1}^N \lambda_j^2 - \tilde{\beta}  t^2 \sum_{j=1}^N \mu_j^2}
  {\vphantom{\mathcal F}}_0^{\mathstrut} \mathcal F_0^{(2/\beta)}(2 \tilde{\beta} t  \boldsymbol \lambda; \boldsymbol \mu).
  \end{equation} 
  With $\tilde{\beta} = t = 1$ (these parameters are effectively scales associated with the eigenvalues), we will denote the
  PDF corresponding to (\ref{2.5a1}) by ME${}_{\beta,N}[e^{-\lambda^2};\boldsymbol \mu]$. The case $\boldsymbol \mu = \mathbf 0$
  reduces back to ME${}_{\beta,N}[e^{-\lambda^2}]$. In the case $\beta = 2$, a generalisation of (\ref{2.X1}) to involve this
  matrix ensemble with a source was given by Br\'ezin and Hikami \cite{BH01}, which in turn was generalised to all $\beta > 0$
  by Desrosiers \cite{De09}.
  
  \begin{prop}\label{P3.5}
  We have
  \begin{equation}\label{2.5b}
 \Big ({1 \over i} {\sqrt{2 \over \beta}}  \Big  )^{pN/2} \Big \langle  \prod_{l=1}^p \det \Big ( i \sqrt{\beta \over 2} \nu_l\mathbb I_N - H \Big ) \Big \rangle_{{\rm ME_{\beta,N}[e^{-\lambda^2};\boldsymbol \mu}]} =
\Big  \langle  \prod_{j=1}^N \det \Big ( i \sqrt{2 \over \beta} \mu_j  \mathbb I_p - H \Big ) \Big \rangle_{{\rm ME_{{4 \over \beta},p}[e^{-\lambda^2};\boldsymbol \nu}]},   
    \end{equation} 
    with the case $\boldsymbol \nu = \mathbf 0$, $\beta = 2$ being equivalent to (\ref{2.X1}). Also
 \begin{equation}\label{2.5bD}
 \Big \langle  \prod_{l=1}^p \det \Big (  \nu_l\mathbb I_N - H \Big )^{-\beta/2} \Big \rangle_{{\rm ME_{\beta,N}[e^{-\lambda^2};\boldsymbol \mu}]} =
\Big  \langle  \prod_{j=1}^N \det \Big (  \mu_j  \mathbb I_p - H \Big )^{-\beta/2}  \Big \rangle_{{\rm ME}_{ \beta,p}[e^{-\lambda^2};\boldsymbol \nu]},   
    \end{equation}    
    which with $\{ \nu_l \}$ and $\{ \mu_j \}$ real and distinct requires $\beta < 2$ to be well defined, and for consistent branches of the fractional power
    function to be chosen.
    
  \end{prop}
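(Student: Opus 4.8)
\medskip
\noindent\emph{Proof proposal.} The plan is to handle both dualities along the same route, leaning on the Jack polynomial technology already assembled: the HCIZ/${}_0\mathcal{F}_0$ representation \eqref{2.5a1} of the sourced eigenvalue PDF, the dual Cauchy identity \eqref{SM2}, the generalised Hermite integral representation \eqref{3.h+A}, and the conjugation duality \eqref{GF}. First I would use \eqref{2.5a1} to write the left side of \eqref{2.5b}, up to the stated prefactor and the normalisation (which is $\boldsymbol\mu$-independent precisely because of the $e^{-\sum_j\mu_j^2}$ factor in \eqref{2.5a1}), as
\[
\int_{\R^N}\prod_{1\le j<k\le N}|\lambda_k-\lambda_j|^\beta\,e^{-\sum_j\lambda_j^2}\,{}_0\mathcal{F}_0^{(2/\beta)}(2\boldsymbol\lambda;\boldsymbol\mu)\,\prod_{l=1}^p\prod_{j=1}^N\Big(i\sqrt{\tfrac{\beta}{2}}\,\nu_l-\lambda_j\Big)\,d\boldsymbol\lambda .
\]
Pulling out $\prod_l(i\sqrt{\beta/2}\,\nu_l)^N$ and expanding the remaining finite product via \eqref{SM2} with $\alpha=2/\beta$ turns this into a finite sum over $\kappa\subseteq(p)^N$ of $(-1)^{|\kappa|}\,P_\kappa^{(2/\beta)}(\boldsymbol\lambda)\,P_{\kappa'}^{(\beta/2)}\big((i\sqrt{\beta/2}\,\nu_l)^{-1}\big)$, the last factor being, by homogeneity, an explicit constant times $P_{\kappa'}^{(\beta/2)}(1/\boldsymbol\nu)$. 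The whole computation is thereby reduced to the sourced moments $\big\langle P_\kappa^{(2/\beta)}(\boldsymbol\lambda)\big\rangle_{{\rm ME}_{\beta,N}[e^{-\lambda^2};\boldsymbol\mu]}$ for $\kappa$ inside the $p\times N$ rectangle. These are exactly the integrals evaluated in \cite[Cor.~3.2]{BF97a}: reading \eqref{3.h+A} with $\boldsymbol\lambda$ (now $N$ variables) as the integration variable and $i\boldsymbol\mu$ as the external variable, and using $P_\kappa^{(2/\beta)}(i\boldsymbol\lambda)=i^{|\kappa|}P_\kappa^{(2/\beta)}(\boldsymbol\lambda)$, each such moment equals $i^{-|\kappa|}P_\kappa^{(\mathrm H)}(i\boldsymbol\mu;2/\beta)$, a polynomial in $\boldsymbol\mu$ with leading term $P_\kappa^{(2/\beta)}(\boldsymbol\mu)$.

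Next I would run the mirror manipulation on the right side of \eqref{2.5b} --- now with Jack parameter $\beta/2$, $p$ integration variables, and source $\boldsymbol\nu$ --- which via \eqref{SM2} (a sum over $\sigma\subseteq(N)^p$) and again \eqref{3.h+A} produces a finite sum over $\sigma$ of an explicit constant times $P_{\sigma'}^{(2/\beta)}(1/\boldsymbol\mu)\,i^{-|\sigma|}P_\sigma^{(\mathrm H)}(i\boldsymbol\nu;\beta/2)$. Setting $\sigma=\kappa'$ and applying the Jack reciprocal identity $(\nu_1\cdots\nu_p)^N P_{\kappa'}^{(\beta/2)}(1/\boldsymbol\nu)=(\mathrm{const})\,P_{\rho}^{(\beta/2)}(\boldsymbol\nu)$ for the appropriate complement partition $\rho$ (and its $\boldsymbol\mu$-analogue on the other side) puts both sides over the common index set $\kappa\subseteq(p)^N$. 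It then remains to match two explicit double sums; expanding each generalised Hermite polynomial back in the Jack basis, both sides become double sums of $(\mathrm{const})\times P^{(2/\beta)}(\boldsymbol\mu)\times P^{(\beta/2)}(\boldsymbol\nu)$, and the required equality of coefficients reduces to a relation among Jack normalisation constants together with the plain-Gaussian moments $\big\langle P_\kappa^{(2/\beta)}\big\rangle_{{\rm ME}_{\beta,N}[e^{-\lambda^2}]}$ --- which is governed by the conjugation duality \eqref{GF} of Proposition~\ref{C1}. Its decisive feature --- the same one exploited above in passing from \eqref{GF} and \eqref{SM2} to \eqref{2.X1c} --- is that, in the stated normalisation, \eqref{GF} carries no dependence on $N$ or $p$; this is exactly what lets the number of variables change from $N$ to $p$ across \eqref{2.5b}. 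All coefficients are then fixed up to a single overall constant, which I would pin down by letting every $\nu_l$ (resp.\ $\mu_j$) tend to infinity, where both sides collapse to $\prod_l\nu_l^N$ (resp.\ the matching monomial).

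The duality \eqref{2.5bD} follows the same scheme with two modifications. The characteristic polynomial product is now a negative power, so in place of the finite dual Cauchy expansion one uses the generalised binomial theorem \eqref{2.4E} (the summation of the ${}_1F_0^{(2/\beta)}$ series), after which the sourced average becomes a ${}_2F_1^{(2/\beta)}$-type function of the $\boldsymbol\nu$'s in the spirit of \eqref{2.4F} and \eqref{2.10a}; the dual form is then read off from the manifest symmetry of that function under interchanging $(N,\boldsymbol\mu)$ and $(p,\boldsymbol\nu)$. Here $\beta$ is unchanged across the identity, so no $\alpha\leftrightarrow1/\alpha$ flip enters and the argument is more symmetric --- in the spirit of the proof of \eqref{2.11a}. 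I expect the main obstacle, for both parts, to be the final coefficient bookkeeping: keeping the $\alpha\leftrightarrow1/\alpha$ flip, the $i^{pN}$-type phases, the $\sqrt{2/\beta}$ scalings and the $h_\kappa'/P_\kappa^{(2/\beta)}((1)^N)$ normalisations mutually consistent. For \eqref{2.5bD} there is also the analytic care flagged in the statement: convergence of the Jack hypergeometric series, which forces $\beta<2$ when $\{\nu_l\}$ and $\{\mu_j\}$ are real and distinct, and the consistent choice of branches of the fractional powers.
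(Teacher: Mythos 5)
Your route is genuinely different from the paper's. The paper proves \eqref{2.5b} by choosing the kernel $f(\mathbf z;\boldsymbol\nu)=\prod_{l=1}^p\prod_{k=1}^N(\nu_l+\sqrt{\alpha}\,z_k)$ as in \eqref{2.5e}, verifying by a direct calculation the intertwining relation \eqref{2.5f} between the operators $\Delta^{(\alpha)}_{N,\boldsymbol\mu}$ and $\Delta^{(1/\alpha)}_{p,\boldsymbol\nu}$ of \eqref{2.5c}, exponentiating, and then applying the Baker--Forrester formula \eqref{2.5d} to both sides; no Jack expansions appear. Your plan --- expand the product of characteristic polynomials via the dual Cauchy identity \eqref{SM2}, evaluate the sourced Jack moments as generalised Hermite polynomials via \eqref{3.h+A}, and match the resulting double sums --- is a legitimate alternative in outline, but the decisive step is missing. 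Matching the coefficients of $P^{(2/\beta)}_\rho(\boldsymbol\mu)\,P^{(\beta/2)}_\tau(\boldsymbol\nu)$ on the two sides requires the full expansion of $P^{(\mathrm H)}_\kappa(\,\cdot\,;\alpha)$ in the Jack basis together with a duality for those expansion coefficients under $(\alpha,\kappa)\mapsto(1/\alpha,\kappa')$ (equivalently, a duality for the generalised binomial coefficients), not merely the plain-Gaussian moments: the identity \eqref{GF} you invoke only controls the constant terms of the generalised Hermite polynomials, i.e.\ the $\boldsymbol\mu=\mathbf 0$ contributions. You assert that the coefficient matching ``reduces to'' \eqref{GF} plus normalisation bookkeeping, but you neither exhibit the required identity nor point to a source for it, so for \eqref{2.5b} what you have is a plausible programme rather than a proof.

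For \eqref{2.5bD} the proposed mechanism does not work as described. The generalised binomial theorem \eqref{2.4E} sums a ${}_1F_0$ of a single set of variables; it applies to $\prod_{j}(1-y\lambda_j)^{-r}$ with one spectral parameter (the setting of \eqref{2.10a} and \eqref{2.11a}), but not to $\prod_{l=1}^p\prod_{j=1}^N(\nu_l-\lambda_j)^{-\beta/2}$ with distinct $\nu_l$, which instead requires the same-parameter Jack Cauchy kernel expansion (this same-parameter structure is, incidentally, the reason $\beta$ does not flip in \eqref{2.5bD}). Once that expansion is made, the claimed ``manifest symmetry'' of the resulting object under interchanging $(N,\boldsymbol\mu)$ and $(p,\boldsymbol\nu)$ is precisely the statement to be proven --- the two variable sets have different sizes, so no symmetry is visible --- and there are unaddressed convergence issues, since the kernel expansion is valid only for $|\lambda_j/\nu_l|<1$ while $\boldsymbol\lambda$ ranges over all of $\mathbb R^N$; this is exactly where the restriction $\beta<2$ and the branch choices enter. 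The paper treats \eqref{2.5bD} by the same operator argument as \eqref{2.5b} with the kernel replaced by its fractional-power analogue, which bypasses these series manipulations altogether.
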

  
  \begin{proof}
 We will consider only (\ref{2.5b}), as the working required to establish (\ref{2.5bD}) is similar. Introduce the operator
   \begin{equation}\label{2.5c} 
   \Delta_{N,\mathbf z}^{(\alpha)} := \sum_{j=1}^N {\partial^2 \over \partial z_j^2} + {2 \over \alpha} \sum_{1 \le j < k \le N}
   {1 \over z_j - z_k} \Big ( {\partial \over \partial z_j} -  {\partial \over \partial z_k}  \Big ).
 \end{equation}    
  A formula following from the development of the theory of generalised Hermite polynomials based on Jack polynomials \cite[Eq.~(3.22)]{BF97a}
  (see also \cite{Ro98})
  gives
    \begin{equation}\label{2.5d}  
  \langle f(i \boldsymbol \lambda) \rangle_{{\rm ME}_{2/\alpha,N}[e^{-\lambda^2}; \mathbf z]} = e^{- {1 \over 4} \Delta_{N,\mathbf z}^{(\alpha)} } f(\mathbf z) \Big |_{\mathbf z \mapsto i \mathbf z}.
 \end{equation}      
 
 Choose 
   \begin{equation}\label{2.5e}  
   f(\mathbf z) =  f(\mathbf z; \boldsymbol \nu) =  \prod_{l=1}^p \prod_{k=1}^N \Big (   \nu_l +   \sqrt{\alpha} z_k \Big ).
 \end{equation}  
 One observes $ f(i \mathbf z; \boldsymbol \nu)    = i^{pN}  f( \mathbf z; -i\boldsymbol \nu)$ and moreover, via a direct calculation, that 
    \begin{equation}\label{2.5f} 
   \Delta_{N,\boldsymbol \mu}^{(\alpha)} f(\boldsymbol \mu;   \boldsymbol \nu)  \Big |_{\boldsymbol \mu \mapsto i \boldsymbol \mu} 
   = 
   i^{pN}    \Delta_{p,\boldsymbol \nu}^{(1/\alpha)} f(\boldsymbol \mu;   -\boldsymbol \nu) \Big |_{\boldsymbol \nu \mapsto i \boldsymbol \nu}. 
 \end{equation}  
 Consequently
 \begin{equation}\label{2.5g}   
  e^{- {1 \over 4} \Delta_{N,\boldsymbol \mu}^{(\alpha)} } f(\boldsymbol \mu;   \boldsymbol \nu)    \Big |_{\boldsymbol \mu \mapsto i \boldsymbol \mu}    =
 i^{pN}   e^{- {1 \over 4} \Delta_{p,\boldsymbol \nu}^{(1/\alpha)} } f(\boldsymbol \mu;   -\boldsymbol \nu)     \Big |_{\boldsymbol \nu \mapsto i \boldsymbol \nu}.
  \end{equation} 
  Comparing (\ref{2.5d}) and (\ref{2.5g}) shows
 \begin{equation}\label{2.5h}   
   \langle f(i \boldsymbol \lambda, \boldsymbol \nu) \rangle_{{\rm ME}_{1/\alpha,N}[e^{-\lambda^2}; \boldsymbol \mu]} = i^{pN} 
     \langle f( \boldsymbol \lambda, -i \boldsymbol \mu) \rangle_{{\rm ME}_{\alpha,p}[e^{-\lambda^2}; \boldsymbol \nu]}, 
  \end{equation}   
  which is (\ref{2.5b}). 
  
  \end{proof} 
 
 \begin{remark}
 In the case $p=1$, $\beta = 2$, (\ref{2.5b}) can be rewritten to read
  \begin{equation}\label{2.5i} 
  \langle \det ( \mu \mathbb I_N - (H + H_0) ) \rangle_{{\rm GUE}_N} = i^{pN/2}
 \Big  \langle\prod_{j=1}^N (  \mu- \nu_j + i x ) \Big \rangle_{x \in \mathcal N(0,1/{2})},
  \end{equation}  
  where $\{\nu_j\}$ are the eigenvalues of $H_0$. In fact this holds in the more general case
  when $H$ is a Wigner matrix as specified in (\ref{1.0}) \cite{Fo13}, with the latter corresponding
  to the case that $H_0$ is the zero matrix.
  \end{remark}
  
  The chiral Gaussian random matrices based on the structure (\ref{3.1a}) admit a parametric extension involving
  a source matrix by replacing $Z$ therein by $Y = |1 - e^{-2 \tau} |^{1/2} Z + e^{-\tau} Z^{(0)}$. Of interest
  are the squared 
  non-zero singular
  values $Y$ (i.e.~the eigenvalues $\{y_j\}_{j=1}^N$ of $Y^\dagger Y$). 
  We know from \cite[Eq.~(11.105)]{Fo10}
  that the PDF for these singular values is given by,
 up to proportionality
 \begin{equation}\label{2.5a+}
\prod_{j=1}^N y_j^{\beta a } \prod_{1 \le j < k \le N} | y_k - y_j|^\beta e^{-\tilde{\beta} \sum_{j=1}^N y_j - \tilde{\beta}  t^2 \sum_{j=1}^N \mu_j}
  {\vphantom{\mathcal F}}_0^{\mathstrut} \mathcal F_1^{(2/\beta)}(\beta (a+N-1)/2+1;  \tilde{\beta} t  \mathbf y; \boldsymbol \mu),
\end{equation}  
where $a= n-N+1-2/\beta$.
 Here, with the components of $\mathbf z$ given by $\{z_j\}_{j=1}^N$, we have denoted by $\mathbf z^2$  the vector with components $\{z_j^2\}_{j=1}^N$,
 and $\{\mu_j\}_{j=1}^N$ are the eigenvalues of $(Z^{(0)})^\dagger Z^{(0)}$.
 Further, with $\Lambda$ ($L$) the $n \times N$ matrix with non-zero entries on the diagonal only, $\{y_1,\dots,y_N\}$ ($\{\mu_1,\dots,\mu_N\}$), and with $\beta = 1,2$ or 4 the function
 $ {\vphantom{\mathcal F}}_0^{\mathstrut} \mathcal F_1^{(2/\beta)}$ is defined as the matrix integral relating to the singular value decomposition of $Z$
 \begin{equation}\label{2.5b+Z} 
 \int d^{\rm H} U \int d^{\rm H} V \, e^{\tilde{\beta} t {\rm Tr} (V \Lambda^\dagger U^\dagger L + L^\dagger U \Lambda V^\dagger)}.
\end{equation}  
Here the unitary matrices $U,V$ are of size $N \times N$ and $n \times n$ respectively, from the orthogonal ($\beta = 1$), unitary ($\beta = 2$) and
symplectic ($\beta = 4$) matrix groups respectively.
Important is that with  $\tilde{\beta} = t = 1$ (this is just for convenience as done in relation to (\ref{2.5})), and $\alpha = 2/\beta$, the latter admits the Jack polynomial
expansion (see e.g.~\cite[\S 4]{Fo13})
\begin{equation}\label{6.3g1}
{\vphantom{\mathcal F}}_0^{\mathstrut}\mathcal F_1^{(\alpha)}(u; \mathbf y;\boldsymbol \mu) = \sum_{\kappa} {\alpha^{| \kappa |}\over  [u]_\kappa^{(\alpha)}  h_\kappa'} 
{P_\kappa^{(\alpha)}(\mathbf y) P_\kappa^{(\alpha)}(\boldsymbol \mu) \over
P_\kappa^{(\alpha)}((1)^N)
},
\end{equation}
with $u = (a+N-1)/\alpha + 1 = \beta n/2$, thus allowing for a $\beta$ generalisation. This function of two sets of variables relates
to the $p=0$, $q=1$ Jack polynomial based hypergeometric function defined in (\ref{3.40}) according to
\begin{equation}\label{6.3z}
{\vphantom{\mathcal F}}_0^{\mathstrut}\mathcal F_1^{(\alpha)}(u; \mathbf y;(1)^N) = {\vphantom{ F}}_0^{\mathstrut} F_1^{(\alpha)}(u; \mathbf y).
\end{equation}
In the case $\boldsymbol \mu = \mathbf 0$, 
${\vphantom{\mathcal F}}_0^{\mathstrut}\mathcal F_1^{(\alpha)}$ equals unity and we recognise (\ref{2.5a+}) as
ME${}_{\beta,N}[y^a e^{-y}]$; the generalised PDF with source variables $\boldsymbol \mu$ will be denoted
ME${}_{\beta,N}[y^a e^{-y};\boldsymbol \mu]$. In both cases we have assumed $\tilde{\beta} = t = 1$ and it is regarded as implicit
that the eigenvalues can only take on positive values.
Desrosiers \cite{De09} has deduced an analogue of the duality (\ref{2.5b}) for this source generalised ensemble.

 \begin{prop}
  We have
  \begin{equation}\label{2.5b+V}
\Big \langle  \prod_{l=1}^p \det \Big ( \nu_l\mathbb I_N + {2 \over \beta} H \Big ) \Big \rangle_{{\rm ME}_{\beta,N}[\lambda^a e^{-\lambda};\boldsymbol \mu]} =
  \Big ({2 \over \beta}  \Big  )^{pN} 
\Big  \langle  \prod_{j=1}^N \det \Big ( \mu_j  \mathbb I_p + {\beta \over 2} H \Big ) \Big \rangle_{{\rm ME}_{{4 \over \beta},p}[\lambda^{a'} e^{-\lambda};\boldsymbol \nu]},  
    \end{equation} 
where $a' = {2 \over \beta} (a+1) - 1$. Also
\begin{equation}\label{2.5b+D}
\Big \langle  \prod_{l=1}^p \det \Big ( \nu_l\mathbb I_N +  H \Big )^{-\beta/2} \Big \rangle_{{\rm ME}_{\beta,N}[\lambda^a e^{-\lambda};\boldsymbol \mu]} =
\Big  \langle  \prod_{j=1}^N \det \Big ( \mu_j  \mathbb I_p +  H \Big )^{-\beta/2}  \Big \rangle_{{\rm ME}_{\beta,p}[\lambda^{a'} e^{-\lambda};\boldsymbol \nu]}. 
    \end{equation} 
\end{prop}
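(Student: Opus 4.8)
The plan is to mirror the structure of the proof of Proposition~\ref{P3.5}, replacing the heat-type operator $\Delta_{N,\mathbf z}^{(\alpha)}$ governing the Gaussian source ensemble by the corresponding operator associated with the Laguerre source ensemble and its generalised Laguerre polynomials based on Jack polynomials. Concretely, one first records from the theory developed in \cite{BF97a} (see also \cite{Fo13,De09}) that averages over ${\rm ME}_{\beta,N}[\lambda^a e^{-\lambda};\boldsymbol\mu]$ of a symmetric polynomial $f$ can be written as the action of an explicit differential operator $\mathcal D_{N,\mathbf z}^{(a,\alpha)}$ (the ``Laguerre'' analogue, built from the Calogero--Sutherland-type operator with confining term appropriate to the weight $\lambda^a e^{-\lambda}$) applied to $f(\mathbf z)$ and evaluated at the source eigenvalues $\mathbf z = \boldsymbol\mu$; this is the Laguerre counterpart of (\ref{2.5d}).

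Next I would take the test function to be $f(\mathbf z;\boldsymbol\nu) = \prod_{l=1}^p\prod_{k=1}^N\bigl(\nu_l + (2/\beta)z_k\bigr)$ for (\ref{2.5b+V}) (respectively the reciprocal power for (\ref{2.5b+D})), and establish the intertwining identity that is the Laguerre analogue of (\ref{2.5f}): namely that the operator $\mathcal D_{N,\boldsymbol\mu}^{(a,\alpha)}$ acting on $f(\boldsymbol\mu;\boldsymbol\nu)$ coincides, up to the stated power-of-$(2/\beta)$ prefactor and the parameter shift $a\mapsto a' = (2/\beta)(a+1)-1$, with $\mathcal D_{p,\boldsymbol\nu}^{(a',1/\alpha)}$ acting on $f$ viewed as a function of $\boldsymbol\nu$. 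This symmetry under simultaneously swapping $(N,\boldsymbol\mu)\leftrightarrow(p,\boldsymbol\nu)$, $\alpha\leftrightarrow1/\alpha$, and shifting $a\mapsto a'$ is exactly the rank duality of generalised Laguerre polynomials (equivalently the duality (\ref{GF}) specialised via its Laguerre-weight analogue, or the known self-duality of the ${\vphantom{\mathcal F}}_0^{\mathstrut}\mathcal F_1^{(\alpha)}$ series through the relation $[u]_\kappa^{(\alpha)}$ with $u=\beta n/2$). Exponentiating the intertwiner and comparing with the two evaluations of the averages then yields (\ref{2.5b+V}); the proportionality constant is pinned down by the $\nu_l\to\infty$ (or $\mu_j\to\infty$) asymptotics exactly as in Proposition~\ref{P3.5}. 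The statement (\ref{2.5b+D}) follows the same route with the reciprocal test function, the convergence caveat $\beta<2$ entering to guarantee the relevant integrals and series converge and that branches can be chosen consistently.

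The main obstacle I expect is the direct verification of the operator intertwining identity: unlike the Gaussian case, where $\Delta_{N,\mathbf z}^{(\alpha)}$ is a clean second-order operator and the computation in (\ref{2.5f}) is a short exercise, the Laguerre operator carries both the first-order drift from $e^{-\lambda}$ and the $\lambda^a$-dependent singular term, so the bookkeeping of how the parameter $a$ transforms under the rank flip is delicate — this is precisely where the nontrivial shift $a\mapsto (2/\beta)(a+1)-1$ is forced. The cleanest way to sidestep a brute-force computation is to argue at the level of the generalised hypergeometric series: expand both sides in Jack polynomials indexed by partitions $\kappa$ with $\ell(\kappa)\le\min(N,p)$, use the dual Cauchy identity (\ref{SM2}) to resum the products of linear factors, and invoke the known transformation of the coefficients $\alpha^{|\kappa|}/([u]_\kappa^{(\alpha)}h_\kappa')$ under $(\alpha,u)\mapsto(1/\alpha,\cdot)$ together with the conjugation $\kappa\mapsto\kappa'$ — this reduces the whole proposition to a finite, partition-by-partition identity that is already implicit in \cite[Eq.~(13.12)]{Fo10} and its Laguerre specialisation, exactly as the cited proof of the analogous Jacobi/Laguerre negative-power duality was obtained by reading off a distinct matrix-average formula for the same ${\vphantom{F}}_2^{\mathstrut}F_1^{(\alpha)}$ (here ${\vphantom{F}}_0^{\mathstrut}F_1^{(\alpha)}$) function.
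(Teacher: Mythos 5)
Your proposal follows essentially the same route as the paper's (sketched) proof: the paper likewise replaces the Gaussian operator by the Laguerre-type operator $D_{N,a,\mathbf z}^{(\alpha)}$ of (\ref{2.5cD}), invokes the operator-based integration formula (\ref{2.5dD}) from \cite{BF97a}, chooses the product-of-linear-factors test function (\ref{2.5eD}), and establishes the intertwining identity (\ref{2.5f+}) with the parameter shift $a\mapsto a'$, after which exponentiation and comparison of the two averages gives (\ref{2.5b+V}), with (\ref{2.5b+D}) handled analogously. Your fallback via Jack-polynomial series is not needed but is consistent with the underlying theory.
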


\begin{proof} (Sketch.)
A strategy analogous to that used in the proof of Proposition \ref{P3.5} applies.
Taking the place of (\ref{2.5c}) is the operator
   \begin{equation}\label{2.5cD} 
   D_{N,a,\mathbf z}^{(\alpha)} := \sum_{j=1}^N z_j {\partial^2 \over \partial z_j^2} + {2 \over \alpha} \sum_{1 \le j < k \le N}
   {1 \over z_j - z_k} \Big ( z_j {\partial \over \partial z_j} - z_k {\partial \over \partial z_k}  \Big ) + (a+1) \sum_{j=1}^N z_j {\partial \over \partial z_j},
 \end{equation}    
  while taking the place of (\ref{2.5d}) is the operator based integration formula \cite[Eq.~(3.22)]{BF97a}
    \begin{equation}\label{2.5dD}  
  \langle f(- \boldsymbol \lambda) \rangle_{{\rm ME}_{2/\alpha,N}[\lambda^a e^{-\lambda}; \mathbf z]} = e^{- D_{N,a,\mathbf z}^{(\alpha)} } f(\mathbf z).
 \end{equation}      
 In relation to (\ref{2.5b+V}) the crucial observation from here is that with the choice
 \begin{equation}\label{2.5eD}  
   f(\mathbf z) =  f(\mathbf z; \boldsymbol \nu) =  \prod_{l=1}^p \prod_{k=1}^N \Big (   \nu_l -   \alpha z_k \Big ),
 \end{equation}  
one has
    \begin{equation}\label{2.5f+} 
   D_{N,a,\boldsymbol \mu}^{(\alpha)} f(\boldsymbol \mu;   \boldsymbol \nu) 
   = 
       D_{p,a',\boldsymbol \nu}^{(\alpha)} f(\boldsymbol \mu;   \boldsymbol \nu).
 \end{equation}  
 
\end{proof}

\begin{remark} $ $ \\
1.~
Let $M$ be an $n \times N$ ($n \ge N$) standard Gaussian complex matrix. Let $M_0$ be of the same size as $M$
and have singular values $\{\mu_l\}_{l=1,\dots,N}$. It follows from (\ref{2.5b+V}) with $p=1$, $\beta = 2$, that
  \begin{equation}\label{2.5i+} 
  \langle \det ( \mu \mathbb I_N - (M + M_0)^\dagger (M + M_0)  ) \rangle = 
 \Big  \langle\prod_{j=1}^N (  \mu- |\mathbf x + \boldsymbol \mu |^2 ) \Big \rangle_{x_j \in \mathcal N(0,1/2)},
  \end{equation}  
  where $\mathbf x = (x_1,\dots,x_N)$.
 In \cite[Cor.~3]{Fo13}, a result equivalent to this was shown to hold for all $n \times N$ random matrices $M$ with
 independent entries of mean zero and averaged modulus squared equal to unity. \\
 2.~Setting $\nu_1 = \cdots = \nu_p = -\beta x/2$ and $\boldsymbol \mu = 0$ in 
 (\ref{2.5b+V}) gives on the LHS, up to a simple factor, the same average as on the LHS of (\ref{2.4Ci}).
 However the matrix averages on the RHS are different. \\
 3.~In \cite{FGS18} the duality (\ref{2.5b+D}) with $\beta = 2$ is considered from a different perspective. In the case
 that all the components of $\boldsymbol \nu$ are equal, it is used therein the study the large $N$ asymptotics.
  \end{remark}
  
  \subsection{Circular and circular Jacobi $\beta$ ensembles}\label{S3.6}
  The original circular $\beta$ ensemble of Dyson \cite{Dy62} is the case $w=1$ of (\ref{2.1X}). The name given
  to (\ref{2.1X}) with the more general weight (\ref{2.6a}) is the (generalised) circular Jacobi $\beta$ ensemble
  (the restriction to $a_1 = a_2$ was the original meaning to this ensemble  \cite[\S 3.9]{Fo10}). It turns out that there
  are special features of averaged (squared modulus) characteristic polynomials in the case $w=1$ relative
  to  (\ref{2.6a}), so warranting a separate discussion.
  
  First, in the case $w=1$, for $|z|=1$ we observe that $\langle \prod_{l=1}^N |z + e^{2 \pi i \theta_l}|^{2q} \rangle_{{\rm CE}_{\beta,N}[1]}$ is
  independent of $z$ (we write $z + e^{2 \pi i \theta_l}$ rather than $z - e^{2 \pi i \theta_l}$ to account for the range being $-1/2 < \theta_l < 1/2$ in
  (\ref{2.1X}) rather than $0 < \theta_l < 1$ as is conventional for CE${}_{\beta,N}[1]$.)
  This is a consequence of rotational invariance. A natural generalisation is to replace each factor
  $ |z + e^{2 \pi i \theta_l}|^{2q} $ by $ |z_1 + e^{2 \pi i \theta_l}|^{2 q}  |z_2 + e^{2 \pi i \theta_l}|^{2\mu} $. Rotational invariance then gives
    \begin{equation}\label{3.8a}
 \Big  \langle \prod_{l=1}^N |z_1 + e^{2 \pi i \theta_l}|^{2q}|z_2 + e^{2 \pi i \theta_l}|^{2\mu}  \Big   \rangle_{{\rm CE}_{\beta,N}[1]}  \propto
 \Big  \langle \prod_{l=1}^N |z + e^{2 \pi i \theta_l}|^{2q}  \Big  \rangle_{{\rm CE}_{\beta,N}[|1 + e^{2 \pi i \theta}|^{2\mu}]}, \quad z := {z_1 \over z_2 }.
  \end{equation}  
  The working which lead to (\ref{2.6e}) shows that  for $q$ a positive integer, upon simple manipulation, the absolute value signs can be removed, giving that (\ref{3.8a}) is
  proportional  to
   \begin{equation}\label{3.8b}
  z^{qN}    
  \Big  \langle \prod_{l=1}^N (1 + \bar{z} e^{2 \pi i \theta_l})^{2q}  \Big  \rangle_{{\rm CE}_{\beta,N}[e^{-2 \pi i \theta q}|1 + e^{2 \pi i \theta}|^{2\mu}]}.
  \end{equation}  
  The average in (\ref{3.8a}) in the case $2q=2\mu=\beta$ relates to the two-point correlation function of Dyson's circular $\beta$ ensemble.
  For $\beta$ even an evaluation in terms of the Jack polynomial based hypergeometric function 
 $ {\vphantom{F}}_2^{\mathstrut} F_1^{(\alpha)} $ was given in the 1992 work \cite{Fo92j}, and a duality identity obtained two years
 later in  \cite{Fo94j} --- these findings are summarised in \cite[\S 13.2.1]{Fo10}. The same considerations can be used to obtain
 a duality identity for the average in (\ref{3.8a}).
 
 \begin{prop}\label{P3.7}
 Let $q$ be a positive integer, let $\mu \ge q$ be real, and let $z = z_1/z_2$ with $|z_1| = | z_2| = 1$. We have
   \begin{multline}\label{3.8c} 
  \Big  \langle \prod_{l=1}^N |z_1 + e^{2 \pi i \theta_l}|^{2q}|z_2 + e^{2 \pi i \theta_l}|^{2\mu}  \Big   \rangle_{{\rm CE}_{\beta,N}[1]} \\ 
  \propto   z^{qN}  \Big \langle   \prod_{l=1}^{2q}(1 + (1 - \bar{z}) e^{2 \pi i \theta_l})^{N} \Big \rangle_{{\rm CE}_{4/\beta,2q}[e^{ \pi i (a -b ) \theta}
     |1 + e^{2 \pi i \theta} |^{a + b }] \Big |_{\substack{a+b = -1 +2(\mu - q + 1)/\beta \\ \: \: a-b= -1 + 2(3 \mu + q + 1)/\beta}} }\\
     \propto
     z^{qN}  \Big \langle   \prod_{l=1}^{2q}(1 -  (1 - \bar{z})  x_l)^{N} \Big \rangle_{{\rm ME}_{4/\beta,2q}[x^{a} (1 - x)^{a}] \Big |_{a = -1+2(\mu - q + 1)/\beta}}.
  \end{multline}      
 \end{prop}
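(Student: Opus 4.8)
The plan is to strip the problem down to a circular-Jacobi average of the kind already handled in the proof of Proposition~\ref{P2.6}. First I would use rotational invariance to pass from the left-hand side of (\ref{3.8c}) to the form (\ref{3.8a}) with $z=z_1/z_2$ (note $|z|=1$), and then, since $q\in\Z_{>0}$, remove the absolute values exactly as in the passage to (\ref{3.8b}); this gives
\begin{equation}\label{plan-red}
\Big\langle\prod_{l=1}^N|z_1+e^{2\pi i\theta_l}|^{2q}|z_2+e^{2\pi i\theta_l}|^{2\mu}\Big\rangle_{{\rm CE}_{\beta,N}[1]}\propto z^{qN}\Big\langle\prod_{l=1}^N(1+\bar z e^{2\pi i\theta_l})^{2q}\Big\rangle_{{\rm CE}_{\beta,N}[e^{-2\pi iq\theta}|1+e^{2\pi i\theta}|^{2\mu}]}.
\end{equation}
The weight on the right is the circular-Jacobi weight $e^{\pi i(a-b)\theta}|1+e^{2\pi i\theta}|^{a+b}$ with $a=\mu-q$, $b=\mu+q$, and $\mu\ge q\ge 1$ forces ${\rm Re}\,(a+b)=2\mu>-1$, so the integration formula (\ref{2.4Dj}) and its consequence (\ref{2.4F}) apply; by (\ref{2.4F}) the average on the right of (\ref{plan-red}) equals the Jack-polynomial hypergeometric function ${\vphantom{F}}_2^{\mathstrut} F_1^{(2/\beta)}(-2q,-(\mu+q);{\beta\over2}(N-1)+\mu-q+1;(\bar z)^N)$. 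It remains to re-express this in the two stated ways.

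For the third line of (\ref{3.8c}) I would read the Jacobi$\leftrightarrow$circular-Jacobi duality (\ref{2.4Di}) backwards. Specialising (\ref{2.4Di}) with its $\beta$ replaced by $4/\beta$, its $N$ (the Jacobi size, equal to the power on the circular side) replaced by $2q$, its $p$ (the circular size, equal to the power on the Jacobi side) replaced by $N$, and $a_1=a_2=a':=-1+2(\mu-q+1)/\beta$ (so $a'>-1$), a direct check shows that the circular index becomes $4/(4/\beta)=\beta$ and the circular weight becomes exactly $e^{-2\pi iq\theta}|1+e^{2\pi i\theta}|^{2\mu}$, while matching the characteristic-polynomial factors forces $-x/(1-x)=\bar z$, i.e.\ $x=-\bar z/(1-\bar z)$ and $1-x=1/(1-\bar z)$. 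Hence the right-hand side of (\ref{plan-red}) is proportional to $(1-\bar z)^{2qN}\langle\det(x\,\mathbb I_{2q}-H)^N\rangle_{{\rm ME}_{4/\beta,2q}[x^{a'}(1-x)^{a'}]}$; writing $\det(x\,\mathbb I_{2q}-H)^N=\prod_l(x-x_l)^N$, pulling the common factor $(1-\bar z)^{-2qN}$ out of the product and using $(-1)^{2qN}=1$, every power of $(1-\bar z)$ cancels and one is left with $\langle\prod_l(\bar z+(1-\bar z)x_l)^N\rangle$, whereupon the reflection symmetry $x_l\mapsto 1-x_l$ of the symmetric weight $x^{a'}(1-x)^{a'}$ replaces $\bar z+(1-\bar z)x_l$ by $1-(1-\bar z)x_l$. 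Combined with (\ref{plan-red}) this is the third line of (\ref{3.8c}).

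For the second line I would instead use the duality for the Jack-polynomial-based ${\vphantom{F}}_2^{\mathstrut} F_1^{(\alpha)}$ hypergeometric function, which turns a ${\vphantom{F}}_2^{\mathstrut} F_1^{(\alpha)}$ in $N$ variables all equal to $\xi$ and carrying a parameter $-2q$ into a ${\vphantom{F}}_2^{\mathstrut} F_1^{(1/\alpha)}$ in $2q$ variables all equal to $1-\xi$ and carrying a parameter $-N$. Applied with $\alpha=2/\beta$, $\xi=\bar z$ it converts the ${\vphantom{F}}_2^{\mathstrut} F_1^{(2/\beta)}$ found above into a ${\vphantom{F}}_2^{\mathstrut} F_1^{(\beta/2)}$ in $2q$ variables all equal to $1-\bar z$, with numerator parameters $-N$ and $2(\mu+q)/\beta$ and denominator parameter $4(q+\mu)/\beta$, which by (\ref{2.4F}) (now with $\alpha=\beta/2$, $p=2q$) is precisely the ${\rm CE}_{4/\beta,2q}$ average of the second line; reading off the weight gives $a+b=-1+2(\mu-q+1)/\beta$ and $a-b=-1+2(3\mu+q+1)/\beta$. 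The point is that this ${\vphantom{F}}_2^{\mathstrut} F_1$ duality is itself a consequence of the dual Cauchy identity (\ref{SM2}) together with the Jack-polynomial conjugation identities --- how $P_\kappa^{(\alpha)}((1)^N)$, the generalised Pochhammer symbols $[u]_\kappa^{(\alpha)}$ and $h_\kappa'$ transform under $\kappa\leftrightarrow\kappa'$, $\alpha\leftrightarrow 1/\alpha$; see \cite[Ch.~VI.6]{Ma95}, \cite[Ch.~12--13]{Fo10} --- and these are ``the same considerations'' underlying \cite{Fo92j}, \cite{Fo94j}. Carrying this bookkeeping through so that exactly the claimed numerator and denominator parameters emerge, along with the sign and phase matching required by (\ref{2.4F}), is the one genuinely technical step; everything else --- the reductions (\ref{3.8a}), (\ref{3.8b}), the substitution $x=-\bar z/(1-\bar z)$, the reflection symmetry --- is routine, and the overall proportionality constant is fixed at the end by specialising to $z_1=z_2$, where rotational invariance makes the left-hand side a constant and the prefactor $z^{qN}$ and both right-hand averages reduce to unity.
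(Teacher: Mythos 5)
Your plan is correct, but it reaches the two right-hand sides of (\ref{3.8c}) by a genuinely different route from the paper. The paper converts (\ref{3.8b}) in one step into the ${\vphantom{F}}_2^{\mathstrut} F_1^{(\beta/2)}$ of $2q$ variables at $\bar z$ with parameter $-N$ (this is \cite[Eq.~(13.6)]{Fo10}, an evaluation not restated in the text), and then needs the transformation (\ref{3.8e}) to move the argument to $1-\bar z$ before (\ref{2.4F}) and (\ref{2.4F+}) can be applied. You instead apply (\ref{2.4F}) ``at size $N$'', getting a ${\vphantom{F}}_2^{\mathstrut} F_1^{(2/\beta)}$ of $N$ variables with parameter $-2q$, and for the third line you read the already-proved duality (\ref{2.4Di}) backwards with $\beta\mapsto 4/\beta$, $(N,p)\mapsto (2q,N)$, $a_1=a_2=-1+2(\mu-q+1)/\beta$. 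I checked your bookkeeping: this specialisation reproduces exactly the weight $e^{-2\pi i q\theta}|1+e^{2\pi i\theta}|^{2\mu}$ of (\ref{3.8b}), and after setting $x=-\bar z/(1-\bar z)$ all powers of $(1-\bar z)$ cancel and the reflection $x_l\mapsto 1-x_l$ gives the stated Jacobi average. This has the attraction of using only results established earlier in the paper (Proposition \ref{P2.6}), at the small cost of one extra remark you should add: both sides of (\ref{2.4Di}) are polynomials in $x$, so the identity persists at the complex point $x=-\bar z/(1-\bar z)$.

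The one soft spot is your derivation of the second line. The ${\vphantom{F}}_2^{\mathstrut} F_1^{(\alpha)}\leftrightarrow {\vphantom{F}}_2^{\mathstrut} F_1^{(1/\alpha)}$ duality you invoke, which simultaneously swaps $(N,2q)$ and flips the argument $\bar z\mapsto 1-\bar z$, is true (its output matches the paper's (\ref{3.8f}) exactly), but the dual Cauchy identity and the $\kappa\leftrightarrow\kappa'$ conjugation formulas by themselves give the size/parameter swap, not the argument flip, so as sketched this step would still need an Euler/Pfaff-type transformation such as (\ref{3.8e}) or (\ref{2.4H}), or a passage through the integral representations. You can close it entirely within your own framework and the paper's toolkit: having the third line, apply (\ref{2.4F+}) with $\alpha=\beta/2$, $p=2q$, $r=-N$, $t=1-\bar z$, $a_1=a_2=-1+2(\mu-q+1)/\beta$ to write the ${\rm ME}_{4/\beta,2q}$ average as ${\vphantom{F}}_2^{\mathstrut} F_1^{(\beta/2)}(-N,2(\mu+q)/\beta;4(\mu+q)/\beta;(1-\bar z)^{2q})$, and then (\ref{2.4F}) with $\alpha=\beta/2$, $p=2q$ converts this into the ${\rm CE}_{4/\beta,2q}$ average with precisely $a+b=-1+2(\mu-q+1)/\beta$ and $a-b=-1+2(3\mu+q+1)/\beta$; the requirement ${\rm Re}\,(a+b)>-1$ holds since $\mu\ge q$. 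With that substitution your argument is complete and parallel in strength to the paper's.
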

 
 \begin{proof}
 As noted, it suffices to consider the average in (\ref{3.8b}). Reading off from \cite[Eq.~(13.6)]{Fo10} we have that this average
 is equal to
    \begin{equation}\label{3.8d} 
 {\vphantom{F}}_2^{\mathstrut} F_1^{(\beta/2)} \Big (-N,{2 \over \beta} (\mu+q);-(N-1)-{2 \over \beta}(\mu - q + 1); (\bar{z})^{2q} \Big ).   
   \end{equation}  
  Next we would like the apply (\ref{2.4F}) to express this as an average over $2q$ variables. However, this is not immediately
  possible as the requirement that Re$\, (a+b) > -1$ therein is violated. To overcome this, we make use of the transformation
  identity \cite[Prop.~13.1.7]{Fo10}
   \begin{equation}\label{3.8e}  
  {\vphantom{F}}_2^{\mathstrut} F_1^{(\beta/2)} (-N,b;c;t_1,\dots,t_m) \propto 
  {\vphantom{F}}_2^{\mathstrut} F_1^{(\beta/2)} \Big (-N,b;-N+b+1+2(m-1)/\beta - c;1-t_1,\dots,1-t_m \Big ) .
   \end{equation} 
   This shows (\ref{3.8d}) can be replaced by
  \begin{equation}\label{3.8f} 
   {\vphantom{F}}_2^{\mathstrut} F_1^{(\beta/2)} \Big (-N,{2 \over \beta} (\mu+q);{4 \over \beta}(\mu + q) ; (v)^{2q} \Big )\Big |_{v = 1 - \bar{z}}.   
   \end{equation} 
   In the notation of  (\ref{2.4F}) we now have $a+b = -1 + 2(\mu - q + 1)/\beta$, which is greater than $-1$ under the assumption
   $\mu \ge q$, allowing us to deduce 
    the first line on the RHS.
    
    In relation to the second line of the RHS, we require knowledge of a companion to (\ref{2.4F}), which expresses
the   $ {\vphantom{F}}_2^{\mathstrut} F_1^{(\alpha)} $   function of $p$ variables all equal as an average
over the Jacobi $\beta$ ensemble \cite[Eq.~(13.12)]{Fo10}
\begin{equation}\label{2.4F+}   
 \Big \langle   \prod_{l=1}^p(1 - tx_l)^{-r} \Big \rangle_{{\rm ME}_{2/\alpha,p}[x^{a_1} (1 - x)^{a_2}]}
  = {\vphantom{F}}_2^{\mathstrut} F_1^{(\alpha)}\Big (r, {1 \over \alpha} (p-1) +a_1+1;
    {2 \over \alpha} (p-1) +a_1+a_2+2 ;
(t)^p \Big ). 
 \end{equation}  
 Comparing with (\ref{3.8f}) gives the stated result.      
 \end{proof}
 
 We next consider
  \begin{equation}\label{3.8g} 
  I_{N,2q}(z) := \Big \langle \prod_{l=1}^N | z - e^{2 \pi i \theta_l} |^{2q} \Big \rangle_{{\rm CE}_{\beta,N}[1]}
  \end{equation}  
  in the case that $|z| < 1$. We know from \cite{FK04} that in this circumstance we have the rewrite
  \begin{equation}\label{3.8g+} 
  I_{N,2q}(z) = \Big \langle \prod_{l=1}^N (1+ |z|^2  e^{2 \pi i \theta_l} )^{q} \Big \rangle_{{\rm CE}_{\beta,N}[e^{-\pi i \theta q} |1 + e^{2 \pi i \theta} |^q]}
  \end{equation}  
  (cf.~(\ref{3.8b})), valid for $q > -1$. Using this as the starting point, with $q$ a positive integer, a minor modification of the working
  of the proof of Proposition \ref{P3.7} allows for the deduction of a duality formula.

 \begin{prop}
 Let $q$ be a positive integer, let $|z| < 1$, and define $  I_{N,2q}(z)$ as in (\ref{3.8g}). We have
   \begin{multline}\label{3.8c+} 
 I_{N,2q}(z) 
  \propto   \Big \langle   \prod_{l=1}^{q}(1 + (1 - |z|^2) e^{2 \pi i \theta_l})^{N} \Big \rangle_{{\rm CE}_{4/\beta,q}[e^{ \pi i (a -b ) \theta}
     |1 + e^{2 \pi i \theta} |^{a + b }] \Big |_{\substack{a+b = -1 +2 /\beta \\ \: \: a-b= -1 + 2(2 q + 1)/\beta}}} \\
     \propto
    \Big \langle   \prod_{l=1}^{q}(1 -  (1 - |z|^2)  x_l)^{N} \Big \rangle_{{\rm ME}_{4/\beta,q}[x^{a} (1 - x)^{a}] \Big |_{a = -1+2 /\beta}}.
  \end{multline}      
 \end{prop}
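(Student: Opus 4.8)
The plan is to mirror the proof of Proposition~\ref{P3.7} almost line for line, with the exponents carried by the weight adjusted to those appearing in the rewrite (\ref{3.8g+}). That rewrite presents $I_{N,2q}(z)$, for $q$ a positive integer and $|z|<1$, as $\big\langle \prod_{l=1}^N (1 + |z|^2 e^{2 \pi i \theta_l})^q \big\rangle_{{\rm CE}_{\beta,N}[e^{-\pi i \theta q} |1 + e^{2 \pi i \theta}|^q]}$. Since $q \in \mathbb Z_{>0}$, the product $\prod_{l=1}^N (1 + |z|^2 e^{2 \pi i \theta_l})^q$ is a symmetric polynomial of bounded degree in the eigenvalues, so expanding it in Jack polynomials and invoking the orthogonality (\ref{3.41}) collapses the average to a sum over partitions with at most $q$ parts. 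Reading this off from \cite[Eq.~(13.6)]{Fo10} (the identity already used for Proposition~\ref{P3.7}, now with its power $2q$ replaced by $q$ and its weight exponent $2\mu$ by $q$) gives
\[
I_{N,2q}(z) \propto {\vphantom{F}}_2^{\mathstrut} F_1^{(\beta/2)}\Big(-N,\, \tfrac{2q}{\beta};\, -(N-1) - \tfrac{2}{\beta};\, (|z|^2)^q\Big).
\]

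As in Proposition~\ref{P3.7}, one cannot substitute this directly into (\ref{2.4F}): with $\alpha = \beta/2$ and $p = q$ the relevant matching $r=-N$ forces a pair $(a,b)$ with ${\rm Re}\,(a+b) = -N - 4q/\beta < -1$, in conflict with the hypothesis there. The remedy is to first apply the Euler-type transformation (\ref{3.8e}) with $m=q$, $b = 2q/\beta$, $c = -(N-1)-2/\beta$ and all arguments equal to $|z|^2$; a short computation of the shifted third parameter turns the display above into ${\vphantom{F}}_2^{\mathstrut} F_1^{(\beta/2)}\big(-N,\, 2q/\beta;\, 4q/\beta;\, (1 - |z|^2)^q\big)$, for which the corresponding parameters satisfy ${\rm Re}\,(a+b) = -1 + 2/\beta > -1$ with no further restriction on $q$. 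Applying (\ref{2.4F}) with $\alpha = \beta/2$, $p=q$, $r=-N$, $t = 1 - |z|^2$ then yields the first line of the right-hand side, the weights being fixed by $\tfrac1\alpha(p-1)+a+1 = 4q/\beta$ and $-b = 2q/\beta$, i.e. $a+b = -1 + 2/\beta$ and $a-b = -1 + 2(2q+1)/\beta$. For the second line one instead applies the Jacobi-ensemble companion (\ref{2.4F+}) to the same ${\vphantom{F}}_2^{\mathstrut} F_1^{(\beta/2)}\big(-N,\, 2q/\beta;\, 4q/\beta;\, (1 - |z|^2)^q\big)$ (using that ${\vphantom{F}}_2^{\mathstrut} F_1^{(\alpha)}$ is symmetric in its first two arguments to put $r=-N$), which recovers it as an ${\rm ME}_{4/\beta,q}[x^{a_1}(1-x)^{a_2}]$ average with $\tfrac1\alpha(p-1)+a_1+1 = 2q/\beta$ and $\tfrac2\alpha(p-1)+a_1+a_2+2 = 4q/\beta$, hence $a_1 = a_2 = -1 + 2/\beta$.

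I do not expect a genuine obstacle: once (\ref{3.8g+}) is available the argument is just the chain of hypergeometric transformations already used for Proposition~\ref{P3.7}, and the only delicate step is the bookkeeping of index shifts needed to land precisely on the stated weights. The one place where $q \in \mathbb Z_{>0}$ is essential (the ``minor modification'' mentioned before the statement) is the reduction of the polynomial average in (\ref{3.8g+}) to a terminating ${\vphantom{F}}_2^{\mathstrut} F_1^{(\beta/2)}$ series in $q$ variables; for non-integer $q$ the average is not a finite-variable Jack hypergeometric function and this route is unavailable. The proportionality constants on both lines can, if wanted, be fixed by specialising $z$: letting $|z|\to 1$ makes $I_{N,2q}(z)$ tend to the $z$-independent constant $\big\langle \prod_l |z - e^{2 \pi i \theta_l}|^{2q}\big\rangle_{{\rm CE}_{\beta,N}[1]}$, while $1 - |z|^2 \to 0$ collapses both right-hand sides to the normalisation of the corresponding $q$-dimensional ensemble.
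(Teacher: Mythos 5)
Your argument is correct and is precisely the route the paper intends (the paper gives no separate proof, describing the result as following from ``a minor modification of the working of the proof of Proposition \ref{P3.7}''): starting from (\ref{3.8g+}), reading off ${\vphantom{F}}_2^{\mathstrut} F_1^{(\beta/2)}(-N,2q/\beta;-(N-1)-2/\beta;(|z|^2)^q)$ from \cite[Eq.~(13.6)]{Fo10}, applying (\ref{3.8e}) to obtain ${\vphantom{F}}_2^{\mathstrut} F_1^{(\beta/2)}(-N,2q/\beta;4q/\beta;((1-|z|^2))^q)$, and then using (\ref{2.4F}) and (\ref{2.4F+}) with $p=q$, $\alpha=\beta/2$. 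Your parameter bookkeeping ($a+b=-1+2/\beta$, $a-b=-1+2(2q+1)/\beta$ for the circular form, and $a_1=a_2=-1+2/\beta$ for the Jacobi form), together with the observation that $\mathrm{Re}\,(a+b)>-1$ now holds automatically so no analogue of the condition $\mu\ge q$ is needed, matches the stated weights exactly.
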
 
 
 Neither (\ref{3.8a}) nor (\ref{3.8b}) give rise to dualities for the circular Jacobi ensemble unless $|z|=1$. Then
  \begin{equation}\label{3.8H} 
 \Big  \langle \prod_{l=1}^N |z + e^{2 \pi i \theta_l}|^{2q}  \Big  \rangle_{{\rm CE}_{\beta,N}[e^{ \pi i (a-b) }|1 + e^{2 \pi i \theta}|^{a+b}]} 
\propto   z^{qN}    
  \Big  \langle \prod_{l=1}^N (1 + \bar{z} e^{2 \pi i \theta_l})^{2q}  \Big  \rangle_{{\rm CE}_{\beta,N}[e^{ \pi i \theta(a-b-2  q}|1 + e^{2 \pi i \theta}|^{a+b}]},
  \end{equation}  
  which 
as follows from the working of the  proof of Proposition \ref{P3.7} permits a duality identity (for the special case $2q=\beta$ see \cite{FLT21}).

\begin{prop}
 Let $q$ be a positive integer, let $|z| = 1$, and require that $a \ge q-1$. We have
   \begin{multline}\label{3.8d+} 
  \Big  \langle \prod_{l=1}^N |z + e^{2 \pi i \theta_l}|^{2q}  \Big  \rangle_{{\rm CE}_{\beta,N}[e^{ \pi i (a-b) }|1 + e^{2 \pi i \theta}|^{a+b}]} \\
  \propto   z^{qN}  \Big \langle   \prod_{l=1}^{2q}(1 + (1 - \bar{z}) e^{2 \pi i \theta_l})^{N} \Big \rangle_{{\rm CE}_{4/\beta,q}[e^{ \pi i (a' -b' ) \theta}
     |1 + e^{2 \pi i \theta} |^{a' + b' }] \Big |_{\substack{a'+b' = -1 +2(a-q+1 /\beta \\ \: \: a'-b'= -1 + 2(2 b + a+ q+1)/\beta}}} \\
     \propto  z^{qN} 
    \Big \langle   \prod_{l=1}^{2q}(1 -  (1 - \bar{z} )  x_l)^{N} \Big \rangle_{{\rm ME}_{4/\beta,q}[x^{a_1} (1 - x)^{a_2}] \Big |_{\substack{a_1 = -1+2(b-q+1) /\beta  \\
a_2 = -1+2(a-q+1) /\beta    }}}.
  \end{multline}   
  (The final of these averages also requires   $b \ge q-1$.) 
 \end{prop}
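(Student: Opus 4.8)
The plan is to repeat the argument that establishes Proposition~\ref{P3.7}, now feeding in the general circular Jacobi weight rather than the weight $w=1$. Since $q$ is a positive integer, the absolute value signs on the left of (\ref{3.8d+}) can be removed via (\ref{3.8H}), so the task reduces to evaluating the $N$-variable average $\langle \prod_{l=1}^N (1 + \bar z\, e^{2\pi i\theta_l})^{2q} \rangle_{{\rm CE}_{\beta,N}[e^{\pi i\theta(a-b-2q)}|1+e^{2\pi i\theta}|^{a+b}]}$ on the right-hand side of (\ref{3.8H}). First I would read off from \cite[Eq.~(13.6)]{Fo10}---equivalently, expand $\prod_l(1+\bar z\,e^{2\pi i\theta_l})^{2q}$ through the dual Cauchy identity (\ref{SM2}) with $\alpha = 2/\beta$ and integrate term by term using (\ref{2.4Dj})---the evaluation of this average as a terminating Jack polynomial hypergeometric series ${}_2 F_1^{(\beta/2)}(-N, b_1; b_2; (\bar z)^{2q})$ with $2q$ repeated arguments, upper parameter $-N$, and $b_1, b_2$ linear in $a, b, q$; this is the step realising the interchange of $N$ with $2q$ and of $\beta$ with $4/\beta$.

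As in the passage from (\ref{3.8d}) to (\ref{3.8f}) in the proof of Proposition~\ref{P3.7}, the lower parameter $b_2$ carries a $-N$, so the representation (\ref{2.4F}) cannot be applied directly; one first invokes the transformation (\ref{3.8e}) to replace the argument $(\bar z)^{2q}$ by $(1-\bar z)^{2q}$ and shift $b_2$ into the range where the condition Re$\,(a+b)>-1$ of (\ref{2.4F}) is met---it is precisely this requirement that produces the hypothesis $a \ge q-1$ (and, for the Jacobi $\beta$-ensemble version on the last line of (\ref{3.8d+}), the additional $b \ge q-1$). With the hypergeometric function now in the right form, (\ref{2.4F}) identifies ${}_2 F_1^{(\beta/2)}(-N,\cdot;\cdot;(1-\bar z)^{2q})$ with an average of $\prod_{l=1}^{2q}(1+(1-\bar z)e^{2\pi i\theta_l})^N$ over ${\rm CE}_{4/\beta,2q}$ for the stated circular Jacobi weight, giving the second member of (\ref{3.8d+}); using instead the companion representation (\ref{2.4F+}) identifies it with an average of $\prod_{l=1}^{2q}(1-(1-\bar z)x_l)^N$ over ${\rm ME}_{4/\beta,2q}[x^{a_1}(1-x)^{a_2}]$, giving the third member. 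The symmetry of ${}_2 F_1^{(\alpha)}$ in its first two upper arguments accounts for the symmetric appearance of the parameters $a_1 = -1+2(b-q+1)/\beta$, $a_2 = -1+2(a-q+1)/\beta$ in the last member. The factor $z^{qN}$ is carried unchanged from (\ref{3.8H}), and the remaining $z$-independent proportionality constant is fixed by setting $z=1$ in both members.

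I expect the only real difficulty to be the parameter bookkeeping, which is routine but delicate: tracking the three parameters of the ${}_2 F_1^{(\beta/2)}$ through the dual Cauchy expansion (which needs the conjugation identities for the generalised Pochhammer symbols $[u]_\kappa^{(\alpha)}$ of (\ref{4.3m}) and for the principal specialisations $P_\kappa^{(\alpha)}((1)^n)$ when the sum over $\kappa$ is re-indexed by the conjugate partition), then through (\ref{3.8e}), and finally through (\ref{2.4F}) or (\ref{2.4F+}). What must be checked at the end is that this chain delivers exactly the specialisations displayed in (\ref{3.8d+}), and that $a \ge q-1$ (respectively also $b \ge q-1$) is indeed the precise condition under which the invoked instance of (\ref{2.4F}) (respectively (\ref{2.4F+})) is legitimate; no structurally new ingredient beyond those already used for Proposition~\ref{P3.7} is required.
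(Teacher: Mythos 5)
Your proposal is correct and follows essentially the same route as the paper, which establishes this proposition by the same chain used for Proposition \ref{P3.7}: remove the absolute values via (\ref{3.8H}), evaluate the resulting average as a terminating ${\vphantom{F}}_2^{\mathstrut} F_1^{(\beta/2)}$ with $2q$ repeated arguments, apply the transformation (\ref{3.8e}) so that the condition ${\rm Re}\,(a+b)>-1$ of (\ref{2.4F}) is met, and then use (\ref{2.4F}) and (\ref{2.4F+}) for the two stated forms. Your identification of the origin of the hypotheses $a\ge q-1$ (and $b\ge q-1$ for the Jacobi-ensemble form) and the fixing of the proportionality constant are exactly as intended.
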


 \section{Interpretations and scaled large $N$ asymptotics}\label{S4}
 \subsection{Wigner semi-circle law}\label{S4.1}
 With $\sigma^2 = 1/(2N)$, and making use of (\ref{2.3c}), we see that (\ref{1.0}) can be written
  \begin{equation}\label{6.1}
  \langle \det ( {x} \mathbb I_N - \tilde{H}) \rangle_{\tilde{H} \in \mathcal{H}_N} = 2^{-N} (\sqrt{2 N})^N H_N(\sqrt{2N} {x}).
   \end{equation}  
   The significance of this scaling is that the Wigner class $\mathcal{H}_N$ then has eigenvalue support on the compact
   interval $(-1,1)$. Thus in the setting that the normalised limiting eigenvalue density, $\rho_{\infty,(1)}(y)$ say, has support
   on a compact interval $I$, it is true that for a large class of $a(y)$ (slow enough decay at infinity), for $N$ large
     \begin{equation}\label{6.1a}
    \Big  \langle e^{\sum_{l=1}^N a(y_l)} \Big \rangle_{\mathcal{H}_N} = e^{N \int_I a(\lambda) \rho_{\infty,(1)}(y) \,  dy + {\rm O}(1)};
   \end{equation} 
   see e.g.~\cite[\S 3.1]{FW17}, \cite{FK07A}, which can be viewed as a law of large numbers. Now the LHS of (\ref{6.1}) is of the
   form of the LHS of (\ref{6.1a}) with $a(y) = a(y;x) = \log(x - y)$. Substituting  (\ref{6.1a}) in  (\ref{6.1}) and
   taking the logarithmic derivative with respect to $x$ then gives that to leading order
      \begin{equation}\label{6.1b}  
 W_1(x):=     \int_{-1}^1 {\rho_{\infty,(1)}(y) \over x - y} \, d y = {1 \over N} {1 \over u(x)} {d \over d x} u(x), \quad u(x) = H_N(\sqrt{2N} x), \: \: x \notin (-1,1).
  \end{equation}  
 Consideration of the second order linear differential equation satisfied by $u(x)$ leads to the conclusion that $W_1(x)$ satisfies the
 quadratic equation $W^2_1/4 - x W_1 + 1 = 0$. Choosing the root such that $W_1(x) \sim 1/x$ as $x \to \infty$, and applying the
 Sokhotski-Plemelj formula $\rho_{\infty,(1)}(y) = {1 \over \pi} {\rm Im}_{\epsilon \to 0^+} W_1(y+i \epsilon )$,  then gives
      \begin{equation}\label{6.1c}   
 \rho_{\infty,(1)}(y)  =  \rho^{\rm W}(y), \qquad         \rho^{\rm W}(y):=   {2 \over \pi} (1 - y^2)^{1/2} \mathbbm 1_{|y| < 1},    
   \end{equation}  
   which is the density function specifying the Wigner semi-circle law; see e.g.~\cite{PS11}.
   
   \subsection{Gaussian $\beta$ ensemble}
   \subsubsection{Global density}
  Generally for the functional form (\ref{2.1})  specifying ME${}_{\beta,N}[w]$, the definition of the eigenvalue density 
  $ \rho_{N,(1)}(\lambda) = \rho_{N,(1)}(\lambda;w)$ as an integral over all but one of the $\lambda_i$
  gives
      \begin{equation}\label{6.1d}   
 \rho_{N,\beta,(1)}(\lambda;w)  \propto w(\lambda)  \Big \langle \prod_{l=1}^{N-1} |\lambda - x_l|^{\beta} \Big \rangle_{{\rm ME}_{\beta,N}[w]}.
   \end{equation} 
   In the case of the   Gaussian $\beta$ ensemble it therefore follows from (\ref{2.X1c}) that  for $\beta$ even
   \begin{equation}\label{6.1e}   
  \rho_{N,(1)}(\lambda;e^{-\beta \lambda^2/2})    \propto     e^{-\beta \lambda^2/2} 
  \Big  \langle   \prod_{j=1}^{N-1}  ( i  x  -  \lambda_j)^{N-1} \Big  \rangle_{{\rm ME}_{4/ \beta,N-1}[e^{-\lambda^2}]}.
   \end{equation}  
  Starting with this duality identity, several scaled asymptotic limits have been analysed using saddle point analysis;
  for a detailed account of this method in the present context see \cite{DL14}.
  One is the direct evaluation of the global density limit \cite[\S 5.3]{BF97a},
   \begin{equation}\label{6.1f}    
\lim_{N \to \infty}        \sqrt{{2 \over N}}     \rho_{N,(1)}(\lambda;e^{-\beta N \lambda^2})    =    \rho^{\rm W}(\lambda).
  \end{equation}  
In \cite{DF06} this result was extended to an asymptotic expansion in which the first $\lfloor \beta/2 \rfloor$ oscillatory terms were
specified. Here we make note of the first of these.

\begin{prop}\label{P4.1}
With $P^{\rm W}(\lambda) := \int_{-1}^\lambda  \rho^{\rm W}(x) \, dx$, for
  large $N$ and $|\lambda| < 1$,
\small    \begin{equation}\label{6.2}
   \sqrt{{2 \over N}}     \rho_{N,(1)}(\lambda;e^{-\beta N \lambda^2})    =    \rho^{\rm W}(\lambda) - {2 \over \pi}
   {\Gamma(1 + 2/\beta) \over (\pi   \rho^{\rm W}(\lambda) )^{6/\beta - 1}} {1 \over N^{2/\beta}}
   \cos \Big (2 \pi N P^{\rm W}(\lambda) + (1 - 2/\beta) \arcsin \lambda \Big ) + \cdots, 
   \end{equation}    
   \normalsize   
where the next oscillatory term is at order $N^{-8/\beta}$, while the next non-oscillatory term is at order $N^{-1}$
($\beta \ne 2$) and order $N^{-2}$ for $\beta = 2$.  
\end{prop}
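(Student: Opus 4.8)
The plan is to run a multidimensional steepest‑descent analysis on the finite‑dimensional integral produced by the duality (\ref{6.1e}). Concretely, combining (\ref{6.1d}) with the Gaussian $\beta$‑ensemble duality (\ref{2.X1c}) — taken with $N-1$ variables and power $\beta$, so that for $\beta$ even $|\lambda-x_l|^{\beta}=(\lambda-x_l)^{\beta}$ — expresses $\rho_{N,(1)}(\lambda;e^{-\beta\lambda^2/2})$ as proportional to $e^{-\beta\lambda^2/2}$ times an (unnormalised) average of $\prod_{j=1}^{\beta}(i\lambda-\mu_j)^{N-1}$ over ${\rm ME}_{4/\beta,\beta}[e^{-\mu^2}]$. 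Passing to the weight $e^{-\beta N\lambda^2}$ requires the scaling $\lambda\mapsto\sqrt{2N}\,\lambda$, and after absorbing the Jacobian one is led to study
\[
\sqrt{\tfrac2N}\,\rho_{N,(1)}(\lambda;e^{-\beta N\lambda^2})\;\propto\;C_N\,e^{-\beta N\lambda^2}\int_{\R^{\beta}}\prod_{j=1}^{\beta}e^{-\mu_j^2}\prod_{1\le j<k\le\beta}|\mu_j-\mu_k|^{4/\beta}\prod_{j=1}^{\beta}\bigl(i\sqrt{2N}\,\lambda-\mu_j\bigr)^{N-1}\,d\mu ,
\]
with $C_N$ an $N$‑dependent constant to be fixed at the end. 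First I would rescale $\mu_j=\sqrt N\,s_j$ so the integrand becomes $\exp\!\bigl(N\Phi(\mathbf s)\bigr)$ times sub‑exponential factors, with $\Phi(\mathbf s)=-\beta\lambda^2+\sum_j\phi(s_j)$, $\phi(s)=-s^2+\log(i\sqrt2\,\lambda-s)$, and then locate the stationary points of $\phi$: the quadratic $2s^2-2i\sqrt2\,\lambda s-1=0$ gives the two saddles $s_\pm=(i\lambda\pm\sqrt{1-\lambda^2})/\sqrt2$, which for $|\lambda|<1$ obey $s_-=-\overline{s_+}$, $s_++s_-=i\sqrt2\,\lambda$, $s_+s_-=-\tfrac12$, with $\phi''(s_\pm)=-4\sqrt{1-\lambda^2}\,(\sqrt{1-\lambda^2}\pm i\lambda)$ and $\phi(s_+)+\phi(s_-)=2\lambda^2-1-\log2+i\pi$.

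The core step is the steepest descent on this $\beta$‑dimensional integral. Because of the factor $\prod_{j<k}|s_j-s_k|^{4/\beta}$, coincidences of the $s_j$ are suppressed but not excluded, so the asymptotics is governed by the ways of splitting the $\beta$ integration variables into a cluster of size $m$ near $s_+$ and one of size $\beta-m$ near $s_-$, $m=0,1,\dots,\beta$. For each of the $\binom{\beta}{m}$ assignments the inter‑cluster Vandermonde contributes the constant $|s_+-s_-|^{4m(\beta-m)/\beta}=(2(1-\lambda^2))^{2m(\beta-m)/\beta}$, while within each cluster one expands $\phi$ to quadratic order and evaluates the resulting Gaussian integral carrying the intra‑cluster power‑$4/\beta$ Vandermonde by the Mehta–Selberg formula; this produces a Mehta normalisation $\prod_{j=1}^{m}\Gamma(1+2j/\beta)/\Gamma(1+2/\beta)$, a power $\bigl(N\,\phi''(s_\pm)\bigr)^{-[m/2+m(m-1)/\beta]}$ of $N$ and of the curvature, and (from $(i\sqrt2\,\lambda-s_j)^{N-1}$ treated as $\exp\!\bigl((N-1)\log(\cdots)\bigr)$) the $O(N^{0})$ factor $\prod_{j}(i\sqrt2\,\lambda-s_j)^{-1}$. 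Collecting these, the integral becomes a sum $\sum_{m=0}^{\beta}(\cdots)\,e^{N[m\phi(s_+)+(\beta-m)\phi(s_-)]}\,N^{-(\beta-1)-\frac2\beta(m-\beta/2)^{2}}$ of contributions of rapidly decreasing order.

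It remains to read off the first two terms. The balanced term $m=\beta/2$ is dominant: using $\phi(s_+)+\phi(s_-)=2\lambda^2-1-\log2+i\pi$ its exponential cancels the $e^{-\beta N\lambda^2}$ prefactor and leaves no $\lambda$‑dependence there, while $(2(1-\lambda^2))^{\beta/2}$ from the inter‑cluster factor and $(1-\lambda^2)^{-(\beta-1)/2}$ from the two (complex‑conjugate) curvatures combine to $(1-\lambda^2)^{1/2}$; fixing $C_N$ from this reproduces $\rho^{\rm W}(\lambda)$ and recovers (\ref{6.1f}). The pair $m=\beta/2\pm1$ gives the first oscillatory correction, of relative order $N^{-2/\beta}$: the two contributions are complex conjugates (the $i\pi$ in $\phi(s_+)+\phi(s_-)$ being accounted for), so they combine into a cosine; the surviving $\lambda$‑powers collapse to $(1-\lambda^2)^{1/2-3/\beta}\propto(\pi\rho^{\rm W}(\lambda))^{1-6/\beta}$, the ratio of Mehta constants against the balanced term together with the binomial ratio $\binom{\beta}{\beta/2+1}/\binom{\beta}{\beta/2}$ collapses to $\Gamma(1+2/\beta)$, and the phase $N[\phi(s_+)-\phi(s_-)]$ is purely imaginary for $|\lambda|<1$ with $\lambda$‑derivative $-2\pi\rho^{\rm W}(\lambda)$ by the semicircle resolvent identity, hence equal to $\pm 2\pi N P^{\rm W}(\lambda)$ modulo $2\pi N\Z$ (both irrelevant inside the cosine); the $O(N^{0})$ phase shift $(1-2/\beta)\arcsin\lambda$ then comes from assembling $\arg\prod(i\sqrt2\,\lambda-s_j)^{-1}$ (each factor carrying $\arg s_\pm$, which is $\arcsin\lambda$ or $\pi-\arcsin\lambda$) with the curvature phases $\arg\phi''(s_\pm)=\pm(\pi+\arcsin\lambda)$ and the first Laplace correction. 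Finally the orders of the remaining terms in (\ref{6.2}) — next oscillatory at $N^{-8/\beta}$ from $m=\beta/2\pm2$, and next non‑oscillatory at $N^{-1}$, respectively $N^{-2}$ for $\beta=2$ (where the $N^{-1}$ coefficient of the balanced term vanishes) — are read off from the subleading Laplace terms around each cluster.

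The step I expect to be the main obstacle is precisely this last bookkeeping: justifying rigorously that only the two‑cluster configurations contribute (controlling the transition regions between the saddles, the tails of the $\mu$‑integrals, and the branch choices forced when $|\mu_j-\mu_k|^{4/\beta}$ is analytically continued off the real axis), and then extracting the exact constants — the Mehta–Selberg normalisations, the curvature determinants, the $O(N^{0})$ amplitude and phase factors — so that the amplitude $\tfrac2\pi\,\Gamma(1+2/\beta)\,(\pi\rho^{\rm W}(\lambda))^{1-6/\beta}$ and, above all, the phase shift $(1-2/\beta)\arcsin\lambda$ emerge in the stated closed form, together with pinning down the exact order ($N^{-1}$ versus $N^{-2}$ at $\beta=2$) of the first non‑oscillatory correction.
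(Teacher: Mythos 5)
Your proposal is correct and follows essentially the same route the paper takes (via \cite{DF06}, whose method is sketched around Proposition \ref{P4.4}): apply the duality (\ref{2.X1c}) through (\ref{6.1d})--(\ref{6.1e}) to write the density as a $\beta$-dimensional integral, deform the contours through the two saddles $\tfrac{1}{2}(i\lambda\pm\sqrt{1-\lambda^2})$ (your $s_\pm$ up to scaling), and sum over the splittings of the $\beta$ variables between them, with the balanced split giving $\rho^{\rm W}$ via the Gaussian (Mehta/Selberg) integral and the splits $\beta/2\pm 1$ giving the $N^{-2/\beta}$ oscillatory term. Your identification of the saddle values, the phase $2\pi N P^{\rm W}(\lambda)$, and the hierarchy of orders $N^{-2(m-\beta/2)^2/\beta}$ all agree with that analysis.
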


Analysis analogous to that of \S \ref{S4.1} can be applied to the Gaussian $\beta$ ensemble. This is because up to a scaling the formula
(\ref{6.1}) remains valid,
  \begin{equation}\
  \Big  \langle   \prod_{j=1}^N (x - \lambda_j ) \Big \rangle_{{\rm ME}_{\beta,N}[e^{-\beta \lambda^2/2}]} = 2^{-N} H_N(x),
  \end{equation}  
  as follows from (\ref{2.X1c}) with $p=1$ and the integral form of the Hermite polynomials (\ref{2.3c}). In particular the RHS is
  independent of $\beta$, and so the working of  \S \ref{S4.1} with respect to the global limit can be repeated to reclaim (\ref{6.1c}).

 \subsubsection{Edge  density}
With $w(\lambda) = e^{-\beta \lambda^2/2}$, to leading order the right spectrum edge is at $\lambda = \sqrt{2N}$.
It has been known since \cite{Fo93a} that a well defined limiting state results from the use of the 
so-called soft edge scaling variables $\{x_i\}$, specified by the requirement that $\lambda_i = \sqrt{2N} + x_i/(\sqrt{2} N^{1/6})$
(i.e.~the origin is shifted to the location of the right edge, and the distances are measured on the scale of $1/N^{1/6}$).
In \cite{DF06} a $\beta$ dimensional integral formula was obtained for the soft edge scaling limit of the eigenvalue
density for $\beta$ even, based on a coalescing saddle point analysis applied to (\ref{6.1e}). In \cite{FT19a} this
analysis was extended to give the next two large terms in the large $N$ asymptotic expansion. In fact some structure was
uncovered in this latter study, with the first order correction (occurring at order $N^{-1/3}$) being shown to be related to the leading order asymptotic form
by a derivative operation (see also \cite{FFG06}, \cite[Eq.~(1.35)]{BL24} for the case $\beta = 1$). 
Equivalently, by redefining the soft edge scaling to include a particular constant shift of the scaling variables $x_i$, this correction term can be
cancelled to obtain an optimal $N^{-2/3}$ correction term \cite[\S 5]{FT18}.

\begin{prop}
For large $N$ and $\beta$ even
\begin{equation}\label{4.9}
			\frac{1}{\sqrt{2}N^{1/6}}\rho_{N,(1)}\left( \sqrt{2N}+ 2^{-1/2} N^{-1/6}
			\left(x+  1/2-1/\beta \right); e^{-\beta \lambda^2/2}
			\right) =  \rho_{\infty,(1)}^{\rm soft}(x;\beta) + {\rm O}(N^{-2/3}),
		\end{equation}
		where with 
\begin{equation}\label{eq:K}
	K_{n,\beta}(x)=-\frac{1}{(2\pi i)^n}\int_{-i\infty}^{i\infty}dv_1 \cdots \int_{-i\infty}^{i\infty}dv_n\, \prod_{j=1}^n e^{v_j^3/3-xv_j}\prod_{1\leq k<l\leq n}|v_k-v_l|^{4/\beta},
	\end{equation}
	(a multidimensional extension of the integral form of the Airy function)
	the limiting soft edge density $ \rho_{\infty,(1)}^{\rm soft}(x)$ is specified by
\begin{equation}\label{eq:K1}	
 \rho_{\infty,(1)}^{\rm soft}(x;\beta) =  \frac{1}{2\pi} \left( \frac{4\pi}{\beta} \right)^{\beta/2} \Gamma(1+\beta/2)\prod_{j=1}^{\beta} \frac{\Gamma(1+2/\beta)}{\Gamma(1+2j/\beta)} K_{\beta,\beta}(x).
\end{equation} 
\end{prop}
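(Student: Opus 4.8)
The plan is to read the density off the duality (\ref{6.1e}) and then subject the resulting $\beta$-dimensional integral to a coalescing saddle-point analysis, following \cite{DF06} for the leading order and \cite{FT19a}, \cite{FT18} for the correction. Since $\beta$ is even, the duality (\ref{6.1e}) --- equivalently (\ref{6.1d}) combined with (\ref{2.X1c}) applied with power $p=\beta$ and characteristic polynomial degree $N-1$ --- expresses $\rho_{N,(1)}(\lambda;e^{-\beta\lambda^2/2})$, up to a $\lambda$-independent constant, as $e^{-\beta\lambda^2/2}$ times the ${\rm ME}_{4/\beta,\beta}[e^{-\mu^2}]$ average of $\prod_{j=1}^\beta(i\lambda-\mu_j)^{N-1}$. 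Writing that average against the PDF (\ref{2.1}) and absorbing the $\beta$ Gaussian weights and the characteristic-polynomial factors into a single exponential, the task becomes the large-$N$ analysis of
\[
\rho_{N,(1)}(\lambda;e^{-\beta\lambda^2/2}) \propto e^{-\beta\lambda^2/2}\!\int_{\mathbb R^\beta}\prod_{1\le j<k\le\beta}|\mu_k-\mu_j|^{4/\beta}\,\exp\Big(\sum_{j=1}^\beta g_{N,\lambda}(\mu_j)\Big)d\mu_1\cdots d\mu_\beta,
\]
with $g_{N,\lambda}(\mu):=-\mu^2+(N-1)\log(i\lambda-\mu)$, the proportionality constant being fixed by the Mehta (Selberg) integral that normalises ${\rm ME}_{4/\beta,\beta}[e^{-\mu^2}]$.

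The one-body phase $g_{N,\lambda}$ has critical points $\mu=\tfrac12(i\lambda\pm\sqrt{2(N-1)-\lambda^2})$, so its two saddles coalesce --- and $g_{N,\lambda}''$ vanishes there --- precisely when $\lambda^2=2(N-1)$, i.e. in the soft-edge window $\lambda=\sqrt{2N}+{\rm O}(N^{-1/6})$. First I would substitute the scaling $\lambda=\sqrt{2N}+2^{-1/2}N^{-1/6}(x+1/2-1/\beta)$, deform each $\mu_j$-contour through the coalescing saddle $\mu_*$ along the steepest-descent (Airy) ray, and rescale $\mu_j=\mu_*+aN^{1/6}v_j$ with $a$ the unique complex constant making $\tfrac16 g_{N,\lambda}'''(\mu_*)(aN^{1/6}v)^3=\tfrac13 v^3$ to leading order. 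One then checks: the weight $e^{-\beta\lambda^2/2}$ cancels the ${\rm O}(N^{1/3})$ growth of the exponent $\beta g_{N,\lambda}(\mu_*)$, leaving the ${\rm O}(1)$ local phase $\sum_j\big(\tfrac13 v_j^3 - x v_j\big)$ --- it is in matching this linear term that the constant $1/2-1/\beta$ is singled out, the $\tfrac12$ coming from the Gaussian weight and the $-1/\beta$ from the Vandermonde exponent $4/\beta$; the remaining Taylor data of $g_{N,\lambda}$ at $\mu_*$ are ${\rm O}(N^{-1/3})$; and the Vandermonde and Jacobian produce $\prod_{j<k}|v_k-v_j|^{4/\beta}$ together with explicit powers of $N^{1/6}$ and $2$. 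Collecting these with the Mehta constant and the prefactor $1/(\sqrt2N^{1/6})$ on the left of (\ref{4.9}) yields, in the limit, a multiple of $\int\prod_{j<k}|v_k-v_j|^{4/\beta}\prod_j e^{v_j^3/3-xv_j}\,dv$, which after straightening the contours is exactly the constant of (\ref{eq:K1}) times $K_{\beta,\beta}(x)$ of (\ref{eq:K}); the value of that constant is pinned down by the Mehta integral $\int_{\mathbb R^\beta}\prod_je^{-\mu_j^2}\prod_{j<k}|\mu_k-\mu_j|^{4/\beta}d\mu$, which supplies the factor $\prod_{j=1}^\beta\Gamma(1+2j/\beta)$.

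For the error term the expansion must be carried beyond leading order. The ${\rm O}(N^{-1/3})$ data ($g_{N,\lambda}''$ and $g_{N,\lambda}''''$ at $\mu_*$, the displacement of the true saddle from $\mu_*$, and the deviation of the genuine spectral edge from $\sqrt{2N}$) would, for a generic centering, feed an ${\rm O}(N^{-1/3})$ correction to the scaled density. The structural fact established in \cite{FT19a} is that this correction is expressible through a derivative of the leading term, hence is the infinitesimal form of a rigid translation of the scaled variable; the constant shift $1/2-1/\beta$ built into (\ref{4.9}) is the translation that annihilates it (for $\beta=2$ the shift is $0$, recovering the familiar Hermite/Airy soft-edge asymptotics at rate $N^{-2/3}$). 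Once this re-centering is in force the first surviving correction is ${\rm O}(N^{-2/3})$, which is (\ref{4.9}).

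The main obstacle is the rigorous execution of the coalescing saddle-point analysis for the $\beta$-fold integral: one must select the deformed contours so that ${\rm Re}(v_j^3/3)\to-\infty$ along each and no spurious critical point contributes, and then obtain tail bounds uniform in $N$ that permit taking the limit inside the $\beta$-dimensional integral against the non-analytic factor $\prod_{j<k}|v_k-v_j|^{4/\beta}$ (which for $\beta$ even need not be a polynomial --- e.g. $4/\beta=\tfrac23$ when $\beta=6$). More delicate still is certifying the ${\rm O}(N^{-2/3})$ rate rather than merely the limit: this requires expanding $g_{N,\lambda}$ and the normalisation far enough, with explicit remainder control, to exhibit the cancellation of the ${\rm O}(N^{-1/3})$ contribution guaranteed by the derivative structure of \cite{FT19a}. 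By comparison the algebraic identification of the limiting integral with $K_{\beta,\beta}$ and the evaluation of the prefactor via the Mehta integral are routine.
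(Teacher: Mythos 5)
Your overall route is the paper's route: the proposition is presented there as a summary of \cite{DF06} (leading order, obtained from the duality (\ref{6.1e})/(\ref{2.X1c}) with $p=\beta$ and a coalescing saddle point analysis of the resulting $\beta$-dimensional integral, with the constant fixed by the Mehta/Selberg normalisation) together with \cite{FT19a} and \cite[\S 5]{FT18} for the correction structure, and your sketch reproduces exactly this, down to the identification of the limit with $K_{\beta,\beta}$ in (\ref{eq:K}). The technical obstacles you list (contours rendering $\prod_{j<k}|\mu_k-\mu_j|^{4/\beta}$ analytic, uniform tail bounds, explicit control of the second-order expansion) are precisely what the cited works carry out, and the paper itself defers to them, so flagging rather than executing them is consistent with the paper's own treatment.

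The genuine problem is your handling of the constant $1/2-1/\beta$, i.e.\ of the claimed ${\rm O}(N^{-2/3})$ rate, which is the only content of the proposition beyond the leading order of \cite{DF06}. You give two mutually incompatible explanations: (i) that the constant is singled out at leading order when matching the linear term $-xv_j$ (``$1/2$ from the Gaussian weight, $-1/\beta$ from the Vandermonde exponent''), and (ii) that it is the translation annihilating the ${\rm O}(N^{-1/3})$ derivative-type correction of \cite{FT19a}. Explanation (i) cannot be right: the leading-order soft edge limit is obtained with centering $\sqrt{2N}$ and no shift whatsoever, and any $N$-independent recentering would simply translate $x$ in $\rho_{\infty,(1)}^{\rm soft}$, so nothing at leading order distinguishes $1/2-1/\beta$. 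Explanation (ii) is the correct mechanism (and the one the paper's surrounding text invokes), but then the required translation is necessarily of size ${\rm O}(N^{-1/3})$ in the scaled variable $x$ (equivalently ${\rm O}(N^{-1/2})$ in $\lambda$): a term $c_\beta N^{-1/3}\frac{d}{dx}\rho_{\infty,(1)}^{\rm soft}(x)$ is removed by shifting $x$ by $-c_\beta N^{-1/3}$, not by an $N$-independent amount. This is exactly what Remark \ref{R4.1}.1 encodes via $N'=N+\frac{2-\beta}{2\beta}$, whose effect on the scaled variable is an order $N^{-1/3}$ displacement proportional to $1/2-1/\beta$. As written, your argument therefore does not establish the advertised cancellation at the stated order; to repair it you must push the saddle point expansion of the dual integral to the next order (as in \cite{FT19a}), show that the $N^{-1/3}$ term equals a multiple of $(1/2-1/\beta)\frac{d}{dx}\rho_{\infty,(1)}^{\rm soft}(x)$, and then implement the correspondingly $N^{-1/3}$-sized recentering, rather than appealing to a leading-order matching that does not in fact occur.
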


\begin{remark}\label{R4.1} ${} $ \\
1.~Define $N' = N + {2 - \beta \over 2 \beta}$. An alternative to shifting $x$ as in (\ref{4.9}) is to consider
\begin{equation}\label{4.9a}
			\frac{1}{\sqrt{2}(N')^{1/6}}\rho_{N,\beta}\left( \sqrt{2N'}+ 2^{-1/2}(N')^{-1/6}x
			; e^{-\beta \lambda^2/2}
			\right),
			\end{equation}
			in which the leading order mean and standard deviation are shifted.
In the cases $\beta=1,2$ and 4, results of Bornemann \cite{Bo24a} imply that this quantity has an asymptotic expansion
in powers of $(N')^{-2/3}$. \\
2.~The multidimensional integral form (\ref{eq:K}) is well suited to determining the $|x| \to \infty$ asymptotics of $ \rho_{\infty,(1)}^{\rm soft}(x;\beta) $
\cite{DF06}, \cite[Eq.~(13.68) with factor $1/(2\pi)$ corrected to $1/\pi$ in the first line]{Fo10}.
\end{remark}		

A generalisation of $K_{n,\beta}(x)$ (\ref{eq:K}) is specified by \cite{De09,DL14}
\begin{equation}\label{eq:KA}
	{\rm Ai}_{n}^{(\alpha)}(\mathbf f)=-\frac{1}{(2\pi i)^n}\int_{-i\infty}^{i\infty}dv_1 \cdots \int_{-i\infty}^{i\infty}dv_n\, \prod_{j=1}^n e^{v_j^3/3}
	 {\vphantom{F}}_0^{\mathstrut} \mathcal F_0^{(\alpha)}(\mathbf v;\mathbf f) 
	\prod_{1\leq k<l\leq n}|v_k-v_j|^{2/\alpha},
	\end{equation}
(a further generalisation to ${\rm Ai}_{n,m}^{(\alpha)}(\mathbf f, \mathbf s)$, where $\mathbf s$ is a second set of variables ($m$ in number) has also
shown itself \cite{DL15}.) We have from \cite[Prop.~10]{De09} that this results from a soft edge scaling limit of (\ref{2.5b}) with $\boldsymbol \mu = \mathbf 0$.
Note that (\ref{eq:K}) corresponds to the case that all the entries of $\mathbf f$ are equal.
The soft edge scaling limit in  the case $\beta = 2$  (i.e.~(\ref{2.X1})) has relevance to the computation of a certain class of algebraic-geometric
quantities known as intersection numbers
\cite[\S 7.2]{BH16}.

As already remarked, (\ref{eq:K}) (and (\ref{eq:KA})) can be considered as generalisations of the Airy function.
We remark that the asymptotic study of the duality (\ref{2.5b}) in the case that the source $\boldsymbol \mu$ consists of of an even number
of variables, half of which take on the value $a$, and the other half take on the value $-a$, there is a scaling limit which leads
to multidimensional generalisations of the Pearcey integral \cite{BH95}, \cite{FL21}.

\subsubsection{Moments of the characteristic polynomial}		

Consider next
 \begin{equation}\label{6.1A}   
 \Big \langle \prod_{l=1}^{N} |\lambda - x_l|^{p} \Big \rangle_{{\rm ME}_{\beta,N}[e^{-\beta N x^2}]},
   \end{equation} 
   which corresponds to a global scaling of the power of the characteristic polynomial in the Gaussian $\beta$ ensemble.
 Motivation to study the large $N$ form of (\ref{6.1A}) in the case $\beta = 2$ corresponding to the GUE came from the
 application of random matrix ensembles  with a unitary symmetry to the modelling of the Riemann zeta function of the
 critical line due to Keating and Snaith \cite{KS00a}. The required analysis was undertaken by Br\'ezin and Hikami
 \cite{BH00} (see also \cite[Theorem 5.4 with $\lambda \mapsto 2 \lambda, \, 2 \rho(2\lambda) \mapsto \rho^{\rm W}(x)$]{BH16}, \cite[Eq.~(21) with $m=1$, after minor correction]{Ga05},
 \cite[Th.~1 with $m=1$]{Kr07},
 \cite[Eq.~(1.13) with $m=1$]{Ch19}, \cite[expanded in terms of $\lambda$]{JKM23}).
 
 \begin{prop}
 For large $N$ and $p$ a positive integer
  \begin{equation}\label{6.1B}   
\Big \langle \prod_{l=1}^{N} |\lambda - x_l|^{2p} \Big \rangle_{{\rm ME}_{2,N}[e^{-2 N x^2}]} =
( \pi N \rho^{\rm W}(\lambda))^{p^2} e^{2Np( \lambda^2 - 1/2-\log \, 2)} \prod_{l=0}^{p-1} {l! \over (p + l)!} \Big ( 1 + {\rm O} \Big ( {1 \over N}  \Big )\Big ).
   \end{equation} 
   \end{prop}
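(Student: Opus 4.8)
\emph{Proof proposal.} Because $\lambda$ and the eigenvalues $x_l$ are real, $\prod_{l=1}^N|\lambda-x_l|^{2p}=\det(\lambda\mathbb I_N-H)^{2p}$, so this is the $2p$-th moment of the characteristic polynomial of Proposition~\ref{P2.1}, now for the globally scaled weight $e^{-2Nx^2}$. The first step is to undo that scaling and apply the duality (\ref{2.X1a}). Writing each eigenvalue of ${\rm ME}_{2,N}[e^{-2Nx^2}]$ as $(2N)^{-1/2}$ times an eigenvalue of ${\rm GUE}_N$ turns the average into $(2N)^{-pN}\langle\det(\sqrt{2N}\,\lambda\,\mathbb I_N-H)^{2p}\rangle_{{\rm GUE}_N}$, and (\ref{2.X1a}) re-expresses this as $(2N)^{-pN}\,i^{-2pN}\,\langle\det(i\sqrt{2N}\,\lambda\,\mathbb I_{2p}-H)^{N}\rangle_{{\rm GUE}_{2p}}$. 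This reduces matters to the large-$N$ asymptotics of the $2p$-fold eigenvalue integral $Z_{2p}^{-1}\int_{\mathbb R^{2p}}\prod_{j=1}^{2p}(i\sqrt{2N}\lambda-\mu_j)^{N}\prod_{1\le j<k\le 2p}(\mu_k-\mu_j)^2e^{-\sum_j\mu_j^2}\,d\mu$, where $Z_{2p}$ is the ${\rm GUE}_{2p}$ normalisation; equivalently one reaches the same integral directly by the HCIZ/Andr\'eief route used in the proof of Proposition~\ref{P2.1}, which is how \cite{BH00} proceeds. Throughout I assume $|\lambda|<1$, so that $\lambda$ lies in the bulk where $\rho^{\rm W}(\lambda)>0$.

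The second step is a multidimensional steepest-descent analysis. After $\mu_j=\sqrt{2N}\,\nu_j$ the integrand becomes $e^{N\sum_j\varphi(\nu_j)}$ times $\prod_{j<k}(\nu_k-\nu_j)^2$ and explicit powers of $2N$, with $\varphi(\nu)=\log(i\lambda-\nu)-2\nu^2$. The critical equation $\varphi'(\nu)=0$, i.e.\ $4\nu^2-4i\lambda\nu-1=0$, has the two roots $\nu_\pm=\tfrac12(i\lambda\pm\sqrt{1-\lambda^2})$, satisfying $\nu_++\nu_-=i\lambda$, $\nu_+\nu_-=-\tfrac14$, $i\lambda-\nu_\pm=\nu_\mp$, $|\nu_\pm|=\tfrac12$, and (for $|\lambda|<1$) ${\rm Re}\,\varphi(\nu_+)={\rm Re}\,\varphi(\nu_-)$. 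Deforming the $2p$ contours to steepest-descent paths through these two saddles, the dominant configurations are the $\binom{2p}{p}$ ways of sending $p$ of the variables to $\nu_+$ and the remaining $p$ to $\nu_-$; an unbalanced split $(p+j,p-j)$ carries an extra algebraic suppression $N^{-j^2}$ (and oscillatory phases), so such terms enter only in the ${\rm O}(1/N)$ remainder.

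The third step is the local analysis around a fixed balanced configuration: writing $\nu=\nu_\pm+\xi/\sqrt N$ within each cluster, the quadratic part of $\varphi$ gives a Gaussian with curvature $\varphi''(\nu_\pm)$, the short-range Vandermonde square gives the $\beta=2$ Mehta integral $\int_{\mathbb R^p}\prod_j e^{-c\xi_j^2}\prod_{j<k}(\xi_k-\xi_j)^2\,d\xi\propto\prod_{l=1}^p l!$, the two curvatures combine through the identity $\varphi''(\nu_+)\varphi''(\nu_-)=64(1-\lambda^2)$, and the long-range inter-cluster part of the Vandermonde square contributes $(\nu_+-\nu_-)^{2p^2}=(1-\lambda^2)^{p^2}$. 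Assembling these with $Z_{2p}=2^{-2p^2}(2\pi)^{p}\prod_{j=1}^{2p}j!$, the Jacobians of the two rescalings, the multiplicity $\binom{2p}{p}$, and the identity $\pi\rho^{\rm W}(\lambda)=2\sqrt{1-\lambda^2}$, the $N$-dependent powers collapse to $N^{p^2}$, the factor $e^{N[p\varphi(\nu_+)+p\varphi(\nu_-)]}$ produces exactly $e^{2Np(\lambda^2-1/2-\log 2)}$ up to the phase $e^{i\pi Np}$ that cancels the $i^{-2pN}$ from the duality, and the leftover constant telescopes — this is the Barnes-$G$ identity $\binom{2p}{p}\prod_{l=1}^p(l!)^2\big/\prod_{j=1}^{2p}j!=G(p+1)^2/G(2p+1)=\prod_{l=0}^{p-1}l!/(p+l)!$ — to the claimed value, with the ${\rm O}(1/N)$ error being the first subleading term of the steepest-descent expansion.

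The main obstacle is making the multidimensional steepest descent rigorous: one must select admissible contours through the complex saddles $\nu_\pm$, control the coupling factor $\prod_{j<k}(\nu_k-\nu_j)^2$ (which vanishes along these contours and obstructs a naive factorisation into clusters), check that no other critical configuration — the unbalanced clusters or contributions from $|\nu|\to\infty$ — competes at leading order, and establish uniformity in $\lambda$ on compact subsets of $(-1,1)$; the branch of the logarithm along the deformed contour must also be tracked, as it feeds the $e^{i\pi Np}$ phase. A more hands-on alternative that avoids the contour surgery is to start instead from the Hankel-determinant form (\ref{2.3b1}) valid in the scaled Gaussian case, substitute the Plancherel--Rotach bulk asymptotics of $H_{N+j+k-2}(\sqrt{2N}\lambda)$ into each entry, and evaluate the resulting $2p\times2p$ determinant of oscillatory functions; there the error term is transparent, but matching the constant $\prod_{l=0}^{p-1}l!/(p+l)!$ is the delicate point.
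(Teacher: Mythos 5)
Your proposal is correct and follows essentially the same route the paper takes: it cites \cite{BH00} for this $\beta=2$ case and, in the proof sketch of Proposition \ref{P4.4}, uses exactly your strategy — the duality (\ref{2.X1a})/(\ref{2.X1c}) to pass to a $2p$-dimensional dual integral, saddle points at $u_\pm=\tfrac12(i\lambda\pm\sqrt{1-\lambda^2})$ with half the contours deformed through each, and the local Gaussian (Mehta/Selberg) integral producing the product-of-factorials constant. Your checked identities ($\varphi''(\nu_+)\varphi''(\nu_-)=64(1-\lambda^2)$, the inter-cluster factor $(1-\lambda^2)^{p^2}$, the cancellation of $e^{i\pi Np}$ against $i^{-2pN}$, and the Barnes-$G$ telescoping) assemble to the stated right-hand side, so no gap beyond the standard rigour caveats you already flag.
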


After noting the identity $ \prod_{l=0}^{p-1} {l! \over (p + l)!}  = (G(1+p))^2/G(1+2p)$, where $G(z)$ denotes the Barnes G-function, this
asymptotic expression was proved in \cite{Kr07} to be valid for general Re$(2p) >  -1$ (now with error term O$(\log N/N)$). In keeping
with (\ref{6.1a}) we expect  (\ref{6.1B}) to exhibit the leading large $N$ asymptotic form
  \begin{equation}\label{6.1C} 
  \log \Big \langle \prod_{l=1}^{N} |\lambda - x_l|^{2p} \Big \rangle_{{\rm ME}_{2,N}[e^{-2 N x^2}]} = 2p N \int_{-1}^1  \log | \lambda - x| \rho^{\rm W}(x) \, dx + \cdots,
    \end{equation} 
which from the integral evaluation $ \int_{-1}^1  \log | \lambda - x| \rho^{\rm W}(x) \, dx = \lambda^2 -{1 \over 2} - \log 2$ is indeed true.
One sees from (\ref{6.1B}) that the next term in (\ref{6.1C}) is $p^2 \log N$.  In the case of the circular ensemble for $\beta = 2$, when averages relate to
Toeplitz determinants according to (\ref{2.3B}), the analogous expansion is a modification of the Szeg\H{o} asymptotic formula known as
Fisher-Hartwig asymptotics \cite{DIK11}. Another point to note is that with $p=1$ (\ref{6.1B}) is
consistent with (\ref{6.1d}) with $\beta = 2$ and (\ref{6.1f}).

For a large class of GUE linear statistics $\sum_{l=1}^N a(\lambda_l)$, the asymptotic formula (\ref{6.1a}) has the generalisation that for large $N$
and  (continuous) $k$ small enough
  \begin{equation}\label{6.1D}
  \log \Big \langle e^{k \sum_{l=1}^N a(x_l)} \Big \rangle = k N \int_I a(x) \rho_{\infty,(1)}(x) \, d x + {k^2 \over 2} \sum_{n=1}^\infty n a_n^2 + \cdots, \: \:
  a_n = {1 \over \pi} \int_0^\pi a(\cos \theta) \cos n \theta \, d \theta.
    \end{equation} 
 The term proportional to $k^2$ corresponds to the variance of the    fluctuation of the linear statistic --- for more on this see \cite{Fo23}.
 Using the expansion (see e.g.~\cite[Exercises 1.4 q.4]{Fo10}) $\log(2|\cos \theta - \cos \phi|) = - \sum_{n=1}^\infty {2 \over n} \cos n \theta \cos n \phi$,
 one computes that for $a(x) = a(x;\lambda) = \log | x - \lambda|$, $a_n = - {2 \over n} \cos n \phi$ where $\lambda = \cos \phi$. Simple manipulation then gives
   \begin{equation}\label{6.1E}
  \sum_{n=1}^\infty n a_n^2 ={1 \over 2}  \Big ( \sum_{n=1}^\infty {}^* {1 \over n} - \sum_{n=1}^\infty {\cos 2 n \phi \over n} \Big ) ={1 \over 2} \sum_{n=1}^\infty {}^* {1 \over n} +
{1 \over 2}  \log |2 \sin \phi|;
  \end{equation}   
in relation to the second equality see e.g.~\cite[Eq.~(14.95)]{Fo10}. Here the asterisk indicates that a regularisation of the otherwise
divergent series is required.    One sees that choosing the latter to be $\log N$, (\ref{6.1E}) precisely matches the term raised to the power
of $p^2$ in (\ref{6.1B}).

By making use of the duality (\ref{2.X1c}), the recent work \cite{FS25+} considers 
 the analogue of (\ref{6.1B}) for the general Gaussian $\beta$ ensemble. 
  In fact  the compatibility of (\ref{6.1B}) with (\ref{6.1D}) makes its structure
easy to anticipate. Thus the $\beta$ generalisation of the latter leaves the first term on the RHS unchanged.
It gives rise too to a further term proportional to $k$ but which is independent of $N$ \cite{Jo98}
   \begin{equation}\label{6.1F}
   k \Big ( {1 \over \beta} - {1 \over 2} \Big ) \int_{-\infty}^\infty \Big (  \delta(x - 1) + \delta(x+1)  - {1 \over \pi} {1 \over (1 - x^2)^{1/2}} \mathbbm 1_{|x|<1}\Big ) a(x) \, dx,
\end{equation}  
which comes from the first subleading term of the large $N$ expansion of the smoothed density \cite[\S 3.2]{WF14},
while the only effect on the term proportional to $k$ is the requirement of 
the factor ${2 \over \beta}$; see e.g.~\cite{Fo23}. Indeed as shown in \cite{FS25+} a steepest descent strategy applied to the RHS of (\ref{2.X1c}) 
shows that this anticipated functional form holds true (for purpose of comparison one notes that for $a(x)=a(x;\lambda) = \log |x - \lambda|$
with $|\lambda|<1$
the integral in (\ref{6.1F}) evaluates to $\log|1 - \lambda^2| +  \log 2$; see \cite[\S 1.4.2]{Fo10}).

\begin{prop}\label{P4.4}
Let $p$ be a positive integer. For large $N$ we have
  \begin{multline}\label{6.1B1}   
\Big \langle \prod_{l=1}^{N} |\lambda - x_l|^{2p} \Big \rangle_{{\rm ME}_{\beta ,N}[e^{-\beta N x^2}]}  \\
= A_{\beta,p}^{\rm G} (\pi  \rho^{\rm W}(\lambda))^{p (2- \beta)/\beta}
( \pi N \rho^{\rm W}(\lambda))^{2 p^2/\beta} e^{2Np( \lambda^2 - 1/2-\log \, 2)}  \Big ( 1 + {\rm O} \Big ( {1 \over N}  \Big )\Big ),
   \end{multline}
   where, with $G(z)$ denoting the Barnes G-function,
   \begin{equation}\label{6.1B2}  
 A_{\beta,p}^{\rm G} = \binom{2p}{p} \prod_{j=1}^p {\Gamma(1 + 2j/\beta) \over \Gamma(1 + 2 (j+p)/\beta)} = \prod_{l=0}^{\beta/2} {(G(2(p-l)/\beta + 1))^2 \over G(2(2p-l)/\beta + 1) G(1-2l/\beta)},
   \end{equation}   
   where the equality requires $\beta$ even.
    \end{prop}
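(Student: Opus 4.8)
The plan is to combine the duality (\ref{2.X1c}) with a multidimensional steepest descent, along the lines of \cite{FS25+}. The first step is to turn the left-hand side of (\ref{6.1B1}) into an integral over a $(4/\beta)$-ensemble of size $2p$. Since the exponent $2p$ is even, $\prod_{l=1}^N |\lambda - x_l|^{2p} = \prod_{l=1}^N (\lambda - x_l)^{2p}$, so (\ref{2.X1c}) applies with power $2p$ in place of $p$ and $2/\alpha = \beta$; after the routine rescalings that send the global weight $e^{-\beta N x^2}$ to the form $e^{-\lambda^2}$ and absorb the factors of $\sqrt{\alpha}$, I would obtain
\[
\Big\langle \prod_{l=1}^N |\lambda - x_l|^{2p}\Big\rangle_{{\rm ME}_{\beta,N}[e^{-\beta N x^2}]}
= (2N)^{-Np}\, i^{-2pN}\Big\langle \prod_{j=1}^{2p}\big(i\sqrt{2N}\,\lambda - \mu_j\big)^{N}\Big\rangle_{{\rm ME}_{4/\beta,2p}[e^{-\mu^2}]}.
\]
Substituting $\mu_j = \sqrt{2N}\,t_j$ then makes the $N$-dependence explicit: up to an explicit power of $2N$, the integrand becomes $\exp\!\big(N\sum_{j=1}^{2p} f(t_j)\big)\prod_{1\le j<k\le 2p}|t_j - t_k|^{4/\beta}$ with $f(t) = \log(i\lambda - t) - 2t^2$, and the $2p$ contours run along $\mathbb R$.

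Next I would analyse the critical points of $f$. The equation $f'(t)=0$ is $4t^2 - 4i\lambda t - 1 = 0$, with the two roots $t_\pm = \tfrac12\big(i\lambda \pm \sqrt{1-\lambda^2}\big)$, which for $|\lambda|<1$ both lie on $|t| = 1/2$ and are swapped by $t \mapsto i\lambda - t$. A short computation gives $f(t_+) + f(t_-) = 2(\lambda^2 - \tfrac12 - \log 2) + i\pi$ --- so $e^{Np(f(t_+)+f(t_-))} = (-1)^{Np} e^{2Np(\lambda^2 - 1/2 - \log 2)}$, whose sign cancels against the prefactor $i^{-2pN}$ --- while ${\rm Re}\,f(t_+) = {\rm Re}\,f(t_-) = \lambda^2 - \tfrac12 - \log 2$, $|t_+-t_-| = \sqrt{1-\lambda^2} = \tfrac12\pi\rho^{\rm W}(\lambda)$, and $|f''(t_\pm)| = 4\pi\rho^{\rm W}(\lambda)$ with $f''(t_-) = \overline{f''(t_+)}$. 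This already pins down the exponential factor and the dependence on $\rho^{\rm W}(\lambda)$ in (\ref{6.1B1}).

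The heart of the argument is then the steepest-descent evaluation: I would deform the $2p$ contours so that each passes through one of the two saddles and sum over the $2^{2p}$ assignments. An assignment that routes $m$ variables through $t_+$ and $2p-m$ through $t_-$ contributes, after the local scaling $t_j = t_\pm + v_j/\sqrt{N}$, a product of two Mehta-type integrals $\int e^{\frac12 f''(t_\pm)\sum v_j^2}\prod_{j<k}|v_j - v_k|^{4/\beta}\,dv$, an ${\rm O}(1)$ cross-Vandermonde $|t_+ - t_-|^{4m(2p-m)/\beta}$, and a factor $N^{-p - [m(m-1)+(2p-m)(2p-m-1)]/\beta}$ from the Jacobians and the intra-cluster Vandermondes. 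Since $m(m-1)+(2p-m)(2p-m-1)$ is minimised exactly at $m=p$, the $\binom{2p}{p}$ balanced assignments dominate and all contribute equally (the two Mehta integrals at $t_+$ and $t_-$ being complex conjugates, by $f''(t_-) = \overline{f''(t_+)}$). Assembling everything --- using Mehta's formula $\int_{\mathbb R^m}e^{-\frac12\sum v_j^2}\prod_{j<k}|v_j-v_k|^{4/\beta}\,dv = (2\pi)^{m/2}\prod_{j=1}^m \Gamma(1+2j/\beta)/\Gamma(1+2/\beta)$ together with the values of $|f''(t_\pm)|$ and $|t_+-t_-|$ --- the factors $N^{Np}$ cancel, the surviving power of $N$ is $N^{2p^2/\beta}$, all powers of $2$ cancel, the power of $\pi\rho^{\rm W}(\lambda)$ collapses to $p(2-\beta)/\beta$ apart from the $2p^2/\beta$ absorbed into $(\pi N\rho^{\rm W}(\lambda))^{2p^2/\beta}$, and the Gamma factors telescope via $\prod_{j=1}^{2p}\Gamma(1+2j/\beta) = \prod_{j=1}^p\Gamma(1+2j/\beta)\,\Gamma(1+2(j+p)/\beta)$ to give $\binom{2p}{p}\prod_{j=1}^p\Gamma(1+2j/\beta)/\Gamma(1+2(j+p)/\beta)$, i.e.\ the first expression for $A^{\rm G}_{\beta,p}$. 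The second expression in (\ref{6.1B2}), for $\beta$ even, would then follow by repeated use of $G(z+1) = \Gamma(z)G(z)$ to rewrite each $G$-ratio as a product of Gamma functions.

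The hard part will be making the multidimensional steepest descent rigorous: exhibiting an explicit deformation of the $2p$ contours on which the two saddles are the only relevant critical points, along which the branch of $\log(i\lambda - t)$ is controlled (the branch point sits at $t = i\lambda$), and on which the unbalanced assignments and the off-saddle portions of the contours are genuinely ${\rm O}(N^{-1})$ relative to the leading term --- this last estimate being what produces the stated error bound. A subsidiary, purely bookkeeping obstacle is to carry the powers of $2$, $\beta$ and $N$ consistently through the three rescalings and the $(4/\beta,2p)$ Selberg normalisation so as to land exactly on (\ref{6.1B1})--(\ref{6.1B2}).
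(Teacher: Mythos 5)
Your proposal is correct and follows essentially the same route as the paper's own (sketched) proof: apply the duality (\ref{2.X1c}) with power $2p$, locate the saddle points at $\tfrac12(i\lambda\pm\sqrt{1-\lambda^2})$ (the paper's $u_\pm$), deform half of the $2p$ contours through each saddle, and evaluate the local fluctuations via the Gaussian (Mehta/Selberg) integral to produce the gamma-function constant $A^{\rm G}_{\beta,p}$. Your additional bookkeeping — the $\binom{2p}{p}$ count of balanced saddle assignments, the exponent balance giving $N^{2p^2/\beta}$, and the caveat about rendering the product-of-differences factor analytic for general $\beta$ — fleshes out exactly the steps the paper leaves implicit by citing \cite{DF06} and \cite{FS25+}.
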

    
    \begin{proof} (Sketch only) The details in the case $p=2$ have been given in \cite{DF06}. One first has to modify the
    integration domain corresponding to the RHS of (\ref{2.X1c}) by certain contours
    so that the product of differences factor in the integrand
    is an analytic function. Next, the exponent of the $N$ dependent factors in the integrand is considered from the
    viewpoint of  its saddle points, with the conclusion being that these occur at $u_{\pm} = {1 \over 2}(i\lambda \pm
    \sqrt{1 - \lambda^2})$. After this, half of the integration contours are deformed to pass through $u_+$, and the other half through $u_-$.
    Expansion about these points gives rise to the Gaussian form of the Selberg integral, which has a known
    evaluation as a product of gamma functions   \cite[Eq.~(4.140)]{Fo10}, giving rise to (\ref{6.1B2}).

    \end{proof}
    
    \begin{remark} ${}$ \\
    1.~Substituting $2p = \beta$ gives that the average (\ref{6.1B1}) is proportional to $  \rho^{\rm W}(\lambda)$, which is
    consistent with (\ref{6.1f}) and (\ref{6.1d}). \\
    2.~The constant factor $A_{\beta,p}^{\rm G}$ in (\ref{6.1B1}), in the second of its stated forms, was first identified in the context of the chemical potential of
    an impurity charge of a log-gas on the circle (circular $\beta$ ensemble), and can be given an evaluation in terms
    of the Barnes G-function for all rational $\beta$ \cite{Fo92a}, \cite[Prop.~(14.5.3)]{Fo10}. \\
    3.~Generally
$$
\lim_{n \to 0} {1 \over n} \bigg ( \Big \langle \prod_{l=1}^N |x-x_l|^n \Big \rangle  - 1 \bigg )=  \Big \langle \sum_{l=1}^N \log |x-x_l| \Big \rangle,
$$ 
an identity which underlies the so-called replica trick \cite{EA75}. The problem of taking the limit
$p \to 0$ on the RHS of the duality (\ref{2.X1a}), under the assumption that $N$ is large, is discussed in \cite{KM99} and \cite{Ka02,Ka05}.
\end{remark}

    In the context of application to the Riemann zeta function on the critical line, specifically to an effect known as
singularity dominated strong fluctuations,    the large $N$ form of the reciprocal characteristic
    polynomial moments
   \begin{equation}\label{NC.1}    
   \bigg \langle {1 \over | \det ((x + i \epsilon) \mathbb I_N - H ) |^{2p}} \bigg \rangle_{{\rm ME}_{2,N}(e^{-2 N x^2}) }
 \end{equation}   
 is of interest \cite{BK02}.   Due to the absolute value sign in (\ref{NC.1}), the relevant duality identity is not (\ref{2.X1a+}) but
 rather (\ref{2.X1+}) with $p \mapsto 2p$, $x_1=\dots=x_p = x + i \epsilon$, $x_{p+1}=\dots=x_{2p} = x - i \epsilon$.
 It is found in \cite[Eq.~(4) with $\mu_i \mapsto 2 \mu_i$]{Fy02} (see too \cite{FK03})
 that after normalisation by $| \langle \det((x + i \epsilon) \mathbb I_N - H )^{-p} \rangle |^2$, the $N
 \to \infty$ limit of (\ref{NC.1}) exhibits, for $\epsilon \to 0$, the leading order behaviour
  \begin{equation}\label{NC.2}     
  \bigg ( {\pi \rho^{\rm W}(\lambda) \over 4 \epsilon } \bigg )^{p^2};
  \end{equation}  
 note in particular the exponent $p^2$ as in (\ref{6.1B}).
 
 \subsection{Laguerre and Jacobi $\beta$ ensemble}
 \subsubsection{Global scaling}
 For the density of the Laguerre $\beta$ ensemble to have compact limiting support there are two distinct possible scaled
 Laguerre weights, 
 \begin{equation}\label{W.1}  
 {\rm ME}_{\beta,N} [ x^a e^{-\beta N x/2}], \quad  {\rm ME}_{\beta,N} [ x^{\beta N \gamma /2}  e^{-\beta N x/2}], \quad (a>-1, \gamma \ge 0),
\end{equation} 
and so two distinct ensembles to consider. In both cases the normalised limiting eigenvalue density is independent of $\beta$,
while for the first of these it is independent of $a$ and given by the $\gamma = 0$ case of the second. The functional form is specified 
by the  Marchenko-Pastur density \cite{PS11}
 \begin{equation}\label{W.2} 
 \rho^{\rm MP}(x) = 
{1 \over 2\pi  x }
  \sqrt{( x - c_-) (c_+ - x) } \mathbbm 1_{x \in [c_-,c_+]}, \quad c_\pm := (\sqrt{\gamma + 1} \pm 1)^{1/2}.
    \end{equation}
    
    In the case of the Jacobi weight, the interval of support is compact from its definition, so there is no
    need to scale the eigenvalues. Still, there are distinct cases depending on the scaling of the exponents,
    with the analogue of (\ref{W.1}) being
 \begin{equation}\label{J.1}  
 {\rm ME}_{\beta,N} [ x^{a_1} (1 - x)^{a_2}], \quad  {\rm ME}_{\beta,N} [ x^{\beta N \gamma_1 /2}  (1 - x)^{\beta N \gamma_2/2}].
\end{equation}     
In further analogy to   (\ref{W.1}), in both cases   the normalised limiting eigenvalue density is independent of $\beta$,
while for the first of these it is independent of $a_1,a_2$ and is equal to  the $\gamma_1= \gamma_2 = 0$ case of the second. 
It is given by a functional form first identified by
 Wachter \cite{Wa78},
   \begin{equation}\label{Wa1}
 \rho^{\rm J}(x)  =  (\gamma_1 + \gamma_2+2) {\sqrt{(x - c^{\rm J}) ( d^{\rm J} - x)} \over 2 \pi x (1 - x)},
  \end{equation}   
  supported on $(c^{\rm J}, d^{\rm J})$ with these endpoints specified by
   \begin{equation}\label{Wa2}   
 {1 \over (\gamma_1 + \gamma_2 + 2)^2}  \Big ( \sqrt{(\gamma_1 +1)( \gamma_1 + \gamma_2 + 1)} \pm
 \sqrt{ ( \gamma_2+1 )}     \Big )^2.
  \end{equation}

  One expects that the analogue of Proposition \ref{P4.4} for the ensembles (\ref{W.1}) and (\ref{J.1}), which for $p = \beta/2$
  relates to the computation of the global density,  can be obtained
  by use of the dualities of Proposition \ref{P2.6}. In fact the required working is carried out in the recent paper \cite{FS25+}.
  We know from \S \ref{S4.1} that another pathway to the computation of the global density is via the computation of the
  averaged characteristic polynomial.
   

     \subsubsection{Soft edge scaling}
     The right hand spectrum edge of both versions of the Laguerre ensemble in (\ref{W.1}) are soft edges, as
     they do not border a region where the eigenvalue density is strictly zero. On the other hand, for the Jacabi
     ensemble, only the second version in 	(\ref{J.1}) exhibits a soft edge at its right boundary of leading order support.
     It is further true that the left edge of the second Laguerre and Jacobi versions exhibit a  soft edge at their
     left boundary of leading order support.
     
     At the present time only an analysis of the even $\beta$ density at the right hand soft edge in both the  Laguerre ensembles
     is available in the literature. The first such study was in \cite{DF06}, considering the case of a fixed Laguerre exponent, with
     the limiting density computed to reclaim $\rho_{\infty,(1)}^{\rm soft}(x;\beta)$ as appears in (\ref{4.9}) for the Gaussian $\beta$ ensemble.
     Some years later, in \cite{FT19a}, the analysis was extended to include the case that the Laguerre exponent is proportional to $N$,
     and moreover to the calculation of the first two corrections. The working in \cite{DF06} and \cite{FT19a} was based on the
     duality (\ref{2.4Ci}), with it being a common feature of the $\beta$ ensembles defined by (\ref{2.1}) that for even $\beta$ the density
     is given in terms of an even moment of the characteristic polynomial.
     For both the cases of fixed and growing with $N$ Laguerre exponent it was found that the leading order
     correction was related to $\rho_{\infty,(1)}^{\rm soft}$ by a derivative operation. Due to this, what was the first  correction term in
     the large $N$ expansion can be eliminated by tuning the scaling parameters (recall (\ref{4.9}) and (\ref{4.9a})). However, as presented
     in \cite{FT19a} the required tuning is different in both cases.
     
    In a recent development Bornemann \cite{Bo24a} has found a parametrisation of the mean and standard derivation associated
    with the soft edge scaling which allows both cases to be treated on an equal footing. This requires first considering the Laguerre
    $\beta$ ensemble with the Laguerre exponent parametrised as is natural from the viewpoint of the underlying Wishart matrix
    (see \cite[Eq.~(3.16)]{Fo10}). This is specified by ME${}_{\beta,N}[x^\alpha e^{-\beta x/2}]$ with $\alpha = \beta (n - N + 1)/2 - 1$,
    where $n - N + 1 > 0$. In terms of $N,n$ define
  \begin{equation}\label{abc}
  \mu_{N,n} = (\sqrt{N} + \sqrt{n})^2, \quad \sigma_{N,n} =    (\sqrt{N} + \sqrt{n}) \Big ( {1 \over \sqrt{N}} + {1 \over \sqrt{n}} \Big ), \quad
  h_{N,n} = {1 \over 4} \Big ( {1 \over \sqrt{N}} +  {1 \over \sqrt{n}} \Big )^{4/3}
 \end{equation} 
 (note that $\mu_{N,n} /N \to c_+$, with $c_+$ defined as in (\ref{W.2})).
 Covering both the circumstances of fixed and
 $N$ dependent Laguerre exponent, these variables can be used to write the respective results of \cite{FT19a} as a single
statement.

\begin{prop}
Define $N' = N + {2 - \beta \over 2 \beta}$, $n' = n + {2 - \beta \over 2 \beta}$. For $\beta$ even and $N$ large we have
\begin{equation}\label{4.9a+}
			 \sigma_{N',n'}\rho_{N,\beta}\left( \mu_{N',n'}  +  \sigma_{N',n'} x
			; \lambda^{\beta(n-N+1)/2 - 1} e^{-\beta \lambda / 2}
			\right) =  \rho_{\infty,(1)}^{\rm soft}(x;\beta) + {\rm O}( h_{N',n'}) ;
			\end{equation}
			cf.~LHS with (\ref{4.9a}).
Moreover, for $\beta=1,2$ and 4, results from \cite{Bo24a} imply that higher order terms on the RHS are a power series in
$ h_{N',n'}$.
\end{prop}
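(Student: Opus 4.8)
The plan is to obtain the limit from the Laguerre duality (\ref{2.4Ci}) of Proposition \ref{P2.6}, along the lines of \cite{DF06,FT19a}, and then to recognise the Bornemann parametrisation (\ref{abc}) as the centring and scaling which cancels the first subleading term. For $\beta$ even one has $|\lambda-x_l|^\beta=(\lambda-x_l)^\beta$, so (\ref{6.1d}) identifies $\rho_{N,\beta}(\lambda;\lambda^{\alpha}e^{-\beta\lambda/2})$ with $\alpha=\beta(n-N+1)/2-1$, up to a normalisation that does not depend on the large parameter, as $\lambda^{\alpha}e^{-\beta\lambda/2}$ times the $p=\beta$ case of the left-hand side of (\ref{2.4Ci}) with $N\mapsto N-1$. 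Applying (\ref{2.4Ci}) and writing out the average over ${\rm CE}_{4/\beta,\beta}$ yields a $\beta$-dimensional contour integral whose integrand carries a product of $N$-dependent exponential factors $e^{N\Phi(u_j)}$ together with the interaction factor $\prod_{1\le j<k\le\beta}|u_j-u_k|^{4/\beta}$; this is precisely the representation analysed in \cite{DF06,FT19a}.

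The next step is the coalescing saddle point analysis. In the Wishart parametrisation the large parameter is $N$ (with $n-N$ fixed, or $n/N$ fixed), and $\Phi$ has two saddle points which collide exactly when the spectral argument is placed at the right soft edge $\mu_{N,n}$. Rescaling the integration variables about the coalescing (double) saddle on the scale set by $\sigma_{N,n}$ degenerates the quadratic part of $\Phi$, retains the cubic part, and turns the $N$-dependent factors into $\prod_{j=1}^{\beta}e^{v_j^3/3-xv_j}$ while the interaction factor becomes $\prod_{j<k}|v_j-v_k|^{4/\beta}$; the residual integral is $K_{\beta,\beta}(x)$ of (\ref{eq:K}), and collecting the leftover constant (a ratio of Selberg--Mehta normalisations) produces the prefactor in (\ref{eq:K1}), so that $\rho_{\infty,(1)}^{\rm soft}(x;\beta)$ emerges as the leading term. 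The role of the shifts $N'=N+(2-\beta)/(2\beta)$, $n'=n+(2-\beta)/(2\beta)$ then concerns the first correction: expanding $\Phi$ one order beyond the cubic shows, as in \cite{FT19a}, that the leading correction to the rescaled density (of order $N^{-1/3}$, i.e.\ $h_{N,n}^{1/2}$) is a constant multiple of $\partial_x\rho_{\infty,(1)}^{\rm soft}(x;\beta)$, hence is equivalent to an infinitesimal translation of the centre together with an infinitesimal change of the width. A direct computation --- the analogue of those in \cite{FT19a} and Remark \ref{R4.1}, using $\mu_{N',n'}-\mu_{N,n}=\tfrac{2-\beta}{2\beta}\sigma_{N,n}+\cdots$ and the corresponding change in $\sigma_{N,n}$ --- confirms that passing to the variables $\mu_{N',n'}+\sigma_{N',n'}x$ implements exactly this translation and rescaling, whence the order-$h_{N',n'}^{1/2}$ term is absent and the error is $O(h_{N',n'})$. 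For $\beta=1,2,4$ the eigenvalue process is Pfaffian (determinantal when $\beta=2$), and the all-orders expansions of Bornemann \cite{Bo24a} of the Airy kernel and of its $\beta=1,4$ analogues, written in these centred and scaled variables, give the remaining correction as a genuine power series in $h_{N',n'}$.

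The main obstacle is twofold. First, the coalescing saddle point estimate must be made uniform in the presence of the factor $\prod|v_j-v_k|^{4/\beta}$ along the deformed contours; for $\beta$ even this factor is a polynomial, which helps, but the uniform control is still the technical core of \cite{DF06,FT19a}. Second --- and this is the genuinely new point --- one must verify that the single parametrisation (\ref{abc}) together with the passage to $N',n'$ simultaneously removes the first correction in the regime of a fixed Laguerre exponent and in the regime where that exponent grows proportionally to $N$; these two regimes were handled by different tunings in \cite{FT19a}, and their reconciliation through (\ref{abc}) is the contribution of \cite{Bo24a}. The remaining ingredients, namely the Selberg integral evaluation of the prefactor and the bookkeeping of gamma function ratios, are routine.
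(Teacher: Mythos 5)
Your proposal is correct and follows essentially the same route as the paper, which offers no self-contained proof but assembles the statement exactly as you do: the duality (\ref{2.4Ci}) feeding the coalescing saddle point analysis of \cite{DF06,FT19a} in both the fixed and $N$-proportional Laguerre exponent regimes, the first correction identified as derivative-form and hence removable by a tuning of the centring, and Bornemann's parametrisation (\ref{abc}) with the shift to $N',n'$ recognised as the single tuning covering both regimes, with \cite{Bo24a} supplying the power-series statement for $\beta=1,2,4$. One caution: your relation $\mu_{N',n'}-\mu_{N,n}\approx\frac{2-\beta}{2\beta}\sigma_{N,n}$ uses $\sigma_{N,n}$ as literally printed in (\ref{abc}), whereas the quantity that sets the soft-edge fluctuation scale (and which you invoke when rescaling about the coalescing saddle) carries the exponent $1/3$ on the factor $(N^{-1/2}+n^{-1/2})$, so that the induced translation of the scaled variable is in fact of order $h_{N',n'}^{1/2}$ rather than order one --- which is precisely what makes the cancellation of the order-$h^{1/2}$ derivative term consistent, and you should state it that way.
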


\subsubsection{Hard edge scaling}     
In distinction to the Gaussian ensemble, both the Laguerre and Jacobi ensembles exhibit a hard edge. This occurs
at the left edge in the neighbourhood of  the origin for the first ensemble of (\ref{W.1}), and both at the left and right edges, 
in the neighbourhood of $x=0$ and $x=1$ for the first ensemble of (\ref{J.1}). Considering the left edge for definiteness, in the
Laguerre case a well defined limiting statistical state (the hard edge state) results from the scalings $\lambda_i \mapsto \lambda_i/(4N)$
(the factor of 4 is just for convenience), while in the Jacobi case one requires $\lambda_i  \mapsto \lambda_i/(2N^2)$.
For even $\beta$ Corollary \ref{C2.1} provides access to study the limiting functional form (first carried out for the Laguerre case in
\cite{Fo93c}). In fact, as for the soft edge, the leading
correction was found to be related to the limiting density by a derivative operation (first observed in \cite{GFF05,FFG06} in relation
to the cases $\beta = 1,2$ and 4), and so can be eliminated by tuning the definition of the hard edge scaling.

At the hard edge, a further observable can similarly be analysed. This is the probability of no eigenvalue in a neighbourhood of the origin,
$E_{N,\beta}(0,(0,s);\lambda^a e^{- \beta \lambda/2})$ say. In the hard edge scaling limit for general $\beta > 0$ and with $a$ a non-negative
its limiting functional form was computed in \cite{Fo93c}, both in terms of the generalised hypergeometric function 
 ${\vphantom{F}}_0^{\mathstrut} F_1^{(\alpha)}$ with $a$ variables all equal, and as an $a$-dimensional integral. Using in an essential way
 the duality of Corollary \ref{C2.1}, the first two correction terms in powers of $1/N$ were computed in \cite{FT19} (Laguerre case)
 \cite[Appendix, arXiv version]{FL22} (Jacobi case, first correction only) with this same analysis in the Jacobi case also carried out directly from the
finite $N$ generalised hypergeometric expression in \cite{Wi23}. As for the density, the first correction was found to be related to the
limiting probability by a derivative operation (see also \cite{EGP16,Bo16,PS16,HHN16,MMM19} in relation to this effect specific to the case
$\beta = 2$). For brevity of presentation we make note here only of the results in relation to the probability $E_{N,\beta}(0,(0,s);\cdot)$.
The limit will be given in terms of  a particular ${\vphantom{F}}_0^{\mathstrut} F_1^{(\alpha)}$ in the case of equal variables, when
we have available the multidimensional integral form (\cite[modification of Eq.~(13.27)]{Fo10})
\begin{multline}\label{4.32}
{\vphantom{F}}_0^{\mathstrut} F_1^{(\beta/2)}(c + 2(p-1)/\beta;(u)^p) = \prod_{j=1}^p {\Gamma(1 + 2/\beta) \Gamma(c + 2 (j-1)/\beta) \over
\Gamma(1 + 2j /\beta)} \\
\times \int_{-1/2}^{1/2} d \theta_1 \cdots  \int_{-1/2}^{1/2} d \theta_p \, \prod_{j=1}^p e^{2 \pi i (c - 1)\theta_l} e^{u e^{2 \pi i \theta_l} + e^{-2 \pi i \theta_l}}
\prod_{1 \le j < k \le p} | e^{2 \pi i \theta_k} -  e^{2 \pi i \theta_j} |^{4/\beta}.
\end{multline}
With $z_l = e^{2 \pi i \theta_l}$, each integration herein can be thought of as over the unit circle in the complex $z_l$-plane. In fact
for (\ref{4.32}) to hold with this contour of integration it is required that $c$ be a positive integer. Otherwise the contour has to run along
the negative real axis (lower half plane side), before traversing the unit circle, and returning along the negative real axis
(upper half plane side); see the discussion in \cite[Proof of Prop.~2]{Fo13j}.

\begin{prop}\label{P4.7a}
Consider ME${}_{\beta,N}[ \lambda^a e^{- \beta \lambda/2}]$ and require that $a$ be a non-negative integer.
Define $N_{\rm L} = N+a/\beta$. For large $N$ we have
\begin{equation}\label{4.33}
E_{N,\beta}(0,(0,s/(4 N_{\rm L} );\lambda^a e^{-\beta \lambda/2}) = e^{-\beta s/8}
{\vphantom{F}}_0^{\mathstrut} F_1^{(\beta/2)}(2 a /\beta;(s/4)^a)  + {\rm O}( N_{\rm L}^{-2}).
\end{equation}

Consider ME${}_{\beta,N}[ \lambda^{a_1}(1 - \lambda)^{a_2} e^{- \beta \lambda/2}]$ and require that $a_1$ be a non-negative integer.
Define $N_{\rm J} = N-1/2+(1+a_1 + a_2)/\beta$. We have
\begin{equation}\label{4.34}
E_{N,\beta}(0,(0,s/(2 (N_{\rm J})^2 );\lambda^{a_1} (1 - \lambda)^{a_2}) = e^{-\beta s/8}
{\vphantom{F}}_0^{\mathstrut} F_1^{(\beta/2)}(2 a_1 /\beta;(s/4)^a)  + {\rm O}( N_{\rm J}^{-2}).
\end{equation}
\end{prop}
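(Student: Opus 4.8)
The plan is to route both statements through the positive‑power dualities of Corollary~\ref{C2.1}/Proposition~\ref{P2.6}, exactly as in the density computations of \cite{Fo93c,FT19,FL22}, and then to carry out a two‑term hard‑edge asymptotic analysis of the resulting Jack‑polynomial hypergeometric function. I treat the Laguerre case~(\ref{4.33}) in detail; the Jacobi case~(\ref{4.34}) is entirely parallel, using (\ref{2.4Di}) in place of (\ref{2.4Ci}), and alternatively can be obtained from~(\ref{4.33}) by the confluent limit relating the two weights (the same limit used between the two entries of~(\ref{2.4Ea})).

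First I would reduce the gap probability to a characteristic‑polynomial moment. Since the Vandermonde factor $\prod_{1\le j<k\le N}|\lambda_k-\lambda_j|^\beta$ is translation invariant, the substitution $\lambda_j\mapsto\lambda_j+s$ gives
\begin{equation}
E_{N,\beta}(0,(0,s);\lambda^a e^{-\beta\lambda/2}) = c_{N,\beta,a}\, e^{-\beta s N/2}\Big\langle\prod_{j=1}^N(s+\lambda_j)^a\Big\rangle_{{\rm ME}_{\beta,N}[e^{-\beta\lambda/2}]},
\end{equation}
where $c_{N,\beta,a}$ is an $s$‑independent ratio of Selberg normalisation constants (an explicit gamma product), fixed by $E|_{s=0}=1$. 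Because $a\in\Z_{\ge0}$ the average is the $p=a$ case of $\langle\det(x\mathbb I_N-H)^p\rangle$ for the Laguerre $\beta$ ensemble with exponent $0$, evaluated at $x=-s$. Applying (\ref{2.4Ci}) — equivalently, using the ${}_1 F_1^{(\beta/2)}$ series solution from (\ref{2.4Ea}) — this equals a constant times ${}_1 F_1^{(\beta/2)}\big(-N;\,2a/\beta;\,(-s)^a\big)$, i.e. a $\mathrm{CE}_{4/\beta,a}$ average whose weight carries all the $N$‑dependence.

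Next comes the hard‑edge limit. I would set $s\mapsto s/(4N_{\rm L})$ with $N_{\rm L}=N+a/\beta$ and let $N\to\infty$. The prefactor obeys $e^{-\beta s N/2}=e^{-\beta s N/(8N_{\rm L})}\to e^{-\beta s/8}$, which is exactly the surviving exponential in (\ref{4.33}). The confluent limit of the Jack hypergeometric function, ${}_1 F_1^{(\alpha)}(-N;c;-\mathbf x/(4N))\to{}_0 F_1^{(\alpha)}(c;\mathbf x)$ — proved termwise from the series in (\ref{3.40}) using homogeneity $P_\kappa^{(\alpha)}(\mathbf x/(4N))=(4N)^{-|\kappa|}P_\kappa^{(\alpha)}(\mathbf x)$ and $[-N]_\kappa^{(\alpha)}\sim(-N)^{|\kappa|}$, with dominated convergence to interchange limit and sum — turns the dual average into ${}_0 F_1^{(\beta/2)}(2a/\beta;(s/4)^a)$. (In the circular‑ensemble picture this is the localisation of the integration near $\theta=0$ forced by $|1+e^{2\pi i\theta}|^{N-1}e^{-\pi i\theta N}$, after which a rescaling of the angular variables brings the integrand to the Bessel‑type form of (\ref{4.32}) with $c=2/\beta$, $p=a$.) Together with the Stirling expansion of $c_{N,\beta,a}$, whose limit supplies the gamma‑function normalisation implicit in (\ref{4.32})/(\ref{4.33}), this gives the leading term.

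The third step — pushing the expansion one order and verifying the $\mathrm{O}(N_{\rm L}^{-2})$ claim — is where the work lies. Three distinct sources of $\mathrm{O}(1/N)$ corrections must be combined: (i) the subleading term of $[-N]_\kappa^{(\alpha)}/(-N)^{|\kappa|}$, which is $-N^{-1}\big(n(\kappa')-n(\kappa)/\alpha\big)+\mathrm{O}(N^{-2})$ and produces a specific shape‑dependent combination of Jack polynomials; (ii) the difference between scaling by $4N_{\rm L}$ rather than $4N$, contributing $-N^{-1}(a/\beta)|\kappa|$ per term through homogeneity; and (iii) the expansion $e^{-\beta s N/(8N_{\rm L})}=e^{-\beta s/8}\big(1+N^{-1}\,as/8+\cdots\big)$ together with the $\mathrm{O}(1/N)$ term of the gamma product $c_{N,\beta,a}$. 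The content of the claim — known for $\beta=2$ \cite{EGP16,Bo16,PS16,HHN16,MMM19} and for general even $\beta$ in \cite{FT19} — is that the net $\mathrm{O}(1/N)$ term is proportional to $\tfrac{d}{ds}$ applied to $e^{-\beta s/8}{}_0 F_1^{(\beta/2)}(2a/\beta;(s/4)^a)$, and that the centring $N_{\rm L}=N+a/\beta$ is precisely the choice that annihilates the coefficient of this derivative. Establishing that derivative identity, and controlling the remainder uniformly for $s$ in compacts, is the main obstacle; the remaining manipulations are bookkeeping. The Jacobi case is handled identically from (\ref{2.4Di}) with power $p=a_1$, hard‑edge scaling $s\mapsto s/(2N_{\rm J}^2)$ (the $N^2$ reflecting the local eigenvalue scale near a Jacobi hard edge) and $N_{\rm J}=N-1/2+(1+a_1+a_2)/\beta$; alternatively (\ref{4.34}) follows from (\ref{4.33}) by sending $a_2\to\infty$ with the appropriate rescaling of $s$, or directly from the finite‑$N$ generalised hypergeometric expression as in \cite{Wi23}.
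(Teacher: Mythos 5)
Your setup — reducing the gap probability to a power of a characteristic polynomial by the shift $\lambda_j\mapsto\lambda_j+s$, invoking the duality to get a Jack-polynomial ${}_1F_1^{(\beta/2)}$ in $a$ variables, and taking the confluent hard-edge limit to reach $e^{-\beta s/8}\,{\vphantom{F}}_0^{\mathstrut}F_1^{(\beta/2)}(2a/\beta;(s/4)^a)$ — is sound and is exactly how the finite-$N$ formulas (\ref{4.35}) and the leading-order limit of \cite{Fo93c} arise. But the proposal does not prove the proposition: its entire content beyond the (already classical) leading term is the claim that the error is ${\rm O}(N_{\rm L}^{-2})$, resp.\ ${\rm O}(N_{\rm J}^{-2})$, and you explicitly stop at ``establishing that derivative identity \dots is the main obstacle'', deferring the cancellation of the ${\rm O}(1/N)$ term to \cite{FT19,FL22,Wi23}. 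That is a genuine gap, not bookkeeping. Moreover, the brute-force route you sketch (assembling three sources of $1/N$ corrections and checking that the centring $N_{\rm L}=N+a/\beta$ annihilates a derivative term) is unnecessary here: as the paper points out in the remark following the proposition, the exact finite-$N$ evaluations (\ref{4.35}) settle the error order by a symmetry argument. For the Laguerre case, the generalised Kummer transformation converts $e^{-\beta N_{\rm L}s/2}e^{as/2}\,{\vphantom{F}}_1^{\mathstrut}F_1^{(\beta/2)}(-N_{\rm L};2a/\beta;(-s)^a)$ into a form manifestly even in $N_{\rm L}$ once $s\mapsto s/(4N_{\rm L})$; for the Jacobi case the symmetry ${\vphantom{F}}_2^{\mathstrut}F_1^{(\beta/2)}(a,b;c;\mathbf x)={\vphantom{F}}_2^{\mathstrut}F_1^{(\beta/2)}(b,a;c;\mathbf x)$ shows the exact expression is even in $N_{\rm J}$. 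Evenness forces the large-$N$ expansion to proceed in powers of $1/N_{\rm L}^2$ (resp.\ $1/N_{\rm J}^2$), so no ${\rm O}(1/N)$ term can occur and no derivative-cancellation identity needs to be verified.

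Two secondary points. First, your confluent-limit remark is backwards: sending $a_2\to\infty$ (with rescaling) degenerates the Jacobi weight to the Laguerre weight, so one can deduce (\ref{4.33}) from (\ref{4.34}), not (\ref{4.34}) from (\ref{4.33}). Second, in the exact finite-$N$ formula the first parameter of the ${}_1F_1^{(\beta/2)}$ is $-N_{\rm L}=-(N+a/\beta)$, not $-N$ as you wrote; this shift is precisely what makes the Kummer/evenness argument work, so the parameter bookkeeping is not optional if you want the ${\rm O}(N_{\rm L}^{-2})$ conclusion.
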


\begin{remark} ${}$ \\
1.~For finite $N$ the probability of no eigenvalues in $(0,s)$ for the Laguerre and Jacobi ensembles with $a$ and $a_1$ respectively non-negative
integers admit evaluation formulas in terms of Jack polynomial hypergeometric functions according to (\cite{Fo93c} Laguerre case, use of
(\ref{2.4Ea+}) Jacobi case)
\begin{align}\label{4.35}
E_{N,\beta}(0,(0,s);\lambda^a e^{-\beta \lambda/2}) & =  e^{-\beta N_{\rm L} s / 2} e^{a s /2} {\vphantom{F}}_1^{\mathstrut} F_1^{(\beta/2)}(-N_{\rm L};2 a /\beta;(-s)^a) \nonumber \\
E_{N,\beta}(0,(0,s);\lambda^{a_1} (1 - \lambda)^{a_2} ) &  = (1 - s)^{(\beta/2) (N_{\rm J}^2 - \gamma^2)}
 {\vphantom{F}}_2^{\mathstrut} F_1^{(\beta/2)}(-N_{\rm J} + \gamma, N_{\rm J} + \gamma;2 a /\beta;(-s/(1-s))^{a_1}) ,
\end{align}
where the notation $N_{\rm L}, N_{\rm J}$ is as in Proposition \ref{P4.7a} and
$\gamma :=  (a_1+a_2+1)/\beta - 1/2$. From the symmetry ${\vphantom{F}}_2^{\mathstrut} F_1^{(\beta/2)}(a,b;c;\mathbf x) =
{\vphantom{F}}_2^{\mathstrut} F_1^{(\beta/2)}(b,a;c;\mathbf x)$, 
we see immediately that the second of these is even in $N_{\rm J}$, telling us that the large $N$ expansion
with $s \mapsto s/(2N_{\rm J}^2)$ is
in powers of $1/N_{\rm J}^2$. Also, the generalised Kummer transformation for $ {\vphantom{F}}_1^{\mathstrut} F_1^{(\alpha)}$ (see e.g.~\cite[Eq.~(13.16)]{Fo10})
\begin{equation}
 {\vphantom{F}}_1^{\mathstrut} F_1^{(\alpha)}(a;c;t_1,\dots,t_m) = \prod_{j=1}^m e^{t_j}
 {\vphantom{F}}_1^{\mathstrut} F_1^{(\alpha)}(c-a;c;-t_1,\dots-,t_m) ,
 \end{equation}
applied to the first formula in (\ref{4.35})  tells us that $E_{N,\beta}(0,(0,s/(4N_{\rm L}));\lambda^a e^{-\beta \lambda/2})$ is even in $N_{\rm L}$ and so has an expansion for large $N$
in powers of  $1/N_{\rm L}^2$; cf.~Remark \ref{R4.1}.1.\\
2.~From the relation between (\ref{2.5b+Z}) and (\ref{6.3g1}) specialised to $n=N$ and $L= \mathbb I_N$ we have that for $\beta = 1,2$ and 4 (after changing
variables $V^\dagger U \mapsto U$)
\begin{equation}\label{4.37}
{\vphantom{F}}_0^{\mathstrut} F_1^{(2/\beta)}(\beta N/2;\mathbf y) = \int e^{{\rm Tr} (U^\dagger \Lambda + U \Lambda)} \, d^{\rm H}U.
 \end{equation}
 Here the matrix group integral is over the orthogonal group ($\beta = 1$), unitary group ($\beta = 2$) and symplectic unitary group $(\beta = 4)$.
 In the case that $ \Lambda = y \mathbb I_N$, the same ${\vphantom{F}}_0^{\mathstrut} F_1^{(2/\beta)}$ appear in (\ref{4.33}) and (\ref{4.34}),
 provided $2 a /\beta$ is a non-negative integer, thus giving an evaluation of those hard edge probabilities in terms of classical matrix group integrals
 \cite[\S 5.2]{FW04} (see also \cite[\S 3.1]{FM23} and \cite[Th.~2.1]{Bo24}). On the other hand these same group integrals appear as
 generating functions for the longest increasing subsequence length of classes of random permutations \cite{Ra98}, \cite{BR01a}. This fact has
 been put to use as a strategy for the analysis of the asymptotics of the  longest increasing subsequence length in these cases
 \cite{BF03}, \cite{Bo24b}, \cite{Bo24}.
\end{remark}

\subsection{Circular and circular Jacobi $\beta$ ensembles}
\subsubsection{Bulk scaling for the circular ensemble two-point correlation}  
Generally the bulk of the spectrum is any portion that is away from the edge or a singularity. In the circular ensemble the
statistical state is rotationally invariant, and there is no edge or singularity associated with the spectrum. Bulk scaling refers to
the use of scaled eigenvalues so that the mean spacing between eigenvalues is a nonzero constant (which we take to be
unity) independent of $N$.   This is achieved by setting $\theta_i = x_i/N$.

The two-point correlation function --- defined as the integral over all but two of the eigenvalues in the joint eigenvalue PDF,
multiplied by $N(N-1)$ as a type of normalisation --- is simply related to the average on the LHS of (\ref{3.8c}) with $N \mapsto
N-2$ and $2q=2\mu = \beta$. Thus,  following \cite{Fo94j}, a $\beta$-dimensional integral evaluation for the
limiting two-point correlation function, $\rho_{(2)}^{\rm bulk}(x_1,x_2;\beta)$ say, can be obtained by the duality averages on the
RHS of (\ref{3.8c}) (for definiteness, the second of these has been chosen). 
\begin{prop}\label{P4.5}
Define
\begin{equation}
S_N(\lambda_1,\lambda_2,\beta) =
 \prod_{j=0}^{N-1} {\Gamma (\lambda_1 + 1 + j\beta/2)
\Gamma (\lambda_2 + 1 + j\beta/2)\Gamma(1+(j+1)\beta/2) \over
\Gamma (\lambda_1 + \lambda_2 + 2 + (N + j-1)\beta/2) \Gamma (1 + \beta/2 )},
\label{3.2S}
\end{equation}
which is the normalisation associated with ME${}_{\beta,N}[x^{\lambda_1}(1 - x)^{\lambda_2})]$ as evaluated by
Selberg; see e.g.~\cite[\S 4.1]{Fo10}. For $\beta$ even we have
\begin{multline}\label{15.eev1}
\rho_{(2)}^{\rm bulk} (x_1,x_2;\beta)  = 
(\beta / 2)^\beta {((\beta /2)!)^3 \over \beta! (3 \beta /2)!} 
{e^{- \pi i \beta  (x_1 - x_2)} (2 \pi  (x_1 - x_2))^\beta \over
 S_\beta(-1+ 2/\beta,-1+ 2/\beta,4/\beta)} 
\int_{[0,1]^\beta} du_1 \cdots du_\beta \\
 \times \prod_{j=1}^\beta
e^{2 \pi i  (x_1 - x_2)u_j} u_j^{-1 + 2/\beta} (1 - u_j)^{-1 + 2/\beta}
\prod_{1 \le j < k \le \beta} |u_k - u_j|^{4/\beta}.
\end{multline}
 \end{prop}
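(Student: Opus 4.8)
The plan is to follow \cite{Fo94j}: realise the finite-$N$ two-point correlation function of ${\rm CE}_{\beta,N}[1]$ as a partition-function ratio times the average on the left-hand side of (\ref{3.8c}) in the specialisation $q=\mu=\beta/2$, replace that average by the dual $\beta$-dimensional integral on the last line of (\ref{3.8c}) (Proposition \ref{P3.7}), and finally take the bulk scaling limit $\theta_j=x_j/N$.

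First I would isolate two of the eigenvalues in the joint PDF (\ref{2.1X}) with $w=1$. Writing $z_j=e^{2\pi i\theta_j}$ and $\mathcal Z_{\beta,N}=\Gamma(1+N\beta/2)/\Gamma(1+\beta/2)^N$ for the Dyson normalisation (a special case of the Selberg integral), integrating over all but $\theta_1,\theta_2$ gives
\[
\rho_{N,(2)}(\theta_1,\theta_2;\beta)=N(N-1)\,\frac{\mathcal Z_{\beta,N-2}}{\mathcal Z_{\beta,N}}\,|z_1-z_2|^{\beta}\,\Big\langle\prod_{l=1}^{N-2}|z_1-e^{2\pi i\theta_l}|^{\beta}\,|z_2-e^{2\pi i\theta_l}|^{\beta}\Big\rangle_{{\rm CE}_{\beta,N-2}[1]}.
\]
The bracketed average is the left-hand side of (\ref{3.8c}) with $N\mapsto N-2$ and $2q=2\mu=\beta$ (the $\pm$ signs on $z_1,z_2$ being immaterial, since (\ref{3.8c}) depends only on $z=z_1/z_2$); this requires $q=\beta/2$ to be a positive integer, which is the source of the hypothesis that $\beta$ be even, already needed in passing from (\ref{3.8a}) to (\ref{3.8b}).

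Next I would substitute the last line of (\ref{3.8c}), whose Jacobi parameter in the present specialisation $\mu=q=\beta/2$ is $a=-1+2(\mu-q+1)/\beta=-1+2/\beta$:
\[
\rho_{N,(2)}(\theta_1,\theta_2;\beta)\propto N(N-1)\,\frac{\mathcal Z_{\beta,N-2}}{\mathcal Z_{\beta,N}}\,|z_1-z_2|^{\beta}\,z^{\beta(N-2)/2}\Big\langle\prod_{l=1}^{\beta}\big(1-(1-\bar z)\,u_l\big)^{N-2}\Big\rangle_{{\rm ME}_{4/\beta,\beta}[u^{a}(1-u)^{a}]},
\]
with $z=z_1/z_2$. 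The proportionality constant implicit in (\ref{3.8c}) does not depend on $z$, and I would pin it down by setting $z_1=z_2$: by rotational invariance the left-hand side of (\ref{3.8c}) then reduces to the Morris integral $M_{\beta,N-2}:=\langle\prod_l|1-e^{2\pi i\theta_l}|^{2\beta}\rangle_{{\rm CE}_{\beta,N-2}[1]}$, while its right-hand side collapses to $\langle 1\rangle=1$; hence the constant equals $M_{\beta,N-2}$, a product of gamma functions with large-$N$ asymptotics $M_{\beta,N-2}\sim c_\beta N^{2\beta}$ for an explicit $c_\beta$. Rewriting the normalised ${\rm ME}_{4/\beta,\beta}[u^{a}(1-u)^{a}]$ average as $1/S_\beta(-1+2/\beta,-1+2/\beta,4/\beta)$ times the bare $\beta$-fold integral obtained by restoring the domain $[0,1]^\beta$ and the weight $\prod_l u_l^{-1+2/\beta}(1-u_l)^{-1+2/\beta}\prod_{j<k}|u_k-u_j|^{4/\beta}$, with $S_\beta$ the Selberg evaluation (\ref{3.2S}), accounts for the denominator in (\ref{15.eev1}).

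Finally I would put $\theta_j=x_j/N$ and let $N\to\infty$. Then $1-\bar z=2\pi i(x_1-x_2)/N+{\rm O}(N^{-2})$, so $z^{\beta(N-2)/2}\to e^{\pi i\beta(x_1-x_2)}$, $|z_1-z_2|^{\beta}\sim(2\pi|x_1-x_2|/N)^{\beta}$, and $\prod_{l=1}^{\beta}(1-(1-\bar z)u_l)^{N-2}\to\prod_{l=1}^{\beta}e^{-2\pi i(x_1-x_2)u_l}$ uniformly on $[0,1]^\beta$, so the limit may be taken under the ${\rm ME}_{4/\beta,\beta}$ integral (the integrand is dominated by the integrable weight $\prod_l u_l^{-1+2/\beta}(1-u_l)^{-1+2/\beta}\prod_{j<k}|u_k-u_j|^{4/\beta}$). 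Combining with $N(N-1)\sim N^2$, $\mathcal Z_{\beta,N-2}/\mathcal Z_{\beta,N}\sim(\beta/2)^{-\beta}\Gamma(1+\beta/2)^2N^{-\beta}$ and $M_{\beta,N-2}\sim c_\beta N^{2\beta}$, all powers of $N$ cancel in $\rho_{(2)}^{\rm bulk}(x_1,x_2;\beta)=\lim_{N\to\infty}N^{-2}\rho_{N,(2)}(x_1/N,x_2/N;\beta)$, and, since this limit is real, conjugating the result brings the exponentials into the form displayed in (\ref{15.eev1}). The main obstacle is precisely this constant bookkeeping: one must extract $c_\beta$ from the gamma-product form of $M_{\beta,N-2}$ (via Stirling) and the leading coefficient of $\mathcal Z_{\beta,N-2}/\mathcal Z_{\beta,N}$, and check that together with the $(2\pi)^\beta$ and $S_\beta(-1+2/\beta,-1+2/\beta,4/\beta)$ factors they assemble exactly into the prefactor $(\beta/2)^\beta\big((\beta/2)!\big)^3/\big(\beta!\,(3\beta/2)!\big)$ in (\ref{15.eev1}) — in particular that the $N$-dependence cancels completely, which is the real consistency check.
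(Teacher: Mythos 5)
Your proposal is correct and takes essentially the same route the paper indicates: isolate two eigenvalues to express the finite-$N$ two-point correlation through the ${\rm CE}_{\beta,N-2}[1]$ average on the LHS of (\ref{3.8c}) with $2q=2\mu=\beta$ (whence the evenness requirement and Jacobi parameter $-1+2/\beta$), substitute the second, ${\rm ME}_{4/\beta,\beta}$, form of that duality, and take the bulk limit $\theta_j=x_j/N$, with the proportionality fixed via the Morris-type normalisation. The only step not carried to completion is the mechanical assembly of the gamma-function constants (Morris integral asymptotics, the ratio $\mathcal Z_{\beta,N-2}/\mathcal Z_{\beta,N}$, and the Selberg factor $S_\beta(-1+2/\beta,-1+2/\beta,4/\beta)$) into the prefactor $(\beta/2)^\beta((\beta/2)!)^3/(\beta!\,(3\beta/2)!)$, which is routine bookkeeping and already exceeds the level of detail the paper itself provides.
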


 \subsubsection{Moments of moments of the characteristic polynomial}
 The moments of moments of the characteristic polynomial for the circular $\beta$ ensemble are defined as
 \cite{As22}
\begin{multline}\label{M.1}
{\rm MoM}^{(\beta)}(k;q)  = \bigg \langle \Big ( \int_0^1 \prod_{l=1}^N | e^{2 \pi i \phi} + e^{2 \pi i \theta_l} |^{2q} \, d \phi \Big )^k \bigg \rangle_{{\rm ME}_{\beta,N}[1]}
\\ = 
\int_0^1 d \phi_1 \cdots \int_0^1 d \phi_k \, \bigg \langle
\prod_{l=1}^N | e^{2 \pi i \phi_1} + e^{2 \pi i \theta_l} |^{2q} \cdots  | e^{2 \pi i \phi_k} + e^{2 \pi i \theta_l} |^{2q} \bigg \rangle_{{\rm ME}_{\beta,N}[1]},
 \end{multline}
 where the second line applies for $k$ a positive integer. In the case $\beta = 2$ this quantity is of interest for its application to the local maxima of
 the Riemann zeta function on the critical line, and more generally for its relevance to the study of Gaussian multiplicative chaos;
 see the review \cite{BK22}.
 
  In the case $k=2$ and $q$ a positive integer the duality (\ref{3.8c}) can be applied
 to give a rewrite of the second average in (\ref{M.1}). Taking into consideration that the implied
 proportionality constant therein is 
 \begin{equation}\label{M.1a}
 \Big \langle \prod_{l=1}^N | 1 + e^{2 \pi i \theta_l} |^{4q} \Big \rangle_{{\rm ME}_{\beta,N}[1]} = \prod_{l=1}^{2q}{ \Gamma(N+ 2(2q+l) /\beta)  \Gamma(2l /\beta)   \over  \Gamma(N+ 2l /\beta)   \Gamma(2(2q+l) /\beta) },
\end{equation}
where use has been made of the so-called Morris integral evaluation \cite[Eq.~(4.4)]{Fo10}, with a simplification particular to the assumption that $q$ is integer, we can reclaim
a result of Assiotis \cite[Eq.~(18) written in the form of the first displayed equation in \S 3.3 therein]{As22} obtained using different working.

\begin{prop}\label{P4.6}
For $ 4 q^2 > \beta$ we have
\begin{multline}\label{M.1b}
\lim_{N \to \infty} {1 \over N^{8 q^2/\beta - 1}} {\rm MoM}^{(\beta)}(k;q)  |_{k=2}  = {1 \over (2q)!} \prod_{j=0}^{2q-1} {\Gamma(2/\beta) \over (\Gamma(2(j+1)/\beta))^2}
\\ \times
 {1 \over 2 \pi} \int_{-\infty}^\infty e^{i q s}
\int_0^1 dx_1 \cdots \int_0^1 dx_{2q} \, \prod_{l=1}^{2q}e^{- i s x_l} (x_l(1-x_l))^{-1+2/\beta} \prod_{1 \le j < k \le 2q} | x_k - x_j|^{4/\beta}.
\end{multline}
\end{prop}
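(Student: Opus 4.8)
\emph{Proof proposal.} The plan is to reduce the double $\phi$-integral in (\ref{M.1}) to a single one, apply the duality (\ref{3.8c}) to the inner average, and then extract the leading large-$N$ behaviour by localising the remaining integral. I would apply Proposition \ref{P3.7} in the form (\ref{3.8c}) with $\mu=q$, $z_1=e^{2\pi i\phi_1}$, $z_2=e^{2\pi i\phi_2}$, so that $z=e^{2\pi i(\phi_1-\phi_2)}$ and the exponent in the resulting Jacobi average is $a=-1+2/\beta$; the implied proportionality constant is fixed by setting $\phi_1=\phi_2$, which gives the Morris integral, i.e.\ $c_N:=$ the right-hand side of (\ref{M.1a}). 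Since the right-hand side of (\ref{3.8c}) depends on $\phi_1,\phi_2$ only through $\psi:=\phi_1-\phi_2$, the double integral in (\ref{M.1}) collapses, and using $z^{qN}=e^{2\pi iqN\psi}$ (valid as $q,N\in\Z$) one gets
\[
{\rm MoM}^{(\beta)}(2;q) = c_N\int_0^1 e^{2\pi iqN\psi}\Big\langle\prod_{l=1}^{2q}(1-(1-e^{-2\pi i\psi})x_l)^N\Big\rangle_{{\rm ME}_{4/\beta,2q}[(x(1-x))^{-1+2/\beta}]}\,d\psi .
\]
A ratio-of-Gamma-functions (Stirling) estimate gives $c_N = N^{8q^2/\beta}\big(\prod_{l=1}^{2q}\Gamma(2l/\beta)/\Gamma(2(2q+l)/\beta)\big)(1+{\rm O}(1/N))$.

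The large-$N$ analysis now takes place in the remaining integral. Under the hypothesis $4q^2>\beta$ it is dominated by the ${\rm O}(1/N)$-neighbourhoods of $\psi=0$ and $\psi=1$. Substituting $\psi=t/N$ there (and $\psi=1+t/N$, $t<0$, near $\psi=1$) and using $1-e^{-2\pi i\psi}=2\pi it/N+{\rm O}(N^{-2})$, so that $(1-(1-e^{-2\pi i\psi})x_l)^N\to e^{-2\pi itx_l}$, the two neighbourhoods combine into a single integral over $t\in\R$; writing $s=2\pi t$,
\[
\int_0^1 e^{2\pi iqN\psi}\langle\cdots\rangle\,d\psi = \frac{1}{2\pi N}\int_{-\infty}^{\infty} e^{iqs}\Big\langle\prod_{l=1}^{2q}e^{-isx_l}\Big\rangle_{{\rm ME}_{4/\beta,2q}[(x(1-x))^{-1+2/\beta}]}ds\;(1+o(1)) ,
\]
the $s$-integral converging because $\langle\prod_l e^{-isx_l}\rangle$ decays like $|s|^{-4q^2/\beta}$ as $|s|\to\infty$ (a hard-edge scaling estimate for the Jacobi $4/\beta$ ensemble on $2q$ points).

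Combining the two asymptotics and dividing by $N^{8q^2/\beta-1}$, and writing $\langle\prod_l e^{-isx_l}\rangle_{{\rm ME}_{4/\beta,2q}[(x(1-x))^{-1+2/\beta}]}=S^{-1}\int_{[0,1]^{2q}}\prod_l e^{-isx_l}(x_l(1-x_l))^{-1+2/\beta}\prod_{j<k}|x_k-x_j|^{4/\beta}\,dx$ with $S:=S_{2q}(-1+2/\beta,-1+2/\beta,4/\beta)$ its Selberg normalisation (see (\ref{3.2S})), one obtains the stated limit with constant $\tfrac{1}{2\pi}\,S^{-1}\prod_{l=1}^{2q}\Gamma(2l/\beta)/\Gamma(2(2q+l)/\beta)$. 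A routine simplification --- using Selberg's explicit product for $S$ and $\Gamma(1+u)=u\Gamma(u)$ --- reduces $S^{-1}\prod_{l=1}^{2q}\Gamma(2l/\beta)/\Gamma(2(2q+l)/\beta)$ to $\frac{1}{(2q)!}\prod_{j=0}^{2q-1}\Gamma(2/\beta)/(\Gamma(2(j+1)/\beta))^2$, which is precisely the prefactor in (\ref{M.1b}).

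The main obstacle is the rigorous justification of the second step --- the localisation of the $\psi$-integral and the interchange of limit and integral --- i.e.\ showing that the contributions of $\psi$ bounded away from $\{0,1\}$, and of $|t|$ large, are $o(N^{-1})$ relative to the extracted term. This is exactly where $4q^2>\beta$ enters: away from coincidence the factors $|1-(1-e^{-2\pi i\psi})x_l|^N$ confine the $x_l$ to ${\rm O}(1/N)$-neighbourhoods of $\{0,1\}$, and a hard-edge scaling count shows that region contributes at order $N^{4q^2/\beta}$ to ${\rm MoM}^{(\beta)}(2;q)$, which is subleading to $N^{8q^2/\beta-1}$ precisely when $4q^2>\beta$. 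An alternative route that side-steps part of this analysis is to start from the exact evaluation of the inner average as the ${\vphantom{F}}_2^{\mathstrut}F_1^{(\beta/2)}$ hypergeometric function used in the proof of Proposition \ref{P3.7} (cf.\ (\ref{3.8f})) together with its integral representation (\ref{2.4F+}), and carry out the $\psi$-asymptotics there.
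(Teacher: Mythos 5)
Your proposal is correct and follows essentially the same route as the paper's proof: apply the duality (\ref{3.8c}) to the $k=2$ average, use rotational invariance to reduce to a single angular variable, rescale by $1/N$ with $(1+u/N)^N\to e^u$, control convergence via the $|s|^{-4q^2/\beta}$ decay coming from half the Jacobi variables clustering at each endpoint (which is where $4q^2>\beta$ enters), and assemble the constant from the Morris evaluation (\ref{M.1a}) and the Selberg normalisation (\ref{3.2S}). Your explicit simplification of the prefactor to $\tfrac{1}{(2q)!}\prod_{j=0}^{2q-1}\Gamma(2/\beta)/(\Gamma(2(j+1)/\beta))^2$ checks out, so nothing essential is missing.
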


\begin{proof}
On the RHS of (\ref{M.1}) we use the periodicity in $\phi$ to shift the integration range to $[-1/2,1/2]$. As already remarked, for 
$k=2$ and $q$ a positive integer the duality (\ref{3.8c}) can be applied
 to give a rewrite of the second average in (\ref{M.1}). Changing variables $\phi = s/(2 \pi N)$ and use of the elementary limit $(1 + u/N)^N \to e^u$ shows that, up
 to the factor corresponding to the normalisation of the average and a factor of $1/N$, the second line of (\ref{M.1b}) results. However this procedure is only well defined
 if the resulting integral over $s$ is absolutely integrable. Using a method to analyse the multidimensional integral over $x_1,\dots,x_{2q}$ for large $s$ presented
 in \cite{Fo93y}, which involves considering the contribution of half of the integration variables clustered near 0, and the other half clustered near
 1 using a simple scaling of variables, we see that this requires $ 4 q^2 > \beta$, with the decay of the integrand then being of order $|s|^{-4 q^2/\beta}$. The factors of $N$ result from the large $N$ form of ratios of $N$ dependent
 gamma functions in (\ref{M.1a}), and the factor of a power of $1/N$ from the change of variables. The factors on the first line of the RHS of (\ref{M.1b}) result from (\ref{M.1a})
 and the product of gamma function function form of the normalisation factor for the average in the duality as read off from (\ref{3.2S}).
\end{proof}

\subsection{Circular Jacobi density for even $\beta$}
In (\ref{2.6a}) we set $a_1 = (\beta p + i q)/2$, $a_2 = \bar{a}_1$ so that the weight reads
 \begin{equation}\label{CW1}
 w(\theta) = e^{- \pi q \theta} |1 + e^{2 \pi i \theta} |^{\beta p}.
  \end{equation}
  The points $\theta = \pm 1/2$ are referred to as a spectrum singularity of Fisher-Hartwig type (the parameter
  $p$ can be viewed as determining the degeneracy of a conditioned eigenvalue at $\theta = \pm 1/2$, while $q$  determines
  the amplitude of the discontinuity in the factor $e^{- \pi q \theta}$ about the points $\theta = \pm \pi$). In the case
  $\beta$ even, the average on the LHS of (\ref{3.8d+}) is (upon appropriate identification of the parameters) simply related to the
  density in the Jacobi circular ensemble with weight (\ref{CW1}). Use of (either of) the dualities on the RHS of (\ref{3.8d+}) leads
  to a $\beta$-dimensional integral form of the limiting density in the neighbourhood of the singularity,
  which is obtained by the scalings $\theta \mapsto - 1/2  + x/(2 \pi N)$ $(x>0$) and
 $\theta \mapsto 1/2  + x/(2 \pi N)$ $(x<0$) 
   \cite{FLT21}.
  Alternatively, from the underlying theory as outlined in the proof
  of Proposition \ref{P3.7}, the evaluation formula can be given in terms of  a particular ${\vphantom{F}}_1^{\mathstrut} F_1^{(\alpha)}$ \cite{Li17} in the case of 
  $\beta$ equal variables, using the integral representation (see e.g.~\cite[Exercises 13.1 q.4(i)]{Fo10})
  \begin{multline}\label{CW1a} 
{\vphantom{F}}_1^{\mathstrut} F_1^{(\alpha)}(-b; a + 1 +  (N - 1) / \alpha;
(t)^n) \\ \propto
\int_{-1/2}^{1/2}dx_1 \cdots \int_{-1/2}^{1/2}dx_n \,
\prod_{l=1}^n 
e^{\pi i x_l(a-b)} |1 + e^{2 \pi i x_l}|^{a+b}
e^{-t e^{2 \pi i x_l}} \prod_{j < k}
|e^{2 \pi i x_k} - e^{2 \pi i x_j}|^{2/\alpha}.
  \end{multline}
(We remark that in the cases $\beta =2$ and 4 linear differential equations of degree 3 and 5 for the limiting
density have been given in \cite{FS23}.)

 \begin{prop}\label{P4.7}
Consider the Jacobi circular weight (\ref{CW1}) in (\ref{2.1X}), and for convenience shift each $\theta_l$ by $\theta_l \mapsto -1/2 + \theta_l$.
Denote by   $\rho_{\infty,(1)}^{\rm cJ}(x;\beta)$ the limiting scaled density about the origin, and
define $N_{\rm cJ} = N + p$. We have that for large $N$
 \begin{equation}\label{CW2} 
{1 \over N_{\rm cJ}} \rho_{N,(1)} \Big (x / N_{\rm cJ}; e^{- \pi q (\theta-1/2)}   |1 - e^{2 \pi i \theta} |^{\beta p} \Big ) =  \rho_{\infty,(1)}^{\rm cJ}(x) + {\rm O}(N_{\rm cJ} ^{-2}),
  \end{equation}
  where
  \begin{equation}\label{CW2a}  
  \rho_{\infty,(1)}^{\rm cJ}(x) \propto e^{q \pi {\rm sgn}(x)} e^{i \beta x/2} |x|^{p \beta} 
  {\vphantom{F}}_1^{\mathstrut} F_1^{(\beta/2)}(p+1-2iq/\beta;2p+2;(-ix)^\beta)). 
   \end{equation} 
  \end{prop}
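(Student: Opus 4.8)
\smallskip
\noindent
\emph{Proof proposal.} The plan is to reduce the one-point density to an averaged (squared modulus) characteristic polynomial, apply the duality (\ref{3.8d+}), and take a bulk scaling limit inside the resulting finite-dimensional integral. First I would use the standard rewrite of a density as an $(N-1)$-fold average (the circular analogue of (\ref{6.1d})): with $w(\theta)=e^{-\pi q(\theta-1/2)}|1-e^{2\pi i\theta}|^{\beta p}$,
\[
\rho_{N,(1)}(\theta;w)\propto w(\theta)\,\Big\langle\prod_{l=1}^{N-1}|e^{2\pi i\theta}-e^{2\pi i\theta_l}|^{\beta}\Big\rangle_{{\rm CE}_{\beta,N-1}[w]}.
\]
Since $\beta$ is even, write $\beta=2q_0$ with $q_0\in\Z_{>0}$. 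After the global rotation $\theta_l\mapsto\theta_l+1/2$ that moves the spectrum singularity of $w$ onto the evaluation point $z=e^{2\pi i\theta}$ (the Vandermonde being rotation invariant), and with $N\mapsto N-1$, $2q\mapsto\beta$, the average above is of the form on the left of (\ref{3.8d+}), the Fisher--Hartwig data of (\ref{CW1}) fixing the parameters $(a,b)$ there by $a-b=iq$, $a+b=\beta p$ (after analytic continuation in these parameters, since $p$ is a non-negative integer and $\beta$ even).

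Next I would apply the second (Jacobi-ensemble) equality in (\ref{3.8d+}): the circular average becomes proportional to $z^{q_0(N-1)}$ times a $\beta$-fold average over ${\rm ME}_{4/\beta,\beta}[x^{a_1}(1-x)^{a_2}]$ of $\prod_{l=1}^{\beta}(1-(1-\bar z)x_l)^{N-1}$, with $a_1,a_2$ the explicit affine functions of $p,q,\beta$ read off from (\ref{3.8d+}). Now impose the bulk scaling $\theta=x/(2\pi N_{\rm cJ})$, $N_{\rm cJ}=N+p$, so that $z\to1$ while $v:=1-\bar z\sim ix/N_{\rm cJ}\to0$ with $Nv\to ix$; then $z^{q_0(N-1)}\to e^{i\beta x/2}$ and $\prod_{l=1}^{\beta}(1-vx_l)^{N-1}\to\prod_{l=1}^{\beta}e^{-ixx_l}$. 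As the integration runs over the compact set $[0,1]^\beta$ the limit may be taken inside, so $\rho_{\infty,(1)}^{\rm cJ}(x)$ is proportional to $e^{q\pi\,{\rm sgn}(x)}e^{i\beta x/2}|x|^{p\beta}$ (the first factor read off from the value of the weight on the two sides of the spectrum singularity, the second from the phase of $z^{q_0N}$, the power $|x|^{p\beta}$ from $|1-e^{2\pi i\theta}|^{\beta p}$) times
\[
\int_{[0,1]^\beta}\prod_{l=1}^{\beta}e^{-ixx_l}\,x_l^{a_1}(1-x_l)^{a_2}\prod_{1\le j<k\le\beta}|x_k-x_j|^{4/\beta}\,dx_1\cdots dx_\beta .
\]
By the confluent limit of (\ref{2.4F+}) (equivalently, by the integral representation (\ref{CW1a}) with $\alpha=\beta/2$ and all $\beta$ arguments equal), this $\beta$-fold integral equals a constant times ${}_1 F_1^{(\beta/2)}(p+1-2iq/\beta;2p+2;(-ix)^\beta)$, which is (\ref{CW2a}). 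An equivalent route, bypassing the duality, starts from the exact finite-$N$ evaluation of the density as a ${}_2 F_1^{(\beta/2)}$ with $\beta$ equal arguments (the circular-Jacobi counterpart of (\ref{4.35}), obtainable as in the proof of Proposition \ref{P3.7}), uses the Euler-type transformation (\ref{3.8e}) to move the argument to $(1-\bar z)^\beta$, and then applies the classical degeneration ${}_2F_1(-N',b;c;v)\to{}_1F_1(b;c;-N'v)$ and its Jack-polynomial generalisation as $N'\to\infty$, $v\to0$ with $N'v$ fixed.

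The main obstacle is the error estimate: establishing that the correction is $O(N_{\rm cJ}^{-2})$ rather than $O(N_{\rm cJ}^{-1})$, and in particular that $N_{\rm cJ}=N+p$ is the distinguished recentring for which the order-$N^{-1}$ term drops out. For this I would work with the finite-$N$ ${}_2 F_1^{(\beta/2)}$ (or, after the transformation, ${}_1 F_1^{(\beta/2)}$) representation and exploit a symmetry of the Jack-polynomial hypergeometric function in the variable $N_{\rm cJ}$ --- the analogue of the observation used in Remark \ref{R4.1} and in the discussion after Proposition \ref{P4.7a} that the relevant finite-$N$ quantity, with its scaling variable rescaled by the appropriate power of $N_{\rm cJ}$, is even in $N_{\rm cJ}$, so that its large-$N$ expansion runs in powers of $N_{\rm cJ}^{-2}$. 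The remaining work --- carrying the parameters $p,q,a_1,a_2$ through the duality and pinning down the constants in (\ref{CW2a}) --- is routine but bookkeeping-heavy, complicated mainly by the unfortunate clash between the Fisher--Hartwig amplitude $q$ and the duality parameter $q_0=\beta/2$.
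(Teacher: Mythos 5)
Your proposal follows essentially the paper's own route: the density is written as a $\beta$-th moment of the (modulus of the) characteristic polynomial over the circular Jacobi ensemble, the duality (\ref{3.8d+}) with the integer exponent there set to $\beta/2$ converts this to a $\beta$-dimensional Jacobi-ensemble average, and the scaling $\theta \mapsto \pm 1/2 + x/(2\pi N_{\rm cJ})$ taken inside the compact integral produces the limit, recognised as a ${}_1F_1^{(\beta/2)}$ of $\beta$ equal arguments --- exactly the argument sketched in the paragraph preceding Proposition \ref{P4.7}, with the ${\rm O}(N_{\rm cJ}^{-2})$ refinement there simply cited from \cite{FLT21}, \cite{Li17}. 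Two cautions on the parts you leave open: the evenness-in-$N_{\rm cJ}$ mechanism you propose for the error term is not evident for this quantity (in \cite{FLT21} the $1/N$ term is computed explicitly and found to be a derivative of the leading form, absorbed by the recentring $N \mapsto N+p$, rather than excluded by a symmetry), and the deferred bookkeeping is worth completing, since carrying the parameters of (\ref{3.8d+}) literally through your confluent limit gives first ${}_1F_1$ parameter $p+1-iq/\beta$, so the $q$-dependence here and in (\ref{CW2a}) appear to differ by a factor of two which needs to be pinned down against \cite{FLT21}.
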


\section{Characteristic polynomial dualities for non-Hermitian random matrices}\label{S5}
\subsection{Introductory remarks}
In view of the result (\ref{1.0}), a logical starting point to explore dualities in relation to characteristic
polynomials for non-Hermitian matrices is to consider
 \begin{equation}\label{6.0}
 \langle \det(z_1 \mathbb I_N - Z) \det({z}_2 \mathbb I_N - \bar{Z}) \rangle
 \end{equation}
 for $Z$ an $N \times N$ random matrix with all elements independently and identically
 distributed, chosen from a distribution with mean zero and variance $\sigma^2$ (interpreted as
 $\sigma^2 = \langle |z_{ij}|^2 \rangle$ is the elements are complex, which relates to the requirement
 of a complex conjugate in the second factor of (\ref{6.0})). One notes that if the second determinant were
 absent, (\ref{6.0}) would trivially evaluate to $(z_1)^N$. In the case of standard Gaussian real entries (a general variance
 can be inserted by scaling), use of Grassmann integration by Akemann, Phillips and Sommers \cite{APS09} gave the
 duality relation
 \begin{equation}\label{6.0a}
 \langle \det(z_1 \mathbb I_N - Z) \det({z}_2 \mathbb I_N - {Z}) \rangle = \langle (z_1 z_2 + |w|^2)^N \rangle_{w \in \mathcal{CN}(0,\sigma^2)},
  \end{equation}
  where $\mathcal{CN}(0,\sigma^2)$ denotes the complex normal distribution with mean zero and variance $\sigma^2$; cf.~(\ref{1.0}).
  The case $z_1 = z_2$ was considered earlier in \cite{Ed97}, where a duality per se was not identified but rather the average
  was evaluated as an incomplete gamma function, which is noted in \cite{APS09} is consistent with the computation of the RHS
  of (\ref{6.0a}) as a series in $(z_1 z_2)$.
  
  A generalisation of (\ref{6.0}) is to consider
 \begin{equation}\label{6.0v}
\Big  \langle \prod_{l=1}^k \det(z_l \mathbb I_N - Z)  \prod_{l=1}^k \det({w}_l \mathbb I_N - \bar{Z}) \Big \rangle.
 \end{equation}
 Note that if all the entries of $Z$ are real, the second product is not necessary. An evaluation of (\ref{6.0v}) in the case of
 the so-called normal matrix model (see e.g.~\cite[Ch.~5]{BF24}), defined as having an eigenvalue PDF in the complex plane 
 of the form proportional to
  \begin{equation}\label{6.0w}
  \prod_{l=1}^N e^{- V(z_l, \bar{z}_l)} \prod_{1 \le j < k \le N} | z_j - z_k |^2,
  \end{equation} 
 and moreoever  with the requirement of rotational invariance $V(z_l, \bar{z}_l ) = V(|z_l|)$, an evaluation formula analogous to
 (\ref{2.3}) has been obtained in \cite{AV03}. Prominent in the class of rotationally invariant normal matrix models is the Ginibre unitary ensemble (GinUE)
 consisting of $N \times N$ matrices with independent standard complex Gaussian entries. One then has $V(z_l, \bar{z}_l ) = |z|^2$; see  \cite[Eq.~(1.7)]{BF24}.
A duality identity for multiple products of characteristic polynomials for a non-Hermitian ensemble was first given by Nishigaki and Kamenev
\cite{NK02}, who deduced
 \begin{equation}\label{6.0x}
\Big  \langle | \det(z_1 \mathbb I_N - Z)|^{2k}
 | \det(z_2 \mathbb I_N - Z)|^{2k} \Big \rangle_{Z \in {\rm GinUE}_N}
= \bigg \langle \det \begin{bmatrix} D & -Y \\ Y^\dagger & D^\dagger \end{bmatrix}^N \bigg \rangle_{Y \in {\rm GinUE}_{2k}},
  \end{equation} 
 where $D$ is the $2k \times 2k$ diagonal matrix with the first $k$ diagonal entries equal to $z_1$ and the remaining $k$
 diagonal entries equal to $z_2$. For the generalisation of (\ref{6.0x}) applying to (\ref{6.0v}), see Proposition \ref{P5.2} below.
 
 The $\beta$ generalisation of (\ref{6.0w}) is to replace the exponent 2 on the product of differences with a general $\beta > 0$, which
 gives the two-dimensional Coulomb gas analogue of (\ref{2.1}). While in the Hermitian case of (\ref{2.1}) we have seen that  duality
 relations enjoy a rich $\beta$ generalisation, this is no longer true of two-dimensional Coulomb gas non-Hermitian  $\beta$ generalisations.
 Rather the variety of cases comes from the distinction of  real, complex and quaternion elements, distinguished choices of invariant measures
 related to the classical weights in the Hermitian case, and possible symmetries of the matrices.
 
 \subsection{Zonal polynomials}
 Relationships between the Jack polynomial based hypergeometric functions with parameter
 $\alpha = 2/\beta$ and matrix integrals over the orthogonal group ($\beta = 1$), unitary group ($\beta = 2$) and
 symplectic unitary group ($\beta = 4$) have been seen in (\ref{6.2g2}), (\ref{2.5b+Z}) and (\ref{6.3g1}), and (\ref{4.37}).
 With parameter  $\alpha = 2/\beta$ and $\beta = 1,2$ and 4 the Jack polynomials are known as zonal polynomials, 
 or sometimes as zonal spherical polynomials, due to their appearance in the broader theory of spherical functions \cite{Ma95}.
 Due to their relation to group integrals, they have received attention long before the more general Jack polynomials, particularly in the
 case $\alpha = 2$  ($\beta = 1$) in the context of multivariate statistics \cite[Ch.~7]{Mu82}. A comprehensive historical
 account of the early literature is given in the thesis \cite{Ro01}.
 
 Underpinning the evaluation of the group integrals (\ref{6.2g2}) and (\ref{4.37}) are group integral identities applying directly
 to the zonal polynomials. Conventionally, the normalisation
 used for the zonal polynomials is no longer that implied by (\ref{4.0}), but rather that given by introducing a scaling factor and defining,
  \begin{equation}\label{15.ckp}
C_\kappa^{(\alpha)}(\mathbf x) := {\alpha^{|\kappa|} |\kappa|! \over
h_\kappa'} P_\kappa^{(\alpha)}(\mathbf x);
\end{equation}
recall (\ref{2.2h}) in relation to $h_\kappa'$ and cf.~the definition of the Jack polynomial based hypergeometric function
(\ref{3.40}).

 \begin{prop}\label{P5.1} (\cite[case $\beta = 1$]{Ja64}, \cite{Ma95}, \cite{Ra95})
Let $U$ be from the matrix group of unitary matrices
as specified above depending on the value of $\beta$.
Let $A$ and $B$ be $N \times N$ matrices with real,
complex and real quaternion entries for $\beta=1,2,4$, respectively. We have 
\begin{equation}\label{16.jl}
\int 
C_\kappa^{(2/\beta)}(A U^\dagger B U) \,  d^{\rm H} U=
{C_\kappa^{(2/\beta)}(A)C_\kappa^{(2/\beta)}(B) \over C_\kappa^{(2/\beta)}((1)^{N})},
\end{equation}
where $C_\kappa^{(2/\beta)}(A U^\dagger B U)$ is
defined as $C_\kappa^{(2/\beta)}(\mathbf y)$ with $\mathbf y = (y_1,\dots,y_{N})$ denoting the
eigenvalues of $A  U^\dagger B  U$, and similarly the meaning of
$C_\kappa^{(2/\beta)}(A)$, $C_\kappa^{(2/\beta)}(B)$. In the quaternion case it is required
that all matrices have doubly degenerate eigenvalues, and only one copy is
included in the arguments of $C_\kappa^{(1/2)}$.

Furthermore, with $s_\kappa(\mathbf x) = P^{(\alpha)}_\kappa(\mathbf x)|_{\alpha = 1}$
denoting the Schur polynomial, one has
\begin{eqnarray}
\langle s_{\lambda}(A O) \rangle_{{O} \in O(N)} & = &
\left \{ \begin{array}{ll}
\displaystyle
{C_\kappa^{(2)}(A A^T) \over C_\kappa^{(2)}((1)^N)}, & \lambda = 2 \kappa, \\
0 & {\rm otherwise}, \end{array} \right.
\label{16.9.1} \\
\langle s_{\lambda}(A U) s_\kappa( U^\dagger  A^\dagger)
\rangle_{{ U} \in U(N)} & = & \delta_{\lambda, \kappa}
{C_\kappa^{(1)}( A  A^\dagger) \over C_\kappa^{(1)}((1)^N)},
\label{16.9.2} \\
\langle s_{\lambda}(A  S) \rangle_{{ S} \in {\rm Sp}(2N)} & = &
\left \{ \begin{array}{ll} \displaystyle
{C_\kappa^{(1/2)}( A A^\dagger) \over C_\kappa^{(1/2)}((1)^N)}, &
\lambda = \kappa^2,
\\ 0  & {\rm otherwise}, \end{array} \right.
\label{16.9.3}
\end{eqnarray}
where in (\ref{16.9.1}) the partition $2 \kappa$ is the partition obtained
by doubling each part of $\kappa$, while in (\ref{16.9.3}),
$\kappa^2$ is the partition obtained by repeating each part of
$\kappa$ twice. We note too that in (\ref{16.9.3}), the meaning of
$C_\kappa^{(1/2)}(A A^\dagger)$ is $C_\kappa^{(1/2)}(\mathbf x)$ where
$\mathbf x$ is the $N$ independent eigenvalues of $A A^\dagger$ (for $A$
quaternion, the eigenvalues of $A A^\dagger$ are doubly degenerate).

\end{prop}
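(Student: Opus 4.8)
The plan is to deduce all three families of identities in Proposition~\ref{P5.1} from representation theory, exploiting that for $\beta=1,2,4$ the polynomial $C_\kappa^{(2/\beta)}$ normalised as in (\ref{15.ckp}) is, up to that normalisation, the \emph{zonal spherical polynomial} of the Gelfand pair $(GL_N(\R),O(N))$, $(GL_N(\C),U(N))$, or the quaternionic analogue, respectively. For (\ref{16.jl}), I would first restrict to positive definite $A,B$ and set $L_\kappa(A,B)=\int C_\kappa^{(2/\beta)}(AU^\dagger BU)\,d^{\rm H}U$, with $C_\kappa^{(2/\beta)}(M)$ meaning the symmetric function $C_\kappa^{(2/\beta)}$ evaluated on the eigenvalues of $M$ (equivalently the corresponding polynomial in $\tr M,\tr M^2,\dots$, so that it is defined for all $M$). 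By the homogeneity $C_\kappa^{(\alpha)}(cX)=c^{|\kappa|}C_\kappa^{(\alpha)}(X)$ and Haar invariance (conjugating $A$ or $B$ by a group element is absorbed into the integration variable), $L_\kappa$ is a symmetric polynomial of degree $|\kappa|$ in the eigenvalues of $A$ and separately in those of $B$. Conjugating by $A^{1/2}$ shows the eigenvalues of $AU^\dagger BU$ coincide with those of $A^{1/2}U^\dagger BU A^{1/2}$, so $L_\kappa(A,B)$ is precisely the left side of the functional equation $\int_K\varphi(xky)\,dk=\varphi(x)\varphi(y)$ characterising the zonal spherical functions $\varphi=\varphi_\kappa$ of the associated symmetric space, normalised by $\varphi_\kappa(\mathbb I)=1$.

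Since $\varphi_\kappa(\cdot)=C_\kappa^{(2/\beta)}(\cdot)/C_\kappa^{(2/\beta)}((1)^N)$ — for $\beta=2$ this is Schur--Weyl duality (one may alternatively just apply Schur's lemma to $\int_{U(N)}\rho_\kappa(U)^{-1}\rho_\kappa(B)\rho_\kappa(U)\,dU$ directly), for $\beta=1$ it is the theorem of James \cite{Ja64}, and for $\beta=4$ its symplectic counterpart \cite{Ra95}; see also \cite{Ma95} — inserting this into the functional equation and clearing the common factor $C_\kappa^{(2/\beta)}((1)^N)$ gives (\ref{16.jl}) for positive definite $A,B$, the proportionality constant being checked by putting $A=B=\mathbb I$. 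Because both sides of (\ref{16.jl}) are polynomials in the entries of $A$ and $B$ (the left side because integration preserves polynomiality), the identity extends from the Zariski-dense set of positive definite pairs to arbitrary $A,B$; in the quaternion case one works throughout with the $N$ distinct eigenvalues, as in the statement.

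For (\ref{16.9.2}) there is a short self-contained route. Writing $\rho_\lambda\colon GL_N(\C)\to GL(\mathbb{S}^\lambda(\C^N))$ for the irreducible polynomial representation of highest weight $\lambda$ (nonzero only for $\ell(\lambda)\le N$), one has $s_\lambda(AU)=\tr_{\mathbb{S}^\lambda}(\rho_\lambda(A)\rho_\lambda(U))$ and, for $U$ unitary, $s_\kappa(U^\dagger A^\dagger)=\overline{s_\kappa(AU)}$. As the $\mathbb{S}^\lambda(\C^N)$ are pairwise inequivalent unitary irreducibles of $U(N)$, Schur orthogonality $\int_{U(N)}\rho_\lambda(U)_{ji}\,\overline{\rho_\kappa(U)_{lk}}\,d^{\rm H}U=\delta_{\lambda\kappa}\delta_{ik}\delta_{jl}/\dim\mathbb{S}^\lambda$ collapses the integral to $\delta_{\lambda\kappa}\,\tr_{\mathbb{S}^\lambda}(\rho_\lambda(AA^\dagger))/\dim\mathbb{S}^\lambda=\delta_{\lambda\kappa}\,s_\lambda(AA^\dagger)/s_\lambda((1)^N)$, which is (\ref{16.9.2}) on recalling $C_\kappa^{(1)}=(|\kappa|!/h_\kappa')\,s_\kappa$ so that the hook-product factor cancels in the ratio. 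The identities (\ref{16.9.1}) and (\ref{16.9.3}) follow the same template with $U(N)$ replaced by $O(N)$, resp.\ ${\rm Sp}(2N)$, and Schur orthogonality replaced by the classical branching rules $GL_N(\C)\!\downarrow\!O(N)$ and $GL_{2N}(\C)\!\downarrow\!{\rm Sp}(2N)$: the restriction of $\mathbb{S}^\lambda$ contains the trivial representation with multiplicity exactly one precisely when $\lambda=2\kappa$ (all parts even), resp.\ $\lambda=\kappa^2$ (each part repeated), and with multiplicity zero otherwise. Hence $\int_{O(N)}\rho_\lambda(O)\,d^{\rm H}O$ is the orthogonal projection onto the (at most one-dimensional) invariant subspace, so $\langle s_\lambda(AO)\rangle_{O(N)}=\langle v,\rho_\lambda(A)v\rangle$ for a unit invariant vector $v$ when $\lambda=2\kappa$ and vanishes otherwise; this matrix coefficient is right-$O(N)$-invariant, hence a symmetric polynomial of degree $|\kappa|$ in the eigenvalues of $AA^T$, and it obeys the spherical functional equation with value $1$ at $A=\mathbb I$, so by the characterisation used above it equals $C_\kappa^{(2)}(AA^T)/C_\kappa^{(2)}((1)^N)$; the symplectic case is identical with $A^T$ replaced by $A^\dagger$, $\lambda=\kappa^2$, and $\alpha=1/2$.

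The step I expect to be the main obstacle is the identification, for $\beta=1$ and $\beta=4$, of the combinatorially defined Jack polynomial $C_\kappa^{(2/\beta)}$ with the zonal spherical function of the relevant Gelfand pair (equivalently, with the group average on the right of (\ref{16.jl})): this is James's theorem and its symplectic analogue, genuinely nontrivial classical results that a review should cite rather than reprove. Granting that identification, everything else — the passage from positive definite to arbitrary $A,B$, the reduction of (\ref{16.9.1}) and (\ref{16.9.3}) to branching multiplicities, and the bookkeeping of the $C$-normalisation (\ref{15.ckp}) — is routine.
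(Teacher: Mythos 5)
The paper offers no proof of Proposition \ref{P5.1}: it is stated as a classical result with citations to James \cite{Ja64} (the case $\beta=1$), Macdonald \cite{Ma95} and Rains \cite{Ra95}, so there is no in-paper argument to compare yours against step by step. Your sketch is correct and is essentially the route taken in those references. For (\ref{16.jl}) you reduce the identity, for positive definite $A,B$, to the spherical functional equation $\int_K\varphi(xky)\,dk=\varphi(x)\varphi(y)$ for the Gelfand pairs $(GL_N(\mathbb R),O(N))$, $(GL_N(\mathbb C),U(N))$ and the quaternionic analogue, plus the identification of $C_\kappa^{(2/\beta)}$, normalised as in (\ref{15.ckp}), with the zonal spherical function; you rightly flag that for $\beta=1,4$ this identification is James's theorem and its symplectic analogue, to be cited rather than reproved. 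Note, though, that for these $\beta$ the splitting property (\ref{16.jl}) is essentially equivalent to the spherical-function property, so your argument for (\ref{16.jl}) is in substance a citation --- exactly what the paper does --- while the $\beta=2$ case is genuinely elementary via Schur's lemma, as you observe. What your proposal adds beyond the paper is a fully self-contained derivation of (\ref{16.9.2}) from Schur orthogonality (the computation is sound: $s_\kappa(U^\dagger A^\dagger)=\overline{s_\kappa(AU)}$ for unitary $U$, and the hook factors in $C_\kappa^{(1)}\propto s_\kappa$ cancel in the ratio $C_\kappa^{(1)}(AA^\dagger)/C_\kappa^{(1)}((1)^N)$), together with the reduction of (\ref{16.9.1}) and (\ref{16.9.3}) to the multiplicity-one branching statements for $GL_N(\mathbb C)\downarrow O(N)$ (trivial constituent iff $\lambda=2\kappa$) and $GL_{2N}(\mathbb C)\downarrow {\rm Sp}(2N)$ (iff $\lambda=\kappa^2$), combined once more with the cited identification of the resulting bi-invariant matrix coefficient with $C_\kappa^{(2)}(AA^T)/C_\kappa^{(2)}((1)^N)$, respectively $C_\kappa^{(1/2)}(AA^\dagger)/C_\kappa^{(1/2)}((1)^N)$. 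This is the standard representation-theoretic derivation found in \cite{Ra95} and \cite{Ma95} (see also \cite[Ch.~7]{Mu82} for $\beta=1$). The only points glossed are routine: the extension from positive definite to general $A,B$ by polynomiality, and the bookkeeping of the doubly degenerate eigenvalues in the quaternion case.
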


Application to random non-Hermitian random matrices relies on consequences of these group integrals
in the case that the non-Hermitian random matrices being averaged over are bi-unitary invariant
\cite{Ta84}, \cite{Ra95}, \cite{FR09}, \cite[Exercises 13.4 q.1]{Fo10}.

\begin{cor}
For $\beta = 1,2$ and 4, let $X,A,B$ be $N \times N$ matrices with real, complex and
quaternion entries respectively. Let $X$ be random matrix chosen from a left and right
unitary invariant distribution, with the unitary matrices drawn from the orthogonal group,
unitary group and symplectic unitary group respectively. We have
\begin{equation}\label{16.jlX}
\langle
C_\kappa^{(2/\beta)}(A X^\dagger B X) \rangle_X =
{C_\kappa^{(2/\beta)}(A)C_\kappa^{(2/\beta)}(B) \over (C_\kappa^{(2/\beta)}((1)^{N}))^2} 
\langle
C_\kappa^{(2/\beta)}(X X^\dagger) \rangle_X.
\end{equation}
As in Proposition \ref{P5.1}, in the quaternion case it is required
that all matrices have double degenerate eigenvalues, and only one copy is
included in the arguments of $C_\kappa^{(1/2)}$.
Also
\begin{eqnarray}
\langle s_{\lambda}(A X) \rangle_{X} & = &
\left \{ \begin{array}{ll}
\displaystyle
{C_\kappa^{(2)}(A A^T) \over C_\kappa^{(2)}((1)^{N})} \langle C_\kappa^{(2)}( X X^T) \rangle_{X} , & \lambda = 2 \kappa, \\
0 & {\rm otherwise}, \end{array} \right.
\label{16.9.1X} \\
\langle s_{\lambda}(A X) s_\kappa( X^\dagger  A^\dagger)
\rangle_{X} & = & \delta_{\lambda, \kappa}
{C_\kappa^{(1)}( A  A^\dagger) \over C_\kappa^{(1)}((1)^{N})}   \langle C_\kappa^{(1)}( X X^\dagger) \rangle_{X} ,
\label{16.9.2X} \\
\langle s_{\lambda}(A  X) \rangle_{X} & = &
\left \{ \begin{array}{ll} \displaystyle
{C_\kappa^{(1/2)}( A A^\dagger) \over C_\kappa^{(1/2)}((1)^{N})}  \langle C_\kappa^{(1/2)}( X X^\dagger) \rangle_{X} , &
\lambda = \kappa^2,
\\ 0  & {\rm otherwise}, \end{array} \right.,
\label{16.9.3X}
\end{eqnarray}
where for the averages on the LHS the matrices are as required for $beta = 1,2$ and 4 respectively.
As in (\ref{16.9.3}), in (\ref{16.9.3X})
the meaning of
$C_\kappa^{(1/2)}(A A^\dagger)$ is $C_\kappa^{(1/2)}(\mathbf x)$ where
$\mathbf x$ is the $N$ independent eigenvalues of $A A^\dagger$.

\end{cor}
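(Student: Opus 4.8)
The plan is to exploit the defining property of a left‑ and right‑unitary invariant $X$: its distribution is unchanged under $X\mapsto VX$ and under $X\mapsto XW$ for $V,W$ Haar distributed on the relevant matrix group $G$ ($G=O(N),U(N),{\rm Sp}(2N)$ for $\beta=1,2,4$), and these auxiliary matrices may be taken independent of $X$ and of each other. Hence for any function $f$ of a matrix argument depending only on eigenvalues we may write $\langle f(\cdots X\cdots)\rangle_X=\langle\langle f(\cdots VXW\cdots)\rangle_{V,W}\rangle_X$ and carry out the inner Haar integrals first, using Proposition \ref{P5.1}. The second ingredient is that $C_\kappa^{(2/\beta)}$ (and likewise $s_\lambda$) depends on its matrix argument only through the eigenvalues, so $C_\kappa^{(2/\beta)}(PQ)=C_\kappa^{(2/\beta)}(QP)$ for square $P,Q$, and in particular $C_\kappa^{(2/\beta)}(X^\dagger MX)=C_\kappa^{(2/\beta)}(MXX^\dagger)$ using that $X$ is almost surely invertible.

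For (\ref{16.jlX}) I would proceed in two stages. First replace $X$ by $XW$ and average over $W\in G$: since $C_\kappa^{(2/\beta)}\!\big(A\,W^\dagger(X^\dagger BX)W\big)$ is exactly of the form on the left of (\ref{16.jl}), with the pair of matrices $A$ and $X^\dagger BX$, the $W$‑integral gives $C_\kappa^{(2/\beta)}(A)\,C_\kappa^{(2/\beta)}(X^\dagger BX)/C_\kappa^{(2/\beta)}((1)^N)$. Next rewrite $C_\kappa^{(2/\beta)}(X^\dagger BX)=C_\kappa^{(2/\beta)}\big((XX^\dagger)B\big)$ by cyclicity, replace $X$ by $VX$ so that $XX^\dagger\mapsto VXX^\dagger V^\dagger$, and average over $V$; after a further cyclic rearrangement this is again of the form (\ref{16.jl}) and yields $C_\kappa^{(2/\beta)}(B)\,C_\kappa^{(2/\beta)}(XX^\dagger)/C_\kappa^{(2/\beta)}((1)^N)$. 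Collecting the two factors of $C_\kappa^{(2/\beta)}((1)^N)$ and then taking the remaining average over $X$ gives (\ref{16.jlX}).

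The three Schur‑polynomial identities follow the identical two‑stage pattern, but with the first stage replaced by an application of (\ref{16.9.1}), (\ref{16.9.2}) or (\ref{16.9.3}) (with matrix argument $AX$ in place of $A$, using right invariance to insert the Haar factor) and the second stage an application of the zonal integral (\ref{16.jl}) at parameter $2/\beta=2,1,1/2$ to remove the remaining $X$‑dependence from $C_\kappa^{(2/\beta)}(AXX^TA^T)$, respectively $C_\kappa^{(2/\beta)}(AXX^\dagger A^\dagger)$, after which the normalisation constants are collected as before. The selection rules $\lambda=2\kappa$, $\lambda=\kappa$, $\lambda=\kappa^2$ are simply inherited from Proposition \ref{P5.1} at the first stage, so the averages vanish for all other $\lambda$, and the constant $C_\kappa^{(2/\beta)}(AA^T)$ (resp.\ $C_\kappa^{(2/\beta)}(AA^\dagger)$) emerges because $AXX^TA^T$ and $(A^TA)(XX^T)$ share the same spectrum. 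In the quaternion case one transcribes the bookkeeping of Proposition \ref{P5.1}: all matrices in sight have doubly degenerate eigenvalues with only one copy entering the arguments of $C_\kappa^{(1/2)}$, and the Haar average over ${\rm Sp}(2N)$ preserves this structure, so the reduction goes through verbatim.

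The routine computations are genuinely routine; the main obstacle is the bookkeeping — keeping straight which invariance (left or right) and which group is used to insert each Haar factor, and checking at each stage that after the cyclic rearrangement the integrand really is of the \emph{exact} form (\ref{16.jl}) or (\ref{16.9.1})--(\ref{16.9.3}) rather than a superficially similar expression. A secondary small point is justifying the eigenvalue‑cyclicity identities $C_\kappa^{(2/\beta)}(X^\dagger MX)=C_\kappa^{(2/\beta)}(MXX^\dagger)$ for the almost surely invertible square matrix $X$, and, in the $\beta=4$ case, verifying that passing between $X^\dagger X$ and $XX^\dagger$ does not disturb the ``one copy of each doubly degenerate eigenvalue'' convention.
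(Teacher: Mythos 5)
Your argument is correct and essentially reproduces the paper's own proof: insert a Haar factor via one-sided invariance, evaluate it with Proposition \ref{P5.1} (i.e.\ (\ref{16.jl}) or (\ref{16.9.1})--(\ref{16.9.3})), use cyclicity of the eigenvalues, then repeat with the other side's invariance --- your choice of right-then-left rather than the paper's left-then-right is immaterial, and invertibility of $X$ is not even needed for the cyclicity step since $PQ$ and $QP$ always share a spectrum. One small but useful observation: carrying the constants through your two-stage scheme for the Schur averages produces $(C_\kappa^{(2/\beta)}((1)^N))^2$ in the denominator, exactly as in (\ref{16.jlX}) and as is required for consistency with the way (\ref{16.9.2X}) is used in the proof of Proposition \ref{P5.2} (to give the factor $[N]_\kappa^{(1)}$ in (\ref{7.X2})), so the single power of $C_\kappa^{(2/\beta)}((1)^N)$ printed in (\ref{16.9.1X})--(\ref{16.9.3X}) is a typographical slip in the statement rather than a defect of your derivation.
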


\begin{proof}
Let $d\mu(X)$ be a left and right unitary invariant measure, and let $f(X)$ be integrable with respect to this measure.
Left unitary invariance tells us that
\begin{equation}\label{Aw.1}
\langle f(AXBX^\dagger) \rangle_X = \langle \langle f(AUXBX^\dagger U^\dagger) \rangle_U \rangle_X.
\end{equation}
Choosing $f = C_\kappa^{(\alpha)}$ and using (\ref{16.jl}) to evaluate the average over $U$ on the RHS shows
\begin{equation}\label{Aw.2}
\langle C_\kappa^{(\alpha)}(AXBX^\dagger) \rangle_X = {C_\kappa^{(\alpha)}(A) \over C_\kappa^{(\alpha)}((1)^{N})}
 \langle C_\kappa^{(\alpha)}(XBX^\dagger) \rangle_X.
\end{equation}
Regarding the RHS, right unitary invariance and further use of  (\ref{16.jl}) gives
\begin{equation}\label{Aw.23}
 \langle C_\kappa^{(\alpha)}(XBX^\dagger) \rangle_X =  \langle \langle f(XUBU^\dagger X^\dagger U^\dagger) \rangle_U \rangle_X=
  {C_\kappa^{(\alpha)}(B) \over C_\kappa^{(\alpha)}((1)^{N})}
 \langle C_\kappa^{(\alpha)}(XX^\dagger) \rangle_X
 \end{equation}
 and (\ref{16.jlX}) results.
 
 The derivation of the remaining identities from their companion identities in Proposition \ref{P5.1} is similar.
 \end{proof}

\subsection{Dualities for products and powers of characteristic polynomials}
\subsubsection{Complex entry matrices}
A duality formula for the average of the product of random matrices in (\ref{6.0v}),
generalising (\ref{6.0x}), has been given in the thesis of Serebryakov \cite[second statement of Th.~4.1]{Se23}.

\begin{prop}\label{P5.2}
We have
\begin{equation}\label{6.0v+}
\Big  \langle \prod_{l=1}^k \det(z_l \mathbb I_N - Z)   \det({w}_l \mathbb I_N - \bar{Z}) \Big \rangle_{Z \in {\rm GinUE}_{N}} =
\bigg \langle \det \begin{bmatrix} D_1 & -Y \\ Y^\dagger & D_2 \end{bmatrix}^N \bigg \rangle_{Y \in {\rm GinUE}_{k}},
 \end{equation}
 where $D_1, D_2$ are diagonal matrices with diagonal entries $\mathbf z$ and $\mathbf w$ respectively.
\end{prop}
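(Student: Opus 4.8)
The plan is to use Grassmann (fermionic) integration to represent each determinant on the left-hand side, then perform the Gaussian average over $Z$ explicitly, and finally recognize the resulting Grassmann integral as the right-hand side after a Hubbard--Stratonovich-type decoupling. First I would write, for each $l=1,\dots,k$,
\begin{equation}\label{pr1}
\det(z_l \mathbb I_N - Z) = \int \exp\Big( \sum_{a,b=1}^N (z_l \delta_{ab} - Z_{ab}) \bar{\psi}^{(l)}_a \psi^{(l)}_b \Big) \prod_{a=1}^N d\bar{\psi}^{(l)}_a d\psi^{(l)}_a,
\end{equation}
using the key formula (\ref{2.3e}), and similarly $\det(w_l \mathbb I_N - \bar Z) = \int \exp( \sum_{a,b} (w_l \delta_{ab} - \bar Z_{ab}) \bar{\chi}^{(l)}_a \chi^{(l)}_b ) \, d\bar\chi^{(l)} d\chi^{(l)}$. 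Collecting all $2k$ sets of Grassmann variables, the $Z$-dependent part of the exponent is linear in the entries $Z_{ab}$ and $\bar Z_{ab}$, so the GinUE average $\langle \exp( -\sum_{ab} Z_{ab} A_{ab} - \sum_{ab}\bar Z_{ab} B_{ab}) \rangle$ with $A_{ab} = \sum_l \bar\psi^{(l)}_a \psi^{(l)}_b$, $B_{ab} = \sum_l \bar\chi^{(l)}_a \chi^{(l)}_b$, is a standard complex Gaussian integral evaluating to $\exp( \sum_{ab} A_{ab} B_{ba} )$ (up to the normalization fixed by the $k=0$ case). The crucial point is that $\sum_{ab} A_{ab}B_{ba} = \sum_{a,b}\sum_{l,m} \bar\psi^{(l)}_a \psi^{(l)}_b \bar\chi^{(m)}_b \chi^{(m)}_a$, a quartic Grassmann term that couples the $\psi$ and $\chi$ sectors but is \emph{diagonal} in the matrix index $a,b$ in a bilinear-in-bilinear way.

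Next I would decouple this quartic term. Introduce a $k\times k$ complex matrix $Y$ (and its conjugate $Y^\dagger$) via a Gaussian (Hubbard--Stratonovich) integral, chosen so that integrating out $Y$ reproduces $\exp(\sum_{ab} A_{ab}B_{ba})$; concretely one writes $\exp( \sum_{a}\sum_{l,m} (\bar\psi^{(l)}_a \chi^{(m)}_a)(\bar\chi^{(m)}_a \psi^{(l)}_a) )$-type terms as a Gaussian average over $Y \in \mathrm{GinUE}_k$ of $\exp( \sum_a \sum_{l,m}( Y_{lm} \bar\psi^{(l)}_a \chi^{(m)}_a - Y^\dagger_{ml}\bar\chi^{(m)}_a \psi^{(l)}_a))$ or similar, after being careful with signs coming from anticommuting the Grassmanns. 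Once this is done, the exponent is bilinear in all the Grassmann variables for each fixed $a$, and it factorizes over $a=1,\dots,N$ into $N$ identical copies. Performing the Grassmann integral for a single index $a$ produces exactly $\det \begin{bmatrix} D_1 & -Y \\ Y^\dagger & D_2 \end{bmatrix}$ (a $2k\times 2k$ determinant in the combined $\psi/\chi$ flavour space, with $D_1=\operatorname{diag}(\mathbf z)$, $D_2 = \operatorname{diag}(\mathbf w)$), and since there are $N$ such identical factors we get the $N$-th power. What remains is the Gaussian average over $Y$, giving the right-hand side of (\ref{6.0v+}). The proportionality constant is fixed by matching the two sides as all $z_l, w_l \to \infty$, where both reduce to $\prod_l (z_l w_l)^N$.

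I expect the main obstacle to be the bookkeeping of signs in the Hubbard--Stratonovich step: the quartic term $\sum_{ab}\sum_{lm}\bar\psi^{(l)}_a \psi^{(l)}_b \bar\chi^{(m)}_b \chi^{(m)}_a$ must be rewritten by reordering anticommuting variables into a form $-\sum_{ab}\sum_{lm}(\bar\psi^{(l)}_a\chi^{(m)}_a)(\psi^{(l)}_b\bar\chi^{(m)}_b)$ so that it is negative-definite in the appropriate sense and admits a convergent Gaussian (complex-matrix) representation, and the off-diagonal blocks $-Y$, $Y^\dagger$ with precisely these signs only emerge after that reordering is done correctly. A secondary subtlety is justifying the interchange of the $Z$-average (a genuine convergent integral over $\mathbb C^{N^2}$) with the formal Grassmann integrals, which is routine since the $Z$-dependence is polynomial of bounded degree, but should be noted. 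Everything else --- the single-index Grassmann determinant evaluation and the constant fixing --- is a direct computation of the type already used for (\ref{1.0}) and (\ref{6.0a}).
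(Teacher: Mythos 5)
Your proposal is correct in outline, but it is a genuinely different route from the paper's. The paper proves (\ref{6.0v+}) by symmetric function theory: expanding both sides via the dual Cauchy identity (\ref{SM2}) at $\alpha=1$, using the bi-unitary invariance average (\ref{16.9.2X}) together with the structured Laguerre average (\ref{7.X1}) to reduce each side to the same Schur function sum, and finishing with the block determinant identity (\ref{7.X5}). Your Grassmann/Hubbard--Stratonovich argument is instead the field-theoretic route that the paper attributes to \cite{NK02} (for the special case (\ref{6.0x})) and to \cite{AV03}, \cite{Se23} for the elliptic generalisation, so it is a known and viable alternative; the steps you sketch do close up. Two small points: since $\det(w_l\mathbb I_N-\bar Z)=\det(w_l\mathbb I_N-Z^\dagger)$, your pairing $\sum_{ab}A_{ab}B_{ba}$ is the one obtained from the $Z^\dagger$ representation (the literal $\bar Z$ representation gives $\sum_{ab}A_{ab}B_{ab}$), and it is in fact the convenient channel: after reordering, the quartic term is $-\operatorname{Tr}(PQ)$ with $P_{lm}=\sum_a\bar\psi^{(l)}_a\chi^{(m)}_a$, $Q_{ml}=\sum_b\bar\chi^{(m)}_b\psi^{(l)}_b$, and the identity $\langle e^{\operatorname{Tr}(YU+Y^\dagger V)}\rangle_{Y\in{\rm GinUE}_k}=e^{\operatorname{Tr}(UV)}$ (a terminating series because $U,V$ are nilpotent Grassmann-even matrices, so your convergence worry is moot) decouples it and yields per matrix index exactly $\det\bigl[\begin{smallmatrix} D_1 & -Y\\ Y^\dagger & D_2\end{smallmatrix}\bigr]$; moreover any residual sign ambiguity in the off-diagonal blocks is harmless since the GinUE measure on $Y$ is invariant under $Y\mapsto e^{i\theta}Y$. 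As for what each approach buys: your Gaussian-decoupling proof is shorter for GinUE and extends naturally to the elliptic Ginibre interpolation and to external sources, but it relies on the entries being Gaussian so that the $Z$-average closes; the paper's Schur/zonal route only needs bi-unitary invariance plus a structured Jack average over the squared singular values, which is why it carries over verbatim to the truncated unitary and spherical ensembles in Proposition \ref{P5.2a}, where a Hubbard--Stratonovich step is not available.
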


\begin{proof}
Simple manipulation and use of the dual Cauchy identity (\ref{SM2}) with $\alpha = 1$ gives that the LHS of (\ref{6.0v+})
is equal to
\begin{equation}\label{7.X}
\prod_{l=1}^k (w_l z_l)^N \sum_{\mu, \kappa \subseteq (k)^N} (-1)^{|\mu| + | \kappa|} s_{\mu'}(1/\mathbf z) s_{\kappa'}(1/\mathbf w)
\langle s_\kappa(Z) s_\mu(\bar{Z}) \rangle_{Z \in {\rm GinUE}_{N}},
 \end{equation}
 where the notation $1/\mathbf z$ (similarly $1/\mathbf w$) is used to denote the vector with entries the reciprocal of the entries of $\mathbf z$.
 The average in this expression is a particular case of (\ref{16.9.2X}), reducing the task to computing 
 $\langle s_\kappa(ZZ^\dagger)  \rangle_{Z \in {\rm GinUE}_{N}}$. This can be achieved by first changing variables
 $W = ZZ^\dagger$ (see \cite[Prop.~3.2.7]{Fo10}), then changing variables to the eigenvalues and eigenvectors of $\mathbf W$ (see
 \cite[Prop.~3.2.2]{Fo10}), and finally making use of the special case $\alpha = 1$, $a=0$ of the known Jack polynomial
 average (see e.g.~\cite[Eq.~(12.152)]{Fo10})
 \begin{equation}\label{7.X1}
 \langle P_\kappa^{(\alpha)}(\mathbf x) \rangle_{{\rm ME}_{2/\alpha,N}[\lambda^a e^{-\lambda}]} =
P_\kappa^{(\alpha)}((1)^N) [a+1+ (N-1)/\alpha]_\kappa^{(\alpha)}.
 \end{equation}
 Consequently the LHS of (\ref{6.0v+}) is reduced to the form
 \begin{equation}\label{7.X2} 
 \prod_{l=1}^k (w_l z_l)^N   \sum_{ \kappa \subseteq (k)^N}  s_{\kappa'}(1/\mathbf z) s_{\kappa'}(1/\mathbf w) [N]_\kappa^{(1)} =
  \prod_{l=1}^k (w_l z_l)^N   \sum_{ \kappa \subseteq (N)^k}  s_{\kappa}(1/\mathbf z) s_{\kappa}(1/\mathbf w) [N]_{\kappa'}^{(1)}. 
 \end{equation}

On the other hand, the RHS of (\ref{7.X2}) can be rewritten by substituting for $ s_{\kappa}(1/\mathbf z) s_{\kappa}(1/\mathbf w) $ 
according to (\ref{16.jlX}) with $\alpha = 1$, $N \mapsto k$, $A=D_1^{-1}$, $B = D_2^{-1}$, simplified by making
further use of (\ref{7.X1}), now with $\alpha = 1$, $N=k$ and $a=0$. This shows that (\ref{7.X2}) is equal to
 \begin{equation}\label{7.X2b} 
  \prod_{l=1}^k (w_l z_l)^N   \sum_{ \kappa \subseteq (N)^k}  [N]_{\kappa'}^{(1)} {s_\kappa((1)^k) \over  [k]_{\kappa}^{(1)} }
  \langle s_\kappa(D_1^{-1} X^\dagger D_2^{-1} X) \rangle_{X \in {\rm GinUE}_k}.
   \end{equation}
  But from  \cite[Eqns.~(12.104), (12.105), (13.1) and the fact that $h_\kappa|_{\alpha =1} =   h_\kappa'|_{\alpha =1}$]{Fo10} we have
   \begin{equation}\label{7.X2c} 
  [N]_{\kappa'}^{(1)} {s_\kappa((1)^k) \over  [k]_{\kappa}^{(1)} } = s_{\kappa'}((1)^N).
  \end{equation}
  With this noted, the sum over $\kappa    \subseteq (N)^k $ can be performed using the   dual Cauchy identity (\ref{SM2}) with $\alpha = 1$
  to give in place of (\ref{7.X2b}) the expression
     \begin{equation}\label{7.X2d}
    \prod_{l=1}^k (w_l z_l)^N    \Big \langle \prod_{l=1}^k (1 + \gamma_l)^N \Big \rangle,
   \end{equation}   
 where $\{\gamma_l\}$ are the eigenvalues   of $Z_1^{-1} X^\dagger Z_2^{-1} X$. Use of the block determinant identity
   \begin{equation}\label{7.X5} 
\det   \begin{bmatrix} A & B \\ C & D \end{bmatrix}   = \det AD \det (\mathbb I_k - A^{-1} B D^{-1} C),
   \end{equation}
   identifies (\ref{7.X2d}) with the RHS  of (\ref{6.0v+}).
   \end{proof}

%
%

 \begin{remark} $ {}$ \\
 1.~The duality (\ref{6.0v+}) can be generalised so that $Z$ and $\bar{Z}$ are replaced by $\Omega Z$ and $\Sigma \bar{Z}$ on the LHS
 \cite[first statement of Th.~4.1]{Se23}. Such generalised dualities have found application to the study of the spectral density in
 the Ginibre ensemble subject to an additive or multiplicative low rank perturbation \cite{LZ22}.\\
 2.~An alternative to the use of Schur polynomials in the derivation of the just mentioned generalisation of (\ref{6.0v+}) is to make use of
a certain two-sided matrix diffusion operator identity $\Delta_X Q(X,Y) \propto  
\Delta_Y Q(X,Y)$  for $Q(X,Y)$ a particular kernel function relating to the averages and $\Delta_Z$ a (generalised) Laplacian
\cite{Gr16}, \cite{LZ24}. \\
3.~For $H_1,H_2$ independent GUE matrices, the elliptic Ginibre ensemble is specified by matrices of the form
$\sqrt{1 + \tau} H_1 + \sqrt{1 - \tau} i H_2$, $|\tau| \le  1$; see e.g.~\cite[\S 2.3]{BF24}. This ensemble thus interpolates between the
GUE ($\tau = 1$) and the GUE ($\tau = 0$). Grassmann integration methods have been used to deduce the duality identity for products of
characteristic polynomials \cite{AV03}, \cite[Eq.~(5.3)]{Se23}
\begin{multline}
\bigg \langle \prod_{j=1}^k \det ( J - z_j \mathbb I_N) \det (J^\dagger - w_j \mathbb I_N) \bigg \rangle_{J \in {\rm GinUE}_N} \\
\propto  \bigg \langle \prod_{j=1}^k \det  \begin{bmatrix} - i D_1 + \sqrt{2\tau} A & C \\
C^\dagger & -i D_2 + \sqrt{2 \tau} B \end{bmatrix}^N\bigg \rangle_{\substack{A,B \in {\rm GUE}_k  \\ C \in {\rm GinUE}_k}},
 \end{multline} 
 where $D_1 = {\rm diag} (z_1,\dots,z_k)$ and $D_2 = {\rm diag} (w_1,\dots,w_k)$. In the GinUE case $\tau = 0$, this agrees with (\ref{6.0v+}).
 Also, taking the limit $w_1,\dots, w_k \to \infty$ gives agreement with the GUE result (\ref{2.X1}).
 \end{remark}
 
 Inspection of the proof of Proposition \ref{P5.2} shows that a key role is played by the eigenvalue distribution of $Z Z^\dagger$ belonging to
 a classical Hermitian ensemble (Laguerre ensemble), which moreover permit the structured formula (\ref{7.X1}) for the Jack polynomial
 average. In addition, to identify the evaluation of the LHS of the duality identity in the form of (\ref{7.X2}) with the RHS, it is crucial that the Jack
 polynomial average contains as a factor $P_\kappa^{(1)}((1)^N) [N]_\kappa^{(1)}$. 
 
 The first of these requirements is met by the ensemble of $N \times N$ non-Hermitian matrices
  formed by deleting $K$ rows and
 $K$ columns from an $(N+K) \times (N+K)$ unitary matrix chosen with Haar measure. We will denote this ensemble by ${\rm TrUE}_{(N,K)}$.
  One has that the distribution of squared singular values are the case
 $a_1 = 0$ and $a_2 = K - N$ of the Jacobi ensemble ME${}_{2,N}[x^{a_1} (1 - x)^{a_2}]$ \cite{FS03}. This requirement is met
 too by the so-called complex spherical ensemble of non-Hermitian matrices $G_1^{-1} G_2$, where $G_1, G_2$ are independent
 GinUE matrices, as it is by the generalisation of the complex spherical ensemble defined by the construction $(G^\dagger G)^{-1/2} X$,
 where $G$ is an $(N+K) \times N$  complex standard Gaussian matrix, and $X$ is a GinUE matrix (the spherical ensemble
 corresponds to the case $K=0$). Denoting this latter ensemble
 by SrUE${}_{(N,K)}$, one has that
  the squared singular values are the case $b_1=0$, $b_2=K$,  of the so-called Jacobi prime
 ensemble \cite[\S 2.4]{FI18} (a particular linear fractional transformation of the Jacobi ensemble)  ME${}_{2,N}[x^{b_1} (1 + x)^{-b_2-2N}]$,
 supported on $x > 0$ \cite{Kr09}, \cite[Exercises 3.6 q.3]{Fo10}. 
 
 It is also the case that the Jacobi and Jacobi prime ensembles admit structured formulas for
 averages of Jack polynomials. Thus we have \cite{Ka97g}, \cite{Ka93}
 (see \cite[Eq.~(12.143)]{Fo10})
\begin{equation}\label{MKK}
\langle P_\kappa^{(\alpha)}(\mathbf x) \rangle_{{\rm ME}_{2/\alpha,N}[\lambda^{a_1}(1 - \lambda)^{a_2}]}
= P_\kappa^{(\alpha)}((1)^N)
\frac{[a_1+1+(N-1)/ \alpha ]_\kappa^{(\alpha)}}{[a_1 + a_2  + 2 +2(N-1)/\alpha ]_\kappa^{(\alpha)}},
 \end{equation}
and \cite{Wa05},  \cite{FS09}
\begin{equation}\label{W}
\langle  P_\kappa^{(\alpha)}(\mathbf x)  \rangle_{{\rm ME}_{2/\alpha,N}[\lambda^{b_1}(1+\lambda)^{-b_1 - b_2 - 2 - 2(N-1)/\alpha} ]}
  = P_\kappa^{(\alpha)}((-1)^N) 
{[b_1 + 1 + (N-1)/ \alpha]_\kappa^{(\alpha)} \over [-b_2]_\kappa^{(\alpha)} }.
\end{equation}

It fact seeking a duality identity analogous to (\ref{6.0v+}) for  ${\rm TrUE}_{(N,K)}$ gives rise to an average involving
 SrUE${}_{(N',K')}$ for certain $N',K'$ \cite{SSD23}. Similarly for the case of starting with the average of a product of
 characteristic polynomials over SrUE${}_{(N,K)}$.
 
\begin{prop}\label{P5.2a}
We have
\begin{equation}\label{6.0v+B}
\Big  \langle \prod_{l=1}^k \det(z_l \mathbb I_N - Z)   \det({w}_l \mathbb I_N - \bar{Z}) \Big \rangle_{Z \in {\rm TrUE}_{(N,K)}} =
\bigg \langle \det \begin{bmatrix} D_1 & -Y \\ Y^\dagger & D_2 \end{bmatrix}^N \bigg \rangle_{Y \in {\rm SrUE}_{(k,N+K)}},
 \end{equation}
 where $D_1, D_2$ are diagonal matrices with diagonal entries $\mathbf z$ and $\mathbf w$ respectively. Also
 \begin{equation}\label{6.0vA}
\Big  \langle \prod_{l=1}^k \det(z_l \mathbb I_N - Z)   \det({w}_l \mathbb I_N - \bar{Z}) \Big \rangle_{Z \in {\rm SrUE}_{(N,K)}} =
\bigg \langle \det \begin{bmatrix} D_1 &  - Y \\ Y^\dagger & D_2 \end{bmatrix}^N \bigg \rangle_{Y \in {\rm TrUE}_{(k,K-k)}},
 \end{equation}
where it is required $K \ge k$.

\end{prop}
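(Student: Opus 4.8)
The plan is to follow the scheme of the proof of Proposition~\ref{P5.2}, which applies once one knows that the squared singular values of the matrix being averaged over form a classical $\beta=2$ ensemble with a structured Jack polynomial average. Both ${\rm TrUE}_{(N,K)}$ and ${\rm SrUE}_{(N,K)}$ are left and right unitary invariant, and their squared singular value distributions are, respectively, the Jacobi ensemble ${\rm ME}_{2,N}[x^0(1-x)^{K-N}]$ \cite{FS03} and the Jacobi prime ensemble ${\rm ME}_{2,N}[x^0(1+x)^{-K-2N}]$ \cite{Kr09}; correspondingly, the Laguerre Jack average (\ref{7.X1}) used in the proof of Proposition~\ref{P5.2} is to be replaced by the Jacobi average (\ref{MKK}), or the Jacobi prime average (\ref{W}), with $\alpha=1$. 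I describe the argument for (\ref{6.0v+B}); (\ref{6.0vA}) is handled identically with the roles of the two ensembles interchanged.

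First I would expand the left-hand side of (\ref{6.0v+B}) exactly as (\ref{7.X}) is obtained: writing $\det(z_l\mathbb I_N - Z) = z_l^N\prod_j(1-\lambda_j/z_l)$, similarly for the conjugated factors, and applying the dual Cauchy identity (\ref{SM2}) with $\alpha=1$ gives a double sum over $\kappa,\mu\subseteq(k)^N$ of $(-1)^{|\kappa|+|\mu|}s_{\kappa'}(1/\mathbf z)s_{\mu'}(1/\mathbf w)\langle s_\kappa(Z)s_\mu(\bar Z)\rangle$, times $\prod_l(w_lz_l)^N$. By bi-unitary invariance the average is evaluated from (\ref{16.9.2X}) with $A=\mathbb I_N$, which forces $\mu=\kappa$ and leaves $\langle s_\kappa(ZZ^\dagger)\rangle$; the latter is, for $Z\in{\rm TrUE}_{(N,K)}$, a Jacobi Jack average, evaluated by (\ref{MKK}) at $\alpha=1$, $a_1=0$, $a_2=K-N$ to be proportional to $s_\kappa((1)^N)\,[N]_\kappa^{(1)}/[N+K]_\kappa^{(1)}$. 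Thus the left-hand side is reduced to a single sum over $\kappa\subseteq(k)^N$ which differs from the GinUE reduction (\ref{7.X2}) only by the insertion of the factor $1/[N+K]_\kappa^{(1)}$.

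Then I would re-sum this into the stated right-hand side. Passing to the conjugate partition (so that $(k)^N$ is replaced by $(N)^k$) and using $s_\lambda((1)^n)=[n]_\lambda^{(1)}/H_\lambda$, the identity (\ref{7.X2c}), and the reflection formula $[u]_\kappa^{(1)}=(-1)^{|\kappa|}[-u]_{\kappa'}^{(1)}$, the extra denominator $[N+K]_\kappa^{(1)}$ turns into $[-(N+K)]_{\kappa'}^{(1)}$ up to a sign $(-1)^{|\kappa|}$. This is precisely the factor produced by averaging $s_\sigma(YY^\dagger)$ over ${\rm SrUE}_{(k,N+K)}$: by (\ref{W}) at $\alpha=1$, $b_1=0$, $b_2=N+K$ this average is proportional to $s_\sigma((-1)^k)\,[k]_\sigma^{(1)}/[-(N+K)]_\sigma^{(1)}$, with the sign $(-1)^{|\sigma|}$ from $s_\sigma((-1)^k)$ absorbing the sign above, while (\ref{16.jlX}) with $\alpha=1$, $N\mapsto k$, $A=D_1^{-1}$, $B=D_2^{-1}$ supplies the factors $s_\sigma(1/\mathbf z)\,s_\sigma(1/\mathbf w)$. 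Re-applying the dual Cauchy identity (\ref{SM2}) and then the block determinant identity (\ref{7.X5}) reassembles the sum as $\prod_l(w_lz_l)^N\langle\prod_l(1+\gamma_l)^N\rangle$ with $\{\gamma_l\}$ the eigenvalues of $D_1^{-1}YD_2^{-1}Y^\dagger$, which is the right-hand side of (\ref{6.0v+B}). For (\ref{6.0vA}) the left-hand side is instead treated with the Jacobi prime average (\ref{W}) at $b_1=0$, $b_2=K$, giving the factor $1/[-K]_\kappa^{(1)}$, and the right-hand side is an average over ${\rm TrUE}_{(k,K-k)}$ whose squared singular values are ${\rm ME}_{2,k}[x^0(1-x)^{K-2k}]$, governed by (\ref{MKK}) at $a_2=K-2k$; the reflection formula again matches the two, and the requirement $K\ge k$ is exactly what makes ${\rm TrUE}_{(k,K-k)}$ well defined.

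The step I expect to be the main obstacle is the constant bookkeeping in the last paragraph: one must check that, after conjugating the partition, the ratio factor generated by the Jacobi (resp. Jacobi prime) average on the left --- $[N]_\kappa^{(1)}/[N+K]_\kappa^{(1)}$ for (\ref{6.0v+B}), $[N]_\kappa^{(1)}/[-K]_\kappa^{(1)}$ for (\ref{6.0vA}) --- coincides with the ratio factor that the partner ensemble's squared singular value distribution produces with the parameters claimed in the statement, and that all the remaining $s_\lambda((\pm1)^n)$, hook length and $|\kappa|!$ prefactors (which in the GinUE case of Proposition~\ref{P5.2} collapsed to leave the clean summand in (\ref{7.X2})) again combine, via (\ref{SM2}) and (\ref{7.X5}), into the single $N$th power of the block determinant. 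The various signs $(-1)^{|\kappa|}$ coming from the reflection formula and from the arguments $(-1)^n$ of the Schur functions must be tracked with care; consistency with the sign-free right-hand sides of (\ref{6.0v+B}) and (\ref{6.0vA}) is a useful check.
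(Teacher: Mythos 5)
Your proposal is correct and follows essentially the same route as the paper's proof: expand via the dual Cauchy identity, evaluate the bi-unitary average with (\ref{16.9.2X}), replace the Laguerre Jack average (\ref{7.X1}) by the Jacobi average (\ref{MKK}) (resp.\ the Jacobi prime average (\ref{W})) for the squared singular values, conjugate the partition, and re-sum through (\ref{16.jlX}), the partner ensemble's singular value average, the dual Cauchy identity and (\ref{7.X5}). The explicit sign bookkeeping you flag (via (\ref{7.X3}) and the evaluations $s_\sigma((\pm 1)^n)$) does work out as you anticipate, so there is no gap.
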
 

\begin{proof}
Following the working which lead to (\ref{7.X2}), with use of (\ref{MKK}) replacing use of (\ref{7.X1}), shows that the LHS is
equal to 
  \begin{equation}\label{7.X4+} 
 \prod_{l=1}^k (w_l z_l)^N   \sum_{ \mu \subseteq (N)^k}   s_{\mu}(1/\mathbf z) s_{\mu}(1/\mathbf w) { [N]_{\mu'}^{(1)} \over  [N+K]_{\mu'}^{(1)}  }.
 \end{equation}  
 In this we replace the sum over $\mu \subseteq (N)^k$ by the sum over $\mu \subseteq (k)^N$, which is valid provided we replace $\mu$ by
 $\mu'$ in the summand. We then return to the identity  (\ref{7.X2}), now with $N \mapsto k$
  to substitute for $s_{\mu'}(1/\mathbf z) s_{\mu'}(1/\mathbf w)$ analogous to the strategy of the proof of Proposition \ref{P5.2} 
  (this requires too further use of (\ref{MKK}), now with
  $\alpha = 1$,  $b_2 = K + N$). Further use of the dual Cauchy identity (\ref{SM2}) with $\alpha = 1$,
  and use too of  (\ref{7.X5})
 identifies the resulting expression with (\ref{7.X4+}).
  This line of working provides too the derivation of (\ref{6.0vA}).
\end{proof}

Consider the special case of (\ref{6.0v+}) $D_1 = z \mathbb I_k, D_2 = \bar{z}  \mathbb I_k$, with (\ref{7.X5}) used to rewrite the determinant
on the RHS, and then with the change of variables $W=Y Y^\dagger$. The duality now takes the form \cite[Eq.~(5.6) with $\Sigma = \mathbb I_N$]{FR09}
  \begin{equation}\label{7.Y}
  \langle | \det (z \mathbb I_N - Z) |^{2k} \rangle_{Z \in {\rm GinUE}_N} = |z|^{2k N} \Big \langle \det \Big ( \mathbb I_k + {1 \over |z|^2} W \Big )^N \Big \rangle_{W \in {\rm ME}_{2,k}[e^{-x}]},
  \end{equation}  
  thus relating a non-Hermitian average to one over an Hermitian ensemble. With $k$ chosen to be proportional to $N$, this identity has recently been used to
  study the large $N$ form of the Coulomb gas partition function interpretation of the LHS \cite{BSY24}; it was used earlier in \cite{DS22} to determine the
  asymptotics with $k$ fixed, with the result
   \begin{equation}\label{7.Y1} 
   \langle | \det (z \mathbb I_N - N^{-1/2} Z) |^{2k} \rangle_{Z \in {\rm GinUE}_N}   = N^{k^2/2} e^{k N (|z|^2 - 1)}
   {(2 \pi )^{k/2} \over G(1+k)} \Big ( 1 + {\rm o}(1) \Big ), \quad |z| < 1,
   \end{equation}   
   where $G(z)$ denotes the Barnes G-function.
  From \cite{WW19} this is known to be valid for continuous $k > -1$, while \cite{DS22} also gives an
   extension of (\ref{7.Y1}) to the case that $|1-z|\sqrt{N}$ is of order unity. 
   
    Identities analogous to (\ref{7.Y}) follow from (\ref{6.0v+B}) and (\ref{6.0vA}). In the case of ${\rm TrUE}_{N,K}$, with
    $N/(N+K) =: \mu$, an application is the asymptotic formula analogous to (\ref{7.Y1}) \cite{SS24}, \cite{Se23}
   \begin{multline}\label{7.Y1a} 
   \langle | \det (z \mathbb I_N -  Z) |^{2k} \rangle_{Z \in {\rm TrUE}_{N,K}}   \\= N^{k^2/2}\mu^{Nk} 
   \Big ( {1 - \mu \over 1 - | z|^2} \Big )^{N k (1 - 1/\mu)} 
   \Big ( {\sqrt{1 - \mu} \over 1 - | z|^2} \Big )^{k^2}  
   {(2 \pi )^{k/2} \over G(1+k)} \Big ( 1 + {\rm o}(1) \Big ), \quad |z| < |\mu|^{1/2},
   \end{multline}    
    (note that the constant factor is the same as in (\ref{7.Y1})). Potential theoretic aspects of the Coulomb gas interpretation
    of the LHS of (\ref{7.Y1a}) in the case that $k$ is proportional to $N$ is the subject of the recent work \cite{B25+}.
   
  The duality formula \cite[Eq.~(5.6)]{FR09} for $\Sigma = A^\dagger A \ne \mathbb I_N$ referred to in the above paragraph in relation to
  (\ref{7.Y}) is distinct from (\ref{6.0v+}), and moreover permits analogues that are distinct from those of Proposition \ref{P5.2a}.
  There are two stages to such identities. The first is to express the averages in terms of Jack polynomial based hypergeometric
  functions, and the second is the alternative form of the latter as random matrix averages over ensembles
  of size $r$.
  
  \begin{prop}\label{P5.2b}
We have
\begin{align}\label{5.33}
& \langle | \det ( x \mathbb I_N - A X) |^{2r} \rangle_{X \in {\rm GinUE}_N} = |x|^{2r N}
{\vphantom{F}}_2^{\mathstrut} F_0^{(1)}(-r,-r;|x|^{-2} A A^\dagger) \nonumber \\
& \langle | \det ( x \mathbb I_N - A X) |^{2r} \rangle_{X \in {\rm TrUE}_{N,K}} = |x|^{2r N}
{\vphantom{F}}_2^{\mathstrut} F_1^{(1)}(-r,-r;N+K ;|x|^{-2} A A^\dagger)  \nonumber \\
& \langle | \det ( x \mathbb I_N - A X) |^{2r} \rangle_{X \in {\rm SrUE}_{N,K}} = |x|^{2r N}
{\vphantom{F}}_2^{\mathstrut} F_1^{(1)}(-r,-r;- K ;- |x|^{-2} A A^\dagger), \quad (K > r).
\end{align}
Consequently
\begin{align}\label{5.34}
& \langle | \det ( x \mathbb I_N - A  X) |^{2r} \rangle_{X \in {\rm GinUE}_N} = 
\Big \langle \prod_{l=1}^r \det  ( |x|^2 \mathbb I_N + t_l A A^\dagger   ) \Big \rangle_{\mathbf t \in {\rm ME}_{2,r}[e^{-\lambda}]}\nonumber \\
& \langle | \det ( x \mathbb I_N - A X) |^{2r} \rangle_{X \in {\rm TrUE}_{N,K}} \propto \det (|x|^2 \mathbb I_N - A A^\dagger)^r
\Big \langle \prod_{l=1}^r \det  (t_l  \mathbb I_N - \Lambda    )    \Big \rangle_{\mathbf t \in {\rm ME}_{2,r}[\lambda^K \mathbbm 1_{0<\lambda<1}]}
 \nonumber \\
& \langle | \det ( x \mathbb I_N - A X) |^{2r} \rangle_{X \in {\rm SrUE}_{N,K}} =
\Big \langle \prod_{l=1}^r \det  (  |x|^2 \mathbb I_N + t_l A A^\dagger  )    \Big \rangle_{\mathbf t \in {\rm ME}_{2,r}[(1 - \lambda)^{K-2r} \mathbbm 1_{0<\lambda<1}]},
\end{align}
where $\Lambda = - AA^\dagger(|x|^2 - A A^\dagger)^{-1}$, and in the final equation it is required that $K \ge 2r$.

\end{prop}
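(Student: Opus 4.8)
The two displays will be established together, following the pattern of the proofs of Propositions~\ref{P5.2} and \ref{P5.2a}. For (\ref{5.33}) the plan is to write $| \det ( x \mathbb I_N - A X) |^{2r} = |x|^{2rN} \det ( \mathbb I_N - x^{-1} A X)^r \, \overline{\det ( \mathbb I_N - x^{-1} A X)^r}$ and expand each of the two factors by the dual Cauchy identity (\ref{SM2}) with $\alpha = 1$, using the $r$-fold repeated argument $(1/x)^r$ (respectively its complex conjugate); Jack polynomial homogeneity then pulls the powers of $x$ out of the expansion and leaves $s_{\kappa'}((1)^r)$ as the only remaining $r$-dependence. Taking the expectation over $X$ and invoking the bi-unitary invariance of GinUE, TrUE${}_{(N,K)}$, SrUE${}_{(N,K)}$ together with the Schur group integral (\ref{16.9.2X}) collapses the resulting double sum over partitions to a single sum over $\kappa \subseteq (r)^N$, in which the only surviving $X$-average is $\langle C_\kappa^{(1)}(X X^\dagger) \rangle_X$. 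Using the stated descriptions of the squared singular value distributions --- Laguerre with parameter $0$ for GinUE, Jacobi with $(a_1,a_2)=(0,K-N)$ for TrUE${}_{(N,K)}$, and Jacobi prime with $(b_1,b_2)=(0,K)$ for SrUE${}_{(N,K)}$ --- this average is evaluated in closed form by (\ref{7.X1}), (\ref{MKK}) and (\ref{W}) respectively at $\alpha = 1$, producing the generalised Pochhammer ratios $[N]_\kappa^{(1)}$, $[N]_\kappa^{(1)}/[N+K]_\kappa^{(1)}$ and $(-1)^{|\kappa|}[N]_\kappa^{(1)}/[-K]_\kappa^{(1)}$.

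It then remains to recognise each resulting Jack polynomial series as the claimed ${\vphantom{F}}_2^{\mathstrut} F_0^{(1)}$ or ${\vphantom{F}}_2^{\mathstrut} F_1^{(1)}$ via the definition (\ref{3.40}). This reduces to the combinatorial identities (valid at $\alpha = 1$, where $h_\kappa'$ is the ordinary hook product) $s_{\kappa'}((1)^r)\, h_\kappa' = (-1)^{|\kappa|} [-r]_\kappa^{(1)}$ and $[r]_{\kappa'}^{(1)} = (-1)^{|\kappa|} [-r]_\kappa^{(1)}$, each of which follows on comparing products over the cells of $\kappa$ and of $\kappa'$ and using $\prod_{s \in \kappa} h_\kappa'(s) = \prod_{s \in \kappa'} h_{\kappa'}'(s)$; one also notes that $[-r]_\kappa^{(1)}$ vanishes unless $\kappa_1 \le r$, matching the range $\kappa \subseteq (r)^N$ forced by the dual Cauchy expansion. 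The ratio $P_\kappa^{(1)}(A A^\dagger)/P_\kappa^{(1)}((1)^N)$ combines with $[N]_\kappa^{(1)}$ to cancel the $N$-dependence not present on the right-hand sides, the sign $(-1)^{|\kappa|}$ in the SrUE case is absorbed into the argument $-|x|^{-2} A A^\dagger$, and the hypothesis $K > r$ in the third line guarantees that both the Jacobi prime average and the Pochhammer $[-K]_\kappa^{(1)}$ are well defined on the relevant partitions.

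The formulas (\ref{5.34}) then follow by re-expressing each matrix-argument hypergeometric function obtained above as an $r$-dimensional random matrix average. In the GinUE and SrUE cases one expands the product $\prod_{l=1}^r \det ( \mathbb I_N + t_l |x|^{-2} A A^\dagger )$ by the dual Cauchy identity and averages $P_{\kappa'}^{(1)}(\mathbf t)$ term by term --- over ${\rm ME}_{2,r}[e^{-\lambda}]$ by (\ref{7.X1}), respectively over ${\rm ME}_{2,r}[(1-\lambda)^{K-2r} \mathbbm 1_{0<\lambda<1}]$ by (\ref{MKK}); the same two combinatorial identities identify the result with $|x|^{2rN} {\vphantom{F}}_2^{\mathstrut} F_0^{(1)}(-r,-r;|x|^{-2}A A^\dagger)$, respectively with $|x|^{2rN} {\vphantom{F}}_2^{\mathstrut} F_1^{(1)}(-r,-r;-K;-|x|^{-2}A A^\dagger)$ (the sign arising from $[K]_{\kappa'}^{(1)} = (-1)^{|\kappa|}[-K]_\kappa^{(1)}$). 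For the TrUE line the Euler transformation (\ref{2.4H}), applied to the vector of eigenvalues of $|x|^{-2} A A^\dagger$, rewrites ${\vphantom{F}}_2^{\mathstrut} F_1^{(1)}(-r,-r;N+K;|x|^{-2}A A^\dagger)$ with the prefactor $\det(\mathbb I_N - |x|^{-2} A A^\dagger)^{r}$ and argument $-A A^\dagger(|x|^2 \mathbb I_N - A A^\dagger)^{-1} = \Lambda$; multiplying through by $|x|^{2rN}$ produces the prefactor $\det(|x|^2 \mathbb I_N - A A^\dagger)^r$, after which the remaining ${\vphantom{F}}_2^{\mathstrut} F_1^{(1)}(-r,N+K+r;N+K;\Lambda)$ is identified with $\langle \prod_{l=1}^r \det (t_l \mathbb I_N - \Lambda) \rangle$ over ${\rm ME}_{2,r}[\lambda^K \mathbbm 1_{0<\lambda<1}]$ by the same dual Cauchy plus (\ref{MKK}) computation, which requires $K \ge 2r$ for that Jacobi average to be normalisable. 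The main obstacle will be the bookkeeping of the three distinct normalisations of the group-integral inputs against the single form (\ref{3.40}), and, in the TrUE case, tracking the Euler transformation through the change of argument so that the resulting ${\vphantom{F}}_2^{\mathstrut} F_1^{(1)}$ matches a Jacobi $\beta=2$ ensemble average of size $r$ rather than of size $N$.
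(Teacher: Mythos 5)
Your derivation of (\ref{5.33}) is essentially the paper's: you expand each $r$-th power of a characteristic polynomial factor in Schur functions, use bi-unitary invariance to collapse the double partition sum, evaluate $\langle s_\kappa(XX^\dagger)\rangle_X$ from the known squared singular value laws via (\ref{7.X1}), (\ref{MKK}) and (\ref{W}) at $\alpha=1$, and resum. The only cosmetic difference is that you expand via the dual Cauchy identity (\ref{SM2}) with repeated argument $(1)^r$ where the paper uses the binomial expansion (\ref{7.X6}); your hook identities $s_{\kappa'}((1)^r)h_\kappa'=(-1)^{|\kappa|}[-r]_\kappa^{(1)}$ and $[r]_{\kappa'}^{(1)}=(-1)^{|\kappa|}[-r]_\kappa^{(1)}$ are correct and show the two expansions coincide. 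One remark on the bookkeeping you rely on: the cancellation of the $N$-dependence (leaving $h_\kappa'$, hence the ${}_2F_0$/${}_2F_1$ coefficients) works only with the normalisation carrying $(C_\kappa^{(1)}((1)^N))^2$ in the denominator, as in (\ref{16.jlX}); the printed (\ref{16.9.2X}) omits the square, and your account implicitly uses the corrected version, as does the paper's own proof of Proposition \ref{P5.2}. For (\ref{5.34}) your route genuinely differs in two of the three lines: for GinUE and SrUE you verify the size-$r$ averages directly by dual Cauchy expansion plus termwise Jack averaging over ${\rm ME}_{2,r}[e^{-\lambda}]$ and ${\rm ME}_{2,r}[(1-\lambda)^{K-2r}]$, which is more elementary than the paper's path through the ${\vphantom{F}}_2^{\mathstrut}F_0^{(\alpha)} \to {\vphantom{F}}_1^{\mathstrut}F_1^{(\alpha)}$ identity (\ref{7.Z})/(\ref{4.41}) and the conjugate-partition manipulations (\ref{7.X3})--(\ref{7.Zc}); I checked your version and it closes correctly.

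There is, however, a soft spot in your treatment of the TrUE line of (\ref{5.34}). After the Euler transformation (\ref{2.4H}) — which you apply correctly, producing the prefactor and the argument $\Lambda$ — you assert that ${\vphantom{F}}_2^{\mathstrut}F_1^{(1)}(-r,N+K+r;N+K;\Lambda)$ is identified with the size-$r$ Jacobi average ``by the same dual Cauchy plus (\ref{MKK}) computation''. It is not literally the same computation: the factors $\det(t_l\mathbb I_N-\Lambda)$ are not of the form $\det(\mathbb I_N+t_lY)$, so the naive termwise expansion produces $\prod_l t_l^N\,s_{\kappa'}(1/\mathbf t)$ and one must either handle the complementary-partition bookkeeping or, as the paper does, simply invoke the Kaneko evaluation (\ref{2.4Ea+}) with ensemble size $r$, $N$ spectral variables, $\beta=2$, $a_1=K$, $a_2=0$ (whose parameters do match $(-r,N+K+r;N+K)$). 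The identification you claim is true, but as described it is not established. Relatedly, you attach the condition $K\ge 2r$ to this Jacobi average, but its weight $\lambda^K$ on $(0,1)$ needs no such restriction; the condition belongs to the SrUE line, where the weight $(1-\lambda)^{K-2r}$ requires $K-2r>-1$ (and the proposition states $K\ge 2r$), while $K>r$ is what you correctly use in the third line of (\ref{5.33}).
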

  
  \begin{proof}
  The derivation of (\ref{5.33}) is similar in all three cases. For definiteness we will consider the first only. Since the parameter
  $x$ can be inserted by scaling $\Sigma$, it suffices to consider the case $x=1$. With this done, we write
  $| \det ( x \mathbb I_N - A X) |^{2r}  = \det ( \mathbb I_N - A X)^r  \det ( \mathbb I_N - A^\dagger X^\dagger )^r$,
  and then expand each of these powers of characteristic polynomials according to 
      \begin{equation}\label{7.X6}  
    \det (\mathbb I_N - Q)^{r} =   \sum_{ \kappa \subseteq (N)^{r}} { [-r]_\kappa^{(1)} \over | \kappa|!} C_\kappa^{(1)}(Q),
  \end{equation}  
  (as implied by (\ref{2.4E}) with $\alpha = 1$)
 with parameters
  appropriately identified. The computation of
  the average is therefore reduced to computing $\langle C_\kappa^{(1)}(A X) C_\kappa^{(1)}(A^\dagger X^\dagger) \rangle_{X \in {\rm GinUE}_N}$,
  which we do using (\ref{16.9.2X}) to further reduce the problem to computing the average $\langle C_\kappa^{(1)}(XX^\dagger)  \rangle_{X \in {\rm GinUE}_N}$.
 This latter average has appeared in the proof of Proposition \ref{P5.2}, with its method of computation using (\ref{7.X1}) detailed therein. Use of
 (\ref{15.ckp}), 
 \begin{equation}\label{7.X7}  
  {[k]_\kappa^{(1)} \over C_\kappa^{(1)} ((1)^k)} = {1 \over |\kappa|!}(h_\kappa'|_{\alpha = 1})^2
    \end{equation}  
    (which is equivalent to (\ref{7.X2c}))
  and (\ref{3.40}) then allows the resulting expression for the LHS of the first average in
 (\ref{5.33}) can be identified with the Jack polynomial based hypergeometric function on the RHS.
 
 The task of going from (\ref{5.33}) to (\ref{5.34}) is to identify the Jack polynomial based hypergeometric functions with averages over
 matrix ensembles of size $r$. In the case of the first appearing ${\vphantom{F}}_2^{\mathstrut} F_1^{(1)}$ this is done by  applying the transformation
 (\ref{2.4H}) and then making use of the theory in the first paragraph of the proof of Proposition \ref{P2.6} with $N \mapsto r$, $p \mapsto N$
 in (\ref{2.4Ea}). For the second appearing ${\vphantom{F}}_2^{\mathstrut} F_1^{(1)}$, we consider its series form (\ref{3.40}) with the sum over $\kappa$
 replaced by a sum over $\kappa'$. The identity 
  \begin{equation}\label{7.X3} 
  [u]_\kappa^{(\alpha)} = (-\alpha)^{-|\kappa|} [- \alpha u ]_{\kappa'}^{(1/\alpha)},
 \end{equation}
 can then be used to rewrite the generalised Pochhammer symbols back in terms of $\kappa$.
 Also, we have that \cite[displayed equation above Eq.~(5.4) and Eq.~(3.18)]{FR09}
 \begin{equation}\label{7.Za} 
 h_{\kappa'}' = {[r \alpha]_\kappa^{(1/\alpha)} \over P_\kappa^{(1/\alpha)}((1)^r)}
 \end{equation}
 (note that this is independent of $r$ for the number of parts of $\kappa'$ less that or equal to $r$).
 Hence
 \begin{equation}\label{7.Zb}  
  {\vphantom{F}}_2^{\mathstrut} F_1^{(\alpha)}(-r,-b;-K;-A A^\dagger) = \sum_{\kappa' \subseteq (N)^r}  {[ \alpha b]_\kappa^{(1/\alpha)}  \over
[ \alpha K ]_\kappa^{(1/\alpha)}   } P_\kappa^{(1/\alpha)}((1)^r)   P_{\kappa'}^{(\alpha)}(A A^\dagger).
 \end{equation}
Except for the factor of $(-1)^{|\kappa|}$ we see from (\ref{MKK}) that the prefactors of $P_\kappa^{\alpha)}(A A^\dagger)$ in this expression can be replaced by the
average over ME${}_{2/\alpha,N}[\lambda^{a_1}(1 - \lambda)^{a_2}]$ with $\alpha \mapsto 1/\alpha$, $N = r$, $a_1 =  - 1 +(b-r+1)\alpha$ and
$a_2 = -1 +(K-b-r+1)\alpha$. After using this as a substitution, the sum over $\kappa'$ can be carried out according to the dual
Cauchy identity (\ref{SM2}) to give the ensemble average form
 \begin{equation}\label{7.Zc}  
 {\vphantom{F}}_2^{\mathstrut} F_1^{(\alpha)}(-r,-b;-K;A A^\dagger) =  \Big \langle \prod_{l=1}^r \det (\mathbb I_N - x_l A A^\dagger) \Big \rangle_{\mathbf x \in {\rm ME}_{2 \alpha,r}[\lambda^{a_1}(1 - \lambda)^{a_2}]}
 \bigg |_{\substack{a_1 =  - 1 +(b - r+1)\alpha \\ a_2 =  -1 + (K - b - r+1)\alpha}},
 \end{equation}
 valid provided $a_1,a_2 > -1$.
 Application of this gives the final duality in (\ref{5.34}).
 
  In the case of  the evaluation in terms of ${\vphantom{F}}_2^{\mathstrut} F_0^{(1)}$, 
  which is the first identity in (\ref{5.33}), one has available the identity \cite{BO05}
  \begin{equation}\label{7.Z} 
 {\vphantom{F}}_2^{\mathstrut} F_0^{(\alpha)}(-a,1+b-a+(n-1)/\alpha;x_1^{-1},\dots,x_n^{-1}) \propto (x_1 \cdots x_n)^{-a}
 {\vphantom{F}}_1^{\mathstrut} F_1^{(\alpha)}(-a;-b;-x_1,\dots,-x_n),
  \end{equation}
  valid for $a \in \mathbb Z^+$ and $b>a$. Scaling $x_j \mapsto \beta x_j/2 a_2$ and then taking the limit $a_2 \to \infty$ we have
  that ${\vphantom{F}}_1^{\mathstrut} F_1^{(\alpha)}$ with the first parameter a negative integer can
  be written as the average
  \begin{equation}\label{4.41} 
   {\vphantom{F}}_1^{\mathstrut} F_1^{(\alpha)}(-a;(a_1+n)/\alpha;x_1,\dots,x_n) \propto  \Big \langle \prod_{j=1}^n \prod_{l=1}^a (x_j - y_l) \Big \rangle_{\mathbf y \in {\rm ME}_{2 \alpha,a}[x^{a_1} e^{-\beta x /2}]}.
   \end{equation}
  Use of this with $\alpha = 1$, $a_1=0$, $a=n=r$ gives the first average in (\ref{5.34}). 
  \end{proof}
  
  \begin{remark} ${}$ \\
 1.~In the case $K=0$, by its definition TrUE${}_{N,K}$ becomes equal to $U(N)$ with Haar measure. For this latter ensemble, a duality identity of Fyodorov and
  Khoruzhenko \cite{FK07} states
   \begin{equation}\label{7.Zd}  
 \langle | \det ( x \mathbb I_N - A U) |^{2r} \rangle_{U \in U(N)}    \propto \Big \langle \prod_{l=1}^r \det ( |x|^2 \mathbb I_N + t_l A A^\dagger) \Big \rangle_{\mathbf t \in {\rm ME}_{2,r}[(1+ \lambda)^{-N-2r}]}.
  \end{equation}
  After a change of variables $t_l/(1+t_l) \mapsto t_l$ on the RHS, this can be checked to agree with the case $K=0$ of the second duality relation in (\ref{5.34}). 
  When $A$ is a multiplicative rank 1 perturbation of the identity, $A = {\rm diag} \, (a,1,\dots,1)$ with $|a|<1$, the eigenvalues of $A U$ relate to the point process for the
  zeros of the random power series $1/\mu - \sum_{j=1}^N c_j z^j$ where each $c_j$ is an independent standard complex Gaussian \cite{FI19} (this requires setting
  $a = 1/(\mu \sqrt{N})$ and taking $N \to \infty$). This point process is further studied in the recent work \cite{FKP24}.
  \\
  2.~One can check that the third identity in (\ref{5.34}) with $A = \mathbb I_N$ is consistent with the case $z_1 = \cdots = z_k =z$, $w_1 = \cdots = w_k =\bar{z}$ of
  (\ref{6.0vA}). Also, as with the first two identities in (\ref{5.34}), setting $A = \mathbb I_N$ allows for a Coulomb gas interpretation. Aspects of this in the case
  $r$ proportional to $N$ have been the subject of the recent work \cite{BFL25}.
  
  \end{remark}

\subsubsection{Real and quaternion entry matrices}
Under this heading attention will first be focussed on powers of characteristic polynomial averages
extending Proposition \ref{P5.2b}. As some preliminaries, we recall the notation GinOE${}_N$ (GinSE${}_N$) for 
$N \times N$ standard real (quaternion) Gaussian random matrices. We denote by
${\rm TrOE}_{N,K}$ (${\rm TrSE}_{N,K}$) the 
ensemble of $N \times N$ non-Hermitian matrices
  formed by deleting $K$ rows and
 $K$ columns from an $(N+K) \times (N+K)$ unitary matrix with real (quaternion) elements 
 chosen with Haar measure. Also, we use the notation ${\rm SrOE}_{N,K}$ (${\rm SrSE}_{N,K}$)
 for the ensemble of matrices of the form $(G^\dagger G)^{-1/2} X$, where $G$ is  
an $(N + K) \times N$ real (quaternion) standard Gaussian matrix, and $X$ is a GinOE (GinSE) matrix.
A feature of the eigenvalues in both the real and quaternion cases is that they appear in complex
conjugate pairs, and typically in the real case some eigenvalues will also be real \cite{Ed97}.
Further, in the quaternion case, when we define the characteristic polynomial $ \det ( x \mathbb I_N - B)$
 of a matrix $B$,
the viewpoint is that entries are quaternions in both $\mathbb I_N$ and
$B$. Thus as a complex matrix the size of $ x \mathbb I_N - B$ is $2N \times 2N$, and the
characteristic polynomial is of degree $2N$.

\begin{prop} (\cite{FR09} in relation to the first two equations and \cite{SSD23} for the next two.)
We have
\begin{align}\label{5.33a}
& \langle  \det ( x \mathbb I_N - A X) ^{r} \rangle_{X \in {\rm GinOE}_N} =  x^{r N}
{\vphantom{F}}_2^{\mathstrut} F_0^{(2)}(-r/2, (-r+1)/2 ;2 |x|^{-2} A A^\dagger) \nonumber \\
& \langle  \det ( x \mathbb I_N - A X) ^{r} \rangle_{X \in {\rm GinSE}_N} = x^{2r N}
{\vphantom{F}}_2^{\mathstrut} F_0^{(1/2)}(-r,-r-1;|x|^{-2} A A^\dagger/2) \nonumber \\
& \langle  \det ( x \mathbb I_N - A X)^{r} \rangle_{X \in {\rm TrOE}_{N,K}} = x^{r N}
{\vphantom{F}}_2^{\mathstrut} F_1^{(2)}(-r/2,(-r+1)/2;(N+K)/2 ;|x|^{-2} A A^\dagger)  \nonumber \\
& \langle  \det ( x \mathbb I_N - A X)^{r} \rangle_{X \in {\rm TrSE}_{N,K}} = x^{2r N}
{\vphantom{F}}_2^{\mathstrut} F_1^{(1/2)}(-r,-r-1;2(N+K) ;|x|^{-2} A A^\dagger)  \nonumber \\
& \langle  \det ( x \mathbb I_N - A X)^{r} \rangle_{X \in {\rm SrOE}_{N,K}} = x^{r N}
{\vphantom{F}}_2^{\mathstrut} F_1^{(2)}(-r/2,(-r+1)/2;- (K-1)/2 ;- |x|^{-2} A A^\dagger) \nonumber  \\
& \langle  \det ( x \mathbb I_N - A X) ^{r} \rangle_{X \in {\rm SrSE}_{N,K}} = x^{2r N}
{\vphantom{F}}_2^{\mathstrut} F_1^{(1/2)}(-r,-r-1;- 2K-1 ;- |x|^{-2} A A^\dagger),
\end{align}
where in the last two identities we require $K-1 > r$ and $2K +1> r$ respectively.
\end{prop}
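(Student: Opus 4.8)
The plan is to transcribe the proof of Proposition~\ref{P5.2b} essentially line by line, with the unitary group integral (\ref{16.9.2X}) replaced by its orthogonal counterpart (\ref{16.9.1X}) in the three real ($\beta=1$) cases and by its symplectic counterpart (\ref{16.9.3X}) in the three quaternion ($\beta=4$) cases, and with the $\alpha=1$ (Schur) averages replaced by $\alpha=2/\beta\in\{2,1/2\}$ zonal averages. Throughout, $r$ is taken to be a positive integer, the general case following by analytic continuation. First I would write $\det(x\mathbb I_N-AX)^r = x^{rN}\det(\mathbb I_N-x^{-1}AX)^r$, with $rN$ replaced by $2rN$ in the quaternion cases where the characteristic polynomial has complex degree $2N$, and expand the second factor in zonal polynomials via the $\alpha=1$ specialisation (\ref{7.X6}) of (\ref{2.4E}), namely $\det(\mathbb I_N-Q)^r=\sum_{\kappa\subseteq(N)^r}([-r]_\kappa^{(1)}/|\kappa|!)\,C_\kappa^{(1)}(Q)$ with $Q=x^{-1}AX$; here $C_\kappa^{(1)}$ is a constant multiple of the Schur polynomial $s_\kappa$.

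Next I would average term by term. Each of GinOE, TrOE, SrOE (resp.\ GinSE, TrSE, SrSE) is invariant under $X\mapsto O_1XO_2$ with $O_1,O_2$ drawn from the orthogonal (resp.\ symplectic unitary) group, so by (\ref{16.9.1X}) (resp.\ (\ref{16.9.3X})) the term $\langle s_\kappa(AX)\rangle_X$ vanishes unless $\kappa$ is of the form $2\mu$ (each part doubled) in the real cases, resp.\ $\mu^2$ (each part repeated twice) in the quaternion cases, when it equals a constant times $\big(C_\mu^{(2/\beta)}(AA^\dagger)/C_\mu^{(2/\beta)}((1)^N)\big)\langle C_\mu^{(2/\beta)}(XX^\dagger)\rangle_X$. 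The surviving average is then evaluated by passing to the squared singular values of $X$: these form the Laguerre $\beta$ ensemble for GinOE/GinSE, the Jacobi $\beta$ ensemble for TrOE/TrSE (with parameters the $\beta=1$, resp.\ $\beta=4$, analogues of the $a_1=0$, $a_2=K-N$ parametrisation that applies to TrUE), and the Jacobi prime $\beta$ ensemble for SrOE/SrSE, so that (\ref{7.X1}), (\ref{MKK}), (\ref{W}) apply respectively with $\alpha=2/\beta$. This produces one $\mu$-dependent numerator Pochhammer $[\,\cdot\,]_\mu^{(2/\beta)}$, and in the Tr and Sr cases one $\mu$-dependent denominator Pochhammer which becomes the third hypergeometric parameter (namely $(N+K)/2$ for TrOE and $-(K-1)/2$ for SrOE, and their quaternion doubles), together with $P_\mu^{(2/\beta)}((1)^N)$ and normalisation constants.

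The remaining task, and the main obstacle, is to recombine these ingredients into a Jack-polynomial hypergeometric series of the form (\ref{3.40}) with $\alpha=2/\beta$. The essential input is the duplication identity for the generalised Pochhammer symbol, $[b]_{2\mu}^{(1)}=4^{|\mu|}\,[b/2]_\mu^{(2)}\,[(b+1)/2]_\mu^{(2)}$ (the Jack analogue of $(b)_{2n}=4^n(b/2)_n((b+1)/2)_n$); applied with $b=-r$ it turns the single factor $[-r]_{2\mu}^{(1)}$ into the pair of numerator parameters $-r/2$, $(-r+1)/2$, while the stray $4^{|\mu|}$ is precisely absorbed by the factor $2$ in the argument $2|x|^{-2}AA^\dagger$ of the ${}_2F_0^{(2)}$ or ${}_2F_1^{(2)}$ and the $\alpha^{|\mu|}$ in (\ref{3.40}). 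In the quaternion cases one first uses (\ref{7.X3}) to rewrite $[-r]_{\mu^2}^{(1)}=[r]_{2\mu'}^{(1)}$, applies the same duplication formula to $\mu'$, and converts back via (\ref{7.X3}), obtaining $[-r]_{\mu^2}^{(1)}=[-r]_\mu^{(1/2)}[-r-1]_\mu^{(1/2)}$ and hence the parameters $-r$, $-r-1$ at $\alpha=1/2$, with the power of $2$ now matched by the argument $|x|^{-2}AA^\dagger/2$. The accompanying bookkeeping — showing that the hook-length and dimension constants $h'_{2\mu}|_{\alpha=1}$, $(2|\mu|)!$, $C_\mu^{(2/\beta)}((1)^N)$ assemble to $h'_\mu|_{\alpha=2/\beta}$, exactly as the companion identities (\ref{7.X7}), (\ref{7.X2c}) do in the complex case, and that $P_\mu^{(2/\beta)}((1)^N)$ cancels the $N$-dependent Pochhammer coming from the classical-ensemble average — is the part that needs genuine care; everything up to it is a direct copy of the argument for Proposition~\ref{P5.2b}. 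The overall proportionality constant is then fixed by letting $x\to\infty$, where both sides reduce to $x^{rN}$ (resp.\ $x^{2rN}$).
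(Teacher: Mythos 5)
Your proposal is correct and follows essentially the same route as the paper: expansion via (\ref{7.X6}), the selection rules $\lambda=2\mu$ / $\lambda=\mu^2$ from (\ref{16.9.1X}) and (\ref{16.9.3X}), reduction to $\langle P_\mu^{(2/\beta)}(\mathbf x)\rangle$ over the squared singular values (Laguerre, Jacobi, Jacobi prime $\beta$ ensembles) via (\ref{7.X1}), (\ref{MKK}), (\ref{W}), and the $\alpha=1\leftrightarrow\alpha=2,1/2$ duplication and hook-length identities, which the paper simply quotes as (\ref{pwX}) from \cite{BF03}, \cite{FR09} while you sketch deriving the Pochhammer ones from (\ref{7.X3}).
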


\begin{proof} The derivations begin with the use of (\ref{7.X6}), 
$(N,r) \mapsto (2N,r)$ in the quaternion case. Also, we replace $C_\kappa^{(1)}(\mathbf x)$ by $|\kappa|! 
s_\kappa(\mathbf x)/ h_\kappa'|_{\alpha = 1}$ as is consistent with (\ref{15.ckp}). The next step is to make use
of (\ref{16.9.1X}) or (\ref{16.9.3X}) as appropriate, together with the key identities \cite{BF03}, \cite{FR09}
\begin{align}\label{pwX}
& {1 \over |\kappa|! 2^{|\kappa|}} h_{2 \kappa}' |_{\alpha = 1} = {2^{|\kappa|} [N/2]_\kappa^{(2)} \over C_\kappa^{(2)}((1)^N)}, \qquad
 {1 \over |\kappa|! 2^{|\kappa|}} h_{\kappa^2}' |_{\alpha = 1} = { [2 N]_\kappa^{(1/2)} \over 2^{|\kappa|} C_\kappa^{(1/2)}((1)^{N})}, \nonumber \\
& [u]_{2 \kappa}^{(1)} = 2^{2 |\kappa|} [u/2]_\kappa^{(2)}  [(u+1)/2]_\kappa^{(2)}, \qquad
 [u]_{\kappa^2}^{(1)} =  [u]_\kappa^{(1/2)}  [u-1]_\kappa^{(1/2)},
 \end{align} 
 inter-relating Jack polynomial quantities for $\alpha = 1$ with $\alpha = 2, 1/2$.
This reduces the task to computing $\langle P_\kappa^{(2/\beta)}(X X^\dagger) \rangle$ for the ensemble of interest,
which in turn, after a change of variables, reduces the problem to computing $\langle P_\kappa^{(2/\beta)}(\mathbf x) \rangle$,
where the average now is over the corresponding squared singular values. For GinOE${}_N$ (GinSE${}_N$) the squared
singular values belong to ME${}_{\beta,N}[\lambda^{\beta/2 - 1} e^{-\beta \lambda/2}]$ with $\beta = 1$ ($\beta = 4$)
\cite[case $n-m$ of (3.16)]{Fo10};
for TrOE${}_N$ (TrSE${}_N$) the squared
singular values belong to ME${}_{\beta,N}[\lambda^{\beta/2 - 1} (1 - \lambda)^{(\beta/2)(K-N+1)-1}]$ with $\beta = 1$ ($\beta = 4$)
\cite[Eq. (3.113) with $N \mapsto N + K$, $n_1, n_2 \mapsto N$]{Fo10};
SrOE${}_N$ (SrSE${}_N$) the squared
singular values belong to ME${}_{\beta,N}[\lambda^{\beta/2 - 1} (1 + \lambda)^{-(\beta/2)(K+2N)]}$ with $\beta = 1$ ($\beta = 4$)
\cite[Eq. (3.81)]{Fo10}. Consequently the sought averages  of $ P_\kappa^{(2/\beta)}(\mathbf x) $ over the squared singular values
can be computed by making use of  (\ref{7.X1}) or (\ref{W}) as appropriate. Assembling the results, the series form
of various Jack polynomial based hypergeometric functions with $\alpha = 2$ (real case) and $\alpha = 1/2$ (quaternion case) 
are recognised.
\end{proof}   

We can now proceed to express the averages given in terms of Jack polynomial based hypergeometric functions in the above proposition,
in terms of random matrix averages of sizes relating to the power according to the formulas identified in (\ref{7.Zc}), (\ref{7.Z}) and
(\ref{4.41}).

\begin{prop}
(\cite{FR09} in relation to the Ginibre ensembles and \cite{SSD23} in relation to the truncated ensembles.)
Let $\Lambda$ be as in Proposition \ref{P5.2b}. We have
  \begin{align}\label{7.Zi} 
 \langle \det (\mathbb I_N - A X)^r \rangle_{X \in {\rm GinOE}_N}  & =  \Big  \langle \prod_{l=1}^{\lfloor r/2 \rfloor} \det ( \mathbb I_N + t_l A A^\dagger) 
  \Big \rangle_{\mathbf t \in {\rm ME}_{4,\lfloor r/2 \rfloor} [\lambda^\nu e^{-\lambda}]} \nonumber \\
  \langle \det (\mathbb I_N - A X)^r \rangle_{X \in {\rm TrOE}_{N,K}}  &  \propto
  \det ( \mathbb I_N - A A^\dagger)^{\lfloor r/2 \rfloor} 
\Big \langle \prod_{l=1}^{\lfloor r/2 \rfloor} \det  (t_l  \mathbb I_N - \Lambda    )    \Big \rangle_{\mathbf t \in {\rm ME}_{4,\lfloor r/2 \rfloor}[\lambda^K (1 - \lambda)^\nu]} \nonumber  \\
 \langle \det (\mathbb I_N - A X)^r \rangle_{X \in {\rm SrOE}_{N,K}}  &  =
 \Big \langle \prod_{l=1}^{\lfloor r/2 \rfloor} \det  (\mathbb I_N +  t_l   A A^\dagger    )    \Big \rangle_{\mathbf t \in {\rm ME}_{4,\lfloor r/2 \rfloor}[\lambda^\nu (1 - \lambda)^{K-2r+1)}]}, 
  \end{align}
  where $\nu = 0$ ($r$ even) and $\nu = 2$ ($r$ odd). Also
  \begin{align}\label{7.Zj} 
 \langle \det (\mathbb I_N - A X)^r \rangle_{X \in {\rm GinSE}_N}  & =  \Big  \langle \prod_{l=1}^{r} \det ( \mathbb I_N + t_l A A^\dagger) 
  \Big \rangle_{\mathbf t \in {\rm ME}_{1,r} [ e^{-\lambda}]} \nonumber \\
  \langle \det (\mathbb I_N - A X)^r \rangle_{X \in {\rm TrSE}_{N,K}}  &  \propto
  \det ( \mathbb I_N - A A^\dagger)^{r} 
\Big \langle \prod_{l=1}^{r} \det  (t_l  \mathbb I_N - \Lambda    )    \Big \rangle_{\mathbf t \in {\rm ME}_{1,r}[\lambda^K  \mathbbm 1_{0<\lambda<1}]} \nonumber  \\
 \langle \det (\mathbb I_N - A X)^r \rangle_{X \in {\rm SrSE}_{N,K}}  &  =
 \Big \langle \prod_{l=1}^{r} \det  (\mathbb I_N +  t_l   A A^\dagger    )    \Big \rangle_{\mathbf t \in {\rm ME}_{1,r}[(1 - \lambda)^{K-r-1/2}]}.
  \end{align} 

\end{prop}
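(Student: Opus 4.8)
The plan is to take the preceding proposition as given and convert each of its six Jack polynomial based hypergeometric evaluations into a matrix average of size proportional to $r$, using only identities already at our disposal: the ${}_2F_0^{(\alpha)}$-to-${}_1F_1^{(\alpha)}$ transformation (\ref{7.Z}), the Laguerre-ensemble representation (\ref{4.41}) of a ${}_1F_1^{(\alpha)}$ with negative integer first parameter, the Jacobi-ensemble representation (\ref{7.Zc}) of a ${}_2F_1^{(\alpha)}$ with negative denominator parameter, together with the Euler transformation (\ref{2.4H}) and the Jacobi characteristic-polynomial averages (\ref{2.4Ea}), (\ref{2.4Ea+}) from the proof of Proposition \ref{P2.6}. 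For the Ginibre ensembles I start from $x^{rN}\,{}_2F_0^{(2)}(-r/2,(-r+1)/2;2|x|^{-2}AA^\dagger)$ and $x^{2rN}\,{}_2F_0^{(1/2)}(-r,-r-1;|x|^{-2}AA^\dagger/2)$; using the symmetry of ${}_2F_0^{(\alpha)}$ in its first two arguments I place the parameter that is a non-positive integer — namely $-\lfloor r/2\rfloor$ in the real case, the other slot then being a half-integer of parity opposite to $r$, and $-r$ in the quaternion case — in the first slot. Applying (\ref{7.Z}) turns each ${}_2F_0^{(\alpha)}$ into a ${}_1F_1^{(\alpha)}$ with the same negative integer first parameter, up to a determinantal prefactor and powers of $|x|$, and (\ref{4.41}) with $\alpha=2$ (resp. $\alpha=1/2$) rewrites that ${}_1F_1^{(\alpha)}$ of $N$ equal variables as a characteristic polynomial average over ${\rm ME}_{4,\lfloor r/2\rfloor}[\lambda^{\nu}e^{-\lambda}]$ (resp. ${\rm ME}_{1,r}[e^{-\lambda}]$). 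The determinantal prefactors produced in the two steps combine to leave exactly the clean forms of (\ref{7.Zi}) and (\ref{7.Zj}) (first lines), and matching the middle parameter of ${}_1F_1^{(\alpha)}$ against $(a_1+N)/\alpha$ in (\ref{4.41}) forces the Laguerre exponent to be $\nu=0$ for $r$ even and $\nu=2$ for $r$ odd in the real case, and to vanish in the quaternion case.

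For the truncated ensembles TrOE and TrSE I would follow verbatim the route used for the GUE and TrUE cases in the proof of Proposition \ref{P5.2b}. Starting from $x^{rN}\,{}_2F_1^{(2)}(-r/2,(-r+1)/2;(N+K)/2;|x|^{-2}AA^\dagger)$ and $x^{2rN}\,{}_2F_1^{(1/2)}(-r,-r-1;2(N+K);|x|^{-2}AA^\dagger)$, I apply the Euler transformation (\ref{2.4H}) with the negative integer parameter placed first; this produces the prefactor $\det(\mathbb I_N-|x|^{-2}AA^\dagger)^{\lfloor r/2\rfloor}$ (resp. the $r$-th power) and replaces the argument by $-AA^\dagger(|x|^2\mathbb I_N-AA^\dagger)^{-1}$, which at $x=1$ is the matrix $\Lambda$ of Proposition \ref{P5.2b}. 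I then identify the resulting ${}_2F_1^{(\beta/2)}$ of $N$ arguments with a Jacobi $\beta$-ensemble characteristic polynomial average as in (\ref{2.4Ea}) and (\ref{2.4Ea+}), now with the ensemble size $N\mapsto\lfloor r/2\rfloor$ (resp. $r$); matching the three hypergeometric parameters fixes the Jacobi weight to $\lambda^{K}(1-\lambda)^{\nu}$ in the real case and $\lambda^{K}$ in the quaternion case, yielding the second lines of (\ref{7.Zi}) and (\ref{7.Zj}) up to the Selberg normalisation constant of (\ref{2.4Ea+}). For the spherical ensembles SrOE and SrSE the relevant hypergeometric functions, $x^{rN}\,{}_2F_1^{(2)}(-r/2,(-r+1)/2;-(K-1)/2;-|x|^{-2}AA^\dagger)$ and $x^{2rN}\,{}_2F_1^{(1/2)}(-r,-r-1;-2K-1;-|x|^{-2}AA^\dagger)$, already have a negative integer numerator parameter and a negative denominator parameter, so (\ref{7.Zc}) applies directly with $\alpha=2$ and ensemble size $\lfloor r/2\rfloor$ (resp. $\alpha=1/2$ and size $r$); reading off the exponents $a_1=-1+(b-r+1)\alpha$ and $a_2=-1+(K-b-r+1)\alpha$ there gives the Jacobi weights $\lambda^{\nu}(1-\lambda)^{K-2r+1}$ and $(1-\lambda)^{K-r-1/2}$ respectively, the condition $a_1,a_2>-1$ being met under the hypotheses on $K$ carried over from the preceding proposition. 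In every case the leftover proportionality constant is pinned down by setting $A=0$, equivalently $x\to\infty$: after restoring the explicit powers of $x$ and the $\det(\mathbb I_N-AA^\dagger)$-prefactors, both sides reduce to $1$.

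I expect the main obstacle to be the parity bookkeeping in the real $(\alpha=2)$ cases. Because the two numerator parameters $-r/2$ and $(-r+1)/2$ interchange their roles — integer versus half-integer — as $r$ passes from even to odd, one must be careful to apply (\ref{7.Z}), (\ref{4.41}), (\ref{2.4H}) and (\ref{7.Zc}) with their negative-integer hypotheses satisfied, and then check that the surviving half-integer parameter threads consistently through the parameter matchings to the single exponent $\nu\in\{0,2\}$ asserted in the statement; relatedly, the output ensemble has size $\lfloor r/2\rfloor$ in the real case but $r$ in the quaternion case, a discrepancy forced here and traceable to the factor-of-two relations (\ref{pwX}) between the $\alpha=1$ Jack quantities and the $\alpha=2$, $\alpha=1/2$ ones. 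None of the individual manipulations is difficult once the preceding proposition and the identities (\ref{7.Z}), (\ref{7.Zc}), (\ref{4.41}) are in hand; the content of the argument lies entirely in these identifications of special-function identities with matrix integrals and in the attendant parameter algebra.
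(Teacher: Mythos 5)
Your proposal is correct and is essentially the paper's own route: the paper states this result without a detailed proof, indicating only that one converts the hypergeometric evaluations of the preceding proposition into matrix averages via (\ref{7.Zc}), (\ref{7.Z}) and (\ref{4.41}), supplemented (as in the proof of Proposition \ref{P5.2b}) by the Euler transformation (\ref{2.4H}) and the identification (\ref{2.4Ea}), (\ref{2.4Ea+}) for the truncated cases, which is precisely what you do. Your parameter matchings, including the parity bookkeeping giving $\nu\in\{0,2\}$ and the ensemble sizes $\lfloor r/2\rfloor$ versus $r$, check out.
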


\begin{remark}
We note that the first  identity in (\ref{7.Zi}) with $r=2$ is consistent with (\ref{6.0a}).
\end{remark}

We next consider analogues of the dualities for products of non-Hermitian random matrix ensembles
with complex entries (\ref{6.0v+}), (\ref{6.0v+B}) and (\ref{6.0vA}), in the case of real entries or quaternion
entries. In the case of GinUE in (\ref{6.0v+}) we saw that the RHS also considered of an average over a
particular GinUE. In the case of GinOE (GinSE) it turns out that the duality gives rise to an average over 
complex anti-symmetric (complex symmetric) matrices \cite{TZ14}, \cite{NK02}, \cite{Se23}. Let the set of these latter
matrices be denoted $\mathcal A_N(\mathbb C)$ and $\mathcal S_N(\mathbb C)$ respectively.
Fundamental in their derivation
are integration formulas over zonal polynomials analogous to (\ref{16.9.1X}) and (\ref{16.9.3X}) \cite{FS09} (see too
\cite[pg.~44]{Se23} for a clear statement).

\begin{prop}
Let $A$ be an $N \times N$ matrix.
For  $X \in \mathcal S_N(\mathbb C)$ chosen from a measure unchanged by $X \mapsto U X U^T$, $U \in U(N)$
we have
\begin{equation}\label{5.46}
\Big \langle C_\kappa^{(2)}(A X A^T X^\dagger) \Big \rangle_X = {s_{2\kappa}(A) \over s_{2\kappa}((1)^N)} \langle C_\kappa^{(2)}(X X^\dagger) \rangle_{X}.
\end{equation}
Also, for  $X \in \mathcal A_N(\mathbb C)$ chosen from a measure unchanged by $X \mapsto U X U^T$, $U \in U(N)$
we have
\begin{equation}\label{5.46a}
\Big \langle C_\kappa^{(1/2)}(A X A^T X^\dagger) \Big \rangle_X = {s_{\kappa^2}(A) \over s_{\kappa^2}((1)^N)} \langle C_\kappa^{(1/2)}(X X^\dagger) \rangle_{X}
\end{equation}
(note that here the matrices in $C_\kappa^{(1/2)}(\cdot)$ for $N$ even are doubly degenerate and so are only to be included once, and similarly for $N$ odd
apart from a zero eigenvalue which also is not included).
\end{prop}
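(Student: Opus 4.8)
The plan is to reduce the left-hand sides to averages over the singular values of $X$ alone, by exploiting the invariance $X\mapsto UXU^T$ in combination with a canonical form for $X$, stripping off the $A$-dependence with the $U(N)$ zonal-polynomial integral $(\ref{16.jl})$, and identifying the residue with $\langle C_\kappa^{(2)}(XX^\dagger)\rangle_X$ (respectively $\langle C_\kappa^{(1/2)}(XX^\dagger)\rangle_X$) via the orthogonal-group integral $(\ref{16.9.1})$ (respectively the symplectic-group integral $(\ref{16.9.3})$).

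Concretely, for $(\ref{5.46})$ I would use the Autonne--Takagi factorisation $X=UDU^T$ with $U\in U(N)$ and $D=\mathrm{diag}(d_1,\dots,d_N)$, $d_j\ge 0$, so that the eigenvalues of $XX^\dagger$ are the $d_j^2$; invariance of the measure under $X\mapsto VXV^T$ forces $U$ to be Haar distributed and independent of $D$. Since $X^\dagger=\bar UDU^\dagger$ and $(U^TA^T\bar U)=(U^\dagger AU)^T$, cyclicity of the spectrum shows that the eigenvalues of $AXA^TX^\dagger$ coincide with those of $BDB^TD$ where $B:=U^\dagger AU$, hence---taking $D$ invertible, the general case following by continuity---with those of $CC^T$ where $C:=D^{1/2}BD^{1/2}$, using $(D^{1/2}BD^{1/2})^T=D^{1/2}B^TD^{1/2}$. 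Rearranging $(\ref{16.9.1})$ gives $C_\kappa^{(2)}(CC^T)=C_\kappa^{(2)}((1)^N)\,\langle s_{2\kappa}(CO)\rangle_{O\in O(N)}$, so the left side of $(\ref{5.46})$ equals $C_\kappa^{(2)}((1)^N)$ times the average of $s_{2\kappa}(D^{1/2}U^\dagger AUD^{1/2}O)$ over $D$, over Haar $U(N)$, and over Haar $O(N)$. Doing the $U$-integral first, cyclicity turns the argument into $U^\dagger AU\cdot M$ with $M:=D^{1/2}OD^{1/2}$, and since $s_{2\kappa}=P_{2\kappa}^{(1)}$ is proportional to $C_{2\kappa}^{(1)}$ the $\alpha=1$ case of $(\ref{16.jl})$ gives $\langle s_{2\kappa}(U^\dagger AUM)\rangle_U=s_{2\kappa}(A)\,s_{2\kappa}(M)/s_{2\kappa}((1)^N)$. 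Next, the $O$-integral of $s_{2\kappa}(M)=s_{2\kappa}(D^{1/2}OD^{1/2})=s_{2\kappa}(DO)$ equals, by $(\ref{16.9.1})$ once more, $C_\kappa^{(2)}(D^2)/C_\kappa^{(2)}((1)^N)$. Collecting terms, the two factors $C_\kappa^{(2)}((1)^N)$ cancel and one is left with $\frac{s_{2\kappa}(A)}{s_{2\kappa}((1)^N)}\,\langle C_\kappa^{(2)}(D^2)\rangle_D=\frac{s_{2\kappa}(A)}{s_{2\kappa}((1)^N)}\,\langle C_\kappa^{(2)}(XX^\dagger)\rangle_X$, which is $(\ref{5.46})$.

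For $(\ref{5.46a})$ I would run exactly the same scheme with $\mathcal A_N(\mathbb C)$ in place of $\mathcal S_N(\mathbb C)$: every complex antisymmetric $X$ has a canonical form $X=U\Sigma U^T$ with $U$ unitary and $\Sigma$ block diagonal with $2\times2$ skew blocks of sizes $d_i\ge 0$ (and a zero when $N$ is odd), so that $XX^\dagger$ has the doubly degenerate spectrum $\{d_i^2\}$, and $U$ is again Haar by the twisted invariance. The same cyclic manipulations reduce the spectrum of $AXA^TX^\dagger$ to that of a matrix built from $U^\dagger AU$ and $\Sigma$; writing $\Sigma=J\Lambda$ with $J$ the standard symplectic form and $\Lambda$ a non-negative diagonal matrix (so that it is a square root of $\Lambda$, not of $\Sigma$, that enters) puts this into a shape to which the symplectic integral $(\ref{16.9.3})$ applies, producing---because of the index $\kappa^2$ and the two-fold degeneracy of the eigenvalues---the $\alpha=1/2$ zonal polynomial $C_\kappa^{(1/2)}$ with a single copy from each eigenvalue pair retained in its argument, exactly as the statement prescribes; the $U$-integral is still performed by the $\alpha=1$ case of $(\ref{16.jl})$. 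Assembling the pieces gives $(\ref{5.46a})$.

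The routine ingredients are the spectral and cyclicity bookkeeping and the cancellation of the $C_\kappa$-normalisations. The step I expect to be the main obstacle is in the antisymmetric case: getting the analogue of ``$BDB^TD\sim CC^T$'' correct when $\Sigma$ is the skew block form rather than a non-negative diagonal, and tracking the two-fold eigenvalue degeneracy carefully throughout so that $(\ref{16.9.3})$ is invoked with the right single-copy convention (including the odd-$N$ zero mode). If this direct route turns out to be awkward, a cleaner alternative is to invoke the $\mathcal S_N(\mathbb C)$- and $\mathcal A_N(\mathbb C)$-analogues of $(\ref{16.9.1X})$ and $(\ref{16.9.3X})$ already recorded in the cited reference, from which $(\ref{5.46})$ and $(\ref{5.46a})$ follow by the same left/right-invariance averaging that takes Proposition \ref{P5.1} to its Corollary.
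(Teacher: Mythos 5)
Your derivation of (\ref{5.46}) is correct and essentially complete, and it is worth noting that the paper itself offers no proof of this proposition: the two identities are quoted from \cite{FS09} (see also \cite{Se23}), so your argument is a genuinely self-contained alternative. It parallels the way the paper passes from Proposition \ref{P5.1} to its corollary (insert Haar averages, then apply the group integrals), but with the extra ingredient of the Autonne--Takagi form of $X$ and the re-encoding of $C_\kappa^{(2)}$ via the $O(N)$ integral (\ref{16.9.1}). Two presentational points. First, rather than asserting that the invariance ``forces $U$ to be Haar and independent of $D$'' (the Takagi factor $U$ is not unique), it is cleaner to write $\langle f(X)\rangle_X=\langle\langle f(VXV^T)\rangle_{V\in U(N)}\rangle_X$ and absorb the Takagi unitary into $V$; this yields exactly the computation you perform with no measurability caveats. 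Second, you apply (\ref{16.9.1}) with the complex matrix $C=D^{1/2}U^\dagger A U D^{1/2}$, whereas in Proposition \ref{P5.1} that formula is an $O(N)$ statement (the natural setting is real $A$); the extension is legitimate because both sides are polynomials in the entries of $C$ and agree for real $C$, but this should be said. With these remarks, the chain spec$(AXA^TX^\dagger)=$ spec$(BDB^TD)=$ spec$(CC^T)$, followed by (\ref{16.9.1}), the $\alpha=1$ case of (\ref{16.jl}), and (\ref{16.9.1}) again for $\langle s_{2\kappa}(DO)\rangle_O$, assembles to (\ref{5.46}) exactly as you claim.

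The antisymmetric identity (\ref{5.46a}) is, however, only sketched, and the steps you yourself flag as the obstacle are precisely where the work lies. For $N=2m$ the scheme does go through, but the missing details are substantive: writing $X=U\Sigma U^T$ with $\Sigma=\Lambda J$, where $\Lambda=\mathrm{diag}(d_1,d_1,d_2,d_2,\dots)$ commutes with the block form $J$, one needs the similarity spec$(AXA^TX^\dagger)=$ spec$(CJC^TJ^{-1})$ with $C=\Lambda^{1/2}(U^\dagger AU)\Lambda^{1/2}$, and then a complexified form of (\ref{16.9.3}): for quaternion-real $A$ one has $A^\dagger=JA^TJ^{-1}$, so $AA^\dagger=AJA^TJ^{-1}$, and since the quaternion-real matrices are a real form of all complex matrices of that size, (\ref{16.9.3}) extends to arbitrary complex $C$ with $AA^\dagger$ replaced by the self-dual combination $CJC^TJ^{-1}$, whose spectrum is automatically doubly degenerate because $\det(CJC^T+\lambda J)=(\pf(CJC^T+\lambda J))^2$. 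None of this is in your write-up, and the analogue of (\ref{16.9.3}) that you would ``apply'' does not, as stated in Proposition \ref{P5.1}, accept the complex matrix $C$ that actually arises. More seriously, odd $N$ does not follow ``by exactly the same scheme'': $\Sigma$ then has a structural zero, the invertibility trick cannot be rescued by continuity, and after projecting out the null direction the $U(N)$ average involves the truncated block $(U^\dagger AU)_{11}$, to which (\ref{16.jl}) does not apply; a separate argument (or an appeal to \cite{FS09}, \cite{Se23}, which is in effect what the paper does) is needed there. So: the symmetric half stands as a correct alternative proof; the antisymmetric half is a plausible but unfinished route whose even-$N$ gaps are fillable along the lines above, and whose odd-$N$ case requires an idea not present in the proposal.
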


\begin{cor}\label{C5.2}
Denote by G$\mathcal A_N$ (G$\mathcal S_N$)
the set of $N \times N$ anti-symmetric complex (symmetric complex) matrices chosen with measure proportional to $e^{-{\rm Tr} X X^\dagger/2}$
($e^{-{\rm Tr} X X^\dagger}$). We have
\begin{equation}\label{5.47}
\Big \langle \prod_{j=1}^{2k} \det (G - z_j \mathbb I_N) \Big \rangle_{G \in {\rm GinOE}_N} =
\Big \langle \det \begin{bmatrix} X & Z \\ -Z & X^\dagger  \end{bmatrix}^{N/2} \Big \rangle_{X \in {\rm G} \mathcal A_{2k}} 
\end{equation}
and
\begin{equation}\label{5.48}
\Big \langle \prod_{j=1}^{2k} \det (G - z_j \mathbb I_{N}) \Big \rangle_{G \in {\rm GinSE}_N} =
\Big \langle \det \begin{bmatrix} X & Z \\ -Z & X^\dagger  \end{bmatrix}^{N} \Big \rangle_{X \in {\rm G} \mathcal S_{2k}}. 
\end{equation}
\end{cor}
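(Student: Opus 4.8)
The plan is to run the argument of the proof of Proposition~\ref{P5.2} with the unitary group integration formula (\ref{16.jlX}) and the GinUE Schur average (\ref{16.9.2X}) replaced by their orthogonal, respectively symplectic, analogues (\ref{16.9.1X}), (\ref{16.9.3X}), and with the role played there by ${\rm GinUE}_k$ on the right-hand side now played by the Gaussian complex antisymmetric (respectively symmetric) ensemble through the zonal polynomial integration identities (\ref{5.46a}), (\ref{5.46}).

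Concretely, for (\ref{5.47}) I would first write $\prod_{l=1}^{2k}\det(G-z_l\mathbb I_N)=\prod_{l=1}^{2k}z_l^N\prod_{i=1}^{N}\prod_{l=1}^{2k}(1-\lambda_i/z_l)$, where $\{\lambda_i\}$ are the eigenvalues of $G$, and expand with the dual Cauchy identity (\ref{SM2}) at $\alpha=1$; this gives $\prod_l z_l^N\sum_{\kappa\subseteq(2k)^N}(-1)^{|\kappa|}s_\kappa(G)\,s_{\kappa'}(1/\mathbf z)$. Averaging over $G\in{\rm GinOE}_N$ and using (\ref{16.9.1X}) with $A=\mathbb I_N$ annihilates every term except those with $\kappa=2\mu$, replacing $\langle s_{2\mu}(G)\rangle$ by $\langle C_\mu^{(2)}(GG^T)\rangle_{G\in{\rm GinOE}_N}$. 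Since the squared singular values of ${\rm GinOE}_N$ are distributed as ${\rm ME}_{1,N}[\lambda^{-1/2}e^{-\lambda/2}]$, this average is evaluated by (\ref{7.X1}) at $\alpha=2$, $a=-1/2$ (after the rescaling $\lambda\mapsto2\lambda$), producing $C_\mu^{(2)}((1)^N)[N/2]_\mu^{(2)}$ up to an explicit power of $2$. Using the conjugation rule $(2\mu)'=(\mu')^2$ and reindexing over $\mu'\subseteq(N)^k$, the averaged left-hand side is thereby expressed as a single sum of $C_\mu^{(2)}((1)^N)[N/2]_\mu^{(2)}\,s_{(\mu')^2}(1/\mathbf z)$ against signs and the prefactor $\prod_l z_l^N$.

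I would then recognise this sum as the right-hand side by running the analogous steps backwards. With $D={\rm diag}(z_1,\dots,z_{2k})$, the identity (\ref{5.46a}) rewrites $s_{(\mu')^2}(1/\mathbf z)$ as $s_{(\mu')^2}((1)^{2k})$ times $\langle C_{\mu'}^{(1/2)}(D^{-1}XD^{-1}X^\dagger)\rangle_X/\langle C_{\mu'}^{(1/2)}(XX^\dagger)\rangle_X$ for $X\in{\rm G}\mathcal A_{2k}$, and the denominator is evaluated by the $\alpha=1/2$ instance of (\ref{7.X1}) for the Laguerre ensemble carried by the (doubly degenerate) nonzero eigenvalues of $XX^\dagger$. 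The combinatorial core of the argument is that the generalised Pochhammer symbols produced on the two sides, combined with the identities (\ref{pwX}) relating $\alpha=1$ hook data at $2\mu$ to $\alpha=2,1/2$ data and with (\ref{7.X3}), collapse to exactly the coefficient needed (cf.~(\ref{7.X2c})) for the $\mu'$-sum to be resummed by the dual Cauchy identity (\ref{SM2}), this time at $\alpha=1/2$, whose conjugate-partition pairing is precisely $C^{(1/2)}\leftrightarrow P^{(2)}$. This turns the sum into $\prod_l(1+\gamma_l)^{N/2}$, with $\{\gamma_l\}$ the eigenvalues of the product matrix built from $X$ and $D$, and the block determinant identity (\ref{7.X5}) then packages $\det(XX^\dagger)^{N/2}\prod_l(1+\gamma_l)^{N/2}$ together with $\prod_l z_l^N$ into $\det\begin{bmatrix} X & Z \\ -Z & X^\dagger \end{bmatrix}^{N/2}$, yielding (\ref{5.47}). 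The ensemble (\ref{5.48}) is handled identically after the substitutions ${\rm GinOE}\to{\rm GinSE}$, ${\rm G}\mathcal A\to{\rm G}\mathcal S$, the conjugation rule $(\mu^2)'=2\mu'$ in place of $(2\mu)'=(\mu')^2$, the zonal integration formula (\ref{16.9.3X}) in place of (\ref{16.9.1X}), (\ref{5.46}) in place of (\ref{5.46a}), and the second of each pair of identities in (\ref{pwX}); here the squared singular values of ${\rm GinSE}_N$ lie in ${\rm ME}_{4,N}[\lambda e^{-2\lambda}]$, which again feeds (\ref{7.X1}).

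The step I expect to be the main obstacle is exactly this matching of the two sets of dimension and Pochhammer parameters. The left side naturally produces $\alpha=2$ symbols of the form $[N/2]_\mu^{(2)}$ attached to $C_\mu^{(2)}((1)^N)$, while the right side produces $\alpha=1/2$ symbols attached to $C_{\mu'}^{(1/2)}((1)^{2k})$; one must verify, using (\ref{pwX}), (\ref{7.X3}) and the self-consistency of the two flavours of (\ref{SM2}), that these combine into precisely the factor forced by the resummation, and in particular that this is what dictates the exponent $N/2$ (rather than $N$) on the block determinant. Tracking the accumulated powers of $2$ and the overall proportionality constant through all of this is the part most likely to require care; as a consistency check, the case $2k=2$ of (\ref{5.47}) should reproduce (\ref{6.0a}).
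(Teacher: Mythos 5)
Your proposal follows essentially the same route as the paper's proof: expand via the dual Cauchy identity (\ref{SM2}) at $\alpha=1$, use (\ref{16.9.1X}) (resp.\ (\ref{16.9.3X})) to reduce to $\langle C^{(2)}_\mu(GG^T)\rangle$ (resp.\ $\langle C^{(1/2)}_\mu(GG^\dagger)\rangle$) evaluated by (\ref{7.X1}) on the squared singular values, substitute for $s_{\mu^2}(1/\mathbf z)$ via (\ref{5.46a}) (resp.\ $s_{2\mu}$ via (\ref{5.46})), resum by the dual Cauchy identity at the reciprocal Jack parameter, and finish with (\ref{7.X5}) plus the double degeneracy of the eigenvalues, which is exactly what produces the exponent $N/2$. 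The only difference is in which normalisation identities are invoked for the coefficient bookkeeping (the paper uses two explicit formulas from Serebryakov's thesis and \cite{Fo10} together with (\ref{15.ckp}) rather than (\ref{pwX})/(\ref{7.X3})), a step you correctly single out as the delicate one.
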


\begin{proof}
Following \cite{Se23}, we adopt a strategy analogous to that used in the proof of Proposition \ref{P5.2}, giving the details
only of (\ref{5.47}). 
Consider the LHS. Making use of the dual Cauchy identity (\ref{SM2}) with $\alpha = 1$, then (\ref{16.9.1X})
with $A=Z^{-1}$ shows that it can be written
\begin{multline}\label{5.47a}
\prod_{l=1}^{2k} z_l^N \sum_{\kappa \subseteq (2k)^N}s_{\kappa'}(1/\mathbf z)  \delta_{\kappa = 2 \mu} { \langle C_\mu^{(2)}(G G^T) \rangle_{G \in {\rm GinOE}_N}  
\over  C_\mu^{(2)}((1)^N)} \\
=
\prod_{l=1}^{2k} z_l^N \sum_{\kappa \subseteq (2k)^N} s_{\kappa'}(1/\mathbf z)   \delta_{\kappa = 2 \mu}  2^{|\mu|} [N/2]_\mu^{(2)} 
= \prod_{l=1}^{2k} z_l^N \sum_{\mu \subseteq (N)^k} s_{\mu^2}(1/\mathbf z) 2^{|\mu|} [N/2]_{\mu'}^{(2)}, 
\end{multline}
where the first equality follows from the fact that the eigenvalues of $G G^T$ belong to the matrix ensemble
ME${}_{1,N}[\lambda^{-1/2} e^{- \lambda/2}]$ 
(recall the text below (\ref{pwX})) and use of (\ref{7.X1}) with $\alpha = 2$, $a=-1/2$, while the second equality results upon replace $\mu$ by $\mu'$ in the
summand.

The next step is to  make use of (\ref{5.46a}) with $N=k$, for the purpose of substituting for $s_{\mu^2}(1/\mathbf z)$. In this
$\langle C_\mu^{(1/2)}(X X^\dagger) \rangle_{X \in {\rm G}\mathcal A_{2k}} = 
C_\mu^{(1/2)}((1)^k)[2k-1]_\mu^{(1/2)}$ as follows from (\ref{7.X1}) and
 the fact that for $X \in {\rm G}\mathcal A_{2k}(\mathbb C)$, the eigenvalues of $X X^\dagger$
are doubly degenerate, with independent eigenvalues distributed according to
ME${}_{4,k}[e^{-\lambda}]$ \cite[Table 3.1]{Fo10}. Noting the formulas
$$
{s_{\mu^2}((1)^k) \over C_\mu^{(1/2)}((1)^k)[2k-1]_\mu^{(1/2)} }= {1 \over | \mu |!}, \qquad P_{\mu'}^{(2)}((1)^N) =
{ [N/2]_{\mu'}^{(2)}  \over h_\mu' |_{\alpha = 1/2}},
$$
(see  \cite[Eq.~(2.62)]{Se23} for the first, and \cite[Eq.~(13.8) together with $h_{\kappa'} = \alpha^{|\kappa|}
h_\kappa' |_{\alpha \mapsto 1/\alpha}$ as noted three lines below]{Fo10} for the second), and making use too of (\ref{15.ckp})
then shows that 
(\ref{5.47a}) is equal to
\begin{equation}\label{5.47b}
 \prod_{l=1}^{2k} z_l^N \sum_{\mu \subseteq (N)^k}  P_{\mu'}^{(2)}((1)^N)  \langle P_\mu^{(1/2)}(Z^{-1} X Z^{-1} X^\dagger) \rangle_{X \in {\rm G}\mathcal A_{2k}}. 
\end{equation}
The sum herein is carried out using the dual Cauchy identity (\ref{SM2}) with $\alpha = 1/2$, to obtain an expression analogous to (\ref{7.X2d}). To obtain the RHS of
(\ref{5.47}) from this requires the use of (\ref{7.X5}), and the fact that the eigenvalues of the matrix in the latter are doubly degenerate. 
\end{proof}

\begin{remark} ${}$ \\
1.~Let $A,B$ be $N \times N$ complex symmetric (anti-symmetric) matrices. Let $X \in \mathcal S_N(\mathbb C)$ ($\mathcal A_N(\mathbb C)$) be chosen from
a measure unchanged by $X \mapsto U X U^T$ for $U \in U(N)$. Then in addition to (\ref{5.46}) and (\ref{5.46a}) one has  \cite{FS09}
\begin{align}
\Big \langle C_\kappa^{(2)}(A X)  C_\mu^{(2)}(B X^\dagger )  \Big \rangle_{X \in \mathcal S_N(\mathbb C)}
& = \delta_{\kappa,\mu} {C_{\kappa}^{(2)}(AB) \over s_{2\kappa}((1)^N)} \langle C_\kappa^{(2)}(X X^\dagger) \rangle_{X\in \mathcal S_N(\mathbb C)} \nonumber \\
\Big \langle C_\kappa^{(1/2)}(A X)  C_\mu^{(1/2)}(B X^\dagger )  \Big \rangle_{X \in \mathcal A_N(\mathbb C)}
& =\delta_{\kappa,\mu}{C_{\kappa}^{(1/2)}(AB) \over s_{\kappa^2}((1)^N)} \langle C_\kappa^{(1/2)}(X X^\dagger) \rangle_{X\in \mathcal A_N(\mathbb C)}. 
\end{align}
\\
2.~A duality which reverses the role of ${\rm G} \mathcal S_{2k}$ and ${\rm GinSE}_N$ in (\ref{5.48}), and its companion reversing the roles
of the ensembles on either sides of (\ref{5.47}), but with the Gaussian complex anti-symmetric matrices replaced by Gaussian complex self dual matrices
(the latter are obtained from even size anti-symmetric matrices by premultiplying by $\mathbb I_N \otimes \begin{bmatrix} 0 & 1 \\  -1 & 0 \end{bmatrix}$),
has attracted recent interest in the works \cite{AAKP24}, \cite{KKR24}, \cite{Fo24+}.
\end{remark}

Documented in the literature is the analogue of Corollary \ref{C5.2} for the non-Hermitian ensembles TrOE${}_{N,K}$ and TrSE${}_{N,K}$. For this one requires
the spherical anti-symmetric (symmetric) ensemble S$\mathcal A_{N,K}$ (S$\mathcal S_{N,K}$), defined to have joint element PDF proportional to
\begin{equation}\label{SSE}
\det (\mathbb I_N + X X^\dagger)^{-N-K},
\end{equation}
where $X$ is anti-symmetric (symmetric). 

\begin{prop}\label{P5.8} (\cite{SS24}, \cite{Se23}) We have
\begin{equation}\label{5.48a}
\bigg \langle \prod_{j=1}^{2k} \det (G - z_j \mathbb I_N) \bigg \rangle_{G \in {\rm TrOE}_{N,K}} =
\bigg \langle \det \begin{bmatrix} X & Z \\ -Z & X^\dagger  \end{bmatrix}^{N/2} \bigg \rangle_{X \in {\rm S} \mathcal A_{2k,(N+K)/2-1}} 
\end{equation}
and
\begin{equation}\label{5.48b}
\bigg \langle \prod_{j=1}^{2k} \det (G - z_j \mathbb I_N) \bigg \rangle_{G \in {\rm TrSE}_{N,K}} =
\bigg \langle \det \begin{bmatrix} X & Z \\ -Z & X^\dagger  \end{bmatrix}^{N} \bigg \rangle_{X \in {\rm S} \mathcal S_{2k,N+K+1}}. 
\end{equation}
\end{prop}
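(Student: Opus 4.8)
The plan is to follow the strategy that established Corollary \ref{C5.2} for the Ginibre ensembles, replacing the Gaussian structured averages there by their Jacobi and Jacobi-prime analogues. I would spell out the orthogonal case (\ref{5.48a}); the symplectic case (\ref{5.48b}) is entirely parallel, with $\alpha=2$ replaced throughout by $\alpha=1/2$, (\ref{16.9.1X}) replaced by (\ref{16.9.3X}), the partition-doubling relations in (\ref{pwX}) by the partition-squaring ones, and ${\rm S}\mathcal A$ by ${\rm S}\mathcal S$. Starting from the LHS of (\ref{5.48a}), write $\det(G-z_j\mathbb I_N)=z_j^N\det(\mathbb I_N-z_j^{-1}G)$, expand the product over $j$ by the dual Cauchy identity (\ref{SM2}) with $\alpha=1$, and apply (\ref{16.9.1X}) with $A=Z^{-1}$ to reduce the problem to evaluating $\langle C_\mu^{(2)}(GG^T)\rangle_{G\in{\rm TrOE}_{N,K}}$ for those $\kappa$ of the form $\kappa=2\mu$. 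Changing variables to the squared singular values of $G$, these lie in the Jacobi ensemble ${\rm ME}_{1,N}[\lambda^{-1/2}(1-\lambda)^{(K-N-1)/2}]$ (the $\beta=1$ case of the weight quoted just below (\ref{pwX})), so the required average is given by (\ref{MKK}) with $\alpha=2$, $a_1=-1/2$, $a_2=(K-N-1)/2$. Replacing $\mu$ by $\mu'$ in the resulting sum leaves, up to the prefactor $\prod_l z_l^N$, a sum over $\mu\subseteq(N)^k$ of $s_{\mu^2}(1/\mathbf z)$ times an explicit ratio of generalised Pochhammer symbols of parameter $\alpha=2$ indexed by $\mu'$, in exact analogy with (\ref{5.47a}).

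The second stage is to run this computation backwards over the right-hand ensemble. Here one uses the zonal integration formula for the spherical anti-symmetric ensemble established in \cite{FS09} (the analogue of (\ref{16.9.1X})--(\ref{16.9.3X}), cf.\ the statement surrounding (\ref{5.46a})): for $X$ from a measure on $\mathcal A_{2k}(\mathbb C)$ invariant under $X\mapsto UXU^T$, the average of $C_\mu^{(1/2)}(Z^{-1}XZ^{-1}X^\dagger)$ produces $s_{\mu^2}(1/\mathbf z)/s_{\mu^2}((1)^{2k})$ times $\langle C_\mu^{(1/2)}(XX^\dagger)\rangle_X$. For $X\in{\rm S}\mathcal A_{2k,(N+K)/2-1}$, with PDF proportional to $\det(\mathbb I_{2k}+XX^\dagger)^{-(2k+(N+K)/2-1)}$, the eigenvalues of $XX^\dagger$ are doubly degenerate and the $k$ independent ones fall in a Jacobi-prime ensemble ${\rm ME}_{4,k}[(1+\lambda)^{-c}]$ for the value of $c$ forced by that exponent, so this average is evaluated by (\ref{W}) with $\alpha=1/2$. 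Substituting back and using (\ref{15.ckp}), the normalisation identities $s_{\mu^2}((1)^k)/(C_\mu^{(1/2)}((1)^k)[2k-1]_\mu^{(1/2)})=1/|\mu|!$ and $P_{\mu'}^{(2)}((1)^N)=[N/2]_{\mu'}^{(2)}/h_\mu'|_{\alpha=1/2}$ exactly as in the proof of Corollary \ref{C5.2}, together with (\ref{pwX}) and (\ref{7.X3}) to convert between the $\alpha=1,2,1/2$ Pochhammer symbols, one checks that the $\mu$-dependent coefficient produced here matches the one obtained in stage one. The sum over $\mu\subseteq(N)^k$ is then carried out by the dual Cauchy identity (\ref{SM2}) with $\alpha=1/2$, and the block determinant identity (\ref{7.X5}), together with the double degeneracy of the relevant eigenvalues, identifies the outcome with the RHS of (\ref{5.48a}).

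The main obstacle will be the parameter bookkeeping needed to make the two structured averages agree: the ratio of Pochhammer symbols coming from (\ref{MKK}) for the squared singular values of ${\rm TrOE}_{N,K}$ on the left must combine, after the $\alpha$-conversions of (\ref{pwX}) and (\ref{7.X3}), with the ratio coming from (\ref{W}) for the squared singular values of ${\rm S}\mathcal A_{2k,(N+K)/2-1}$ on the right into a single ratio that is consistent with the dual Cauchy identity. Concretely one must verify that the exponent $(K-N-1)/2$ attached to ${\rm TrOE}_{N,K}$ and the second index $(N+K)/2-1$ attached to ${\rm S}\mathcal A$ are precisely the values that make the denominator Pochhammer factor $[a_1+a_2+2+2(N-1)/\alpha]_\mu^{(\alpha)}$ in (\ref{MKK}) cancel against $[-b_2]_\mu^{(\alpha)}$ in (\ref{W}); the half-integer shifts arising from the $\beta=1$ versus $\beta=4$ weights and from the doubling/squaring of partitions are where errors are easiest to make. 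Once this one algebraic identity is in place, the remainder is a faithful transcription of the argument already given for Corollary \ref{C5.2}.
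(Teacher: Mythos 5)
Your proposal is correct and takes essentially the same route as the paper, whose proof of this proposition is just the remark that the strategy of Corollary \ref{C5.2} carries over once the Laguerre average (\ref{7.X1}) is replaced by the Jacobi-type averages (\ref{MKK}) (for the truncated ensemble on the left) and (\ref{W}) (for the spherical ensemble on the right). The parameter bookkeeping you flag as the main obstacle is precisely the ``change of detail'' the paper leaves implicit, and it does close up: with $b_2=N+K$ the denominator $[(N+K)/2]_{\mu'}^{(2)}$ from (\ref{MKK}) converts via (\ref{7.X3}) into the $[-b_2]_\mu^{(1/2)}$ of (\ref{W}), exactly as you anticipate.
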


\begin{proof}
The general strategy of the proof of Corollary \ref{C5.2} suffices, with changes of detail as implied by the use of
(\ref{W}) rather than (\ref{7.X1}).
\end{proof}

\begin{remark}
By taking the limit $z_{2k} \to \infty$  in the identities of Corollary \ref{C5.2} and Proposition \ref{P5.8}, they remain formally
unchanged except for the replacement of $2k$ by $2k-1$  (odd number of products). Intermediate working, such as the in-text
identity five lines below (\ref{5.47a}) require modification if one was to not distinguish the parity from the outset.
\end{remark}

\section{Dualities for moments}\label{S7}
Consider the class of Hermitian $\beta$ ensembles ME${}_{\beta,N}[e^{-N \beta V(x)/2}]$, where $V(x)$ is such that in the large
$N$ limit the support of the eigenvalue density is a compact, single interval. Classical examples of this setting are $V(x)$ equal to
\begin{equation}\label{9.1}
x^2, \qquad (x - \gamma \log x) \mathbbm 1_{x>0}, \qquad( \gamma_1 \log x + \gamma_2 \log (1 - x))  \mathbbm 1_{0<x<1},
\end{equation}
as discussed in the context of global scaling in \S 4.2.1 and \S 4.3.1.

With $m_\kappa(\mathbf x)$ denoting the monomial symmetric polynomial in the variables $\mathbf x$ indexed
by the partition $\kappa$, the mixed moments are specified by $m(\kappa;N,\beta) := \langle m_\kappa(\mathbf x) \rangle$. For example,
with $\kappa = (k,(0)^{N-1})$ one has that this average corresponds to the power sum of the eigenvalues
$\langle \sum_{l=1}^N x_l^k \rangle$, which in turn is equal to the $k$-th moment of the spectral density. When $\kappa$
consists of two or more nonzero parts, one defines from the mixed moments 
$\mu(\kappa;N,\beta)$ the corresponding mixed cumulants, defined
in the usual way according to the logarithm of the exponential generating function of the former. For the cumulants specified
by partitions of no more than $n$ parts, the Laurent series generating function is specified by
\begin{equation}\label{9.2}
{W}_n(y_1,\dots,y_n;N,\beta) := {1 \over y_1 \cdots y_n} \sum_{p_1,\dots,p_n = 0}^\infty {\mu((p_1,\dots,p_n);N,\beta) \over
y_1^{p_1} \cdots y_n^{p_n} };
\end{equation}
here after suitable ordering $(p_1,\dots,p_n)$ corresponds to the partition labelling the cumulant. Introducing the averages
\begin{equation}\label{9.3}
\Big \langle \prod_{l=1}^q ( A_l - \langle A_l \rangle) \Big \rangle, \quad A_i = \sum_{j=1}^N {1 \over y_i - x_j},
\end{equation}
the generating function (\ref{9.2}) can be written as a multinomial in such terms for $q=2,\dots,n$ which is homogeneous of degree 
$n$ in the total number of factors of the form $(A_l - \langle A_l \rangle )$. In fact for $n=2$ and 3, (\ref{9.2}) is precisely
equal to (\ref{9.3}) with $q=2$ and 3. We will write $\langle  \prod_{l=1}^q  A_l \rangle_{\rm c}$ to denote the linear combination of products
of averages relating to the cumulants (the symbol ``c'' denotes connected, or fully truncated; here we omit the $ \langle A_l \rangle$ 
in $(A -  \langle A_l \rangle)$ which
can be absorbed into the linear combination).

In addition to the quantity $W_n$ (\ref{9.2}), introduce the connected quantity
\begin{equation}\label{9.4}
P_{n+1}(y_1,\dots,y_n;y;\beta,N) = \bigg \langle  \prod_{l=1}^n  A_l  \sum_{l=1}^N {V'(y) - V'(y_l) \over y - y_l} \bigg \rangle_{\rm c}.
\end{equation}
Introduce too the notation $I = \{y_1,\dots,y_{n-1}\}$. Together with the $W_n$ (\ref{9.2}) these quantities satisfy the
coupled hierarchy of equations, referred to as loop equations,
\begin{multline}\label{9.4a}
(\beta/2) \sum_{J \subseteq I} W_{|J|+1}(y,J) W_{n-|J|}(y,I \backslash J) + (\beta/2) W_{n+1}(y,y,I) + (\beta/2 - 1)
{\partial W_n(y,I)  \over \partial y}\\
= {\beta N \over 2} \Big ( V'(y) W_n(y,I) - P_n(I;y) \Big ) -
\sum_{y_i \in I} {\partial \over \partial y_i} \Big (
{W_{n-1}(y,I\backslash \{y_i \}) - W_{n-1}(I) \over y - y_i} \Big );
\end{multline}
see \cite{BEMN10}, \cite{BMS11}, \cite[Eq.~(2.5)]{WF14}. For ease of presentation, the dependence of the $W_n$ and
$P_n$ on $N,\beta$ has been suppressed. The remaining step is to introduce the scaled quantity
\begin{equation}\label{9.5}
\widetilde{W}_n(y_1,\dots,y_n) := {1 \over N} (N\beta/2)^{n-1} {W}_n(y_1,\dots,y_n),
\end{equation}
which for large $N$ is expected to be independent of $N$ at leading order (for justifications, see \cite{BG11}), and furthermore independent of $\beta$. One similarly
introduces a scaling of $ P_n$ with an analogous large $N$ property. This gives the rewrite of (\ref{9.4a})
\begin{multline}\label{9.4a+}
(\beta N/2) \sum_{J \subseteq I} \widetilde{W}_{|J|+1}(y,J) \widetilde{W}_{n-|J|}(y,I \backslash J) + {1 \over N} \widetilde{W}_{n+1}(y,y,I) + (\beta/2 - 1)
{\partial \widetilde{W}_n(y,I)  \over \partial y}\\
= {\beta N \over 2} \Big ( V'(y) \widetilde{W}_n(y,I) - \widetilde{P}_n(I;y) \Big ) -
{\beta N \over 2} \sum_{y_i \in I} {\partial \over \partial y_i} \Big (
{\widetilde{W}_{n-1}(y,I\backslash \{y_i \}) -  \widetilde{W}_{n-1}(I) \over y - y_i} \Big ).
\end{multline}
The following duality relation is now evident.

\begin{prop}\label{P6.1} (\cite{EKR15}, \cite{Fo22})
The quantity $\widetilde{W}_n(y_1,\dots,y_n) = \widetilde{W}_n(y_1,\dots,y_n;N,\beta)$ is invariant under the mappings $\beta \mapsto 4/\beta$,
$N \mapsto -\beta N/2$.
\end{prop}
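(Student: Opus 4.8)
The plan is to read the claimed invariance directly off the rescaled loop equation hierarchy \eqref{9.4a+}, whose form was set up precisely so that this symmetry becomes manifest. First I would observe that the map $(\beta,N)\mapsto(4/\beta,-\beta N/2)$ fixes the two combinations that actually appear in \eqref{9.4a+}: the product $\beta N$ is sent to $(4/\beta)(-\beta N/2)=-2N$, and applying the map twice returns $\beta N\mapsto (-2N)\cdot$(something), so one must be a little careful; the cleaner statement is that $\beta N/2\mapsto -N$ and simultaneously $N\mapsto -\beta N/2$, and one checks term by term that every coefficient in \eqref{9.4a+} is a function only of the pair $(\beta N/2,\,N)$ through the specific combinations that occur, namely the prefactor $\beta N/2$, the isolated $1/N$ in front of $\widetilde W_{n+1}$, and the anomaly coefficient $\beta/2-1$. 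Under $\beta\mapsto 4/\beta$ this last becomes $2/\beta-1$, which is \emph{not} equal to $\beta/2-1$; so the invariance cannot be term-by-term in the naive sense, and the real content is that the hierarchy as a whole, together with the analyticity/decay conditions that pin down its solution, is preserved.

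Accordingly the argument I would give has two ingredients. The first is a \emph{uniqueness} statement: for $V$ as in the hypotheses (so that the limiting support is a single interval, e.g. the classical cases \eqref{9.1}), the system \eqref{9.4a+} together with the requirement that each $\widetilde W_n(y_1,\dots,y_n)$ be analytic in each $y_i$ off the support, decay like $1/y_i$ as $y_i\to\infty$, and be symmetric in its arguments, determines the $\widetilde W_n$ uniquely (this is the standard topological-recursion/loop-equation solvability input, which I would cite from \cite{BEMN10,BG11} as permitted). The second ingredient is to show that the family $\{\widetilde W_n(\,\cdot\,;4/\beta,-\beta N/2)\}$ solves the \emph{same} system with the \emph{same} side conditions. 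For this I would substitute $\beta\to 4/\beta$, $N\to -\beta N/2$ into \eqref{9.4a+}, compute the transformed coefficients ($\beta N/2\to -N$, $1/N\to -2/(\beta N)$, $\beta/2-1\to 2/\beta-1$), and then multiply the whole equation through by an overall constant and regroup so that it matches \eqref{9.4a+} verbatim; the key point is that the $\widetilde P_n$ term transforms compatibly because $\widetilde P_n$ is built from $V$ and the $\widetilde W_k$ in a way that carries the same scaling, and $V$ itself is held fixed. Since the transformed family satisfies the same determining system, uniqueness forces $\widetilde W_n(\,\cdot\,;4/\beta,-\beta N/2)=\widetilde W_n(\,\cdot\,;N,\beta)$.

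Alternatively, and perhaps more robustly, I would run an induction on $n$ that bypasses a global uniqueness theorem: \eqref{9.4a+} expresses the ``new'' quantity $\widetilde W_{n+1}(y,y,I)$ (and, after extracting the coincident-point data, $\widetilde W_{n+1}$ itself) in terms of $\widetilde W_{\le n}$, the fixed potential $V$, and $\widetilde P_n$. Granting by induction that $\widetilde W_1,\dots,\widetilde W_n$ are invariant under the map, one checks that the right-hand side of \eqref{9.4a+}, after the substitution and an overall rescaling, is unchanged, and that the operations used to solve for $\widetilde W_{n+1}$ (a contour/residue extraction encoding the analyticity and decay in the extra variable) commute with the substitution. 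The base case $n=1$ is the statement that the scaled resolvent $\widetilde W_1$, i.e. the Stieltjes transform of the limiting density, is $\beta$- and $N$-independent at leading order — which is exactly the content recalled in \S\ref{S4.1} and \S4.2.1 (the global density is independent of $\beta$), so it holds trivially, and more precisely one uses that $\widetilde W_1$ solves the $n=1$ case of \eqref{9.4a+}, whose only $(\beta,N)$-dependence is again through $\beta N/2$ and the fixed $V$.

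The main obstacle I anticipate is bookkeeping the anomaly term $(\beta/2-1)\partial_y\widetilde W_n$ correctly: under $\beta\mapsto 4/\beta$ its coefficient is genuinely altered, so the invariance is \emph{not} a literal coefficient-matching and one must be honest that it is the combination of this term with the $1/N$ and $\beta N/2$ terms — after the full substitution and a single global rescaling of the equation — that closes up. Concretely, I would verify that multiplying the substituted \eqref{9.4a+} by $\big(\beta N/2\big)\big/\big(-N\big)=-\beta/2$ (or whatever the correct normalising constant turns out to be) converts $2/\beta-1$ back into $\beta/2-1$ up to terms that vanish in the relevant order, exploiting that $\widetilde W_n$ carries an explicit factor $(N\beta/2)^{n-1}$ in its definition \eqref{9.5} so that different $n$'s rescale differently in just the way needed. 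Getting this normalisation exactly right, and confirming the $\widetilde P_n$ term follows suit because it inherits the same powers of $N\beta/2$ from its constituent $\widetilde W_k$, is where the care is required; everything else is formal.
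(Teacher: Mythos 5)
Your proposal is essentially the paper's argument: the paper's proof is the one-line observation that under $\beta \mapsto 4/\beta$, $N \mapsto -\beta N/2$ the rescaled loop equations (\ref{9.4a+}) remain formally unchanged, with the uniqueness of their solution (given the standard analyticity and decay conditions, within the $1/N$ expansion) left implicit — precisely the two ingredients you make explicit. Your hedging about the anomaly term is unnecessary: after the substitution the three coefficients become $-N$, $-2/(\beta N)$ and $2/\beta - 1$, and multiplying the whole equation by the constant $-\beta/2$ gives back $\beta N/2$, $1/N$ and $(-\beta/2)(2/\beta-1)=\beta/2-1$ exactly, so the coefficient matching is exact and no ``terms that vanish in the relevant order'' (nor any $n$-dependent rescaling of the $\widetilde{W}_n$) are needed.
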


\begin{proof}
Upon the stated mappings, the loop equations (\ref{9.4a}) remain formally unchanged.
\end{proof}

\begin{remark} ${}$ \\
1.~One recalls that the meaning of the mapping in $N$ is with respect to a $1/N$ expansion. It follows from Proposition \ref{P6.1} that such
an expansion for $\beta = 2$ involves only powers of $1/N^2$. \\
2.~A viewpoint of the duality linking $\beta = 1$ and $\beta = 4$ as first identified in \cite{MW03} within the theory of Lie groups and symmetric
spaces is given in \cite{MV11}; see also \cite{Be19}.
\end{remark}

\begin{cor}\label{C6.1}
The scaled moments (independent of $N$ and $\beta$ to leading order in $N$)  $\widetilde{m}_k = \widetilde{m}_k(\beta,N)$, and smoothed density
$\widetilde{\rho}_{N,(1)}(x;\beta)$ satisfy the duality relations
\begin{equation}\label{9.9}
 \widetilde{m}_k(\beta,N) =  \widetilde{m}_k(4/\beta,-\beta N/2), \qquad  \widetilde{\rho}_{N,(1)}(x;\beta) = \widetilde{\rho}_{-\beta N/2,(1)}(x;4/\beta).
\end{equation} 
\end{cor}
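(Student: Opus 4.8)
The plan is to extract both duality relations from the single‑variable ($n=1$) instance of Proposition \ref{P6.1}, which governs the resolvent generating function $\widetilde{W}_1$. First I would record how the scaled moments and the smoothed density sit inside $\widetilde{W}_1$. From the definition of the mixed cumulants and of $W_n$ in (\ref{9.2}), the one‑point function is the ordinary resolvent
\[
W_1(y;N,\beta)=\Big\langle \sum_{j=1}^N \frac{1}{y-x_j}\Big\rangle=\sum_{k\ge 0}\frac{m_k(\beta,N)}{y^{k+1}},\qquad m_k(\beta,N):=\Big\langle\sum_{j=1}^N x_j^k\Big\rangle ,
\]
convergent for $|y|$ beyond the (bounded) support of the eigenvalue density. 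Since (\ref{9.5}) with $n=1$ reads $\widetilde{W}_1(y)=W_1(y)/N$, the scaled moments are precisely the Laurent coefficients $\widetilde{W}_1(y;N,\beta)=\sum_{k\ge 0}\widetilde{m}_k(\beta,N)\,y^{-k-1}$, while, via Sokhotski--Plemelj inversion of the Stieltjes transform, the associated smoothed density is $\widetilde{\rho}_{N,(1)}(x;\beta)=-\tfrac{1}{\pi}\,{\rm Im}\,\widetilde{W}_1(x+i0^+;N,\beta)$; the qualifier ``smoothed'' records that $\widetilde{W}_1$, being fixed order by order in $1/N$ by the loop equations (\ref{9.4a+}), reproduces only the non‑oscillatory part of the density.

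Next I would invoke Proposition \ref{P6.1} with $n=1$, namely $\widetilde{W}_1(y;N,\beta)=\widetilde{W}_1(y;-\beta N/2,4/\beta)$, where as throughout the substitution $N\mapsto-\beta N/2$ is understood termwise in the $1/N$ expansion whose existence is supplied by \cite{BG11}. Matching the coefficient of $y^{-k-1}$ in the two Laurent expansions then yields $\widetilde{m}_k(\beta,N)=\widetilde{m}_k(4/\beta,-\beta N/2)$, the first identity in (\ref{9.9}); taking instead the boundary value $y\to x+i0^+$ on both sides and applying the inversion formula above gives $\widetilde{\rho}_{N,(1)}(x;\beta)=\widetilde{\rho}_{-\beta N/2,(1)}(x;4/\beta)$, the second.

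The step requiring genuine care --- and the main obstacle --- is the legitimacy of the two extraction operations at the level of the $1/N$ asymptotic series rather than of honest functions: one must know that both the coefficientwise Laurent expansion in $1/y$ and the Sokhotski--Plemelj inversion commute with the $1/N$ expansion and with the analytic continuation in $N$ implicit in $N\mapsto-\beta N/2$. This is exactly the structural content of the loop‑equation (topological) expansion recalled around (\ref{9.5}) (see \cite{BG11}, and \cite{WF14}), so no input beyond Proposition \ref{P6.1} is needed; the remaining verifications are routine bookkeeping.
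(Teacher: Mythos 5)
Your argument is correct and is essentially the paper's (implicit) one: Corollary \ref{C6.1} is stated as an immediate consequence of Proposition \ref{P6.1}, obtained exactly as you do by specialising to $n=1$, reading off the scaled moments as the Laurent coefficients of $\widetilde{W}_1$ at infinity, and recovering the smoothed density by Stieltjes (Sokhotski--Plemelj) inversion, with the mapping $N\mapsto-\beta N/2$ understood termwise in the $1/N$ expansion as in the paper's Remark following Proposition \ref{P6.1}. No gaps; your cautionary remarks about commuting the expansions match the paper's own framing via the loop-equation/topological expansion.
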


The duality for the scaled moments is easy to illustrate. Thus, with $\tau := \beta/2$,
 for the Gaussian $\beta$ ensemble one has \cite{DE05}, \cite{MMPS12}, \cite{WF14}
\begin{align}\label{6.9}
 \widetilde{m}_2^{\rm G}(\beta,N) & = 1 + N^{-1} (-1 + \tau^{-1}) \nonumber \\
 \widetilde{m}_4^{\rm G}(\beta,N) & = 2 + 5 N^{-1} (-1 + \tau^{-1}) + N^{-2} (3 - 5 \tau^{-1} + 3 \tau^{-2}) \nonumber \\
  \widetilde{m}_6^{\rm G}(\beta,N) & = 5 + 22 N^{-1} (-1 + \tau^{-1}) + N^{-2} (32 - 54 \tau^{-1} + 32 \tau^{-2})   \nonumber  \\ & \quad +
N^{-3} (-15 + 32 \tau^{-1} - 32 \tau^{-2} + 15 \tau^{-3}),
\end{align}
and for the Laguerre $\beta$ ensemble \cite{MRW15}, \cite[Prop.~3.11]{FRW17},
\begin{align}
& \widetilde{m}_2^{\rm L}(\beta,N)  = (1+\gamma) + N^{-1} (-1 + \tau^{-1}) \nonumber \\
& \widetilde{m}_4^{\rm L}(\beta,N)  = (2 + 3 \gamma + \gamma^2) +  N^{-1}(4+3\gamma) (-1 + \tau^{-1}) + 2 N^{-2} (1 -  2 \tau^{-1} +  \tau^{-2} )  \nonumber \\
 &  \widetilde{m}_6^{\rm L}(\beta,N)  = (5 + 10 \gamma + 6 \gamma^2 + \gamma^3) + N^{-1}(16 + 21 \gamma + 6 \gamma^2)(-1 + \tau^{-1})   \nonumber \\
& \quad+ N^{-2} ((17 + 11 \gamma) \tau^{-2} - (33  + 21 \gamma )\tau^{-1} + (17  + 11 \gamma)) + N^{-3}(-1+\tau^{-1}) (-6 +11 \tau^{-1} - 6 \tau^{-2}).
\end{align}
An observation from \cite{WF14} is that the polynomials in $\tau^{-1}$ in the $1/N$ expansion (\ref{6.9}) have all their zeros on the unit
circle and also exhibit an interlacing property.
Explicit functional forms are also available for low order moments in the Jacobi $\beta$ ensemble \cite{MRW15}, \cite{FRW17}, which
exhibit a rational function, rather than a polynomial, structure.
A consequence of the density duality in (\ref{9.9}) is that the 5th order linear differential equations which can be
derived in the cases of the GOE and GSE are related to each other by the implied mapping \cite{WF14}. In \cite{RF21}
the 7th order linear differential equation satisfied by the soft edge scaled density for the Gaussian $\beta$ ensemble
with $\beta = 6$ is transformed to 7th order linear differential equation for the soft edge scaled density of the
Gaussian $\beta$ ensemble
with $\beta = 2/3$.

It is known how to map from the Jacobi $\beta$ ensemble to the (generalised) circular Jacobi $\beta$ ensemble
\cite[Eq.~(1.8)]{FS23}. Consequently, there is an analogue of Proposition \ref{P6.1} for the  
 latter ensemble; see \cite[Remark 4.3]{Fo10}. In relation to a consequence,
denote by $\rho_{(2),\infty}((x,0);\beta)$ the bulk scaled (unit density) two-point correlation for the circular $\beta$ ensemble. Define the
structure function as the Fourier transform
\begin{equation}\label{9.10}
S(k;\beta) := \int_{-\infty}^\infty \Big ( \rho_{(2),\infty}^T((x,0);\beta) + \delta(x) \Big ) e^{ikx} \, dx,
\end{equation}
where $ \rho_{(2),\infty}^T((x,0);\beta) :=  \rho_{(2),\infty}((x,0);\beta) - 1$. It is known that  $f(k;\beta) := \pi \beta S(k;\beta)/|k|$
extends to an analytic function in the range $|k| < \min(2\pi, \pi \beta)$. Most significantly from the viewpoint of the present theme is that
 $f$ exhibits the particular duality \cite{FJM00}
\begin{equation}\label{9.11}
f(k;\beta) = f \Big ( - {2k \over \beta};{4 \over \beta} \Big ),
\end{equation}
which as discussed in \cite{FS23} is in keeping with the more general (but finite $N$) duality exhibited by  the circular Jacobi $\beta$ ensemble.
We remark too that as is consistent with (\ref{9.11}), 
 the small $k$ expansion of (\ref{9.10}) has for the coefficient of $k^p$ polynomials in $u=2/\beta$ of
degree $p$, which are palindromic or anti-palindromic \cite{FJM00, Fo21}; see also \cite{FS23} for observations relating to the zeros.

An important detail with regards to the results of Proposition \ref{P6.1} and Corollary \ref{C6.1} is the factor $N \beta$ in the weight function
$e^{- N \beta V(x)/2}$. If this factor is removed, and one considers instead the matrix ensemble ${\rm ME}_{\beta,N}[e^{-\tilde{V}(x)}]$ for
some $\tilde{V}$ independent of $N, \beta$, a distinct statistical state can be obtained by taking the scaled high
temperature limit $\beta^{-1} \to \infty$ with $N\beta/2 = \alpha$ fixed. As noted in \cite{Fo22}, in the classical cases of the
Gaussian, Laguerre and Jacobi weights, a duality relation for the low temperature, $\beta \to 0$ with $N$ fixed, limit applies.

Consider for definiteness the Gaussian case with $\tilde{V}(x) = x^2/2$ \cite{ABG12}. Let the moments of the resulting normalised density be denoted
$\{ m_{2k}^{{\rm G}^*}(\alpha) \}_{k=1,2,\dots}$. With $\{ \tilde{m}_{2k}^{\rm G}(\beta,N) \}$ as in Corollary \ref{C6.1}, from the definitions
$m_{2k}^{{\rm G}^*}(\alpha) = \lim_{N \to \infty, N \beta/2 = \alpha} (N/\beta/2)^k  \tilde{m}_{2k}^{\rm G}(\beta,N)$ and we read off from
(\ref{6.9}) that
\begin{align}\label{6.9a}
 \widetilde{m}_2^{\rm G^*}(\alpha) & = \alpha + 1 \nonumber \\
 \widetilde{m}_4^{\rm G^*}(\alpha) & = 2 \alpha^2 + 5 \alpha + 3 \nonumber \\
  \widetilde{m}_6^{\rm G^*}(\alpha) & = 5 \alpha^3 + 22 \alpha^2 + 32 \alpha + 15.
\end{align}
Insert now a factor $\beta$ in front of $\tilde{V}$ and consider ME${}_{\beta,N}[e^{-\beta \tilde{V}(x)}]$.
Denoting the corresponding moments by $\{ m_{2k}^{{\rm G}}(\beta,N) \}$, we have $m_{2k}^{{\rm G}}(\beta,N)  = N^{k+1} 
  \tilde{m}_{2k}^{\rm G}(\beta,N)$. We see from (\ref{6.9}) that these moments have a well defined expansion in $1/\beta$
 with leading terms
 \begin{align}\label{6.9b}
\lim_{\beta \to \infty} {m}_2^{\rm G}(\beta,N) & = N^2 - N \nonumber \\
\lim_{\beta \to \infty} {m}_4^{\rm G}(\beta,N) & = 2 N^3 - 5 N^2 + 3N \nonumber \\
\lim_{\beta \to \infty} {m}_6^{\rm G}(\beta,N) & = 5 N^4 - 22 N^3 + 32 N^2 - 15 N,
\end{align}
thus providing evidence for a  duality formula linking  low temperature moments
to the scaled high temperature
moments 
\begin{equation}\label{6.9c}
\lim_{\beta \to \infty} m_{2k}^{{\rm G}}(\beta,N) = (-1)^k N m_{2k}^{{\rm G}^*}(\alpha) \Big |_{\alpha = - N}.
\end{equation}
This has been established in  \cite{Fo22}
using a loop equation analysis.

  \subsection*{Acknowledgements}
  Feedback on an earlier draft from Sungsoo Byun and Bojian Shen is
  appreciated.
	This work  was supported
	by the Australian Research Council 
	 Discovery Project grant DP250102552.

\small
\providecommand{\bysame}{\leavevmode\hbox to3em{\hrulefill}\thinspace}
\providecommand{\MR}{\relax\ifhmode\unskip\space\fi MR }
\providecommand{\MRhref}[2]{%
  \href{http://www.ams.org/mathscinet-getitem?mr=#1}{#2}
}
\providecommand{\href}[2]{#2}

  \end{document}